\newcommand{\ket}[1]{| #1 \rangle}
\newcommand{\bra}[1]{\langle #1|}
\newcommand{\braket}[1]{\langle #1 \rangle}
\newcommand{\bracket}[3]{\langle #1|#2|#3 \rangle}
\DeclareMathOperator{\polylog}{polylog}
\DeclareMathOperator{\sgn}{sgn}
\DeclareMathOperator{\ad}{ad}
\renewcommand{\Re}{\operatorname{Re}}
\renewcommand{\Im}{\operatorname{Im}}
\def \eps {\varepsilon}
\renewcommand{\d}{\mathrm{d}}
\renewcommand{\i}{{\mathrm i}}
\newtheorem{thm}{Theorem}[section]
\newtheorem{prop}[thm]{Proposition}
\newtheorem{lem}[thm]{Lemma}
\newtheorem{prob}[thm]{Problem}
\theoremstyle{definition}
\newtheorem{defn}[thm]{Definition}
\newtheorem{rmk}[thm]{Remark}
\numberwithin{equation}{section}
\newtheorem*{thm*}{Theorem}
\newtheorem*{lem*}{Lemma}
\newtheorem*{prop*}{Proposition}
\newtheorem*{defn*}{Definition}
\newtheorem*{prob*}{Problem}
\newtheorem*{ques*}{Question}
\newcommand{\mb}{\mathbb}
\newcommand{\mc}{\mathcal}
\newcommand{\norm}[1]{\left\lVert #1 \right\rVert}
\newcommand{\ltrinorm}{\left|\!\left|\!\left|}
\newcommand{\rtrinorm}{\right|\!\right|\!\right|}
\def\Tr{\mathrm{Tr}}
\def\b{{\bf b}}
\newenvironment{breakablealgorithm}
  {
   \begin{center}
     \refstepcounter{algorithm}
     \hrule height1pt depth0pt \kern3pt
     \renewcommand{\caption}[2][\relax]{
       {\raggedright\textbf{\ALG@name~\thealgorithm} ##2\par}%
       \ifx\relax##1\relax 
         \addcontentsline{loa}{algorithm}{\protect\numberline{\thealgorithm}##2}%
       \else 
         \addcontentsline{loa}{algorithm}{\protect\numberline{\thealgorithm}##1}%
       \fi
       \kern3pt\hrule\kern3pt
     }
  }{
     \kern3pt\hrule\relax%
   \end{center}
  }
\newcommand{\be}{\begin{equation}}
\newcommand{\ee}{\end{equation}}
\newcommand{\bea}{\begin{eqnarray}}
\newcommand{\eea}{\end{eqnarray}}
\newcommand{\bes}{\begin{equation*}}
\newcommand{\ees}{\end{equation*}}
\newcommand{\beas}{\begin{eqnarray*}}
\newcommand{\eeas}{\end{eqnarray*}}
\title{Quantum singular value transformation without block encodings: Near-optimal complexity with minimal ancilla}
\author[1]{Shantanav Chakraborty\thanks{shchakra@iiit.ac.in}}
\author[1]{Soumyabrata Hazra\thanks{soumyabrata.hazra@research.iiit.ac.in}}
\author[2,3]{Tongyang Li\thanks{tongyangli@pku.edu.cn}}
\author[4]{Changpeng Shao\thanks{changpeng.shao@amss.ac.cn}}
\author[2,3]{Xinzhao Wang\thanks{wangxz@stu.pku.edu.cn}}
\author[4]{Yuxin Zhang\thanks{zhangyuxin@amss.ac.cn}}
\affil[1]{\small{CQST and CSTAR, International Institute of Information Technology Hyderabad, Telangana, India}}
\affil[2]{Center on Frontiers of Computing Studies, Peking University, Beijing,  China}
\affil[3]{School of Computer Science, Peking University, Beijing, China}
\affil[4]{\small{SKLMS, Academy of Mathematics and Systems Science, Chinese Academy of Sciences, Beijing, China}}
\begin{document}

\maketitle
\thispagestyle{empty}
\begin{abstract}

We develop new algorithms for Quantum Singular Value Transformation (QSVT), a unifying framework that encapsulates most known quantum algorithms and serves as the foundation for new ones.
Existing implementations of QSVT rely on block encoding, incurring an intrinsic $O(\log L)$ ancilla overhead and circuit depth $\widetilde{O}(L d\lambda )$ for polynomial transformations of a Hamiltonian $H=\sum_{k=1}^L H_k$, where $d$ is the polynomial degree and $\lambda=\sum_{k}\|H_k\|$.

We introduce a simple yet powerful approach that utilizes only basic Hamiltonian simulation techniques, namely, Trotter methods, to: (i) eliminate the need for block encoding, (ii) reduce the ancilla overhead to only a single qubit, and (iii) still maintain near-optimal complexity. 
Our method achieves a circuit depth of $\widetilde{O}(L(d\lambda_{\mathrm{comm}})^{1+o(1)})$,  without requiring any complicated multi-qubit controlled gates. 
Moreover, $\lambda_{\mathrm{comm}}$ depends on the nested commutators of the terms of $H$ and can be substantially smaller than $\lambda$ for many physically relevant Hamiltonians, a feature absent in standard QSVT. 
To achieve these results, we make use of Richardson extrapolation in a novel way, systematically eliminating errors in any interleaved sequence of arbitrary unitaries and Hamiltonian evolution operators, thereby establishing a general framework that encompasses QSVT but is more broadly applicable.

As applications, we develop end-to-end quantum algorithms for solving linear systems and estimating ground state properties of Hamiltonians, both achieving near-optimal complexity without relying on oracular access. 
Overall, our results establish a new framework for quantum algorithms, significantly reducing hardware overhead while maintaining near-optimal performance, with implications for both near-term and fault-tolerant quantum computing.
\end{abstract}


\newpage

\tableofcontents
\clearpage

\newpage

\section{Introduction}
\label{sec:intro}
\subsection{Brief overview}
\label{subsec:overview}
Quantum singular value transformation (QSVT) \cite{gilyen2019quantum, martyn2021grand} has emerged as one of the most powerful paradigms in quantum computing. The strength of QSVT lies in its ability to directly implement polynomial transformations on the singular values of an operator $A$, provided $A$ is embedded in the top-left block of a unitary $U_A$, known as the block encoding of $A$ \cite{low2019hamiltonian, chakraborty_et_al:LIPIcs.ICALP.2019.33}. 
By making queries to the block encoding, QSVT realizes an arbitrary polynomial transformation with query complexity scaling linearly in the polynomial degree, all while using only a single ancilla qubit. This versatility makes QSVT a powerful tool for designing quantum algorithms, allowing a wide range of problems to be formulated as polynomial transformations of block-encoded operators.

However, the reliance on block encoding is a fundamental obstacle to the practical and theoretical efficiency of QSVT-based algorithms. Constructing a block encoding for a general (Hermitian) operator — particularly one expressed as a linear combination of local terms — incurs significant overhead. 
This typically incurs circuit depth $O(L)$ for an operator composed of $L$ terms, along with $O(\log L)$ additional ancilla qubits and sophisticated controlled operations. Given the limitations of current quantum devices and the importance of QSVT, reducing ancilla usage while preserving shallow circuit depth is crucial to making QSVT-based methods feasible in near- and intermediate-term quantum devices.

In this work, we overcome this fundamental drawback by developing new quantum algorithms that implement QSVT without block encodings, using only the simplest Hamiltonian simulation techniques, based on product formulas, also known as Trotterization \cite{lloyd1996universal, childs2021theory}. 
More precisely, consider any Hermitian operator $H$ expressed as
$
H=\sum_{k=1}^{L} H_k,
$
where $H_k$ can be local operators (acting non-trivially on a small subset of qubits). 
Almost all physical systems in nature are described by Hamiltonians of this form, such as the transverse field Ising model, the quantum Heisenberg model, and the Hubbard model. 
Such Hamiltonians are also widely studied in problems related to quantum computing, including Hamiltonian learning~\cite{huang2023learning,bakshi2024learning} and thermal state preparation~\cite{bakshi2024high}. 
Our algorithm implements a polynomial of degree $d$ to any such $H$, using only a single ancilla qubit and a circuit of depth $\widetilde{O}(L(d\lambda_{\rm comm})^{1+o(1)})$,\footnote{Throughout this article, we use $\widetilde{O}(\cdot)$ to hide polylogarithmic factors.}
which nearly matches the $\widetilde{O}(L d\lambda)$ complexity of standard QSVT in terms of $d$, where $\lambda=\sum_k \|H_k\|$.
Moreover, the prefactor $\lambda_{\mathrm{comm}}$, which scales with the norm of the sum of nested commutators of $H_k$, can be significantly smaller than $\lambda$ for a wide range of physical systems, leading to a better dependence than standard QSVT. Thus, our method offers a practical and resource-efficient alternative for applying polynomial transformations in quantum algorithms, while retaining near-optimal asymptotic complexity.

To achieve our goals, we consider a highly general quantum circuit architecture: any circuit composed of an interleaved sequence of arbitrary unitaries and Hamiltonian evolution operators. This encompasses generalized quantum signal processing (GQSP) \cite{motlagh2024generalized} as well as QSVT, where the unitaries are single-qubit rotations, but extends well beyond them. 
Our central insight is that quantum algorithms based on such circuits can achieve near-optimal complexity even when Hamiltonian evolution operators are implemented via basic higher-order Trotter-Suzuki formulas, arguably the simplest Hamiltonian simulation technique. 
This is surprising, as Trotterization is typically considered suboptimal due to its poor scaling with inverse precision, which can increase the circuit depth required to achieve high accuracy. 

The key technical contribution enabling this result is a novel analytic framework, in which we prove that the expectation value of any observable, evaluated on the final quantum state prepared by such a circuit, can be expressed as a power series in the Trotter step sizes. 
This structure enables the use of classical extrapolation \cite{low2019well} to systematically eliminate Trotter errors and extrapolate to the zero-step-size limit, thereby recovering near-optimal asymptotic circuit depth. 
Importantly, our method is compatible with two optimal strategies for estimating the desired expectation value: (i) incoherent estimation: measure the observable in each run using a total of $1/\eps^2$ independent classical repetitions of the circuit; and (ii) coherent estimation: apply the iterative amplitude estimation technique of \cite{grinko2021iterative} to achieve a $1/\eps$ dependence in the overall circuit depth.\footnote{We use the terms ``incoherent'' and ``coherent'' estimation throughout the article to distinguish between these two methods of expectation value estimation.}

As concrete applications of our framework, we design end-to-end quantum algorithms for two fundamental tasks: (i) solving quantum linear systems and (ii) estimating properties of ground states of Hamiltonians. To the best of our knowledge, these are the first algorithms for either problem that simultaneously achieve:
\begin{itemize}
    \item[1.~]No oracular assumptions on access to the Hamiltonian $H$.
    \item[2.~]Constant ancilla overhead, and
    \item[3.~]Near-optimal end-to-end complexity, in terms of both circuit depth and total gate count.
\end{itemize}
Moreover, both algorithms are simply based on Hamiltonian simulation via Trotterization, along with classical extrapolation. Despite its simplicity, the quantum linear systems algorithm achieves near-linear scaling on the condition number of the underlying operator and polylogarithmic dependence on the inverse precision, nearly matching the complexity of state-of-the-art algorithms -- without relying on Szegedy-type quantum walks \cite{costa2022optimal} or complicated subroutines such as variable-time amplitude amplification \cite{ambainis2012variabletime, childs2017qls}. The ground state property estimation algorithm also achieves Heisenberg-limited scaling, which is optimal.

Overall, our methods provide a unified framework for developing resource-efficient, end-to-end quantum algorithms with near-optimal complexity for a wide range of problems.

\subsection{Main results}
\label{subsec:intro-main-results}
In this section, we summarize the main contributions of this work and provide pointers to the corresponding sections of the manuscript where formal statements and detailed derivations can be found.

\subsubsection{Interleaved sequences of unitaries and Hamiltonian evolution}
\label{subsubsec:interleaved-sequences}
We begin by introducing a novel framework for developing quantum algorithms that encompasses QSVT and also extends to a broader class of problems. This framework enables efficient estimation of expectation values of the form 
$\braket{\psi_0|W^{\dag}OW|\psi_0}$, for any initial state $\ket{\psi_0}$, where $W$ is a structured quantum circuit built from an interleaved sequence of arbitrary unitary operations and Hamiltonian evolutions. Our main result establishes a near-optimal complexity for implementing such circuits with only higher-order Trotterization, eliminating the need for block encoding, while using no ancilla qubits. As mentioned previously, the expectation value can be estimated, either by directly measuring $O$ (incoherent estimation) or by using iterative quantum amplitude estimation (coherent estimation). We now formally state the problem addressed in this work:
\begin{prob} \label{problem:interleaved}
Let \( H^{(1)}, H^{(2)}, \dots, H^{(M)} \) be Hermitian operators of the same dimension, where 
\begin{equation}
\label{eq:hamiltonian-interleaved-sequence}
    H^{(j)} = \sum_{\gamma=1}^{\Gamma_{j}} H^{(j)}_{\gamma},
\end{equation}
is a decomposition of \( H^{(j)} \) into terms \( H^{(j)}_{\gamma} \), such that each \( e^{\i H^{(j)}_{\gamma}} \) can be efficiently implemented (in constant time). Then define the interleaved quantum circuit $W$ as follows:
\begin{equation}
\label{eq:interleaved-sequence-circuit}
W:= V_0\cdot e^{\i H^{(1)}}\cdot V_1 \cdot e^{\i H^{(2)}}\cdots e^{\i H^{(M)}}\cdot V_M = V_0 \prod_{j=1}^M e^{\i H^{(j)}} V_{j},    
\end{equation}
where $\{V_j\}_{j=0}^{M}$ are arbitrary unitaries. Then, for any observable $O$, and initial state $\ket{\psi_0}$, estimate the expectation value $\braket{\psi_0|W^{\dag}OW|\psi_0}$ to within an additive precision of $\eps\|O\|$.
\end{prob}

We can directly replace each $e^{\i H^{(j)}}$ by a $p$-th order staged product formula \cite{childs2021theory} (See Sec.~\ref{subsec:prelim-trotter}). However, this results in a circuit with depth scaling as $1/\eps^{1/p}$, which is sub-optimal. We prove that for any quantum circuit $W$, the precision dependence can be improved to $\polylog(1/\eps)$ with the help of classical extrapolation techniques. In fact, our core technical contribution is to prove that the error between the true expectation value $\braket{\psi_0|W^{\dag}O W|\psi_0}$ and our approximation, which is obtained by replacing each $e^{\i H^{(j)}}$ with a $p$-th order Trotterization, can be expressed succinctly as a power series in the Trotter step sizes. This allows us to extrapolate to the zero-error limit via Richardson extrapolation \cite{low2019well} and subsequently obtain an exponentially better scaling.

Our result can be seen as a non-trivial generalization of recently used methods that make use of extrapolation to mitigate errors in particular Hamiltonian simulation algorithms \cite{low2019well, watson2024exponentially, watson2024randomly}. 
Crucially, a direct multiplicative combination of the expansions of error series from prior work \cite{watson2024exponentially} does not yield a controllable bound in our (more general) framework. Instead, we reinterpret the entire interleaved sequence circuit as being governed by a time-dependent (Floquet-like) effective Hamiltonian. This allows us to control the time-dependent error between the true expectation value and our approximation. 

Thus, the overall algorithm uses high-order Trotter methods to implement the Hamiltonian evolution with decreasing Trotter step sizes, estimates the observable at each step size, and extrapolates the measurement results to the zero-step-size limit, corresponding to the true value. The main results are proven in Theorem~\ref{thm:qsp-with-trotter-and-interpolation} and Theorem~\ref{thm:qsp-with-trotter-and-interpolation-coherent}. Here, we state the complexity of our algorithm via the following theorem:

\begin{thm}[Informal version of Thms. \ref{thm:qsp-with-trotter-and-interpolation} and \ref{thm:qsp-with-trotter-and-interpolation-coherent}]
\label{thm:interleaved-sequence-general-result}
Consider Hamiltonians $\{H^{(j)}:j\in [M]\}$ that can be written as in Eq.~\eqref{eq:hamiltonian-interleaved-sequence}. 
Define $\Gamma_{\mathrm{avg}}:=\sum_{j=1}^M\Gamma_j/M$, and suppose $\{V_j\}_{j=0}^M$ are unitaries with circuit depth $\tau_j$ in the circuit $W$ defined in Eq.~\eqref{eq:interleaved-sequence-circuit}. 
Furthermore, let $p \in \mb N$, $\tau_{\rm sum}=\sum_{j=0}^{M} \tau_j$, and $\varepsilon\in(0,1)$ be the precision parameter. 
Then there exists a quantum algorithm that, for any observable $O$ and efficiently preparable initial state $\ket{\psi_0}$, computes $\braket{\psi_0|W^{\dag}OW|\psi_0} \pm \varepsilon\|O\|$.

If $O$ is measured incoherently, the maximum quantum circuit depth is
$$\widetilde{O}\big(\Gamma_{\mathrm{avg}}\big(M \lambda_{\rm comm}\big)^{1+1/p}+\tau_{\rm sum}\big),$$ 
while the time complexity is\footnote{Throughout this article, the \textit{time complexity} of an algorithm refers to the number of elementary (single and two-qubit) gates required to implement it.} 
$$
\widetilde{O}\big({\big(\Gamma_{\mathrm{avg}}(M \lambda_{\mathrm{comm}})^{1+1/p}+\tau_{\rm sum}}\big)/{\varepsilon^2}\big).
$$ 
On the other hand, if the expectation value is coherently estimated, the time complexity and the maximum quantum circuit depth are
$$
\widetilde{O}\big(\big({\Gamma_{\mathrm{avg}}\big(M \lambda_{\mathrm{comm}}\big)^{1+1/p}+\tau_{\rm sum}}\big)/{\varepsilon}\big).
$$
\end{thm}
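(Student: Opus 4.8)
The plan is to express the ideal expectation value $\braket{\psi_0|W^{\dag}OW|\psi_0}$ as the zero-limit of a function of Trotter step sizes, and then recover it by Richardson extrapolation. Concretely, fix a large integer $r$ and replace each ideal evolution $e^{\i H^{(j)}}$ appearing in $W$ by a $(2k)$-th order product formula $\mc S^{(j)}_{2k}(1/r)^r$ with $r$ segments; call the resulting circuit $\widetilde W(s)$ where $s=1/r$ is the step size. The key structural fact I would establish is that a $(2k)$-th order formula satisfies $\mc S^{(j)}_{2k}(s)^{1/s} = e^{\i H^{(j)}} + \sum_{\ell=k}^{K} s^{2\ell} E^{(j)}_\ell + O(s^{2K+2})$ for operator-valued coefficients $E^{(j)}_\ell$ built from nested commutators of the $H^{(j)}_\gamma$, whose norms are controlled by $\lambda_{\mathrm{comm}}$; multiplying $O(M)$ such factors together with the fixed $V_j$'s and sandwiching between $\ket{\psi_0}$ and $O$ shows that $f(s) := \braket{\psi_0|\widetilde W(s)^{\dag} O \widetilde W(s)|\psi_0}$ is, up to an additive $O((M\lambda_{\mathrm{comm}} s)^{2K+2}\|O\|)$ tail, a polynomial in $s^2$ (equivalently in $1/r$) of degree $K$ with $f(0)$ equal to the target value.

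**Next I would invoke Richardson extrapolation.** Choosing $K=\Theta(p)$ distinct segment counts $r_1<\dots<r_{K+1}$, each $r_i = \Theta(M\lambda_{\mathrm{comm}})$ up to small multiplicative constants, and forming the Lagrange-interpolation combination $\sum_i c_i\, f(1/r_i)$ with the $c_i$ the standard extrapolation coefficients, cancels all terms $s^2,\dots,s^{2K}$ and leaves $f(0)$ plus an error governed by $\sum_i |c_i|$ times the tail. The coefficient bound $\sum_i|c_i| = 2^{O(p)}$ (or the sharper nodes-choice bound from \cite{low2019well}) together with $r_i = \Theta(M\lambda_{\mathrm{comm}})$ forces the tail to be below $\varepsilon\|O\|$ once $r_i \gtrsim M\lambda_{\mathrm{comm}}\cdot 2^{O(1)}\cdot\mathrm{poly}(p,\log(1/\varepsilon))$; picking $p=\Theta(\sqrt{\log(1/\varepsilon)})$ or $p$ slowly growing yields the $(M\lambda_{\mathrm{comm}})^{1+1/p}$ exponent after accounting for the $r$-dependence of the circuit (each circuit $\widetilde W(1/r_i)$ has depth $\widetilde O(\Gamma_{\mathrm{avg}} M \cdot r_i / M) = \widetilde O(\Gamma_{\mathrm{avg}}\, r_i)$ for the evolution part plus $\tau_{\mathrm{sum}}$ for the $V_j$'s, hence $\widetilde O(\Gamma_{\mathrm{avg}}(M\lambda_{\mathrm{comm}})^{1+1/p}+\tau_{\mathrm{sum}})$).

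**For the complexity accounting** I would then split into the two estimation modes. In the incoherent case, for each node $i$ I run $\widetilde W(1/r_i)$ and measure $O$; by Hoeffding, $\widetilde O(1/\varepsilon^2)$ repetitions per node (absorbing the $\sum|c_i|$ factor and the $K+1$ nodes into polylog/constant factors under the chosen $p$) suffice to estimate each $f(1/r_i)$ to the needed accuracy, giving total time $\widetilde O((\Gamma_{\mathrm{avg}}(M\lambda_{\mathrm{comm}})^{1+1/p}+\tau_{\mathrm{sum}})/\varepsilon^2)$ with max depth as above. In the coherent case, I instead use amplitude estimation on each circuit $\widetilde W(1/r_i)$ (after the standard reflection/block trick to turn the expectation into an amplitude), which estimates $f(1/r_i)$ to precision $\varepsilon$ with $\widetilde O(1/\varepsilon)$ uses of a circuit of depth $\widetilde O(\Gamma_{\mathrm{avg}}(M\lambda_{\mathrm{comm}})^{1+1/p}+\tau_{\mathrm{sum}})$, so both depth and total time become $\widetilde O((\Gamma_{\mathrm{avg}}(M\lambda_{\mathrm{comm}})^{1+1/p}+\tau_{\mathrm{sum}})/\varepsilon)$. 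The ancilla count is whatever $O$'s measurement or amplitude estimation needs on top of $W$ — effectively none beyond the system register for the interleaved-sequence statement itself.

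**The main obstacle** I anticipate is making the "$f(s)$ is a polynomial in $s^2$ up to a controlled tail" claim fully rigorous with the right constants: one must expand a product of $M$ perturbed unitaries, track that the $s^{2\ell}$ coefficients are genuinely bounded by something like $(M\lambda_{\mathrm{comm}})^{2\ell}$ times combinatorial factors (not by a cruder $\lambda$-based bound), and verify that the remainder after $K$ terms does not blow up super-exponentially in $K$ — this is exactly where the commutator scaling $\lambda_{\mathrm{comm}}$ enters instead of $\lambda$, and where the $1/p$ loss in the exponent is unavoidable because the extrapolation coefficients and the truncation tail trade off against each other. A secondary technical point is ensuring the chosen step sizes $1/r_i$ are all $\le 1$ and that the product formula is stable (unitary) at those step sizes, which is automatic but must be stated. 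I would isolate the polynomial-structure lemma as a standalone result (this is presumably Thm.~\ref{thm:qsp-with-trotter-and-interpolation}'s technical core) and cite the extrapolation-coefficient bounds from \cite{low2019well} rather than re-deriving them.
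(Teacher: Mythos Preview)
Your overall architecture---replace each $e^{\i H^{(j)}}$ by a $p$-th order product formula, show the resulting expectation value is an even power series in the step size $s$ with a controllable tail, then Richardson-extrapolate and estimate each node incoherently or by amplitude estimation---matches the paper exactly. The complexity accounting for the two estimation modes is also right in outline.

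The genuine gap is in the step you flag as ``the main obstacle.'' You propose to obtain the power-series structure by \emph{multiplying out} the $M$ individually-expanded factors. The paper explicitly states (in the proof sketch for this theorem) that a direct multiplicative combination of the single-evolution error series from \cite{watson2024exponentially} does \emph{not} yield a controllable bound in the interleaved setting. The reason is that the coefficients of $s^j$ in a product of $M$ perturbed factors involve combinatorial sums over which of the $M$ slots contribute, and bounding these naively loses the crucial $M^\ell/\ell!$ volume factor that keeps the $j$-th coefficient at $(M\lambda_{\mathrm{comm}})^j$ rather than something exponentially larger in $M$. The paper's fix is to recast the entire interleaved circuit as evolution under a single \emph{time-dependent} effective Hamiltonian on $[0,2M+1]$ (alternating between $\i\log V_{\lfloor t\rfloor/2}$ on even intervals and $-H^{(\lceil t/2\rceil)}$ on odd intervals), write the Trotterized version as the same evolution with an additive effective error Hamiltonian supported on the odd intervals, and then apply the time-dependent variation-of-parameters formula iteratively $K$ times. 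The $\ell$-fold nested time-ordered integral over the set $A$ of total measure $M$ is what produces the clean $M^\ell/\ell!$ factor in the coefficient bounds (Lemma~\ref{lem:interleaved}).

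Two smaller quantitative points: the number of extrapolation nodes is not $\Theta(p)$ but $m=p\lceil\log(1/\varepsilon)\rceil$, and the $1+1/p$ exponent does not arise from choosing $p$ as a function of $\varepsilon$ (indeed $p$ is fixed as input) but from the interplay between the $(s_m M\lambda_{\mathrm{comm}})^{2m}$ tail and a residual factor $(M\lambda_{\mathrm{comm}})^K$ with $K=\lceil 2m/p\rceil$ coming from the depth-$K$ variation-of-parameters expansion; balancing these forces $s_m\sim (M\lambda_{\mathrm{comm}})^{-1-1/p}$.
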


Here $\lambda_{\rm comm}$, which is formally defined in Eq.~\eqref{Lambda}, scales with the nested commutators of terms $H^{(j)}$, and can be upper bounded as (see Remark \ref{remark: A bound of Lambda})
\begin{equation}
\label{eq:nested-comm-scaling}
    \lambda_{\rm comm} \leq 4\max_{j\in [M]} \sum_{\gamma=1}^{\Gamma_{j}}\|H_{\gamma}^{(j)}\|.
\end{equation}
As in standard Hamiltonian simulation via Trotterization, the dependence of the complexity on the Trotter order $p$ scales as $O(p\cdot 5^p)$. Hence, for sufficiently large choices of $p$ (e.g. $p\sim \log\log(M\lambda_{\rm comm})$), this dependence can be absorbed into the $\widetilde{O}(\cdot)$ notation. Moreover, the dependence of the complexity on $M$ is nearly linear $M^{1+o(1)}$. 

Observe that there is a trade-off between the maximum circuit depth per coherent run and the total complexity. 
If $O$ is directly measured (incoherent estimation), we need independent classical runs of a shorter-depth circuit, which may be preferable in the intermediate term. The total time complexity, however, scales as $1/\varepsilon^2$. On the other hand, iterative quantum amplitude estimation \cite{grinko2021iterative} can also be applied to estimate the expectation value (coherent estimation). This quadratically reduces the dependence on precision, at the cost of increasing the circuit depth. However, if $O$ is, in general, non-unitary, quantum amplitude estimation would need access to a block encoding of $O$, which might involve additional ancilla qubits. Unless otherwise specified, for the results we state here, coherent estimation is always possible, leading to linear scaling in inverse precision. 

To see why circuit $W$ is extremely general, observe that there are no restrictions on the unitaries $V_0,\ldots, V_M$ --- each could represent an entirely different quantum algorithm. As such, any quantum procedure that invokes Hamiltonian evolution as a subroutine, even with different Hamiltonians at each step, can be cast as a special case of our interleaved circuit architecture. Our findings thus significantly expand the scope of classical extrapolation techniques and their potential to reduce circuit depth in generic quantum algorithms.

Indeed, we shall soon show that GQSP \cite{motlagh2024generalized} is a particular instance of the circuit in Eq.~\eqref{eq:interleaved-sequence-circuit}, with each $V_j$ being a single-qubit phase rotation, and each $H^{(j)}=H$. 
Incorporating time-evolution operators of different Hamiltonians $\{H^{(j)}\}$ enables us to construct a framework that goes beyond QSP or even QSVT. This becomes crucial for significantly simplifying recently developed optimal quantum linear systems algorithms inspired by adiabatic quantum computing \cite{subacsi2019quantum, costa2022optimal}. 
Unlike in \cite{costa2022optimal}, where a block encoding of the underlying matrix is used to implement Szegedy's quantum walk, we obtain a near-linear dependence on the condition number by directly simulating the Hamiltonians along the discretized adiabatic path via a high-order Trotter method. This is followed by a projection step implemented using QSVT. The entire procedure fits naturally within our interleaved circuit framework, making it a specific instance of $W$. Besides the applications considered here, simulation of time-dependent Hamiltonians \cite{Berry2020timedependent, watkins2024time-dependent}, the recently introduced framework of linear combination of Hamiltonian simulation \cite{an2023linear,an2023quantum,an2024laplace}, and variational algorithms \cite{cerezo2021variational} can also be seen as particular instances of $W$: it is possible to use our framework to develop simple resource-efficient quantum algorithms with near-optimal complexity, using only basic Trotterization, no oracular assumptions, and minimal ancilla overhead.

\subsubsection{QSVT using higher-order Trotterization}
We begin by stating the problem we intend to solve.

\begin{prob} \label{problem}
Consider a Hermitian operator $H$ that is a sum of local operators, expressed as 
\be \label{decom of H and lambda}
H=\sum_{j=1}^{L} H_j, \quad \text{~with~} \lambda=\sum_{j=1}^{L}\|H_j\|.
\ee
For any observable $O$ and initial state $\ket{\psi_0}$, accurately estimate the expectation value 
\be
\braket{\psi_0|f(H)^{\dag} O f(H)|\psi_0}
\ee
for any function $f(x)$ that is approximated by a bounded polynomial of degree $d$.
\end{prob}

As discussed earlier, QSVT provides a unified framework for obtaining such an estimate. 
The standard approach \cite{gilyen2019quantum, martyn2021grand} achieves this by assuming access to the unitary (which is known as a block encoding of $H$)
$$
U_H=\begin{bmatrix}
    H/\lambda & *\\
    * & *
\end{bmatrix},
$$
which encodes $H/\lambda$ in its top-left block. 
From the polynomial that approximates $f(x)$, one can find a sequence of single-qubit phase rotations $V_j=e^{\i \phi_j Z}$, such that an interleaved product of $V_j$ and $U_H$ is close to a block encoding of $f(H/\lambda)$. Typically, the complexity is measured in terms of the number of queries made to $U_H$. However, as argued before, constructing $U_H$ is expensive, which affects the implementability of QSVT. For instance, when $H_j$ are weighted unitaries, the block encoding $U_H$ can be constructed using the Linear Combination of Unitaries (LCU) technique \cite{childs2012hamiltonian}. Such a construction uses $O(\log L)$ ancilla qubits,
involves complicated controlled logical operations, and results in a quantum circuit of depth $O(L)$.\footnote{When only the Hermitian decomposition of $H$ is given, the LCU technique cannot be used directly, and the construction of block encodings of $H$ is more difficult.} Then QSVT outputs an $\eps\|O\|$-accurate estimate of the desired expectation value, using: (i) $O(1/\eps^2)$ repetitions of a quantum circuit of depth $\widetilde{O}(\lambda d L)$, for incoherent estimation, and (ii) a circuit depth of $\widetilde{O}(\lambda d L/\eps)$, if the expectation value is coherently measured using amplitude estimation. 

One might argue that frugal constructions of block encodings for such Hamiltonians $H$ are possible. After all, LCU is merely one method to construct $U_H$. Unfortunately, in Appendix \ref{sec-app:ancilla-lb-lcu}, we prove that $\Omega(\log L)$ ancilla qubits are necessary to construct such a block encoding, for a quite general class of quantum circuits, including LCU as a special case. Formally, we prove:
\begin{thm}[Lower bound of ancillas for LCU]
Assume that $H=\sum_{j=1}^L \lambda_j P_j$ is a linear combination of unitaries. Let 
\bes
U = (V_1 \otimes I) c_0\text{-}P_1 (V_2 \otimes I) c_0\text{-}P_2 \cdots 
(V_L \otimes I) c_0\text{-}P_L (V_{L+1} \otimes I),
\ees
where $V_1,\ldots,V_{L+1}$ act on some $m$ ancilla qubits and $c_0\text{-}P_j$ are controlled unitaries with control qubit 0.
If $U$ is an exact block encoding of $H$, that is,
$$
\left(\bra{0^{\otimes m}}\otimes I\right) U \left(\ket{0^{\otimes m}}\otimes I\right)=H,
$$
then the number of ancilla qubits, $m$, satisfies $m=\Omega(\log L)$.
\end{thm}
The substantial ancilla overhead required to block encode $H$ underscores why standard QSVT is impractical for near-term quantum computers. So, to implement QSVT with minimal ancilla usage, there are broadly two potential strategies. The first is to construct an approximate block encoding of $H$ using $O(1)$ ancilla qubits, while ensuring that the gate complexity of such a block encoding scales polylogarithmically with the dimension of $H$ and inverse precision. However, to the best of our knowledge, no such constructions are currently known. The second approach, taken in this work, is to bypass block encoding altogether and develop alternative methods for implementing QSVT without relying on block encodings.

To this end, we consider a formulation of QSVT as an interleaved sequence of single-qubit phase rotations and controlled Hamiltonian evolution operators, a recently introduced framework known as generalized quantum signal processing (GQSP) \cite{motlagh2024generalized, wang2023quantum}. 
Given access to the Hamiltonian evolution operator $e^{\i H}$, this method implements polynomial transformations of $e^{\i H}$ by an interleaved sequence of general single-qubit rotations and controlled-$e^{\i H}$ operations. 
The advantage of GQSP is that, unlike the standard QSVT, this does not treat real and complex polynomials separately. 
Overall, if the end-to-end complexity of the method in \cite{motlagh2024generalized} is considered, the optimal circuit depth (matching standard QSVT) is attained by assuming a block encoding access to $H$, involving $O(\log L)$ ancilla qubits. We show that this framework is also a particular instance of the interleaved unitaries circuit of Eq.~\eqref{eq:interleaved-sequence-circuit}, and develop an end-to-end quantum algorithm (without block encoding) for estimating $\braket{\psi_0|f(H)^{\dag}Of(H)|\psi_0}$ while using only a single ancilla qubit. 

A brief review of GQSP is provided below, with additional details in Sec. \ref{subsec:prelim-gqsp}. Let 
\begin{align*}
    R(\theta, \phi, \gamma)=\begin{bmatrix}
        e^{\i (\gamma+\phi)} \cos (\theta) & e^{\i  \phi} \sin (\theta) \\
        e^{\i  \gamma} \sin (\theta) & -\cos (\theta)
         \end{bmatrix} \otimes I 
\end{align*}
correspond to a single-qubit rotation of the ancilla qubit, parametrized by $(\theta,\phi, \gamma)$. 
Motlagh and Wiebe \cite{motlagh2024generalized} showed that there exists an interleaved sequence of $R(\theta_j, \phi_j, 0)$ and $c_0\text{-}e^{\i H}$ (controlled on $0$), $c_1\text{-}e^{-\i H}$ (controlled on $1$) of length $2d+1$, that can implement a block encoding of degree-$d$ Laurent polynomials (polynomials with both positive and negative powers) $P(e^{\i H})$ such that, $|P(x)|\le 1$ for all $x\in\mb T=\{ z\in \mathbb{C} : |z|=1 \}$. 
This framework allows for implementing the block encoding of any $f(H)$ that can be approximated by $P(e^{\i H})$. 

However, instead of assuming access to a block encoding of $H$, we can directly implement $c_0\text{-}e^{\i H}$ and $c_1\text{-}e^{-\i H}$, obtaining a circuit that is an instance of $W$ considered in Sec.~\ref{subsubsec:interleaved-sequences}. 
Indeed, a simple way to see this is that we can rewrite the controlled Hamiltonian evolution as $e^{\i  \tilde{H}}$ with $\tilde{H}={\rm diag}(H,0)$ or ${\rm diag}(0,-H)$ depending on the control qubit being $0$ or $1$. That is, if $H$ is an $n$-qubit Hamiltonian, then there exist $\{\theta_j\}_{j=0}^{2d}, \{\phi_j\}_{j=0}^{2d} \in \mathbb{R}^{2d+1}$, and $\gamma \in \mathbb{R}$ such that: 
\[
    \begin{bmatrix}
    P(e^{\i H}) & * \\
    * & *
    \end{bmatrix}=\Big(\prod_{j=1}^d R\left(\theta_{d+j}, \phi_{d+j}, 0\right) c_1\text{-}e^{-\i H} \Big)\Big(\prod_{j=1}^{d} R\left(\theta_j, \phi_j, 0\right) c_0\text{-}e^{\i H} \Big) R\left(\theta_0, \phi_0, \gamma\right),
\]
where the LHS is now an $(n+1)$-qubit unitary. 
Thus, all we need is to implement $e^{\i  \tilde{H}}$ using $p$-th order Trotterization and mitigate the errors using Richardson extrapolation to obtain an improved dependence on precision. 
In order to obtain a specific instance of the circuit in Eq.~\eqref{eq:interleaved-sequence-circuit}, we simply set each $V_j=R(\theta_j,\phi_j,\gamma_j)\otimes I$,  and $H^{(j)}=\tilde{H}$, with $\Gamma_{\rm avg}=L$. Moreover, each $V_j$ can now be implemented in constant time, and $M$ is replaced by the degree $d$ of the polynomial. 

This framework is already general enough to incorporate many problems that can be solved by standard QSVT. 
In fact, we significantly expand its applicability by rigorously finding the conditions under which any function $f(x)$ can be approximated by Laurent polynomials in $e^{\i x}$ of degree $d$. 
We prove that any function $f(x)$ that can be approximated in the interval $[-1,1]$ by a polynomial that remains bounded in an extended interval $[-1-\delta, 1+\delta]$ (e.g., $\delta\in (0,1/2]$), can also be approximated by an appropriate Laurent polynomial in $e^{\i x}$ for $x\in [-1,1]$. 
Overall, we obtain the following theorem, proven in Sec.~\ref{sec:qsvt-trotter}:

\begin{thm}[QSVT with Trotterization, informal version of Thm. \ref{thm_gqsp-trotter-suzuki}]
\label{thm:hsvt-gqsp-trotter}
Let $\eps\in (0,1)$, $p\in\mathbb{N}$, $B\in (1, \pi]$, and $H$ be a Hermitian operator that is a sum of $L$ local terms (or a linear combination of strings of Pauli operators) such that $\|H\|\le B$. 
Suppose that \( f(x) \) can be approximated by a Laurent polynomial in $e^{\i x}$, $P(e^{\i x})$, on $[-B,B]$ satisfying $|P(y)|\le 1$ for all $y\in \mb T=\{ z\in \mathbb{C} : |z|=1 \}$, of degree at most $d$, with additive precision $\eps/6$. 
Furthermore, let $\ket{\psi_0}$ be an efficiently preparable initial state and $O$ be an observable. 
Then there exists an algorithm that estimates $ \braket{\psi_0|f(H)^{\dag}Of(H)|\psi_0}$ with an additive accuracy $\eps\|O\|$ and constant success probability. 
The algorithm uses only a single ancilla qubit and has an overall time complexity 
\be
\widetilde{O} \left( L \big(d \lambda_{\rm comm}\big)^{1+o(1)}/\varepsilon^2 \right)
\ee
with the maximum quantum circuit depth
\be
\widetilde{O}\left( L \big(d \lambda_{\rm comm}\big)^{1+o(1)} \right).
\ee
Here, $\lambda_{\rm comm}$ scales with the nested commutators of the local terms of $H$ and can be upper bounded by $\lambda$, as given in Eqs.~\eqref{eq:nested-comm-scaling} and \eqref{decom of H and lambda}.
\end{thm}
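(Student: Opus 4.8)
\textbf{Proof plan for Theorem~\ref{thm:hsvt-gqsp-trotter}.} The strategy is to realize the desired expectation value as an instance of the interleaved-sequence framework of Theorem~\ref{thm:interleaved-sequence-general-result} and then invoke that theorem verbatim, so the bulk of the work is in the reduction rather than in any new error analysis. First I would fix the polynomial approximation: by hypothesis $f(x)$ is $(\eps/2)$-approximated on $\mathbb{T}$ by a bounded Laurent polynomial $P(e^{\i x})$ of degree $d$, and by the GQSP result of Motlagh--Wiebe \cite{motlagh2024generalized} there exist angles $\{\theta_j\}_{j=0}^{2d}$, $\{\phi_j\}_{j=0}^{2d}$ and $\gamma$ such that the interleaved product
\[
W = \Big(\prod_{j=1}^{d} R(\theta_{d+j},\phi_{d+j},0)\, c_1\text{-}e^{-\i H}\Big)\Big(\prod_{j=1}^{d} R(\theta_j,\phi_j,0)\, c_0\text{-}e^{\i H}\Big) R(\theta_0,\phi_0,\gamma)
\]
is an $(n+1)$-qubit block encoding of $P(e^{\i H})$. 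I then rewrite each controlled evolution as a genuine Hamiltonian evolution $e^{\i\tilde H}$ with $\tilde H=\diag(H,0)$ or $\diag(0,-H)$; crucially $\tilde H$ inherits the local/Pauli-sum structure of $H$ (one simply tensors each $H_k$ with the appropriate projector $\ketbra{0}{0}$ or $\ketbra{1}{1}$ on the ancilla, and these single-ancilla-controlled local terms are still exponentiable in constant time), so the number of terms is still $L$ and $\lambda_{\mathrm{comm}}$ for $\tilde H$ is bounded by that for $H$. This exhibits $W$ as exactly the circuit of Eq.~\eqref{eq:interleaved-sequence-circuit} with $M=2d$, each $V_j=R(\theta_j,\phi_j,\gamma_j)\otimes I$ a constant-depth single-qubit gate (so $\tau_j=O(1)$ and $\tau_{\mathrm{sum}}=O(d)$), $H^{(j)}=\tilde H$, and $\Gamma_{\mathrm{avg}}=L$.

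Second, I would handle the observable. The target $\braket{\psi_0|f(H)^{\dag}Of(H)|\psi_0}$ must be matched to the $\braket{\psi_0|W^{\dag}O'W|\psi_0}$ form of Theorem~\ref{thm:interleaved-sequence-general-result}. Since $W$ block-encodes $P(e^{\i H})$, I take the input state to be $\ket{0}_{\mathrm{anc}}\ket{\psi_0}$ and the observable to be $O' = \ketbra{0}{0}_{\mathrm{anc}}\otimes O$ (padded with zeros on the orthogonal ancilla subspace), so that $\braket{0,\psi_0|W^{\dag}O'W|0,\psi_0} = \braket{\psi_0|P(e^{\i H})^{\dag}O\,P(e^{\i H})|\psi_0}$ up to the block-encoding bookkeeping. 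The difference between this and $\braket{\psi_0|f(H)^{\dag}Of(H)|\psi_0}$ is controlled by $\|P(e^{\i H})-f(H)\|\le\eps/2$ together with $\|f(H)\|\le 1$, $\|P(e^{\i H})\|\le 1$, giving an error $\le \eps\|O\|$ after a triangle-inequality expansion $\|P^{\dag}OP - f^{\dag}Of\|\le \|P-f\|\,\|O\|\,(\|P\|+\|f\|)$; combined with the $\eps\|O\|$-level estimation budget of Theorem~\ref{thm:interleaved-sequence-general-result} (rescaling the precision parameters by constants) this yields total additive error $O(\eps)\|O\|$, which after renaming constants is $\eps\|O\|$.

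Third, I would simply read off the complexity. Plugging $\Gamma_{\mathrm{avg}}=L$, $M=2d$, $\tau_{\mathrm{sum}}=O(d)$ into the incoherent bounds of Theorem~\ref{thm:interleaved-sequence-general-result} gives maximum circuit depth $\widetilde O\big(L(d\lambda_{\mathrm{comm}})^{1+1/p}+d\big)=\widetilde O\big(L(d\lambda_{\mathrm{comm}})^{1+1/p}\big)$ and time complexity $\widetilde O\big(L(d\lambda_{\mathrm{comm}})^{1+1/p}/\eps^2\big)$, and only a single ancilla qubit is used (the one ancilla of the GQSP/block-encoding construction, with no extra ancillas needed for the Trotterization or the extrapolation, since extrapolation is a purely classical post-processing of measurement outcomes). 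The bound $\lambda_{\mathrm{comm}}\le 4\max_j\sum_\gamma\|H^{(j)}_\gamma\|\le 4\lambda$ follows from Eqs.~\eqref{eq:nested-comm-scaling} and~\eqref{decom of H and lambda} applied to $\tilde H$. Constant success probability comes for free from the fact that the estimate is additive with the stated accuracy; one may amplify by a median-of-means over $O(\log(1/\eta))$ independent runs if a success probability $1-\eta$ is desired.

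\textbf{Main obstacle.} The analytic heavy lifting — showing the Trotter error of the interleaved sequence is a power series in the step size and that Richardson extrapolation kills it at near-linear overhead — is entirely absorbed into Theorem~\ref{thm:interleaved-sequence-general-result}, so the residual difficulty here is modest: it is the bookkeeping needed to (i) verify that $\tilde H=\diag(H,0)$ (resp.\ $\diag(0,-H)$) really does have constant-time-implementable terms and a nested-commutator quantity no larger than that of $H$, so that ``$L$'' and ``$\lambda_{\mathrm{comm}}$'' transfer cleanly, and (ii) make precise the conditions under which $f$ admits the required bounded Laurent-polynomial approximation $P(e^{\i x})$ of degree $d$ — in particular relating ``bounded polynomial of degree $d$ on $[-1,1]$, bounded on $[-1-\delta,1+\delta]$'' to ``degree-$d$ Laurent polynomial in $e^{\i x}$ bounded on $\mathbb{T}$.'' This last approximation-theory lemma (invoked in the paragraph preceding the theorem) is the only genuinely new ingredient and is where I would expect to spend real effort: one constructs $P(e^{\i x})$ from the given polynomial by a change of variables $x\mapsto e^{\i x}$ composed with a Chebyshev-type rescaling, and must control the sup-norm blow-up over $\mathbb{T}$ using the assumed boundedness on the slightly enlarged real interval, keeping the degree at $O(d)$ up to logarithmic factors in $1/\eps$.
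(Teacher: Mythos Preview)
Your proposal is correct and follows essentially the same route as the paper: cast the GQSP circuit as an interleaved sequence via $c_0\text{-}e^{\i H}=e^{\i(\ket{0}\bra{0}\otimes H)}$ and $c_1\text{-}e^{-\i H}=e^{\i(-\ket{1}\bra{1}\otimes H)}$, then invoke Theorem~\ref{thm:qsp-with-trotter-and-interpolation} (the formal version of Theorem~\ref{thm:interleaved-sequence-general-result}) with the $2k$-th order Suzuki formula and read off the complexity with $M=2d$, $\Gamma_{\mathrm{avg}}=L$, $\tau_{\mathrm{sum}}=\widetilde O(d)$. One clarification: the Laurent-polynomial approximation lemma you flag as the ``main obstacle'' is a \emph{hypothesis} of this theorem, not something to be proved here --- the paper does establish such a result separately (Proposition~\ref{lem:gqsp-poly2tri}), but the proof of Theorem~\ref{thm_gqsp-trotter-suzuki} simply takes $P$ as given, so the only real bookkeeping left is your point~(i).
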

The condition $\|H\| \le B$ ensures that the spectrum of $H$ lies within a single period of $P(e^{\i x})$, but this does not limit the scope of the method. For any general Hamiltonian, we can apply our theorem to its rescaled operator $H' = H/\|H\|$, approximating the function $f(x\|H\|)$. Although the approximation degree $d$ of $f(x\|H\|)$ scales linearly with $\|H\|$ in general, this is canceled by the inverse scaling of the $\lambda_{\rm comm}$ term for $H'$. As a result, the overall complexity, governed by the product $d \lambda_{\rm comm}$, remains independent of the operator norm and instead depends only on the commutator structure of $H$. This ensures that our approach remains efficient even for Hamiltonians with  $\|H\| >\pi$.

\begin{table}[ht!!]
\begin{center}
    \resizebox{\columnwidth}{!}{
    \renewcommand{\arraystretch}{1.5} 
    \begin{tabular}{cccc}
    \hline
    Algorithm & Ancilla & Circuit depth per coherent run & Classical repetitions \\ \hline\hline
     
  Standard QSVT \cite{gilyen2019quantum} & $\lceil \log_2 L \rceil+1 $ & $\widetilde{O}\left(Ld\lambda\right)$ & $\widetilde{O}\left(\varepsilon^{-2}\right)$ \\  
 
  QSVT with Trotterization (Thm.~\ref{thm:hsvt-gqsp-trotter}) & 1 & $\widetilde{O}\left(L(d\lambda_{\mathrm{comm}})^{1+o(1)}\right)$ & $\widetilde{O}\left(\varepsilon^{-2}\right)$ \\ 
 \hline
\end{tabular}}
  \caption{\small{Comparison of the complexity of standard QSVT with our algorithm for solving Problem \ref{problem}. 
\label{table:comparison-qsvt}}}
\end{center}
\end{table}

Theorem \ref{thm:hsvt-gqsp-trotter} provides a general framework to implement polynomial transformations (irrespective of whether it is real or complex) of a Hermitian operator, without using block encodings, while using only a single ancilla qubit. Table \ref{table:comparison-qsvt} compares the complexity of Theorem \ref{thm:hsvt-gqsp-trotter} with standard QSVT with block encoding.
Note that despite requiring only a single ancilla qubit and without assuming any oracular access to $H$, we obtain near-optimal time complexity. If $H$ is non-Hermitian, Theorem \ref{thm:hsvt-gqsp-trotter} would use an additional ancilla qubit. 
In some specific cases, it is possible to keep the total number of ancilla qubits to one, even when $H$ is non-Hermitian. 
This can be achieved by implementing complex polynomials using Hamiltonian Singular Value Transformation (HSVT) \cite{dong2022ground, lloyd2021hamiltonian}. 
While this framework is also subsumed by the circuit in Eq.~\eqref{eq:interleaved-sequence-circuit} and does not require implementing controlled Hamiltonian simulations, finding appropriate complex polynomials is challenging. 
For certain odd real polynomials, HSVT can be implemented using only one ancilla qubit by employing symmetric phase factors  \cite{wang2022energy}. 
However, in general, implementing HSVT with real polynomials uses two ancilla qubits, even when \( H \) is Hermitian.\footnote{See the arXiv version of this paper \cite{chakraborty2025quantum} for a more detailed discussion.}

Finally, in several applications, we are often interested in finding the expectation value of $O$ with respect to the normalized quantum state output by QSVT. We can define the problem formally as:
\begin{prob} \label{problem:normalized-qsvt}
Consider a Hermitian operator $H$ that is a sum of local operators, $H=\sum_{j=1}^{L} H_j$ with $\lambda=\sum_{j=1}^{L}\|H_j\|.$
For any observable $O$ and initial state $\ket{\psi_0}$, estimate the expectation value 
$$
\braket{\psi|O|\psi}\pm \eps\|O\|,
$$
where
\begin{equation}
    \label{eq:normalized-qsvt-state}
    \ket{\psi}=\dfrac{f(H)\ket{\psi_0}}{\|f(H)\ket{\psi_0}\|},
\end{equation}
for any function $f(x)$ that is approximated by a bounded polynomial of degree $d$.
\end{prob}
In such a case, Theorem \ref{thm:hsvt-gqsp-trotter} can be easily adapted to estimate $\braket{\psi|O|\psi}$. 
If $\|f(H)\ket{\psi_0}\|\geq \eta$, there are three ways in which this expectation value in Problem \ref{problem:normalized-qsvt} can be estimated, just like in standard QSVT, with the circuit depth and the overall time complexity also scaling similarly. We list them here, with the detailed costs summarized in Table \ref{table:comparison-our-qsvt-methods}:
\begin{itemize}
    \item First, it is possible to directly measure $O$ incoherently as before. Applying Hoeffding's inequality, one can show that the expectation value with respect to the normalized state can be estimated with $\widetilde{O}(\eps^{-2}\eta^{-2})$ classical repetitions.

    \item Second, it is possible to adapt the interleaved sequence $W$ to incorporate quantum amplitude amplification. 
For an initial state $\ket{\psi_0}$, this leads to a modified interleaved sequence circuit:
\begin{equation}
\label{eq:interleaved-sequence-qaa}
    \widetilde{W}=\prod_{j=1}^{K/2} W e^{\i \phi_j\ket{0,\psi_0}\bra{0,\psi_0}} W^{\dag} e^{\i \theta_j (\ket{0}\bra{0}\otimes I)},
\end{equation}

where $\phi_j,\theta_j\in [0,2\pi]$ are specified by the fixed-point amplitude amplification algorithm \cite{yoder2014fixed} and $K=O(\eta^{-1}\log(1/\eps))$. The unitary $e^{\i \phi_j\ket{0,\psi_0}\bra{0,\psi_0}}$ helps implementing a reflection about the initial state $\ket{\psi_0}$, controlled on a single ancilla qubit. The new circuit $\widetilde{W}$ is still an instance of Eq.~\eqref{eq:interleaved-sequence-circuit}, and we can again use Richardson extrapolation to mitigate the errors at the end. The circuit depth of each run, as well as the total running time, now increases by a factor of $\eta^{-1-1/p}$ as compared to Theorem \ref{thm:hsvt-gqsp-trotter}. 

\item Finally, it is also possible to estimate $\braket{\psi|O|\psi}$ using iterative quantum amplitude estimation \cite{grinko2021iterative}. The time complexity scales as $\widetilde{O}(L(d\lambda_{\rm comm}/\eta)^{1+o(1)}/\eps)$, but requires block encoding access to $O$.
\end{itemize}
\begin{table}[ht!!]
\begin{center}
    \resizebox{\columnwidth}{!}{
    \renewcommand{\arraystretch}{1.5} 
    \begin{tabular}{c|ccccc}
    \hline
    Algorithm & Variant & Ancilla & \vtop{\hbox{\strut ~~Circuit depth} \hbox{\strut per coherent run}} & Classical repetitions \\ \hline\hline
    
    \multirow{3}{*}{\large{Standard QSVT \cite{gilyen2019quantum}}}  & Without QAA or QAE & $O\left(\log L\right)$ & $\widetilde{O}\left(L d\lambda\right)$ & $\widetilde{O}\left({\varepsilon^{-2}\eta^{-2}}\right)$\\
    
    & QAA & $O(\log L)$ & $\widetilde{O}\left(L d\lambda/\eta\right)$ & $\widetilde{O}\left({\varepsilon^{-2}}\right)$ \\ 
    
   & QAE & $O\left(\log L\right)$ & $\widetilde{O}\left({L d\lambda}/{\eta\eps}\right)$ & -- \\
     \hline 
   
 \multirow{3}{*}{\large{This work}} & Without QAA or QAE & $1$ & $\widetilde{O}\left(L(d\lambda_{\mathrm{comm}})^{1+o(1)}\right)$ & $\widetilde{O}\left(\varepsilon^{-2}\eta^{-2}\right)$ \\
 
& QAA & 2 & $\widetilde{O}\left(L(d\lambda_{\mathrm{comm}}/\eta)^{1+o(1)}\right)$ & $\widetilde{O}\left(\varepsilon^{-2}\right)$ \\ 

 & Iterative QAE & 3 & $\widetilde{O}\left(L(d\lambda_{\mathrm{comm}}/\eta)^{1+o(1)}\cdot\eps^{-1}\right)$ & --\\
     \hline 
    \end{tabular}}
 \caption{\small{Comparison between standard QSVT and our method to solve Problem \ref{problem:normalized-qsvt}. The table compares the number of ancilla qubits, the circuit depth per coherent run, and the total number of classical repetitions. There are three ways the expectation value can be estimated, both for standard QSVT and our method. The first approach measures $O$ directly at the end of each run. The second technique uses quantum amplitude amplification (QAA) to prepare the normalized state, followed by incoherent measurements. Note that our method 
 runs an interleaved sequence circuit $\widetilde{W}$ which incorporates fixed point quantum amplitude amplification (See Eq.~\eqref{eq:interleaved-sequence-qaa}) but still measures $O$ at the end of each run incoherently. The third approach is to measure the desired expectation value using quantum amplitude estimation (QAE). In our method, we use the iterative quantum amplitude estimation scheme of \cite{grinko2021iterative}.
\label{table:comparison-our-qsvt-methods}}}
    \end{center}
\end{table}
\renewcommand{\arraystretch}{1}

Note that for all three methods in Table \ref{table:comparison-our-qsvt-methods}, we provide end-to-end complexities for estimating the required expectation value. Both the circuit depth per coherent run as well as the total time complexity nearly match the optimal complexity of standard QSVT, despite requiring significantly fewer resources.

\subsubsection{Applications}
\label{subsubsec:applications}

We apply our general algorithms for implementing QSVT without block encodings, to two problems of practical interest: (i) solving quantum linear systems, and (ii) ground state property estimation.
~\\~\\
\textbf{Quantum linear systems without block encodings:}
The quantum linear systems algorithm has been quite fundamental for developing quantum algorithms for regression \cite{chakraborty_et_al:LIPIcs.ICALP.2019.33, chakraborty2023quantum}, machine learning, and solving differential equations \cite{ childs2020quantum, liu2021efficient, Krovi2023improvedquantum}. Suppose $A\in \mathbb{C}^{N\times N}$ is a matrix with $\|A\|=1$, with singular values in the interval $[-1,-1/\kappa]\cup[1/\kappa, 1]$, where $\|A^{-1}\|=\kappa$ is the condition number. 
Given a procedure to efficiently prepare the quantum state $\ket{b}$, the goal is to estimate the expectation value $\braket{x|O|x}$ with $\eps$-additive accuracy, where $\ket{x}=A^{-1}\ket{b}/\|A^{-1}\ket{b}\|$. 
The complexity of this algorithm has been improved over the years, with broadly two approaches (both of which assume access to a block encoding $U_A$ of matrix $A$): 
The first one makes use of quantum linear algebra techniques such as Linear Combination of Unitaries \cite{childs2017qls, chakraborty_et_al:LIPIcs.ICALP.2019.33}, QSVT \cite{gilyen2019quantum, chakraborty2023quantum, low2024qls}, along with a complicated technique known as variable-time amplitude amplification (VTAA) \cite{ambainis2012variabletime} to obtain a complexity that depends linearly on $\kappa$ and logarithmically in $\kappa/\eps$. 

Recently, an approach that is quite distinct from these has gained prominence, as it avoids using VTAA. 
These methods \cite{subacsi2019quantum, an2022quantum, lin2020optimal} are inspired by adiabatic quantum computation: they consider a time-dependent Hamiltonian $H(s)$ (which is related to $A$), with $s\in [0,1]$. 
The algorithm starts from an easy-to-prepare $0$-eigenstate of $H(0)$ and evolves adiabatically to the $0$-eigenstate of $H(1)$, which encodes $\ket{x}$. 
The algorithm of Costa et al. \cite{costa2022optimal} achieves an optimal query complexity of $O(\kappa\log(1/\eps))$, in terms of the number of queries to the block encoding of $A$. 
The underlying idea is to construct a quantum walk operator $Q(s)$ (from a block encoding of $A$) for different values of $s\in \{k/T\}_{k=0}^{T-1}$, to discretize the adiabatic computation. 
They show that the discretized adiabatic evolution by $Q(s)$ approximates the ideal adiabatic evolution under $H(s)$ (up to a constant) for $T=O(\kappa)$, and the resulting state at this point has a constant overlap with $\ket{x}$. 
The optimal dependence on precision is attained by employing the eigenstate filtering technique of \cite{lin2020optimal}.

\begin{table}[ht!!]
\begin{center}
    \resizebox{\columnwidth}{!}{
    \renewcommand{\arraystretch}{1.5} 
    \begin{tabular}{cccc}
    \hline
    Algorithm & Ancilla & Circuit depth per coherent run & Classical repetitions \\ \hline\hline
     
  State-of-the-art \cite{costa2022optimal} & $\lceil \log_2 L \rceil+6 $ & $\widetilde{O}\left(L\lambda\kappa\right)$ & $\widetilde{O}\left(\varepsilon^{-2}\right)$ \\  
 Near-term methods \cite{wang2024qubit, chakraborty2024implementing} & $1$ & $\widetilde{O}(\lambda^2\kappa^2)$ & $\widetilde{O}\left(\kappa^4\eps^{-2}\right)$\\
 
  This work (Thm.~\ref{thm-intro:quantum linear system}) & 4 & $\widetilde{O}\left(L(\lambda\kappa)^{1+o(1)}\right)$ & $\widetilde{O}\left(\varepsilon^{-2}\right)$ \\ 
 \hline
\end{tabular}}
  \caption{\small{Comparison of the complexity of quantum linear systems algorithms. Our complexity is near-optimal while requiring fewer ancilla qubits.
\label{table1:comparison-qls}}}
\end{center}
\end{table}

Despite achieving optimal query complexity, block encoding of $A$ can be expensive in practice. Suppose $A$ is expressed as a linear combination of $L$ Pauli operators of $n$-qubits, i.e.\ $H=\sum_{j}\lambda_jP_j$, where $P_j\in \{I, X, Y, Z\}^{\otimes n}$, and $\lambda=\sum_{j}|\lambda_j|$. The circuit depth of the state-of-the-art algorithm \cite{costa2022optimal}, to estimate $\braket{x|O|x}$ to $\varepsilon\|O\|$-additive accuracy is $\widetilde{O}(L\lambda\kappa)$, while making $\widetilde{O}(1/\eps^2)$ classical repetitions. Moreover, the method uses $\lceil\log_2 L\rceil+6$ ancilla qubits in all. On the other hand, quantum algorithms tailored to early fault-tolerant quantum computers have recently been developed \cite{wang2024qubit, chakraborty2024implementing}. These randomized algorithms are hardware friendly (requiring only a single ancilla qubit), but are sub-optimal with a total time complexity scaling as $\widetilde{O}(\lambda^2\kappa^6/\eps^2)$. 

We present a fully end-to-end algorithm that entirely eliminates the need for (i) block encoding of the Hamiltonian, and (ii) the construction of quantum walk unitaries from the interpolating Hamiltonian $H(s)$. At the same time, (iii) our approach achieves near-optimal complexity while significantly reducing the number of ancilla qubits required (to a constant). The adiabatic Hamiltonian $H(s)$, constructed from $A$ and $U_b$ (the unitary preparing the state $\ket{b}$), is an $n+2$ qubit operator and also admits a natural decomposition as a linear combination of Pauli operators and low rank projectors. Subsequently, we proceed by simulating the time-dependent Hamiltonian $H(s)$ at discrete values of $s$, i.e.,
$$
B(s)=\prod_{m=0}^{sT-1} e^{\i H(m/T)}.
$$
This naturally fits into the interleaved circuit structure of Eq.~\eqref{eq:interleaved-sequence-circuit}, where each $H^{(j)}=H(j/T)$, and $V_j=I$. We can simulate each time-evolution operator using higher-order Trotterization. 
This works fine because the eigenvalue gap of the quantum walk unitary is asymptotically the same as the eigenvalue gap of $e^{\i H(s)}$. 

The operator $B(s)$ is then combined with a second interleaved circuit, that implements a trigonometric polynomial (Laurent polynomial in $e^{\i x}$) carefully chosen to approximate the filtering function introduced in \cite{lin2020optimal}. This filtering procedure effectively projects the evolved state onto the zero-eigenstate of $H(1)$, and can again be implemented using Theorem~\ref{thm:hsvt-gqsp-trotter}.
Altogether the entire algorithm becomes a particular instance of the interleaved unitary sequence $W$, followed by the extrapolation of the measurement results to the zero-error limit. This gives us the following theorem:

\begin{thm}[Quantum linear systems without block encodings, informal version of Thm. \ref{thm:qls-our-method}]
\label{thm-intro:quantum linear system}
    Suppose $A=\sum_{j=1}^{L}\lambda_j P_j$ is a matrix with $\|A\|=1$, with singular values in $[-1,-1/\kappa]\cup [1/\kappa,1]$ and $\|A^{-1}\|=\kappa$. Let $\varepsilon\in(0,1)$ be the precision parameter. Then, there exists an algorithm that, for any observable $O$, estimates $\braket{x|O|x}$ to within an additive error of at most $\varepsilon\|O\|$, with a constant success probability, using only four ancilla qubits. Moreover, the algorithm has total time complexity of $\widetilde{O}(L(\lambda\kappa)^{1+o(1)}/\varepsilon^2)$, and a maximum circuit depth $\widetilde{O}(L(\lambda\kappa)^{1+o(1)})$.
\end{thm}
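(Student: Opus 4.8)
The plan is to realize the entire quantum linear systems solver as a single instance of the interleaved circuit $W$ of Eq.~\eqref{eq:interleaved-sequence-circuit}, so that Theorems~\ref{thm:interleaved-sequence-general-result} and~\ref{thm:hsvt-gqsp-trotter} apply essentially as black boxes, and then to mitigate the residual Trotter error by Richardson extrapolation. First, following the adiabatic reformulation of QLSP in \cite{subacsi2019quantum, lin2020optimal, costa2022optimal}, I would dilate the problem: from $A=\sum_j \lambda_j P_j$ and the state-preparation unitary $U_b$ build a one-parameter family of Hermitian operators $H(s)$, $s\in[0,1]$, on $n+2$ qubits such that $\ket{x}$ is encoded in the (unique) zero-eigenstate of $H(1)$, the zero-eigenstate of $H(0)$ equals $\ket{b}$ tensored with a fixed ancilla state (hence trivial to prepare), and the relevant spectral gap of $H(s)$ is $\Omega(1/\kappa)$ after the standard reparametrized schedule. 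Crucially $H(s)$ is itself a linear combination of $O(L)$ Pauli strings of $A$ together with a constant number of rank-one projectors such as $\ket{b}\bra{b}$ and $\ket{0}\bra{0}$; each factor $e^{\i H^{(j)}_\gamma}$ is then implementable in constant depth — the projector exponentials using a single conjugation by $U_b$ — so $\Gamma_{\mathrm{avg}}=O(L)$.

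Next I would replace the continuous adiabatic evolution by the product $B=\prod_{m=0}^{T-1} e^{\i H(m/T)}$ with $T=O(\kappa)$. Invoking the discretization analysis of \cite{costa2022optimal} — using that the eigenvalue gap of the Floquet operator $e^{\i H(s)}$ is $\Theta$ of the gap of $H(s)$, uniformly in $s$ — the output $B\ket{\psi_0}$ has $\Omega(1)$ overlap with the zero-eigenstate of $H(1)$. I would then compose $B$ with a generalized-QSP sequence (as in Theorem~\ref{thm:hsvt-gqsp-trotter}) implementing a degree-$d=O(\kappa\log(1/\eps))$ Laurent polynomial in $e^{\i H(1)}$ that approximates the Lin--Tong eigenstate-filtering function \cite{lin2020optimal}, which boosts the overlap with $\ket{x}$ to $1-\eps$; since the pre-filtering overlap is already a constant, a fixed-point amplitude-amplification wrapper \cite{yoder2014fixed} with only $O(\log(1/\eps))$ rounds suffices (so the normalization parameter is $\eta=\Theta(1)$ and no extra $\eta^{-1-1/p}$ factor appears). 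The composite circuit is exactly of the form~\eqref{eq:interleaved-sequence-circuit}, with $M=O(\kappa\log(1/\eps))$, each $H^{(j)}$ one of $H(m/T)$ or $\mathrm{diag}(\pm H(1),0)$, each $V_j$ a constant-depth single-qubit rotation (so $\tau_{\rm sum}=\widetilde{O}(M)$), and $\Gamma_{\mathrm{avg}}=O(L)$.

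It then remains to invoke the main theorems and assemble the error budget. Applying Theorem~\ref{thm:hsvt-gqsp-trotter} (equivalently Theorem~\ref{thm:interleaved-sequence-general-result}) to this circuit, with the Trotter step sizes chosen as integer multiples of $1/M$, and using $\lambda_{\mathrm{comm}}\le 4\lambda$ together with $M=\widetilde{O}(\kappa)$, gives a maximum coherent circuit depth $\widetilde{O}\big(L(M\lambda_{\mathrm{comm}})^{1+1/p}\big)=\widetilde{O}\big(L(\lambda\kappa)^{1+1/p}\big)$ and a total time $\widetilde{O}\big(L(\lambda\kappa)^{1+1/p}/\eps^2\big)$, the Richardson extrapolation removing the Trotter error down to $\eps$. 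The $\eps$-budget splits into three $O(\eps)$ pieces — adiabatic discretization, polynomial approximation of the filter, and extrapolated Trotter error — and Hoeffding's inequality over $\widetilde{O}(\eps^{-2})$ independent classical runs (with $\eta=\Theta(1)$) yields the claimed additive accuracy $\eps\|O\|$ with constant success probability. The ancilla tally is: two qubits for the adiabatic dilation of $H(s)$, one for the controlled-$e^{\i H(1)}$ in the filtering stage, and one for the reflection about $\ket{0,\psi_0}$ in the amplitude-amplification wrapper — four in total.

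The step I expect to be the main obstacle is the error composition in the middle stage: Theorem~\ref{thm:interleaved-sequence-general-result} controls the Trotterized, extrapolated approximation only at the level of the final expectation value $\braket{\psi_0|W^{\dagger}OW|\psi_0}$, so one must verify that this interacts benignly with the two \emph{non-Trotter} error sources (the adiabatic-discretization error of $B$ and the truncation error of the filter polynomial) — i.e.\ that the three errors add rather than compound, and that the constant lower bound on the target overlap survives all of them. Establishing that $\mathrm{gap}\big(e^{\i H(s)}\big)=\Theta\big(\mathrm{gap}(H(s))\big)$ uniformly in $s$ — which is what licenses reusing the $T=O(\kappa)$ schedule of \cite{costa2022optimal} in the Trotterized setting — and the careful bookkeeping that keeps the ancilla register at four throughout, are the remaining technical points.
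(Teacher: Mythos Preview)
Your proposal is correct and follows essentially the same route as the paper: cast the discretized adiabatic evolution $B=\prod_m e^{\i H(m/T)}$ with $T=O(\kappa)$ and the degree-$O(\kappa\log(1/\eps))$ Laurent-polynomial eigenstate filter as a single interleaved sequence, then invoke Theorem~\ref{thm:interleaved-sequence-general-result} with Richardson extrapolation. Two minor discrepancies are worth flagging: the paper does \emph{not} use an amplitude-amplification wrapper but simply measures conditioned on the GQSP ancilla (the post-filter overlap is already constant), and the paper's fourth ancilla is not for an AA reflection but for the multi-qubit Toffoli needed to implement $e^{\i t\ket{0^{n+1}}\bra{0^{n+1}}}$ when exponentiating the rank-one projector terms of $H(s)$; also, to port the $T=O(\kappa)$ bound from \cite{costa2022optimal} the paper verifies not just the gap but the smoothness parameters $c_1(s),c_2(s)$ of $B(s)=e^{\i H(s)}$ explicitly via a variation-of-parameters bound on $\partial_s e^{\i H(s)}$.
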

The maximum circuit depth is $\widetilde{O}(L(\lambda\kappa)^{1+o(1)})$, which is nearly optimal: (i) the dependence on the condition number $\kappa$ is quasi-linear, and (ii) we also have a polylogarithmic dependence on inverse precision using only a few ancilla qubits, and just higher order Trotterization. As before, the desired expectation value can also be estimated using iterative quantum amplitude estimation with a quadratically reduced dependence on precision, resulting in a time complexity $\widetilde{O}(L(\lambda\kappa)^{1+o(1)}/\eps)$. Table \ref{table1:comparison-qls} compares our methods with  \cite{costa2022optimal, wang2024qubit, chakraborty2024implementing} in detail. More analysis is given in Sec. \ref{sec:applications}.
~\\~\\
\noindent
\textbf{Ground state property estimation:} Consider a Hamiltonian $H$ which is a sum of $L$ local terms, with unknown ground state $\ket{v_0}$. The goal is to estimate the expectation value $\braket{v_0|O|v_0}$, with arbitrary additive accuracy, provided we have (i) prior knowledge of the spectral gap $\Delta$ of $H$, and (ii) access to an initial (guess) state $\ket{\phi_0}$ with an overlap of at least $\gamma$ with the ground state. A particular instance of this problem is the estimation of the ground state energy $\xi_0$ of $H$. 
Both of these problems are highly relevant and have applications in many areas of physics. Almost all prior work on developing quantum algorithms for these problems (both for near-term quantum devices and fully fault-tolerant quantum computers), assume some form of oracular access to $H$ \cite{lin2020nearoptimalground, lin2022heisenberg, dong2022ground, zhang2022computingground}.

The state-of-the-art quantum algorithm by Lin and Tong \cite{lin2020nearoptimalground}, solves this problem while assuming access to a block encoding of $H$. The overall time complexity (using quantum amplitude estimation) is
$$
\widetilde{O}\left(\dfrac{L\lambda}{\Delta\gamma\eps}\right).
$$
However, the algorithm requires $O(\log L)$ ancilla qubits and sophisticated multi-qubit controlled gates, making it impractical to implement on near-term or early fault-tolerant quantum devices due to its significant resource overhead.

As a result, several quantum algorithms have been proposed that are specifically designed for early fault-tolerant quantum devices. Notably, recent randomized techniques \cite{wang2024qubit, chakraborty2024implementing} assume that the Hamiltonian $H$ is expressed as a linear combination of strings of Pauli operators, defined as $H=\sum_{j}\lambda_j P_j$, where now $\lambda=\sum_{j}|\lambda_j|$. These approaches require only a single ancilla qubit, making them hardware-friendly, but suffer from sub-optimal complexity scaling:
$$
\widetilde{O}\left(\dfrac{\lambda^2}{\Delta^2\gamma^4\eps^2}\right).
$$
In fact, any method that relies on sampling the Pauli terms $P_j$, according to their weight $\lambda_j/\lambda$, is fundamentally limited to worse 
scaling \cite{chakraborty2025quantum}. 

Another line of recent work \cite{lin2022heisenberg, dong2022ground, zhang2022computingground} has developed near-term quantum algorithms assuming oracular access to the Hamiltonian evolution operator $e^{\i H}$. For instance, \cite{dong2022ground} applies Hamiltonian Singular Value Transformation (HSVT) to achieve near-optimal query complexity using only a single ancilla qubit. Their circuit consists of an interleaved sequence of single-qubit rotations and Hamiltonian evolutions—precisely a special case of our general circuit structure $W$. However, in practice, implementing $e^{\i H}$ via higher-order Trotterization incurs a cost scaling as $1/\eps^{1+o(1)}$, which degrades the ideal Heisenberg scaling guaranteed under exact oracular access.

Building on this framework, we remove the oracle assumption and instead construct $e^{\i H}$ explicitly using higher-order Trotterization. Moreover, we apply Richardson extrapolation to effectively reduce the Trotter error. Thus, our general setting allows us to develop end-to-end quantum algorithms for this problem without assuming any oracular access and using only $O(1)$ ancilla qubits. We show that the shifted sign function can be well-approximated by a Laurent polynomial in $e^{\i x}$, as presented in Lemma \ref{lem:approx-shift-sign}, which allows us to use Theorem \ref{thm:hsvt-gqsp-trotter} to estimate $\braket{v_0|O|v_0}$ to the desired precision. 

Our algorithm achieves an exponential reduction in circuit depth compared to the incoherent ground state estimation approach of \cite{dong2022ground}. Furthermore, by leveraging iterative quantum amplitude estimation, we preserve Heisenberg scaling in precision, yielding a polynomial speedup over even the coherent variant of their algorithm. Remarkably, both our total runtime and maximum circuit depth closely approach the optimal complexity established by Lin and Tong \cite{lin2020nearoptimalground}. We summarize the result as follows, with the detailed proofs in Sec.~\ref{subsec:app-ground-state}:

\begin{thm}[Ground state property estimation, informal version of 
Thm. \ref{thm_ground state property estimation using HSVT with Trotter}]
\label{thm-intro:ground state}
Given a Hamiltonian $H=\sum_{j=1}^L H_j$, whose spectrum is contained in $[-1, 1]$. Suppose there exists a procedure to efficiently prepare the state $\left|\phi_0\right\rangle$ such that $\left|\braket{\phi_0|v_0} \right| \geq \gamma$ for $\gamma\in (0,1)$, and there is a spectral gap $\Delta$ separating the ground state energy from the rest of the spectrum. Let $\varepsilon\in(0,1)$ be the precision parameter. Then there exists a quantum algorithm that estimates $\braket{v_0|O|v_0}$ within an additive accuracy of $\eps\|O\|$, using only two ancilla qubits and with constant success probability. The algorithm has a maximum circuit depth of 
$
    \widetilde{O}( L({\lambda_{\rm comm}}/{\Delta\gamma})^{1+o(1)} )
$
and total time complexity of
    $
    \widetilde{O}(\eps^{-2} L({\lambda_{\rm comm}}/{\Delta\gamma})^{1+o(1)} ).
    $ If we have coherent access to the observable $O$, then we can reduce the time complexity to  $
    \widetilde{O}(\eps^{-1} L({\lambda_{\rm comm}}/{\Delta\gamma})^{1+o(1)} )
    $ using one additional ancilla qubit.
\end{thm}

Yet another advantage of using higher-order Trotterization is that for many physical Hamiltonians, $\lambda_{\rm comm}\ll \lambda$, which offers room for even more practical advantages. This includes broad classes of Hamiltonians, such as $k$-local Hamiltonians and the Heisenberg model, Hamiltonians with power law interactions, electronic structure Hamiltonians, and cluster Hamiltonians \cite{childs2021theory, aftab2024multi}. 
Despite removing the oracular assumption, the resulting quantum algorithms achieve asymptotic complexity comparable to the best-known results. 

\begin{table}[h!]
\centering
\resizebox{1.0\columnwidth}{!}{
\renewcommand{\arraystretch}{1.5} 
\begin{tabular}{c | c c c c} 
 \hline
 Algorithm & Variant & Ancilla  & Circuit depth per coherent run & Classical repetitions \\
 \hline\hline
 Lin and Tong~\cite{lin2020nearoptimalground}  & QAE & $\lceil \log_2L\rceil+4$ & $\widetilde{O}(L \lambda \Delta^{-1}\gamma^{-1}\eps^{-1})$ & $-$ \\ \hline
  Dong et al.~\cite{dong2022ground}  & \multirow{2}{*}{QAA} & 2 & $\widetilde{O}\big(L \big(\lambda_{\rm comm}' \Delta^{-1}\gamma^{-1}\big)^{1+o(1)}\varepsilon^{-o(1)}\big) $ & $\widetilde{O}(\varepsilon^{-2})$ \\ 
 This work (Thm.~\ref{thm-intro:ground state})  & & 2 & $ \widetilde{O}(L \big(\lambda_{\rm comm} \Delta^{-1}\gamma^{-1}\big)^{1+o(1)} ) $ & $\widetilde{O}(\varepsilon^{-2})$ \\ \hline
Dong et al.~\cite{dong2022ground}  & \multirow{2}{*}{Iterative QAE} & 3 & $\widetilde{O}\big(L \big(\lambda_{\rm comm}' \Delta^{-1}\gamma^{-1}\big)^{1+o(1)}\varepsilon^{-1-o(1)}\big) $ & $-$ \\
 This work (Thm.~\ref{thm-intro:ground state}) & & 3 & $ \widetilde{O}(L \big(\lambda_{\rm comm} \Delta^{-1}\gamma^{-1}\big)^{1+o(1)}\eps^{-1}) $ & $-$ \\  
 \hline
\end{tabular}
}
\caption{Comparison of different algorithms for ground state property estimation.}
\label{table:ground state problem}
\end{table}

Finally, in Table \ref{table:ground state problem} we compare our results with previous state-of-the-art results:  
\begin{itemize}
    \item The state-of-the-art algorithm of Lin and Tong \cite{lin2020nearoptimalground} obtains optimal complexity at the cost of using several ancilla qubits, and sophisticated multi-qubit controlled operators needed to construct the block encoding of $H$. In comparison, our method (using iterative QAE) attain Heisenberg scaling of $1/\eps$, and nearly matches the state-of-the-art time complexity. 
    
    \item In order to estimate the desired expectation value, one possibility is to measure $O$ directly, after each run of the algorithm. As compared to this variant of the Dong et al.~\cite{dong2022ground} algorithm, the circuit depth of our algorithm is also reduced by a factor of $\varepsilon^{-o(1)}$, yielding an exponential speedup. The incoherent algorithms achieve shorter quantum circuits and one fewer ancilla qubit. Although the overall cost of these algorithms is higher, they are more suitable for intermediate quantum devices.

    \item We can estimate the expectation value using iterative quantum amplitude estimation. As compared to Dong et al.~\cite{dong2022ground}, the use of extrapolation, results in a shorter circuit depth (by a factor of $\varepsilon^{-o(1)}$), yielding a polynomial speedup.
    
    \item Another advantage of our method over \cite{lin2020nearoptimalground} is that $\lambda_{\rm comm}\ll \lambda$. Note that this prefactor for \cite{dong2022ground}, $\lambda_{\rm comm}'$, is different from $\lambda_{\rm comm}$ but scales similarly. See Sec.~\ref{subsec:gspe-algorithm} for details.
\end{itemize} 

\subsection{Comparison with prior work}
Since QSVT unifies a wide range of quantum techniques \cite{martyn2021grand}, it has been central to the development of many fundamental algorithms, including Hamiltonian simulation \cite{low2017qsp, low2019hamiltonian}, quantum linear system solvers \cite{gilyen2019quantum, lin2020optimal, chakraborty2023quantum}, ground state preparation \cite{lin2020nearoptimalground}, quantum semidefinite program solvers \cite{brandao2017SDP, vanapeldoorn2019improvedsdp}, and quantum walks \cite{gilyen2019quantum, apers2021unified, chakraborty2024implementing}.

Recent efforts have focused on designing quantum algorithms that minimize resource requirements, motivated by the practical limitations of emerging small-scale quantum hardware—such as constrained circuit depth and limited availability of ancilla qubits \cite{cerezo2021variational, katabarwa2024early}. Many of these near-term algorithms are either heuristic in nature \cite{cerezo2021variational}, lacking rigorous guarantees, or tailored to specific problems \cite{lin2022heisenberg, zhang2022computingground}. A promising development in this direction is the randomized LCU framework introduced in \cite{chakraborty2024implementing, wang2024qubit}, which leverages a single ancilla qubit to construct resource-efficient quantum algorithms for tasks such as quantum linear system solving and ground state property estimation. However, these approaches still suffer from sub-optimal circuit depth and overall complexity. 

In contrast, our approach provides the first fully end-to-end implementation of quantum singular value transformation (QSVT) that eliminates oracular assumptions and requires only a single ancilla qubit, while retaining near-optimal asymptotic complexity. As a result, when applied to problems of practical interest, such as solving quantum linear systems or estimating ground state properties, our framework yields quantum algorithms that are provably efficient, hardware-friendly, and practically scalable on early fault-tolerant quantum computers.

Error mitigation via extrapolation has been extensively explored in the context of Hamiltonian simulation, particularly for algorithms such as Trotterization \cite{watson2024exponentially}, qDRIFT \cite{watson2024randomly}, and multi-product formulas \cite{low2019well, aftab2024multi}. These techniques are inspired by earlier strategies developed for suppressing errors in near-term quantum circuits \cite{endo2018practical, cai2023mitigation}. Recently, classical post-processing techniques such as Chebyshev interpolation have been proposed to reduce Trotter errors more efficiently \cite{rendon2024improved, rendon2024needtrotter, watson2024exponentially}. In particular, \cite{rendon2024needtrotter} examined the mitigation of Trotter errors in circuits composed of interleaved unitaries and Hamiltonian evolutions. To avoid fractional queries to Hamiltonian evolution, they applied separate cardinal sine interpolation to each segment, which results in sub-optimal scaling for sequence lengths $M \geq 2$. 

In contrast, our work advances extrapolation-based error mitigation by integrating it as a fundamental design principle for constructing near-optimal quantum algorithms, tailored to both near-term and fully fault-tolerant quantum architectures, and significantly expanding the scope and efficacy of extrapolation beyond prior techniques.

\section{Preliminaries}
\label{sec:prelim}
In this section, we first clarify the notation used throughout the article. Next, we provide a concise overview of the foundational concepts pertinent to this work. Each subsection is self-contained and does not depend on others, allowing readers to navigate the content flexibly. For those already familiar with specific topics, some subsections may be skipped without losing continuity. Our goal is to provide a thorough yet accessible overview, ensuring that all necessary background information is available to support the main results of this paper.

\subsection{Notation}
\label{sec:notation}

\textbf{Sets:~}We use $\mb N$ to denote the set of positive integers $\{1,2,\cdots\}$ and $\mathbb{T}= \{ z\in \mathbb{C} : |z|=1 \}$ to denote the complex unit circle. For any set $S\subseteq\mb R$, we use $S_{\geq p}$ to denote $S \cap [p,\infty)$.
\\~\\
\textbf{Complexity theoretic notation:} Throughout the work, we follow the standard complexity-theoretic notation. In particular, we use $\widetilde{O}(\cdot)$ to hide polylogarithmic factors, namely, $\widetilde{O}(f(n))=O(f(n)\polylog(f(n)))$. 
~\\~\\
\textbf{Norms:~}We use bold lowercase letters to denote vectors. For example, $\bm{a}$ represents the vector $(a_1, \ldots, a_n)$. We use $\|\cdot\|_1$ to denote the vector $\ell_1$-norm, i.e., $\|\bm{a}\|_1 = \sum_{i=1}^n |a_i|$.
Unless otherwise specified, $\|A\|$ will denote the spectral norm of the operator $A$. We denote two operators $A,B$ that are within $\eps$-close in spectral norm, i.e.,\ $\|A-B\|\leq \varepsilon$, by $A\approx_\varepsilon B$. We use $M_n(\mathbb{C})$ to denote the $n \times n$ complex matrices. For $A,B \in M_n(\mathbb{C})$, we define $\ad_B(A)=[B,A]=BA-AB$ and so $e^{\ad_B}(A) = e^B A e^{-B}$. Let $\mc E\colon M_n(\mathbb{C}) \rightarrow M_n(\mathbb{C})$ be a quantum channel, then the diamond norm of $\mc E$ is given by
\[
\|\mc E\|_{\diamond}:=  \max_{\rho: \|\rho\|_1 \leq 1} \left\|\left(\mathcal{E} \otimes I_{n}\right)(\rho)\right\|_1,
\]
where $\rho \in M_{n^2}(\mathbb{C})$ and $\|A\|_1 = \Tr[A^\dag A]$ is the trace norm or Schatten 1-norm for matrix $A$.
Let $f(x):[a,b] \rightarrow  \mathbb{C}$, we denote 
$$\|f\|_{[a,b]} = \max_{x\in[a,b]} |f(x)|.
$$  
~\\
\textbf{Trace, Expectation and Probability:~}The trace of an operator $A$ will be denoted by $\Tr[A]$. The expectation value of a random variable $V$ will be denoted by $\mathbb{E}[V]$. The probability of an event $E$ occurring will be denoted by $\Pr[E]$.
\\~\\
\textbf{Pauli matrices:~}The Pauli matrices are
\[
I = \begin{bmatrix}
    1 & 0 \\ 0 & 1
\end{bmatrix}, \quad
X = \begin{bmatrix}
     0 & 1 \\ 1 & 0 
\end{bmatrix}, \quad
Y = \begin{bmatrix}
     0 & -\i \\ \i & 0 
\end{bmatrix}, \quad
Z = \begin{bmatrix}
    1 & 0 \\ 0 & -1
\end{bmatrix}.
\]
Any matrix $A$ of dimension $2^n$ can be uniquely written as a linear combination of tensor products of Pauli matrices, namely, 
\[
A = \sum_{P\in\{I,X,Y,Z\}^{\otimes n}}\lambda_P P,
\qquad 
\lambda_P = \frac{1}{2^n} \Tr[A^\dag P].
\] 
If $A$ is Hermitian, then $\lambda_P\in\mathbb{R}$.
~\\~\\
\textbf{Logarithm of a unitary matrix:~}For any unitary matrix $U$, there exists a unique skew-Hermitian matrix $V$ with eigenvalues in $ (-\i\pi, \i\pi]$ such that $e^{V}=U$, and we denote $V$ by $\log(U)$.
~\\~\\
\textbf{Computational model:~}Throughout this article, by the \emph{time complexity} of a quantum algorithm, we refer to the number of elementary gates used to implement the algorithm.

\subsection{Brief overview of quantum singular value transformation}
\label{subsec-prelim:blk-encoding-qsvt}

In a nutshell, quantum singular value transformation (QSVT) applies polynomial transformations to the singular values of any given operator $H$ that is embedded in the top-left block of a unitary $U_H$. Before delving into QSVT, let us briefly define block encoding, which was introduced in a series of works \cite{low2019hamiltonian, chakraborty_et_al:LIPIcs.ICALP.2019.33, gilyen2019quantum}. Formally, we define a block encoding as follows:

\begin{defn}[Block encoding~\cite{chakraborty_et_al:LIPIcs.ICALP.2019.33}]
    \label{def:block_encoding}
Suppose that $H$ is an $s$-qubit operator, $\alpha \in \mathbb{R}^+$ and $a \in \mathbb{N} $, then we say that the $(s + a)$-qubit unitary $U_H$ is an $(\alpha, a, 0)$-block encoding of $H$, if

\begin{equation} \label{eq:block-encoding}
 H = \alpha (\bra{0}^{\otimes a} \otimes I)U_H(\ket{0}^{\otimes a} \otimes I). 
\end{equation}
That is,
$$
U_H=\begin{bmatrix}
   H/\alpha  & * \\ * & *
\end{bmatrix}.
$$
Usually, errors are allowed in \eqref{eq:block-encoding}.
\end{defn}

\begin{defn}[Singular value transformation]
\label{def_svt of matrix}
Let $f\colon \mb{R} \rightarrow \mathbb{C}$ be an even or odd polynomial. Let $A \in \mathbb{C}^{m \times n}$ be a matrix with singular value decomposition (SVD) $A=\sum_{i=1}^{n} \sigma_i\ket{u_i}\left\langle v_i\right|$ with $\sigma_i:= 0$ when $i> \min(m,n)$. We define singular value transformation corresponding to $f$ as
\[
f^{(\rm SV)}(A):= \begin{cases}  \vspace{.2cm}
\sum_{i=1}^{n}  f\left(\sigma_i\right) \ket{u_i} \bra{v_i} & \text { if } f \text{ is odd, }\\ 
\sum_{i=1}^{n}  f\left(\sigma_i\right) \ket{v_i} \bra{v_i} & \text { if } f \text{ is even. }
\end{cases}
\]
\end{defn}

Note that $ \sqrt{A^\dag A} = \sum_i \sigma_i \ket{v_i} \bra{v_i}$. So if $f(x)$ is even, then $f^{(\rm SV)}(A)=f(\sqrt{A^\dag A})$. If $f(x)=x \cdot f_1(x)$ is odd for some even polynomial $f_1(x)$, then $f^{(\rm SV)}(A)=A \cdot f_1(\sqrt{A^\dag A})$. If $f(x)$ is a polynomial with definite parity and $A$ is Hermitian, then $f^{(\rm SV)}(A)=f(A)$ is the standard polynomial of a matrix.

\begin{defn}[Definition 8 of \cite{gilyen2019quantum}] 
\label{def_alternating phase sequence}
    Let $\mathcal{H}_U$ be a finite dimensional Hilbert space and $U, \Pi, \widetilde{\Pi} \in \operatorname{End}\left(\mathcal{H}_U\right)$ be linear operators such that $U$ is unitary and $\Pi, \widetilde{\Pi}$ are orthogonal projectors. Let $\Phi=(\phi_1,\ldots,\phi_d)  \in \mathbb{R}^d$, we define the alternating phase modulation sequences $U_\Phi$ as follows
    \begin{equation}\label{eq_alternating phase circuit}
    U_{\Phi}:= \begin{cases}  \vspace{.2cm}
    e^{\i  \phi_1(2 \widetilde{\Pi}-I)} U 
    \prod_{j=1}^{(d-1) / 2} \left(e^{\i  \phi_{2 j}(2 \Pi-I)} U^{\dagger} e^{\i  \phi_{2 j+1}(2 \widetilde{\Pi}-I)} U\right), & \text { if } d \text { is odd, } \\
    \prod_{j=1}^{d / 2}\left(e^{\i  \phi_{2 j-1}(2 \Pi-I)} U^{\dagger} e^{\i  \phi_{2 j}(2 \widetilde{\Pi}-I)} U\right), 
    & \text { if } d \text { is even.}
    \end{cases}
    \end{equation}
\end{defn}

With the above preliminaries, we can now state a key result of QSVT. The following theorem provides a nice connection between singular value transformation and the alternating phase modulation sequences $U_\Phi$.

\begin{thm}[Theorem 10 of \cite{gilyen2019quantum}]
\label{thm for standard QSVT}
Assume that $f(x)\in \mathbb{C}[x]$ is an even or odd function of degree $d$ and $|f(x)| \leq 1$ for all $x\in[-1,1]$, then there exists $\Phi\in \mathbb{R}^{d}$ such that
\[
f^{(\rm SV)}(\widetilde{\Pi} U \Pi) = 
\begin{cases}
\widetilde{\Pi} U_\Phi \Pi & \text{if } d \text{ is odd}, \\
\Pi U_\Phi \Pi & \text{if } d \text{ is even}.
\end{cases}
\]
\end{thm}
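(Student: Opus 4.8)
The plan is to derive Theorem~\ref{thm for standard QSVT} from the scalar QSP characterization in Lemma~\ref{lem_complex quantum signal processing} by decomposing the space $\mc H_U$ into invariant subspaces under the relevant operators and checking that $U_\Phi$ acts block-diagonally, with each block being a copy of the $2\times 2$ QSP gate sequence. First I would set up the SVD decomposition: write $A := \widetilde\Pi U \Pi = \sum_i \sigma_i \ket{u_i}\bra{v_i}$, so that $\{\ket{v_i}\}$ spans (a subspace of) the image of $\Pi$ and $\{\ket{u_i}\}$ spans (a subspace of) the image of $\widetilde\Pi$. Following the standard argument, I would separate the singular values into those with $\sigma_i \in (0,1)$, those with $\sigma_i = 0$, and those with $\sigma_i = 1$, and show that for each $\sigma_i \in (0,1)$ the four-dimensional (generically two-dimensional after pairing $\ket{v_i}$ with a suitable $U^\dagger\ket{u_i}$, and $\ket{u_i}$ with $U\ket{v_i}$) span $\mc V_i = \mathrm{span}\{\ket{v_i}, (I-\Pi)U^\dagger\ket{u_i}\}$ together with $\widetilde{\mc V}_i = \mathrm{span}\{\ket{u_i}, (I-\widetilde\Pi)U\ket{v_i}\}$ is invariant under $U$, $U^\dagger$, $2\Pi - I$, and $2\widetilde\Pi - I$, and that in an appropriate orthonormal basis $U$ restricted to $\mc V_i \to \widetilde{\mc V}_i$ looks exactly like $W(\sigma_i) = e^{\i\arccos(\sigma_i)X}$ (up to the $R$ vs.\ $W$ convention, i.e.\ a fixed basis change, since Lemma~\ref{lem_complex quantum signal processing} is stated with the reflection $R(x)$).

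The key steps, in order: (1) Establish the two-dimensional invariant subspace structure. For $\sigma_i \in (0,1)$, the vectors $\ket{v_i}$ and $\ket{\perp_i} := (U^\dagger\ket{u_i} - \sigma_i\ket{v_i})/\sqrt{1-\sigma_i^2}$ form an orthonormal basis of a subspace on which $\Pi$ acts as $\mathrm{diag}(1,0)$ and $U$ maps this basis to an orthonormal pair in the $\ket{u_i}$-subspace with matrix $\begin{bmatrix}\sigma_i & \sqrt{1-\sigma_i^2}\\ \sqrt{1-\sigma_i^2} & -\sigma_i\end{bmatrix}$; this is precisely the content needed so that $U^\dagger e^{\i\phi(2\widetilde\Pi-I)}U$ and $e^{\i\phi(2\Pi-I)}$ restrict to the scalar QSP factors. (2) Verify that the $\sigma_i = 0$ and $\sigma_i = 1$ cases, plus the orthogonal complement $(\ker$ of both projectors$)$, contribute trivially or are handled by the parity/degree bookkeeping — in particular for even $f$ one lands back in the $\ket{v_i}$ subspace and for odd $f$ in the $\ket{u_i}$ subspace, matching Definition~\ref{def_svt of matrix}. (3) Invoke Lemma~\ref{lem_complex quantum signal processing}: since $f$ satisfies its hypotheses, there is a phase vector $\Phi \in \mathbb R^d$ realizing $P = f$ in the top-left corner of the scalar product $\prod_j (R(x) e^{\i\phi_j Z})$; apply this same $\Phi$ in the operator-level sequence $U_\Phi$. (4) Assemble: because $U_\Phi$ respects the direct-sum decomposition and acts as the scalar sequence with $x = \sigma_i$ on each block, reading off the appropriate corner ($\widetilde\Pi U_\Phi\Pi$ for odd $d$, $\Pi U_\Phi\Pi$ for even $d$) yields $\sum_i f(\sigma_i)\ket{u_i}\bra{v_i}$ or $\sum_i f(\sigma_i)\ket{v_i}\bra{v_i}$, which is exactly $f^{(\mathrm{SV})}(A)$.

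The main obstacle I expect is step (1): carefully proving that the candidate subspaces are genuinely invariant under \emph{both} projector reflections and under $U, U^\dagger$ simultaneously, and handling the bookkeeping at the boundary singular values $\sigma_i \in \{0,1\}$ and the parts of $\mc H_U$ outside $\mathrm{Im}\,\Pi \cup \mathrm{Im}\,\widetilde\Pi$. One must check that $U$ does not mix different $\mc V_i$'s (this follows from orthogonality of distinct singular vectors, but needs the observation that $(I-\Pi)U^\dagger\ket{u_i} \perp (I-\Pi)U^\dagger\ket{u_j}$ for $i\neq j$, which uses $\braket{u_i|u_j} = \delta_{ij}$ and $\braket{u_i|U|v_j} = \sigma_j\delta_{ij}$), and that the phase factors $e^{\i\phi(2\Pi-I)}$ act as the identity outside all the $\mc V_i$ where $\Pi$ is either $0$ or $I$. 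A secondary subtlety is the conversion between the $W(x)$ convention of Lemma~\ref{lem1: QSVT} and the reflection convention $R(x)$ of Lemma~\ref{lem_complex quantum signal processing}: one should fix which normal form the operator blocks reduce to and apply the matching scalar lemma consistently. Once the invariant-subspace decomposition is nailed down, the rest is a direct translation of the scalar identity block by block.
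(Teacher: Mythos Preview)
The paper does not supply its own proof of this theorem: it is quoted verbatim as Theorem~10 of \cite{gilyen2019quantum} and used as a black box. Your proposal correctly reconstructs the standard invariant-subspace argument from that reference, and indeed the paper later re-derives essentially the same block decomposition (see the proof of Theorem~\ref{prop_non-unitary SVT circuit of A}) when it needs an analogous statement for the non-unitary operator $U = \begin{bmatrix} cI & sA \\ sA^\dagger & -cI \end{bmatrix}$; so your outline is both correct and aligned with how the paper itself handles the closely related result it does prove.
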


Intuitively, if we set
$$
\Pi=\widetilde{\Pi}=\begin{bmatrix}
    I & 0 \\ 0 & 0
\end{bmatrix},
$$ 
then Theorem \ref{thm for standard QSVT} states that 
$$
U_\Phi = \begin{bmatrix}
   f^{(\rm SV)}(H/\alpha)  & * \\ * & *
\end{bmatrix}, \quad
\text{if } U=\begin{bmatrix}
   H/\alpha  & * \\ * & *
\end{bmatrix}.
$$
That is, if $U$ an $(\alpha, 1, 0)$-block encoding of $H$, then QSVT provides a generic framework to implement $f^{(\rm SV)}(H/\alpha)$, by making $\Theta(d)$ queries to $U$ and $U^{\dag}$, while using a single ancilla qubit. 

If $f(x)$ is real, then using the linear combination of unitaries (LCU) technique, we can still have a formula for $f^{(\rm SV)}(\widetilde{\Pi} U \Pi)$, see \cite[Corollary 11]{gilyen2019quantum}. Now, we need to introduce one more ancilla qubit because of LCU. 
Moreover, if $f(x)$ is a general polynomial (without a definite parity), we can decompose it into even and odd parts and apply these results to each part. 
Consequently, we can still obtain a concise formula for $f^{(\rm SV)}(\widetilde{\Pi} U \Pi)$. In the most general case, two additional ancilla qubits are used.

Despite its generality, a drawback of this framework is that it assumes access to a block encoding $U$ of the underlying operator $H$. The construction of a block encoding itself can be expensive and resource-consuming, which adds to the overhead of implementing any QSVT. For example, consider the case that $H$ is Hermitian and can be expressed as a linear combination of strings of $L$ Pauli operators, i.e.,
$$
H=\sum_{j=1}^{L} \lambda_j P_j, \quad \lambda_j>0
$$
where $P_j\in \{I,X,Y,Z\}^{\otimes n}$. Denote $\lambda=\sum_{j}\lambda_j$. Then the block encoding of such an $H$ can be constructed by using the LCU \cite{berry2015simulating}. This proceeds as follows: Set $\ell=\lceil \log_2L\rceil$. Define the following two unitaries:
$$
\texttt{PREP}~\ket{0}^{\otimes \ell}=\sum_{j=1}^{L}\sqrt{\lambda_{j}/\lambda} \, \ket{j}, \quad 
\text{and}\quad 
\texttt{SEL}_P=\sum_{j=1}^{L}\ket{j}\bra{j}\otimes P_j.
$$
Then, it is easy to show that the block encoding of $H$ is 
$$
U_H= \Big(\texttt{PREP}^{\dag}\otimes I \Big) \texttt{SEL}_P \Big(\texttt{PREP}\otimes I\Big).
$$
Indeed,
$$
\left(\bra{0}^{\otimes \ell}\otimes I\right)U_H\left(\ket{0}^{\otimes \ell}\otimes I\right)=H/\lambda,
$$
which is a $(\lambda, \lceil \log_2L\rceil, 0)$-block encoding of $H$. Observe that building $U_H$ typically uses $O(\log L)$ ancilla qubits, and a circuit depth of $L$ (the depth of the unitary $\texttt{SEL}_P$). Indeed, in the Appendix (Sec. \ref{sec-app:ancilla-lb-lcu}), we show that $\Omega(\log L)$ ancilla qubits are required to exactly construct a block encoding $H$ for a quite general circuit model including the technique of LCU.
Moreover, constructing $\texttt{SEL}_P$ can be challenging as it involves a sequence of complicated controlled operations. 

As a result, for any such Hamiltonian, from Theorem \ref{thm for standard QSVT}, QSVT implements $f^{({\rm SV})}(H/\lambda)$ with a circuit of depth $\widetilde{O}(Ld)$, and $O(\log L)$ ancilla qubits. Despite its unifying feature, this raises concerns regarding the applicability of QSVT for small-scale quantum computers. The same cost also holds when $H$ is a $k$-local Hamiltonian, written as $H=\sum_{j=1}^L H_j$. In this case, $\lambda=\sum_{j=1}^L \|H_j\|$. This appears in many physically relevant problems. 
In Sec.~\ref{sec:qsvt-trotter}, we propose algorithms for implementing QSVT without relying on block encoding access, while (nearly) preserving its optimality and using only a constant number of ancilla qubits.

\subsection{Generalized quantum signal processing} 
\label{subsec:prelim-gqsp}

Recall that in standard QSVT, the polynomials have to be either complex even or complex odd. While real polynomials can also be implemented, additional ancilla qubits are used. In the recently introduced framework of Generalized Quantum Signal Processing (GQSP) \cite{motlagh2024generalized}, if we are allowed to apply the controlled time-evolution operator, then we can implement any polynomial, even Laurent polynomials of the time-evolution operator, without introducing more ancilla qubits. Next, we outline some of their key results used in this work and refer the readers to \cite{motlagh2024generalized} for details. 

Suppose we can implement the controlled Hamiltonian simulation $U=e^{\i H}$ (and  $U^{\dagger}=e^{-\i H}$) as a black box. That is, the operators 
\be \label{control U:eq}
    c_0\text{-}U
    =\left[\begin{array}{cc}
        U & 0 \\
        0 & I
        \end{array}\right], \quad 
    c_1\text{-}U^\dag 
    =\left[\begin{array}{cc}
            I & 0 \\
            0 & U^{\dagger}
            \end{array}\right]
\ee
using one query to $U$ and one query to $U^{\dagger}$. Let 
\begin{align*}
    R(\theta, \phi, \lambda)=\left[\begin{array}{cc}
        e^{\i (\lambda+\phi)} \cos (\theta) & e^{\i  \phi} \sin (\theta) \\
        e^{\i  \lambda} \sin (\theta) & -\cos (\theta)
        \end{array}\right] \otimes I 
\end{align*}
be arbitrary U(2) rotation of the single ancilla qubit.  Motlagh and Wiebe \cite{motlagh2024generalized} showed that for any degree-$d$ polynomial $P$, satisfying $|P(x)|\le 1$ in $\mathbb{T}:= \{ x\in \mathbb{C} : |x|=1 \}$, there exists an interleaved sequence of $R(\theta_j, \phi_j, 0)$ and $c_0\text{-}U$, $c_1\text{-}U^\dag$, of length $2d+1$, that implements a block encoding of $P(U)$.  Moreover, their framework also holds for Laurent polynomials bounded on $\mathbb{T}$. We state this result as follows:




\begin{thm}[Combining Corollary 5 and Theorem 6 of \cite{motlagh2024generalized}]
    \label{thm:generalized-QSP}
    For any Laurent polynomial $P(z) = \sum_{j=-d}^{d} a_{j} z^j$ such that $|P(z)| \le 1$ for all $z\in \mathbb{T}$, there exist $\Theta=(\theta_j)_j, \Phi=(\phi_j)_j \in \mathbb{R}^{2d+1}, \lambda \in \mathbb{R}$ such that 
    \begin{equation}
    \label{eq:gqsp}
        \begin{bmatrix}
            P(U) & * \\
            * & * 
        \end{bmatrix} = \Big(\prod_{j=1}^d R(\theta_{d+j}, \phi_{d+j}, 0) c_1\text{-}U^\dag \Big)\Big(\prod_{j=1}^{d} R(\theta_j, \phi_j, 0) c_0\text{-}U \Big) R(\theta_0, \phi_0, \lambda).
    \end{equation}
\end{thm}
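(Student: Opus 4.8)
The plan is to prove the operator identity in Eq.~\eqref{eq:gqsp} by reducing it to a scalar statement about $2\times 2$ matrices, establishing a two-sided characterization of which Laurent polynomials can sit in the top-left block, and then lifting back to operators. Since $U$ is unitary it is normal, so by the spectral theorem it suffices to prove the matching polynomial identity with $U$ replaced by a scalar $z$ ranging over the spectrum of $U$, a subset of $\mathbb{T}$; under this substitution $c_0\text{-}U\mapsto\diag(z,1)$ and $c_1\text{-}U^\dag\mapsto\diag(1,z^{-1})$, and the right-hand side of Eq.~\eqref{eq:gqsp} becomes an explicit $2\times2$ matrix $M(z)$ whose $(0,0)$ entry I must identify as $P(z)$. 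Because $\diag(z,1)$ and $\diag(1,z^{-1})$ have reciprocal determinants and appear $d$ times each, $\det M(z)$ is a constant phase, so $M(z)$ is pointwise in $U(2)$ with its bottom row fixed by the top row via the conjugate-reciprocal relations. Once the scalar identity holds for all $z$ in the spectrum, applying it eigenvector by eigenvector returns the operator statement, since every factor on the right of Eq.~\eqref{eq:gqsp} is obtained from $U$ by composition and controlled application.

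\textbf{Forward characterization.} I would show by induction on $d$ that any product of the prescribed shape equals a matrix whose top row is $(P(z),Q(z))$ with $P,Q$ Laurent polynomials, whose bottom row is the conjugate-reciprocal completion of the top row, with $|P(z)|^2+|Q(z)|^2=1$ on $\mathbb{T}$, and with the Fourier support of $P$ contained in $\{-d,\dots,d\}$. The base case $d=0$ is $R(\theta_0,\phi_0,\lambda)$, giving the constants $P=e^{\i(\lambda+\phi_0)}\cos\theta_0$ and $Q=e^{\i\phi_0}\sin\theta_0$ with $|P|^2+|Q|^2=1$. In the inductive step, each appended block $R(\cdot,\cdot,0)\,\diag(z,1)$ extends the support by one on the positive end and each $R(\cdot,\cdot,0)\,\diag(1,z^{-1})$ extends it by one on the negative end; since $R\in U(2)$, both the pointwise unitarity of $M(z)$ and the conjugate-reciprocal form of the bottom row are preserved. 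This shows the reachable top-left entries are exactly the Laurent polynomials of degree at most $d$ that are pointwise bounded by $1$ in modulus on $\mathbb{T}$, \emph{provided} a valid partner $Q$ exists.

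\textbf{Converse.} Given $P$ as in the hypothesis, $g(z):=1-|P(z)|^2$ is a nonnegative trigonometric polynomial of Fourier degree at most $2d$ on $\mathbb{T}$, so by the Fej\'er--Riesz theorem $g(z)=|h(z)|^2$ on $\mathbb{T}$ for an analytic polynomial $h$ of degree at most $2d$; then $Q(z):=z^{-d}h(z)$ is a Laurent polynomial of degree at most $d$ with $|P|^2+|Q|^2\equiv1$ on $\mathbb{T}$. I then run the induction in reverse on the pair $(P,Q)$: comparing the extreme ($z^{\pm d}$) Fourier coefficients and using $|P|^2+|Q|^2\equiv1$ pins down a unique pair $(\theta_{2d},\phi_{2d})$ (up to the harmless case split when an extreme coefficient vanishes) such that left-multiplying by $(R(\theta_{2d},\phi_{2d},0)\,\diag(1,z^{-1}))^{-1}$ lowers the degree of the valid pair to $d-1$. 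Iterating peels off all $2d$ layers and leaves a constant $U(2)$ matrix, from which $(\theta_0,\phi_0,\lambda)$ is read off. This produces the required $\Theta,\Phi,\lambda$.

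\textbf{Main difficulty.} The crux is the completion step together with well-posedness of the descent: one must check that $1-|P|^2$ is genuinely a Laurent polynomial of degree at most $2d$ that Fej\'er--Riesz can factor (its zeros on $\mathbb{T}$ have even multiplicity, so this causes no trouble) and that at each step of the recursion the coefficient one divides to define a rotation angle stays controlled -- which is exactly where the invariant $|P|^2+|Q|^2\equiv1$ enters, and which forces a small case analysis. A secondary but error-prone point is the bookkeeping of which Laurent monomials appear after each controlled step and the precise conjugate-reciprocal form of the second row; keeping the exponent conventions consistent is where most of the routine work concentrates. The scalar-to-operator lift, the base case, and the $U(2)$ algebra of the inductive step are all straightforward.
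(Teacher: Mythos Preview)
The paper does not prove this theorem; it is stated as a citation to Corollary~5 and Theorem~6 of \cite{motlagh2024generalized} and used as a black box. Your proposal is a correct and essentially complete sketch of the proof given in that reference: reduce to the scalar case via the spectral theorem, use Fej\'er--Riesz to complete $P$ to a unitary-valued Laurent pair $(P,Q)$, and then peel off the rotations one layer at a time by matching extreme Fourier coefficients. This is exactly the approach of Motlagh and Wiebe, so there is nothing to contrast against the present paper.
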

In this work, we concretely provide the conditions under which a function $f(x)$ can be approximated by a Laurent polynomial in $e^{\i x}$ (See Sec.~\ref{sec:qsvt-trotter}). Given any $H$, this allows for implementing any such $f(H)$. 
We observe that GQSP is a particular instance of an interleaved sequence of unitaries and Hamiltonian evolution operators. So, it suffices to implement the controlled Hamiltonian evolution operators by higher-order Trotter. This allows us to develop an end-to-end quantum algorithm for QSVT, that is near optimal while using only a single ancilla qubit.

\subsection{Hamiltonian simulation by Trotter methods}
\label{subsec:prelim-trotter}
Simulating the dynamics of physical systems is widely considered to be one of the foremost applications of a quantum computer. Over the years, several quantum algorithms have been developed for this problem, leading to near-optimal complexities \cite{berry2015simulating, low2017hamiltonian, low2017qsp, low2019hamiltonian}. Trotter methods or product formulas \cite{lloyd1996universal}, inspired by the original proposal by Feynman for simulating physical systems \cite{feynman1982simulating}, are arguably the simplest techniques to implement $e^{\i tH}$. These methods have seen remarkable progress over the years, are easy to implement and demonstrate superior performance in practice \cite{childs2018toward,childs2021theory}.

Here, we introduce some results from the theory of product formulas \cite{childs2021theory} which we will be using in the subsequent sections. Let 
$$
H = \sum_{\gamma=1}^{\Gamma} H_{\gamma}
$$
be a Hamiltonian expressed as a sum of $\Gamma$ Hermitian terms. Then, a \emph{staged product formula} \cite{watson2024exponentially,childs2021theory} is an approximation of the unitary evolution operator \( e^{-\i H t} \) by a product of the form
\be \label{staged product formula}
\mathcal{P}(t):= \prod_{\nu=1}^{\Upsilon} \prod_{\gamma=1}^{\Gamma} e^{ \i t a_{(\nu, \gamma)} H_{\pi_\nu(\gamma)}},
\ee
where $\pi_\nu \in S_\Gamma$ is permutation and $a_{(\nu, \gamma)} \in \mb R$. 
A \emph{staged product formula} is of $p${\em -th order} if $e^{-\i H t} = \mathcal{P}(t)+O(t^{p+1})$, and {\em symmetric} if $\mathcal{P}(-t)=\mathcal{P}^{-1}(t)$. 

A well-known example is the $2k$-th order Trotter-Suzuki formula $S_{2k}(t)$ \cite{suzuki1990fractal}, which is recursively defined as
\be
    S_2(t):= \prod_{\gamma=\Gamma}^1 e^{-\i H_\gamma t / 2} \prod_{\gamma=1}^{\Gamma} e^{-\i H_\gamma t / 2} 
\ee
and 
\be \label{S2k}
    S_{2 k}(t):= [S_{2(k-1)}(u_k t)]^2 S_{2(k-1)}((1-4 u_k) t)[S_{2(k-1)}(u_k t)]^2,
\ee
where $u_k=1/(4-4^{1/(2k-1)})$.  The $2k$-th order Trotter-Suzuki formula is symmetric with $\Upsilon=2\cdot 5^{k-1}$ and $|a_{(\nu, \gamma)}|\le 2k/3^k$, as stated in \cite{wiebe2010higher}.  Define
\begin{equation*}
    \alpha_{\mathrm{comm}}^{(j)}:= \sum_{\gamma_1 ,\gamma_2, \ldots ,\gamma_j=1}^{\Gamma}\big\|\left[H_{\gamma_1} H_{\gamma_2} \ldots H_{\gamma_j}\right]\big\|
\end{equation*}
for $j\in \mathbb{N}$,
where $\left[X_1 X_2 \ldots X_n\right]$ refers to a right-nested $n$-commutator as
\begin{equation*}
    \left[X_1 X_2 \ldots X_n\right]:= \left[X_1,\left[X_2,\left[\ldots\left[X_{n-1}, X_n\right] \ldots\right]\right]\right].
\end{equation*} 
Childs et al. \cite{childs2021theory} showed that
$$
\|e^{-\i tH}-\mathcal{P}(t)\|=O\left(\alpha^{(p+1)}_{\rm comm}t^{p+1}\right).
$$
Thus, the complexity (circuit depth) of $p$-th order Trotter formula to approximate $e^{-\i tH}$ to an accuracy of $\eps$, with commutator scaling is given by
$$
O\left(\dfrac{\Gamma(\alpha^{(p+1)}_{\rm comm})^{1/p} t^{1+1/p}}{\eps^{1/p}}\right).
$$
As discussed in \cite{childs2021theory}, the prefactor $\alpha_{\rm comm}$ scales significantly better than the worst-case bound of $O(\|H\|_1)$ for many physical Hamiltonians. We take advantage of this scaling in our QSVT algorithm developed in Sec.~\ref{sec:qsvt-trotter}. 

Recently, it has been shown that the circuit depth can be exponentially improved to $O(\log (1/\eps))$ by utilizing extrapolation techniques \cite{watson2024exponentially}. Therein, it was proven that for sufficiently small $t$, $\mathcal{P}(t)$ can be written as an exponential of $-\i H$ plus a power series in $t$, using Baker-Campbell-Hausdorff (BCH) formula based on right-nested commutators \cite{arnal2021note}.
\begin{lem}[Lemma 2 of \cite{watson2024exponentially}]
    \label{lem:expansion}
    Let $\mathcal{P}(t)$ be a staged product formula \eqref{staged product formula} and $a_{\max}=\max_{\nu,\gamma}|a_{\nu,\gamma}|$. For any $t\in \mathbb{R}$, if there exists $J\in \mathbb{N}$ and $c \in \mathbb{R}^+$ such that
    \be \label{BCH convergence condition}
        \sup _{j \geq J} \alpha_{\mathrm{comm}}^{(j)} \cdot (a_{\max } \Upsilon|t|)^j \leq c, 
    \ee
    then $\mathcal{P}(t)$ can be written as
    \begin{align*}
        \mathcal{P}(t)=e^{-\i t(H+\sum_{j=1}^{\infty} E_{j+1} t^j)},
    \end{align*}
    such that $\|E_{j}\|\le (a_{\max} \Upsilon)^j \alpha_{\mathrm{comm}}^{(j)}/ j^2$ for all $j$.
\end{lem}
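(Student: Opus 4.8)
\textbf{Proof proposal for Lemma \ref{lem:expansion}.}

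The plan is to convert the product of exponentials defining $\mathcal{P}(t)$ into a single exponential by iterating the Baker–Campbell–Hausdorff formula, and then to reorganize the resulting infinite series as a power series in $t$ whose coefficients $E_{j+1}$ are built out of nested commutators of the $H_\gamma$. The crucial tool is a version of BCH that keeps everything in terms of \emph{right-nested} commutators, so that the bookkeeping of norms matches the definition of $\alpha_{\mathrm{comm}}^{(j)}$; this is exactly the content of the combinatorial BCH estimate of Arnal et al.~\cite{arnal2021note}, which I would invoke as a black box. Concretely, each factor in \eqref{staged product formula} is $e^{\i t a_{(\nu,\gamma)} H_{\pi_\nu(\gamma)}}$, so the full product is a product of $\Upsilon\Gamma$ exponentials of operators each of norm at most $a_{\max}|t|\cdot\|H_{\pi_\nu(\gamma)}\|$. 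Merging them one at a time via BCH produces an exponent $-\i t(H + \sum_{j\ge 1} E_{j+1} t^j)$, where the order-$j$ term is a (rational-coefficient) linear combination of $j$-fold right-nested commutators of the operators $a_{(\nu,\gamma)}H_{\pi_\nu(\gamma)}$; the zeroth/first order term collapses to $H$ precisely because $\sum_{\nu,\gamma} a_{(\nu,\gamma)}H_{\pi_\nu(\gamma)}$ telescopes to $H$ by the definition of a (consistent) product formula.

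Next I would bound the norm of $E_j$. Each $j$-fold right-nested commutator of terms drawn from $\{a_{(\nu,\gamma)}H_{\pi_\nu(\gamma)}\}$ has norm at most $a_{\max}^{\,j}$ times the norm of the corresponding bare commutator $\|[H_{\gamma_1}H_{\gamma_2}\cdots H_{\gamma_j}]\|$, and summing over all the index choices that appear in the BCH expansion produces at most $\Upsilon^j$ (times a combinatorial BCH coefficient) copies of the sum $\sum_{\gamma_1,\dots,\gamma_j}\|[H_{\gamma_1}\cdots H_{\gamma_j}]\| = \alpha_{\mathrm{comm}}^{(j)}$. The sharp $1/j^2$ improvement over the naive BCH coefficient bound is exactly what the Arnal et al.\ estimate buys us, giving $\|E_j\| \le (a_{\max}\Upsilon)^j \alpha_{\mathrm{comm}}^{(j)}/j^2$. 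Finally, convergence of the single-exponential rewriting has to be justified: the hypothesis \eqref{BCH convergence condition} guarantees that the tail $\sum_{j\ge J}\alpha_{\mathrm{comm}}^{(j)}(a_{\max}\Upsilon|t|)^j$ is bounded by $c$, hence summable (after absorbing finitely many initial terms, which are finite sums of bounded operators), so the BCH series for the merged exponent converges absolutely in operator norm and the manipulation is legitimate.

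The main obstacle I expect is not any single estimate but making the BCH bookkeeping rigorous for a \emph{product of many} exponentials rather than two: one must set up an induction that merges factors left-to-right (or right-to-left) while tracking that after $k$ merges the accumulated exponent is still expressible with right-nested-commutator norm control of the stated form, and that the $j$-th order coefficient stabilizes. This is where citing \cite{arnal2021note} (which is formulated for a single BCH step but in a form iterable under the convergence condition) and \cite{watson2024exponentially}'s treatment does the heavy lifting — I would lean on their combinatorial lemma rather than re-deriving the multi-factor BCH recursion from scratch. A secondary subtlety is checking that the first-order term really is $H$ and not $H$ plus some spurious commutator contribution; this follows because the order-$1$ part of BCH is just the sum of the exponents, $\sum_{\nu,\gamma} \i t a_{(\nu,\gamma)}H_{\pi_\nu(\gamma)}$, which equals $\i tH$ by the defining normalization of the staged product formula, and all genuinely higher-order pieces carry at least one commutator and hence a factor $t^2$ or smaller.
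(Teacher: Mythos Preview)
Your proposal is essentially correct and matches the approach the paper points to: the paper does not itself prove this lemma but cites it as Lemma~2 of \cite{watson2024exponentially}, explicitly noting that the proof proceeds ``using Baker--Campbell--Hausdorff (BCH) formula based on right-nested commutators \cite{arnal2021note},'' which is exactly the route you outline. One minor imprecision: you say the hypothesis \eqref{BCH convergence condition} makes the tail ``summable,'' but boundedness of $\sup_{j\ge J}\alpha_{\mathrm{comm}}^{(j)}(a_{\max}\Upsilon|t|)^j$ by $c$ alone does not give summability---it is the extra $1/j^2$ from the Arnal et al.\ estimate that makes $\sum_j \|E_j\||t|^{j-1}$ converge.
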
  
Note that condition \eqref{BCH convergence condition} ensures that the nested commutators do not grow too rapidly, which guarantees the convergence of the BCH formula. Using 
Lemma~\ref{lem:expansion}, \cite{watson2024exponentially} constructed an error series of an operator evolved under a product formula. More precisely, for $s\in (0,1)$ and observable $O$, let 
$\tilde{O}_{T, s} := 
(\mathcal{P}^{\dagger}(s T))^{1/s}  O (\mathcal{P}(s T))^{1/s}$.
The next result describes the error between $\tilde{O}_{T, s}$
and the exact evolution $O_{T} := e^{\i  H T} O e^{-\i  H T}$.

\begin{lem}[Lemma 4 of \cite{watson2024exponentially}]
    \label{lem:trotter-power}
    Let $\mathcal{P}$ be a $p$-th order staged product formula. Suppose that condition \eqref{BCH convergence condition} holds for $t=sT$. Let $\sigma=2$ if $\mathcal{P}$ is symmetric, and 1 otherwise. Then, for any $K \in \mathbb{N}$,
    \begin{align*}
    \tilde{O}_{T, s} - O_{T} = \sum_{j \in (\sigma \mathbb{Z})_{ \geq p} } s^j \tilde{E}_{j+1, K}(T) + \tilde{F}_K(T, s),
    \end{align*}
    where $\tilde{E}_{j+1, K}(T)$ and $\tilde{F}_K(T, s)$ are superoperators with induced spectral norm bounded as
    \begin{align*}
        & \|\tilde{E}_{j+1, K}(T)\| \leq(a_{\max } \Upsilon T)^j \sum_{\ell=1}^{\min \{K-1,\lfloor j / p\rfloor\}} \frac{(a_{\max } \Upsilon T)^\ell}{\ell!} \sum_{\substack{j_1, \ldots, j_\ell \in 
        (\sigma \mathbb{Z})_{\geq p} \\
        j_1+\cdots+j_\ell=j}} ~~ 
        \prod_{\kappa=1}^\ell  \frac{2\alpha_{\mathrm{comm}}^{(j_\kappa+1)}}{(j_\kappa+1)^2} , \\
        & \|\tilde{F}_K(T, s)\| \leq \frac{(a_{\max } \Upsilon T)^K}{K!} \sum_{j \in (\sigma \mathbb{Z})_{ \geq K p} }(a_{\max } \Upsilon s T)^j \sum_{\substack{j_1, \ldots, j_K \in (\sigma \mathbb{Z})_{ \geq p} \\
        j_1+\cdots+j_K=j}} ~~ \prod_{\kappa=1}^K \frac{ 2 \alpha_{\mathrm{comm}}^{(j_\kappa+1)}}{(j_\kappa+1)^2} .
        \end{align*}
\end{lem}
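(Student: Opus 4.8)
\textbf{Proof proposal for Lemma~\ref{lem:trotter-power} (Lemma 4 of \cite{watson2024exponentially}).}

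The plan is to combine Lemma~\ref{lem:expansion} with a careful bookkeeping of how the error accumulates over the $1/s$ repetitions of the product formula. First I would invoke Lemma~\ref{lem:expansion} with $t = sT$: under condition~\eqref{BCH convergence condition} this gives $\mathcal{P}(sT) = e^{-\i sT(H + \mathcal{E}(sT))}$ where $\mathcal{E}(sT) = \sum_{j\geq 1} E_{j+1}(sT)^j$ with $\|E_{j+1}\| \leq (a_{\max}\Upsilon)^{j+1}\alpha_{\mathrm{comm}}^{(j+1)}/(j+1)^2$. Because $\mathcal{P}$ is a $p$-th order formula, the terms $E_2, \dots, E_p$ vanish, so the leading nonzero term of $\mathcal{E}(sT)$ is of order $(sT)^p$; when $\mathcal{P}$ is symmetric, $\mathcal{P}(-t) = \mathcal{P}^{-1}(t)$ forces the effective Hamiltonian $H + \mathcal{E}(t)$ to be even in $t$, so only even powers $j \in 2\mathbb{Z}$ survive — this is where the $\sigma \in \{1,2\}$ dichotomy enters. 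Raising to the power $1/s$ gives $(\mathcal{P}(sT))^{1/s} = e^{-\i T(H + \mathcal{E}(sT))} = e^{-\i T(H + \sum_{j \in (\sigma\mathbb{Z})_{\geq p}} s^j E_{j+1} T^j)}$, i.e.\ a single exponential of $H$ plus a perturbation that is a power series in $s$ starting at order $p$ (or $2p$ if $p$ is odd and the formula symmetric, but in any case order $\geq p$).

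Next I would expand the conjugation $\tilde{O}_{T,s} = (\mathcal{P}^\dagger(sT))^{1/s} O (\mathcal{P}(sT))^{1/s} = e^{\i T(H+\Delta_s)} O\, e^{-\i T(H+\Delta_s)}$, where $\Delta_s := \sum_{j\in(\sigma\mathbb{Z})_{\geq p}} s^j E_{j+1}T^j$, and compare it with $O_T = e^{\i HT} O e^{-\i HT}$. The natural tool is a Duhamel / Dyson-series expansion of $e^{-\i T(H+\Delta_s)}$ around $e^{-\i TH}$: one writes $e^{-\i T(H+\Delta_s)} = e^{-\i TH} + \sum_{\ell\geq 1} (-\i)^\ell \int_{0\leq t_1\leq\cdots\leq t_\ell\leq T} e^{-\i(T-t_\ell)H}\Delta_s\, e^{-\i(t_\ell - t_{\ell-1})H}\cdots \Delta_s\, e^{-\i t_1 H}\, dt_1\cdots dt_\ell$. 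Substituting into the conjugation and collecting terms by total power of $s$ produces $\tilde{O}_{T,s} - O_T = \sum_{j} s^j \tilde{E}_{j+1,K}(T) + \tilde{F}_K(T,s)$: the $\ell$-fold nested integral contributes products of $\ell$ factors $E_{j_\kappa + 1}T^{j_\kappa}$ with $j_1 + \cdots + j_\ell = j$, and truncating the Dyson series at order $K-1$ (for the full terms) and lumping the tail $\ell \geq K$ into the remainder $\tilde{F}_K$. Each time-ordered integral over the simplex $\{0\leq t_1 \leq \cdots \leq t_\ell \leq T\}$ has volume $T^\ell/\ell!$, which supplies the $(a_{\max}\Upsilon T)^\ell/\ell!$ prefactor once one tracks that $\Delta_s$ carries a $T$ per $E$-factor and the bound on $\|E_{j_\kappa+1}\|$ supplies $(a_{\max}\Upsilon)^{j_\kappa+1}\alpha_{\mathrm{comm}}^{(j_\kappa+1)}/(j_\kappa+1)^2$; the $a_{\max}\Upsilon$ powers recombine into $(a_{\max}\Upsilon T)^j$ times $(a_{\max}\Upsilon T)^\ell$. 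The unitarity of $e^{-\i tH}$ makes each conjugating factor norm-one, so the triangle inequality plus submultiplicativity over the nested integrand delivers exactly the stated bounds on $\|\tilde{E}_{j+1,K}(T)\|$ (with $\ell$ ranging up to $\min\{K-1,\lfloor j/p\rfloor\}$ since each $j_\kappa \geq p$) and on $\|\tilde{F}_K(T,s)\|$ (with the extra $s^j$ surviving because the tail terms are not extrapolated away).

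The main obstacle I anticipate is the bookkeeping around the symmetric case and the indexing set $(\sigma\mathbb{Z})_{\geq p}$: one must argue carefully that for symmetric $\mathcal{P}$ the effective Hamiltonian expansion contains only even powers of $sT$ — this needs the uniqueness of the logarithm (the skew-Hermitian branch with eigenvalues in $(-\i\pi,\i\pi]$, as set up in the preliminaries) together with $\mathcal{P}(-t)\mathcal{P}(t) = I$ to conclude $\mathcal{E}(-t) = \mathcal{E}(t)$, hence $E_{j+1} = 0$ for odd $j$. A secondary subtlety is verifying that condition~\eqref{BCH convergence condition} at $t = sT$ is genuinely enough to justify raising to the $1/s$ power: one needs $sT \|H + \mathcal{E}(sT)\|$ or rather the relevant commutator-norm quantity to stay in the domain where $\log$ and the Dyson series converge, which is precisely what the hypothesis of the lemma (borrowed verbatim from Lemma~\ref{lem:expansion}) guarantees. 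The rest — the simplex volume computation, the multinomial sum over $j_1 + \cdots + j_\ell = j$, and the factorial bookkeeping — is routine once the structure is in place.
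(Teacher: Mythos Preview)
Your proposal is correct and matches the approach the paper takes. Note that the paper does not prove this lemma directly---it is quoted from \cite{watson2024exponentially}---but the paper does prove the generalization Lemma~\ref{lem:interleaved}, whose proof specializes to exactly this case (take $M=1$, $V_0=V_1=I$), and that proof follows precisely the Duhamel/variation-of-parameters strategy you outline.

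One refinement worth flagging: rather than expanding the unitary $e^{-\i T(H+\Delta_s)}$ via a Dyson series and then substituting into the conjugation $(\cdot)^\dagger O(\cdot)$ (which forces you to combine two expansions and track cross terms), the paper works directly in the Heisenberg picture, applying the variation-of-parameters formula to the superoperator evolution $\partial_t \tilde{O}_{t,s} = \i\,\ad_{H+\Delta_s}(\tilde{O}_{t,s})$. Iterating $K-1$ times produces integrands with factors $\i\,\ad_{E_{j_\kappa+1}}$ sandwiched between the unperturbed flows $\Phi_H$, and the bound $\|\ad_E\|\le 2\|E\|$ is the source of the factor $2$ in $\frac{2\alpha_{\mathrm{comm}}^{(j_\kappa+1)}}{(j_\kappa+1)^2}$ that your sketch does not explicitly name. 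This is cosmetic, not a gap: your route reaches the same destination once you merge the two Dyson expansions into a single commutator series.
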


Expressing the error terms as a power series in the Trotter step sizes allows us to use Richardson extrapolation (see Sec.~\ref{subsec:prelim-richardson}) to extrapolate to the zero-error limit. This leads to a circuit depth scaling with $\polylog(1/\eps)$ instead of $1/\eps^{1/p}$. 

In Sec.~\ref{sec:interleaved-sequence}, we provide a non-trivial generalization of the result of \cite{watson2024exponentially}, proving that even for very general quantum circuits (interleaved sequence of arbitrary unitaries and Hamiltonian evolution operators), the difference between the aforementioned expectation values can be expressed as a power series in the Trotter step sizes. Thus, this is at the heart of our end-to-end quantum algorithms for QSVT without block encodings.

\subsection{Richardson extrapolation}
\label{subsec:prelim-richardson}

Richardson extrapolation is a method to estimate 
$f(0)=\lim_{x\rightarrow 0} f(x)$ using value $f(x)$ for several values of $x\in\{s_1,s_2,\ldots,s_m\}$, via an appropriate linear combination of $f(s_1), f(s_2), \ldots, f(s_m)$. This method improves the accuracy of numerical solutions by effectively removing leading-order error terms, thereby increasing the order of accuracy without significantly increasing computational effort. Recently, it has been applied to mitigate errors in Hamiltonian simulation using Trotter methods, multi-product formulas, and qDRIFT. Formally, we state the relevant lemma below:

\begin{lem}[Richardson extrapolation, see \cite{watson2024randomly,low2019well}]
    \label{lem_richardson extrapolation}
    Suppose that function $f\colon \mb R \to \mathbb{R}$ has a series expansion
    \begin{align*}
        f(x) = f(0) + \sum_{i=1}^{\infty} c_i x^{i}.
    \end{align*}
    Choose an initial point $s_0\in (0,1)$ and set points $s_i=s_0/r_i$ with integer $r_i \neq 0$ for $i\in [m]$. Then, using these $m$ sample points $\{s_i: i\in[m]\}$, we can derive an $m$-term Richardson extrapolation
    \be
    \label{eq_richardson extrapolation}
        F^{(m)}(s_0)=\sum_{i=1}^m b_i f(s_i)
    \ee
    by choosing appropriate coefficients $b_i$ to cancel the first $m-1$ terms in the series expansion,
    such that
    \bes
        |F^{(m)}(s_0)-f(0)|\le \|\b\|_1 \left|R_{m}(s_0)\right|=O(s_0^{m}).
    \ees
    where $R_{m}(x)$ is a function that only has terms of order $O(x^{m})$ and above.
    In particular, let
    \be \label{eq:def-r}
        r_{i}=\left\lceil\frac{\sqrt{8} m}{\pi \sin (\pi(2 i-1) / 8 m)}\right\rceil^2,
    \qquad 
        b_i=\prod_{\ell \neq i} \frac{1}{1-r_\ell/r_i},
    \ee 
    then $\|\b\|_1=O(\log m), \max_i\{r_i\}=O(m^4)$ and $\max_i \{r_i/r_m\}=O(m^2)$.
 
\end{lem}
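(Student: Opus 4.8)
The statement has two parts: (1) there exist coefficients $b_i$ cancelling the first $m-1$ terms of the series, yielding the error bound $|F^{(m)}(s_0)-f(0)|\le \|\b\|_1|R_m(s_0)|=O(s_0^m)$; and (2) for the \emph{specific} choice of $r_i$ in Eq.~\eqref{eq:def-r}, one has $\|\b\|_1=O(\log m)$, $\max_i r_i=O(m^4)$, and $\max_i r_i/r_m=O(m^2)$. The plan is to treat these separately. For part (1), I would write $f(s_i)=f(0)+\sum_{k\ge 1}c_k s_0^k r_i^{-k}$ and demand $\sum_i b_i=1$ (to reproduce $f(0)$) together with $\sum_i b_i r_i^{-k}=0$ for $k=1,\dots,m-1$ (to kill the leading error terms). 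This is a linear system whose matrix is a (transposed, rescaled) Vandermonde matrix in the variables $1/r_i$; since the $r_i$ are distinct positive integers, the $1/r_i$ are distinct, so the $m\times m$ Vandermonde system is invertible and $\b$ exists and is unique. One then identifies $b_i$ explicitly via Lagrange interpolation: $b_i=\prod_{\ell\ne i}\frac{1/r_\ell-0}{1/r_\ell-1/r_i}=\prod_{\ell\ne i}\frac{1}{1-r_\ell/r_i}$, matching Eq.~\eqref{eq:def-r}. The residual is $F^{(m)}(s_0)-f(0)=\sum_{k\ge m}c_k s_0^k\big(\sum_i b_i r_i^{-k}\big)$, which is $O(s_0^m)$ and is bounded in absolute value by $\|\b\|_1$ times the tail $|R_m(s_0)|:=|\sum_{k\ge m}c_k s_0^k (\cdot)|$ appropriately defined; here $\max_i r_i^{-k}\le 1$ since $r_i\ge 1$.

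For part (2), the three quantitative claims are about the chosen node set. For $\max_i r_i=O(m^4)$: since $\sin(\pi(2i-1)/8m)$ for $i\in[m]$ ranges over arguments in $(0,\pi/8]$ roughly, the smallest sine is $\sin(\pi/8m)=\Theta(1/m)$, so $\frac{\sqrt8 m}{\pi\sin(\pi(2i-1)/8m)}=O(m^2)$, and squaring and ceiling gives $O(m^4)$. For $\max_i r_i/r_m=O(m^2)$: $r_m$ corresponds to the \emph{largest} argument $(2m-1)/8m\approx 1/4$, so $\sin$ there is $\Theta(1)$ and $r_m=\Theta(m^2)$; dividing $O(m^4)$ by $\Theta(m^2)$ gives $O(m^2)$. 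The bound $\|\b\|_1=O(\log m)$ is the delicate one: this is exactly the reason for the peculiar choice $r_i=\lceil \frac{\sqrt8 m}{\pi\sin(\pi(2i-1)/8m)}\rceil^2$, which (up to the ceiling) places the $1/\sqrt{r_i}$ at scaled Chebyshev-type nodes, for which the Lebesgue constant of interpolation at $0$ grows only logarithmically. I would invoke the analysis of \cite{low2019well} (and the version in \cite{watson2024randomly}) rather than re-deriving it, since the excerpt explicitly cites these for the lemma; the key point to state is that $\sum_i |b_i|=\sum_i \prod_{\ell\ne i}\big|1-r_\ell/r_i\big|^{-1}$ is the Lebesgue function at the origin for the node set $\{r_i\}$, and the Chebyshev-node structure forces it to be $O(\log m)$.

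\textbf{Main obstacle.} The routine parts — existence of $\b$ via Vandermonde invertibility, the Lagrange formula, and the two polynomial-growth estimates on $\max r_i$ and $\max r_i/r_m$ — are straightforward. The genuinely hard step is the $O(\log m)$ bound on $\|\b\|_1$: naively a degree-$(m-1)$ interpolation can have a Lebesgue constant as large as $2^m$, and only a carefully chosen node distribution (here, the Chebyshev-like points encoded by the $\sin$ formula) tames it. The ceiling operation in the definition of $r_i$ also requires care, since it perturbs the ideal Chebyshev nodes; one must check the perturbation is small enough (e.g. $O(1)$ additive in $r_i$, hence $O(1/m^2)$ relative for the larger nodes) not to destroy the logarithmic bound. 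Since this is precisely the content proven in the cited references, my proof would reduce to verifying that the present node choice matches theirs and then citing their estimate, rather than reproving the Lebesgue-constant bound from scratch.
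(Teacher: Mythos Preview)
Your proposal is correct and follows essentially the same approach as the paper: set up the Vandermonde linear system in the variables $r_i^{-1}$ to determine the $b_i$, read off the growth estimates $\max_i r_i=r_1=O(m^4)$ and $r_1/r_m=O(m^2)$ directly from the sine formula (the paper phrases this as ``$r_i$ decreases in $i$''), and defer the $\|\b\|_1=O(\log m)$ bound to the cited references. If anything, your write-up is more detailed than the paper's, which does not spell out the Lagrange-interpolation identification of $b_i$ or the Chebyshev-node rationale.
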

\begin{proof}
For a general power series, the solution to find the coefficients $b_i$ is given by the Vandermonde matrix equation
\[
\left[\begin{array}{cccc}
1 & 1 & \ldots & 1 \\
r_1^{-1} & r_2^{-1} & \cdots & r_m^{-1} \\
\vdots & \vdots & \ddots & \vdots \\
r_1^{-m+1} & r_2^{-m+1} & \ldots & r_m^{-m+1}
\end{array}\right]\left[\begin{array}{c}
b_1 \\
b_2 \\
\vdots \\
b_m
\end{array}\right]=\left[\begin{array}{c}
1 \\
0 \\
\vdots \\
0
\end{array}\right] .
\]
For the choice of $\{r_i\}$, it is easy to see that $r_i$ decreases as $k$ increases from $1$ to $m$, which implies that $\max_i\{r_i\}=r_1=O(m^4)$ and $\max_i \{r_i/r_m\}=r_1/r_m=O(m^2)$.
\end{proof}

Lemma \ref{lem_richardson extrapolation} provides a choice of coefficients $b_i$ for general power series. However, for specific series where many $c_i$ vanish, we can choose different coefficients $b_i$ to obtain even better estimates.
In particular, for symmetric Trotter methods, half of the coefficients $c_i$ in the power series are zero, so the following result from \cite{watson2024exponentially} will be useful for error analysis of our general framework.

\begin{lem}[Lemma 5 of \cite{watson2024exponentially}]
\label{lem_Richad for Trotter}
    Let $f(x) \in C^{2 m+2}([-1,1])$ be an even, real-valued function, and let $P_j$ and $R_j$ be the degree-$(j-1)$ Taylor polynomial and Taylor remainder, respectively, such that $f(x)=P_j(x)+R_j(x)$. Let
    \[
    F^{(m)}(s_0)=\sum_{i=1}^m b_i f(s_i)
    \]
    be the Richardson extrapolation of $f(x)$ at points $s_i=s_0/r_i, i\in[m]$ given by
    \be \label{eq:def-r-trotter}
        r_{i}=\left\lceil\frac{\sqrt{8} m}{\pi \sin (\pi(2 i-1) / 8 m)}\right\rceil,
    \qquad 
        b_i=\prod_{\ell \neq i} \frac{1}{1-r_\ell^2/r_i^2},
    \ee 
    Then
    \[
    F^{(m)}(s_0)=f(0)+\sum_{i=1}^m b_i R_{2 m}\left(s_i\right)
    \]
    with error bounded by $\|\b\|_1 \max_{i\in[m]} \left|R_{2m}(s_i)\right|$ and $\|\b\|_1=O(\log m)$.
\end{lem}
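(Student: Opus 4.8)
The plan is to read this as the even-function sharpening of the general Richardson bound already available in Lemma~\ref{lem_richardson extrapolation}. Since $f\in C^{2m+2}([-1,1])$ is even, all odd-order Taylor coefficients of $f$ at $0$ vanish, so $F^{(m)}(s_0)=\sum_{i=1}^m b_i f(s_i)$ effectively extrapolates a power series in the single variable $s^2$; with $m$ weights $b_i$ — one equation fixing the coefficient of $f(0)$ to $1$, the other $m-1$ free — one can then annihilate the first $m-1$ non-constant even powers $s^2,s^4,\dots,s^{2(m-1)}$, leaving a remainder that begins at order $s^{2m}$. The proof breaks into three parts: (i) the moment identities satisfied by the prescribed $b_i$; (ii) combining them with Taylor's theorem to obtain the stated identity and the triangle-inequality error bound; (iii) the estimate $\|\b\|_1=O(\log m)$.

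For (i), set $u_i:=r_i^{-2}$ and note that $b_i=\prod_{\ell\neq i}(1-r_\ell^2/r_i^2)^{-1}=\prod_{\ell\neq i}\frac{u_\ell}{u_\ell-u_i}=L_i(0)$, where $L_i$ is the Lagrange basis polynomial for the (distinct) nodes $\{u_1,\dots,u_m\}$. Exactness of Lagrange interpolation on polynomials of degree $\le m-1$, evaluated at the point $0$, gives $\sum_{i=1}^m b_i\,u_i^k=0^k$ for $0\le k\le m-1$; that is, $\sum_i b_i=1$ and $\sum_i b_i r_i^{-2k}=0$ for $1\le k\le m-1$, equivalently $\sum_i b_i s_i^{2k}=\delta_{k,0}$ for $0\le k\le m-1$ since $s_i=s_0/r_i$. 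For (ii), Taylor-expand $f$ at $0$ as $f=P+R$, with $P$ the truncation carrying the powers $1,s^2,\dots,s^{2(m-1)}$ and $R$ the remainder that begins at order $s^{2m}$ (this is the object written $R_{2m}$ in the statement; note that the $m$ weights can cancel at most the first $m-1$ non-constant even powers, so the relevant truncation is the degree-$2(m-1)$ one). Then $F^{(m)}(s_0)=\sum_i b_i P(s_i)+\sum_i b_i R(s_i)$, and the moment identities collapse the first sum to $f(0)$, so $F^{(m)}(s_0)=f(0)+\sum_{i=1}^m b_i R_{2m}(s_i)$; the triangle inequality then yields $|F^{(m)}(s_0)-f(0)|\le\|\b\|_1\max_{i\in[m]}|R_{2m}(s_i)|$.

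The remaining step (iii) is the real work, and the one I expect to be the main obstacle, because it is a question of classical approximation theory rather than of linear algebra. Here the plan is: first discard the ceilings in $r_i=\big\lceil\frac{\sqrt8\,m}{\pi\sin(\pi(2i-1)/8m)}\big\rceil$, noting that they change each $u_i$ by a multiplicative factor $1+O(1/m)$ and hence change $\|\b\|_1=\sum_i|L_i(0)|$ by at most a constant factor, so it suffices to analyze the idealized nodes $u_i\propto\sin^2(\pi(2i-1)/8m)$. Using $\sin^2\theta=\tfrac12(1-\cos 2\theta)$, the affine change of variable $u\mapsto 1-2u$ (after normalization) carries $\{u_i\}$ onto $\{\cos(\pi(2i-1)/4m):i\in[m]\}$, i.e.\ onto the positive half of the degree-$2m$ Chebyshev-of-the-first-kind nodes on $[-1,1]$, and carries the extrapolation target $u=0$ to a point at the boundary of the node interval. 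One then invokes the classical fact that the Lebesgue function of a Chebyshev-type node set is $O(\log m)$ uniformly, including at the endpoints, which is exactly $\sum_i|L_i(0)|=O(\log m)$; alternatively one may bound the product $\prod_{\ell\neq i}\frac{u_\ell}{|u_i-u_\ell|}$ directly, as is done for this exact node choice in \cite{low2019well,watson2024exponentially}. The only remaining care is to confirm that the target point falls in the regime where the Lebesgue-function estimate applies; this is routine once the Chebyshev structure of the nodes is identified.
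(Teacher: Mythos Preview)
The paper does not prove this lemma; it is quoted verbatim as Lemma~5 of \cite{watson2024exponentially}, so there is no in-paper proof to compare against. Your sketch is correct and is essentially the standard argument used in \cite{low2019well,watson2024exponentially}: the Lagrange-at-zero identification $b_i=L_i(0)$ in the variable $u=r^{-2}$ gives the moment identities (i), Taylor's theorem plus the triangle inequality gives (ii), and for (iii) the Chebyshev structure of the idealized nodes $u_i\propto\sin^2(\pi(2i-1)/8m)$ is exactly what those references exploit to obtain the $O(\log m)$ Lebesgue-function bound. Two minor caveats worth flagging: the indexing of $R_{2m}$ in the statement is off by one relative to the cancellation actually achieved (as you noted, $m$ weights kill only through $s^{2(m-1)}$), and the claim that discarding the ceilings perturbs $\|\b\|_1$ by only a constant factor is the one step that is not quite routine, since for $i$ near $m$ consecutive unceiled $r_i$ differ by roughly $1$ and the ceiling could in principle collide nodes; the cited references handle this, so deferring to them is appropriate.
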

In this work, we show that Richardson extrapolation can be applied, almost as a black box, in much more general settings.

\section[Interleaved sequence of unitaries and Hamiltonian evolution operators]{Interleaved unitaries and Hamiltonian evolutions:~A unified framework}
\label{sec:interleaved-sequence}
In this section, we show how quantum circuits composed of interleaved sequences of arbitrary unitaries and Hamiltonian evolutions can be implemented with near-optimal complexity. Our approach replaces each Hamiltonian evolution with a higher-order Trotter–Suzuki approximation, and we demonstrate that the expectation value of any observable with respect to the final state can be expressed as a power series in the Trotter step size. This structure enables the use of classical extrapolation techniques, significantly improving the dependence on precision from $1/\eps^{o(1)}$ to $\polylog(1/\eps)$. As a result, we obtain a resource-efficient, near-optimal solution to Problem \ref{problem:interleaved}. 

Let
$$
H^{(1)}, H^{(2)}, \dots, H^{(M)},
$$
be Hermitian operators, where each 
$$ 
H^{(\ell)} = \sum_{\gamma=1}^{\Gamma_{\ell}} H^{(\ell)}_{\gamma},$$ 
is a decomposition of \( H^{(\ell)} \) into terms \( H^{(\ell)}_{\gamma} \), and the Hamiltonian evolution of each \( H^{(\ell)}_{\gamma} \) can be efficiently implemented. Let \( V_0, V_1, \dots, V_M \) be arbitrary unitary operators. Then, we define a quantum circuit that is an interleaved sequence of unitaries and Hamiltonian evolutions, as follows: 
\be
\label{eq:interleaved operator}
W = V_0 \prod_{\ell=1}^M e^{\i H^{(\ell)}} V_{\ell}.
\ee
We now address Problem \ref{problem:interleaved}, namely the estimation of \( \bra{\psi_0}W^\dag O W \ket{\psi_0}  \) for an arbitrary initial state \( \ket{\psi_0}\) and any observable \( O \), using no ancilla qubits, and with near-optimal complexity. Our algorithm proceeds by applying high-order Trotter methods to implement the Hamiltonian evolution with decreasing Trotter step sizes. At each step-size, we estimate the observable and then apply classical extrapolation on the measurement results to recover the zero-step-size limit, corresponding to the true expectation value. The main results are proven in Theorem~\ref{thm:qsp-with-trotter-and-interpolation} and Theorem~\ref{thm:qsp-with-trotter-and-interpolation-coherent}. To establish the error bound, we first derive an error series for the measurement result in terms of the inverse of the total number of Trotter steps as a non-trivial generalization of Lemma~\ref{lem:trotter-power} (see Lemma \ref{lem:interleaved} below).

Define 
\[
    \alpha_{\mathrm{comm}}^{(j,\ell)} := \sum_{\gamma_1 ,\gamma_2, \ldots ,\gamma_j=1}^{\Gamma} \big\| \,[H^{(\ell)}_{\gamma_1} H^{(\ell)}_{\gamma_2} \ldots H^{(\ell)}_{\gamma_j}] \, \big\|
\]
for \( j \in \mathbb N \) and \begin{align}
\label{eq:def-tilde-alpha}
\tilde{\alpha}_{\mathrm{comm}}^{(j)}:=\max_{\ell \in [M]} \alpha_{\mathrm{comm}}^{(j,\ell)}.
\end{align}
Let \( O \) be an observable, and define 
\be \label{O-2M+1}
    O_{2M+1} := W^{\dagger} O W 
\ee
as the observable evolved under the interleaved operator $W$ defined as in Eq.~\eqref{eq:interleaved operator}.
Suppose \[ \mathcal{P}^{(\ell)}(t) =\prod_{\nu=1}^{\Upsilon} \prod_{\gamma=1}^{\Gamma} e^{ \i t a_{(\nu, \gamma)} H^{(\ell)}_{\pi_\nu(\gamma)}},
\]
is a \( p \)-th order symmetric staged product formula for \( H^{(\ell)} \). For any \( s \in (0,1) \), define
\be \label{tilde-O-2M+1}
    \tilde{O}_{2M+1,s} := \Big( V_0\prod_{\ell=1}^M \mathcal{P}^{(\ell)}(sM)^{1/(sM)}V_{\ell} \Big)^{\dagger} 
    O \Big( V_0\prod_{\ell=1}^M \mathcal{P}^{(\ell)}(sM)^{1/(sM)}V_{\ell} \Big)
\ee
as the observable evolved under the staged product formula approximations with step size \( sM \).

\begin{lem}
\label{lem:interleaved}
Suppose condition \eqref{BCH convergence condition} holds for $\tilde{\alpha}_{\mathrm{comm}}^{(j)}$ with $t=sM$. Then
    \begin{align*}
        \tilde{O}_{2M+1,s}-O_{2M+1}=\sum_{j \in (\sigma \mathbb{Z})_{\ge p}} s^j \tilde{E}_{j+1, K}(M) +\tilde{F}_K(M, s) ,
    \end{align*}
where $\tilde{E}_{j+1, K}(M)$ and $\tilde{F}_K(M, s)$ are superoperators with induced spectral norm bounded as
\be
\begin{aligned}
\label{eq:bound-E-F} 
    \|\tilde{E}_{j+1, K}(M)\| &\leq& (a_{\max } \Upsilon M)^j \sum_{\ell=1}^{\min \{K-1,\lfloor j / p\rfloor\}} \frac{(a_{\max } \Upsilon M)^\ell}{\ell!} \sum_{\substack{j_1, \ldots, j_\ell \in (\sigma \mathbb{Z})_{\ge p} \\
        j_1+\cdots+j_\ell=j}} \prod_{\kappa=1}^\ell  \frac{2\tilde{\alpha}_{\mathrm{comm}}^{(j_\kappa+1)}}{(j_\kappa+1)^2} , \\
    \|\tilde{F}_K(M, s)\| &\leq& \frac{( a_{\max } \Upsilon M)^K}{K!} \sum_{j \in (\sigma \mathbb{Z})_{\ge K p}}(a_{\max } \Upsilon s M)^j \sum_{\substack{j_1, \ldots, j_K \in (\sigma \mathbb{Z})_{\geq p} \\
        j_1+\cdots+j_K=j}}~~ \prod_{\kappa=1}^K  \frac{2\tilde{\alpha}_{\mathrm{comm}}^{(j_\kappa+1)}}{(j_\kappa+1)^2} .
\end{aligned}
\ee
\end{lem}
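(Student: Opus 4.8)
The plan is to reduce the general interleaved statement to the already-established single-Hamiltonian result, Lemma~\ref{lem:trotter-power}, by reinterpreting the whole circuit $W$ as a single time-evolution over the interval $[0,2M+1]$ governed by a piecewise-constant, time-dependent ``Floquet-like'' Hamiltonian. Concretely, following the proof sketch, I would introduce $H(t)$ which on the even-integer slabs $[2k,2k+1)$ equals $\i\log(V_k)$ (a constant unitary generator, contributing $e^{\i\log V_k}=V_k$ over a unit-length slab) and on the odd-integer slabs equals $-H^{(\lceil t/2\rceil)}$, so that the exact time-ordered evolution generated by $H(t)$ over $[0,2M+1]$ reproduces $W$ and hence $O_{2M+1}=W^\dagger O W$. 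Likewise $\tilde H(t)$ adds the effective error Hamiltonian $E^{(\lceil t/2\rceil)}(sM)$ on the odd slabs, so that the evolution generated by $\tilde H(t)$ reproduces the staged-product-formula circuit and hence $\tilde O_{2M+1,s}$; here one uses Lemma~\ref{lem:expansion} to write $\mathcal{P}^{(\ell)}(sM)=e^{\i sM(H^{(\ell)}-E^{(\ell)}(sM))}$ with $\|E^{(\ell)}_{j}\|\le (a_{\max}\Upsilon)^j\alpha_{\mathrm{comm}}^{(j,\ell)}/j^2$, which is exactly why the bound involves $\tilde\alpha_{\mathrm{comm}}^{(j)}=\max_\ell\alpha_{\mathrm{comm}}^{(j,\ell)}$. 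Note that on each odd slab the ``step size'' for the inner product formula is $sM$ but there are $1/(sM)$ Trotter steps per slab, so the per-slab evolution time is exactly $1$ and the parameter $s$ plays the role of the global step-size/$T$ ratio; the convergence hypothesis \eqref{BCH convergence condition} is imposed for $t=sM$ precisely to make the BCH expansion of each $\mathcal P^{(\ell)}$ valid.

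Next, I would run the variation-of-parameters (Duhamel/Dyson) argument in the Heisenberg picture: writing $\Delta(t):=\tilde O_{t,s}-O_t$, differentiate to get a first-order ODE $\frac{d}{dt}\Delta(t)=\i\,\mathrm{ad}_{\tilde H(t)}\Delta(t)+\i\,\mathrm{ad}_{E(t)}O_t$ where $E(t)=E^{(\lceil t/2\rceil)}(sM)\mathbf 1_A$ is supported on the odd slabs, and integrate to express $\Delta(2M+1)$ as a sum over iterated integrals of nested $\mathrm{ad}_{E(t)}$ insertions conjugated by the ideal flow. Since $E(t)=\sum_{j\ge p}(sM)^{j}(\text{bounded pieces})/(\cdots)$ — here it is cleanest to factor out $sM$ from $E^{(\ell)}(sM)$ and expand in powers of $sM$, then convert $sM$-powers into $s$-powers absorbing the $M$-powers into the coefficients — the $\ell$-fold iterated insertion contributes a term of total order $s^{j_1+\cdots+j_\ell}$ with $j_\kappa\ge p$, and the $1/\ell!$ comes from the ordered-simplex volume of the iterated time integral (the integration region has total measure $(2M+1)^\ell/\ell!\lesssim(a_{\max}\Upsilon M)^\ell/\ell!$ after accounting for the number of Trotter steps packed into the slabs). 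Symmetry of $\mathcal P^{(\ell)}$ forces only even powers of $sM$ to survive in each $E^{(\ell)}$, giving $j\in(\sigma\mathbb Z)_{\ge p}$ with $\sigma=2$. Collecting the terms with fixed total degree $j$ yields $\tilde E_{j+1,K}(M)$, and truncating the $\ell$-sum at $K-1$ leaves the tail $\tilde F_K(M,s)$, whose bound is the $\ell=K$ iterated integral summed over remaining degrees; matching the combinatorial sums term by term against Lemma~\ref{lem:trotter-power} (with the single-Hamiltonian $\alpha^{(j)}_{\mathrm{comm}}$ replaced by $\tilde\alpha^{(j)}_{\mathrm{comm}}$ and $T$ replaced by $M$) produces exactly \eqref{eq:bound-E-F}.

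The main obstacle I expect is making the time-dependent BCH / Dyson bookkeeping rigorous in the presence of the $V_\ell$ slabs, since a naive multiplicative combination of the per-slab error series (as in a direct product of the expansions from \cite{watson2024exponentially}) does not obviously telescope to a controllable bound — the $V_\ell$'s are arbitrary, so there is no global Hamiltonian whose commutator structure one can exploit directly. The resolution, as hinted in the sketch, is exactly the Floquet reinterpretation: by absorbing the $V_\ell$'s into the unperturbed flow and isolating all the error into the additive $E(t)$-term, the ideal conjugations are unitary and drop out of the norm estimates, so only the norms $\|E^{(\ell)}_j\|$ and the combinatorics of nested commutators of each fixed $H^{(\ell)}$ enter — and those are already controlled by Lemma~\ref{lem:expansion}. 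A secondary technical point is handling the fractional-step issue, i.e.\ ensuring $1/(sM)$ is a positive integer: this is deferred to the choice of Richardson sample points later (taking $1/s$ a multiple of $M$), and within this lemma one may simply work formally / restrict to such $s$, since the error series identity is an algebraic identity in $s$ valid on that dense set and both sides are analytic. I would also double-check that $\log(V_\ell)$ is well-defined (principal branch, eigenvalues in $(-\i\pi,\i\pi]$ as in the Preliminaries) and that its appearance is harmless because $e^{\i\log V_\ell\cdot 1}=V_\ell$ is exact with no Trotter error on those slabs, so $E(t)$ genuinely vanishes off the odd slabs.
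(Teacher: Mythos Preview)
Your proposal is correct and follows essentially the same route as the paper: the Floquet reinterpretation via piecewise-constant $H(t)$ and $\tilde H(t)$, the iterated variation-of-parameters (Dyson) expansion with the error supported only on the odd slabs, and the per-slab BCH expansion from Lemma~\ref{lem:expansion} are exactly what the paper does. One small bookkeeping slip: the factor $(a_{\max}\Upsilon M)^\ell/\ell!$ does not come from bounding the simplex volume $(2M+1)^\ell/\ell!$ --- rather, the indicator $\mathbf 1_A$ restricts the integration to the odd slabs so the volume is exactly $M^\ell/\ell!$, and the $(a_{\max}\Upsilon)^\ell$ arises separately from the bound $\|E^{(\ell)}_{j_\kappa+1}\|\le (a_{\max}\Upsilon)^{j_\kappa+1}\tilde\alpha_{\mathrm{comm}}^{(j_\kappa+1)}/(j_\kappa+1)^2$ after pulling out $(a_{\max}\Upsilon)^{j}$ into the prefactor.
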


\begin{proof}
    Let $\delta:= sM$ be the time step size of the staged product formula $\mathcal{P}^{(\ell)}$ and 
    \be
        \label{eq:def-E}
        E^{(\ell)}(\delta):= -\sum_{j\in (2 \mathbb{N})_{\ge p} }^{\infty} E^{(\ell)}_{j+1} \delta^j
    \ee 
    be the effective error Hamiltonian in Lemma~\ref{lem:expansion}, where the norm of $E^{(\ell)}_{j}$ is bounded as \begin{align}
        \|E^{(\ell)}_{j}\|\le \frac{(a_{\max}\Upsilon)^j}{j^2} \alpha_{\mathrm{comm}}^{(j,\ell)}\le \frac{(a_{\max}\Upsilon)^j}{j^2} \tilde{\alpha}_{\mathrm{comm}}^{(j)}.
    \end{align}
    By assumption, $\delta$ satisfies the convergence condition of Lemma~\ref{lem:expansion} for all $\ell$, so we can write $\mathcal{P}^{(\ell)}(\delta)$ as $
    \mathcal{P}^{(\ell)}(\delta)=e^{-\i\delta(H-E^{(\ell)}(\delta))}$. 
    We define the time-dependent Hamiltonian $H(t)$ and $\tilde{H}(t)$ as \begin{align*}
        H(t)=\begin{cases} \i \log(V_{\lfloor t\rfloor/2}) & \text{ if } \lfloor t\rfloor \text{ is even}, \\ -H^{(\lceil t/2\rceil)}  & \text{  if } \lfloor t\rfloor \text{ is odd}, \end{cases}  \quad \tilde{H}(t) =\begin{cases} \i \log(V_{\lfloor t\rfloor/2}) & \text{ if } \lfloor t\rfloor \text{ is even}, \\ -H^{(\lceil t/2\rceil)}+E^{(\lceil t/2\rceil)}(\delta) & \text{  if } \lfloor t\rfloor \text{ is odd}, \end{cases}  
    \end{align*}
for $t\in [0,2M+1]$. Here $\log(V_{\ell})$ is skew-Hermitian as $V_{\ell}$ is unitary. Denote $O_T, \tilde{O}_{T,s}$ as the time evolved observable under the Hamiltonian $H(t)$ and $\tilde{H}(t)$ respectively, which means \be
     \label{eq:def-O}
    \partial_t O_t = \i \ad_{H(t)}(O_t), \quad \partial_t  \tilde{O}_{t,s} = \i \ad_{\tilde{H}(t)}(\tilde{O}_{t,s})
    \ee and $O_0= O$, $\tilde{O}_{0,s}= O$, where $\ad_H(\cdot):= [H,\cdot]$, see Sec. \ref{sec:notation}. For $T = 2M+1$, we have \bes 
    O_{2M+1} = \Big( V_0\prod_{\ell=1}^M e^{\i H^{(\ell)}}V_{\ell} \Big)^{\dagger} O \Big(V_0\prod_{\ell=1}^M e^{\i H^{(\ell)}}V_{\ell} \Big),
    \ees
    and \bes
    \tilde{O}_{2M+1,s} = \Big( V_0\prod_{\ell=1}^M \mathcal{P}^{(\ell)}(sM)^{1/(sM)}V_{\ell} \Big)^{\dagger} 
    O \Big( V_0\prod_{\ell=1}^M \mathcal{P}^{(\ell)}(sM)^{1/(sM)}V_{\ell} \Big),
\ees
    which match the definitions of $O_{2M+1}$ and $\tilde{O}_{2M+1,s}$ in Eqs.~\eqref{O-2M+1} and \eqref{tilde-O-2M+1}.
    Let $\mathbf{1}_A$ be the indicator function of the set $A:=\{t\in[0,2M+1]:\lfloor t\rfloor \text{ is odd}\}$ and then the effective error Hamiltonian at time $t$ can be written as 
    \begin{align*}
    \hat{E}(t):= \tilde{H}(t) - H(t) = E^{(\lceil t/2\rceil)}(\delta) \cdot \mathbf{1}_A(t).
    \end{align*}

    Let $\Phi_H(T_2,T_1)=\mathcal{T}e^{\i~\int_{T_1}^{T_2}\textrm{ad}_{H(t)}\,\d t}$ and $\Phi_{\tilde{H}}(T_2,T_1)=\mathcal{T}e^{\i~\int_{T_1}^{T_2}\textrm{ad}_{H(t)}\,\d t}$ be the time-evolution operator of the differential equation defining $O_T$ and $\tilde{O}_{T,s}$ in Eq.~\eqref{eq:def-O} and then $O_T$ and $\tilde{O}_{T,s}$ can be written as 
    \bes
    O_{T}=\Phi_H(T,0)O, \quad \tilde{O}_{T,s}=\Phi_{\tilde{H}}(T,0)O
    \ees
    for $T\in [0,2M+1]$. Using a time-dependent variation-of-parameters formula 
    \[
    D_2(T,0)=D_1(T,0)+\int_0^T D_1(T,\tau)B(\tau)D_2(\tau,0)~\d\tau,
    \]
    where
    \[
    D_2(T_2, T_1)=\mathcal{T}e^{\int_{T_1}^{T_2}(A(t)+B(t))\,\d t} ~\text{and } D_1(T_2, T_1)=\mathcal{T}e^{\int_{T_1}^{T_2} A(t)\,\d t},
    \]
    the difference between $\Phi_{\tilde{H}}(T,0)$ and $\Phi_{H}(T,0)$ can be written as 
    \begin{align*}
        \Phi_{\tilde{H}}(T,0)=\Phi_{H}(T,0)+\int_0^T\Phi_H(T,\tau)\i~\textrm{ad}_{\hat{E}(\tau)}\Phi_{\tilde{H}}(\tau,0)~\d\tau.
    \end{align*}
    We can use the variation-of-parameters formula $K-1$ times to obtain
    \bea \label{eq:variation}
        && \tilde{O}_{T,s}-O_T \nonumber \\
        &=& (\Phi_{\tilde{H}}(T,0) - \Phi_H (T,0) )O\nonumber\\
        &=& \int_0^T\Phi_H(T,\tau_1)\i~\textrm{ad}_{\hat{E}(\tau_1)}(\tilde{O}_{\tau_1,s})~\d\tau_1\nonumber \\ 
        &=& \int_0^T\Phi_H(T,\tau_1)\i~\textrm{ad}_{\hat{E}(\tau_1)}(O_{\tau})~\d\tau_1 + 
        \int_0^T  \int_0^{\tau_1} \Phi_{H}(T,\tau_1) \i \ad_{\hat{E}(\tau_1)} \Phi_H(\tau_1, \tau_2) \i \ad_{\hat{E}(\tau_2)}(\tilde{O}_{\tau_2,s})\,\, \d \tau_2 \d \tau_1 \nonumber\\
        &=& \sum_{\ell=1}^{K-1} \int_0^T  \int_0^{\tau_1}  \cdots \int_0^{\tau_{\ell-1}}  \Phi_{H}(T,\tau_1) \i \ad_{\hat{E}(\tau_1)} 
        \cdots \Phi_{H}(\tau_{\ell-1},\tau_\ell) \i \ad_{\hat{E}(\tau_\ell)}(O_{\tau_\ell}) \,\, \d \tau_\ell   \cdots \d \tau_1  \nonumber \\ 
        &\quad+&  \int_0^T  \int_0^{\tau_1}  \cdots \int_0^{\tau_{K-1}}  \Phi_{H}(T,\tau_1) \i \ad_{\hat{E}(\tau_1)} 
        \cdots \Phi_{H}(\tau_{K-1},\tau_K) \i \ad_{\hat{E}(\tau_K)}(\tilde{O}_{\tau_K,s}) \,\, \d \tau_K   \cdots \d \tau_1. 
    \eea
    For any $\ell\in[K-1]$, we have 
    \begin{align*}
        &\quad  \int_0^T  \int_0^{\tau_1}  \cdots \int_0^{\tau_{\ell-1}}  \Phi_{H}(T,\tau_1) \i \ad_{\hat{E}(\tau_1)}  
        \cdots \Phi_{H}(\tau_{\ell-1},\tau_\ell) \i \ad_{\hat{E}(\tau_\ell)}(O_{\tau_\ell}) \,\, \d \tau_\ell   \cdots \d \tau_1  \\
        &= \int_0^T  \int_0^{\tau_1}  \cdots \int_0^{\tau_{\ell-1}}  \prod_{\kappa=\ell}^1 \Big(\Phi_H(\tau_{\kappa-1},\tau_{\kappa}) \i \ad_{E^{(\lceil \tau_k/2\rceil)}(\delta)} \mathbf{1}_A(\tau_{\kappa}) \Big)(O_{\tau_\ell}) \,\, \d \tau_\ell   \cdots \d \tau_1 \\
        &=\int_0^T  \int_0^{\tau_1}  \cdots \int_0^{\tau_{\ell-1}} 
        \prod _{\kappa=\ell}^1 \Big(\Phi_H(\tau_{\kappa-1},\tau_{\kappa}) \i \mathbf{1}_A(\tau_{\kappa})\sum_{j_{\kappa}\in (2 \mathbb{N})_{\ge p} }^{\infty} \ad_{E^{(\lceil \tau_k/2\rceil)}_{j_{\kappa+1}}} \delta^{j_\kappa}\Big)(O(\tau_\ell)) 
        \,\, \d \tau_\ell   \cdots \d \tau_1 \\
        & = \sum_{j\in (2 \mathbb{N})_{\ge p \ell} } \delta^j 
        \int_0^T  \int_0^{\tau_1}  \cdots \int_0^{\tau_{\ell-1}} 
        \sum_{\substack{j_1, \ldots, j_\ell \in (2 \mathbb{N})_{\geq p} \\ j_1+\cdots+j_\ell=j}} \prod _{\kappa=\ell}^1 \Big(\Phi_H(\tau_{\kappa-1},\tau_{\kappa}) \i \mathbf{1}_A(\tau_{\kappa}) \ad_{E^{(\lceil \tau_k/2\rceil)}_{j_{\kappa+1}}} \Big) (O_{\tau_\ell})
        \,\, \d \tau_\ell   \cdots \d \tau_1,
    \end{align*}
    with $\tau_0\equiv T$, where the second equation follows from Eq.~\eqref{eq:def-E}. We then substitute $\delta=sM$, $T=2M+1$, and take summation over $\ell\in[K-1]$. This gives 
    \[
    \text{First line of Eq.} ~\eqref{eq:variation} = \sum_{j\in (2 \mathbb{N})_{\ge p} } s^j \tilde{E}_{j+1,K}(M)(O),
    \]
    where $\tilde{E}_{j+1,K}$ is defined as 
    \beas
    \tilde{E}_{j+1,K}(M)(O) &=& M^j \sum_{\ell=1}^{\min\{ K-1, \lfloor j/p \rfloor \}} \int_0^{2M+1}   \int_0^{\tau_{1}} \cdots \int_0^{\tau_{\ell-1}}  \\
    &&  \sum_{\substack{j_1, \ldots, j_\ell \in (2 \mathbb{N})_{\geq p} \\ j_1+\cdots+j_\ell=j}} \prod _{\kappa=\ell}^1 \Big(\Phi_H(\tau_{\kappa-1},\tau_{\kappa}) \i \mathbf{1}_A(\tau_{\kappa}) \ad_{E^{(\lceil \tau_k/2\rceil)}_{j_{\kappa+1}}}  \Big) (O_{\tau_\ell})
        \,\, \d \tau_\ell   \cdots \d \tau_1.
    \eeas
    The induced spectral norm of the superoperator $\tilde{E}_{j+1,K}(M)$ is bounded as 
    \beas
    && \|\tilde{E}_{j+1,K}(M)\| \\
    &\leq& M^j \sum_{\ell=1}^{\min\{ K-1, \lfloor j/p \rfloor \}} \int_0^{2M+1}   \int_0^{\tau_{1}} \cdots \int_0^{\tau_{\ell-1}}  \sum_{\substack{j_1, \ldots ,j_\ell \in (2 \mathbb{N})_{\geq p} \\ j_1+\cdots+j_\ell=j}} \prod _{\kappa=\ell}^1   \mathbf{1}_A(\tau_{\kappa}) \|\ad_{E^{(\lceil \tau_k/2\rceil)}_{j_{\kappa+1}}} \|    
        \,\, \d \tau_\ell   \cdots \d \tau_1 \\
    &=& M^j \sum_{\ell=1}^{\min\{ K-1, \lfloor j/p \rfloor \}} \int_0^{2M+1}   \int_0^{\tau_{1}} \cdots \int_0^{\tau_{\ell-1}}   \prod_{\kappa=\ell}^1  \mathbf{1}_A(\tau_{\kappa})  
     \sum_{\substack{j_1 ,\ldots, j_\ell \in (2 \mathbb{N})_{\geq p} \\ j_1+\cdots+j_\ell=j}} 
     \prod_{\kappa=\ell}^1   \|\ad_{E^{(\lceil \tau_k/2\rceil)}_{j_{\kappa+1}}} \|   
        \,\, \d \tau_\ell   \cdots \d \tau_1 \\
    &\le& M^j \sum_{\ell=1}^{\min\{ K-1, \lfloor j/p \rfloor \}} 
    \frac{M^\ell}{\ell!} \sum_{\substack{j_1, \ldots ,j_\ell \in (2 \mathbb{N})_{\geq p} \\ j_1+\cdots+j_\ell=j}} 
    \prod_{\kappa=\ell}^1 2 \tilde{\alpha}_{\mathrm{comm}}^{\left(j_\kappa+1\right)} \frac{\left(a_{\max } \Upsilon\right)^{j_\kappa+1}}{\left(j_\kappa+1\right)^2} \\
    &=& \left(a_{\max } \Upsilon M\right)^j \sum_{\ell=1}^{\min \{K-1,\lfloor j / p\rfloor\}} \frac{\left(a_{\max } \Upsilon M\right)^\ell}{\ell!} \sum_{\substack{j_1 ,\ldots ,j_\ell \in (2 \mathbb{N})_{\geq p} \\ j_1+\cdots+j_\ell=j}}  \,\, \prod_{\kappa=1}^\ell  \frac{2\tilde{\alpha}_{\mathrm{comm}}^{(j_\kappa+1)}}{\left(j_\kappa+1\right)^2} .
    \eeas
    The first inequality follows from $\|\Phi_H(\tau_{\kappa-1}, \tau_{\kappa})\|=1$ as $H(t)$ is Hermitian, and the second inequality follows from $\mathrm{vol}(A^{\otimes \ell})=\mathrm{vol}(A)^\ell=M^\ell$ and take the order of $\tau_1,\tau_2,\ldots,\tau_\ell$ into account, the integration
\[
\int_0^{2M+1}   \int_0^{\tau_{1}} \cdots \int_0^{\tau_{\ell-1}}   \prod_{\kappa=\ell}^1  \mathbf{1}_A(\tau_{\kappa})    
        \,\, \d \tau_\ell   \cdots \d \tau_1 = \frac{\mathrm{vol}(A^{\otimes \ell})}{\ell !}=\frac{M^\ell}{\ell!} .
\]
    By similar arguments, we can show that the second term in Eq.~\eqref{eq:variation} can be written as 
    \beas
    \tilde{F}(M,s)(O) &=& \sum_{j\in (2 \mathbb{N})_{\ge Kp}} (sM)^j \int_0^{2M+1}  \int_0^{\tau_1}  \cdots \int_0^{\tau_{K-1}}  \\
    && \sum_{\substack{j_1 ,\ldots ,j_K \in 
    (2 \mathbb{N})_{\geq p} \\ j_1+\cdots+j_K=j}}   \prod _{\kappa=K}^1 \Big(\Phi_H(\tau_{\kappa-1},\tau_{\kappa}) \i \mathbf{1}_A(\tau_{\kappa}) \ad_{E^{(\lceil \tau_k/2\rceil)}_{j_{\kappa+1}}} \Big) (\tilde{O}_{\tau_K,s})
    \,\, \d \tau_K \cdots \d \tau_1 .
    \eeas
    The operator norm of the operator $\tilde{F}_K(M,s)$ is bounded as \begin{align*}
        \|\tilde{F}_K(M,s)\| \leq 
        \frac{\left(a_{\max } \Upsilon M\right)^K}{K!} \sum_{j\in (2 \mathbb{N})_{\ge Kp} } \left(sa_{\max } \Upsilon M\right)^j   \sum_{\substack{j_1, \ldots, j_K \in (2 \mathbb{N})_{\geq p} \\ j_1+\cdots+j_K=j}} \prod_{\kappa=1}^K \,\, \frac{2\tilde{\alpha}_{\mathrm{comm}}^{(j_\kappa+1)}}{\left(j_\kappa+1\right)^2} .
    \end{align*}
This completes the proof.
\end{proof}

For any $s_0\in (0,1]$ with $1/s_0\in\mathbb{N}$, let 
\be
\label{tilde-O-2M+1:Richardson extrapolation}
\tilde{O}^{(m)}_{2M+1,s_0} = \sum_{k=1}^m b_k 
\tilde{O}_{2M+1,s_k} 
\ee
be the $m$-point Richardson extrapolation with $b_k,s_k$ given in Lemma~\ref{lem_Richad for Trotter}, where $\tilde{O}_{2M+1,s_k} $ is defined via Eq.~\eqref{tilde-O-2M+1}. 
Let 
\be
\label{Lambda-j-l}
\Lambda_{j, \ell}:= \max\Bigg\{\Bigg(\sum_{\substack{j_1, \ldots, j_\ell \in (2\mathbb{N})_{\geq p} \\ j_1+\cdots+j_\ell=j}} \,\, \prod_{\kappa=1}^\ell  \frac{2\tilde{\alpha}_{\mathrm{comm}}^{(j_\kappa+1)}}{\left(j_\kappa+1\right)^2}\Bigg)^{1 /(j+\ell)}, \quad (\tilde{\alpha}_{\mathrm{comm}}^{(j)})^{1/j}\Bigg\}
\ee
and 
\be \label{Lambda}
\lambda_{\rm comm} :=  \max_{\substack{j \in (2 \mathbb{N})_{\geq 2m} \\ \ell \in [K]}} \Lambda_{j, \ell}.
\ee


\begin{rmk}[Upper bounds of $\lambda_{\rm comm}$]
\label{remark: A bound of Lambda}

Here, we simplify the expression of $\lambda_{\rm comm}$ by providing some known upper bounds. 

\begin{itemize}
    \item For general Hamiltonians, according to \cite[Eqs.~(33) and (78)]{watson2024exponentially}, we have 
    \be \label{eq:para comm lambda}
    \lambda_{\rm comm}\leq 4 \max_{\ell\in[M]} \sum_{\ell=1}^{\Gamma_{\ell}}\|H_{j}^{(\ell)}\|.
    \ee
    \item Suppose that there exist $\alpha>0$ and $C_{\alpha} > 1$ such that 
    \begin{align}
        \label{eq:alpha-bound}
    \tilde{\alpha}_{\mathrm{comm}}^{(j)}\le C_{\alpha}\alpha^j/2
    \end{align}
    for all $j\in (2\mathbb{N})_{\ge p}$.
    Then, for all $j\in (2\mathbb{N})_{\ge 2m}$ and $\ell \in [K]$, we have $(\tilde{\alpha}_{\mathrm{comm}}^{(j)})^{1/j} \le C_{\alpha}^{1/p} \alpha$ and
    \beas 
    \Bigg(\sum_{\substack{j_1, \ldots, j_\ell \in (2\mathbb{N})_{\geq p} \\ j_1+\cdots+j_\ell=j}}  \prod_{\kappa=1}^\ell  \frac{2\tilde{\alpha}_{\mathrm{comm}}^{(j_\kappa+1)}}{\left(j_\kappa+1\right)^2}\Bigg)^{1 /(j+\ell)}
        &\le & \Bigg(\sum_{\substack{j_1, \ldots, j_\ell \in (2\mathbb{N})_{\geq p} \\ j_1+\cdots+j_\ell=j}} \prod_{\kappa=1}^\ell \frac{(C_{\alpha}^{1/(p+1)}\alpha)^{j_\kappa+1}}{\left(j_\kappa+1\right)^2}\Bigg)^{1 /(j+\ell)} \\
        &=& \alpha C_{\alpha}^{1/(p+1)} \cdot \Bigg(\sum_{\substack{j_1, \ldots, j_\ell \in (\mathbb{N})_{\geq p} \\ j_1+\cdots+j_\ell=j}} \prod_{\kappa=1}^\ell \frac{1}{\left(j_\kappa+1\right)^2}\Bigg)^{1 /(j+\ell)}\\
        &\le& \alpha C_{\alpha}^{1/(p+1)}  \Bigg( \sum_{\substack{j_1, \cdots j_\ell ,\in \mathbb{N} \\ j_1+\cdots+j_\ell=j}} 1\Bigg)^{1 /(j+\ell)} \\
        &=& \alpha C_{\alpha}^{1/(p+1)}\binom{j+\ell-1}{\ell-1}^{1 /(j+\ell)}\\
        &\le& \alpha C_{\alpha}^{1/(p+1)}(2^{j+\ell})^{1/(j+\ell)}=2C_{\alpha}^{1/(p+1)}\alpha,
    \eeas
where the first inequality follows from $C_{\alpha} >1$ and $j_{\kappa}\ge p$. 
This implies that 
\be
\lambda_{\rm comm}\le 2C_{\alpha}^{1/p}\alpha.
\ee
The condition in Eq.~\eqref{eq:alpha-bound} is satisfied by many physical Hamiltonians, as elucidated in \cite{childs2021theory, aftab2024multi}. We use some of these results to estimate $\lambda_{\rm comm}$.
\begin{itemize}
    \item[(i)] For Hamiltonian $H=\sum_{\gamma=1}^{\Gamma}H_{\gamma}$,     \cite{childs2021theory} showed that if $H$ is an electronic-structure Hamiltonians over $n$ orbitals, we have $\alpha_{\mathrm{comm}}^{(j)}=O(n^{j})$ and so 
    \begin{equation}
    \label{eq:comm-scaling-electronic-structure-ham}
        \lambda_{\rm comm}=O(n).
    \end{equation}

    \item[(ii)] For $k$-local Hamiltonians over $n$ qubits $H=\sum_{j_1, \ldots, j_k} H_{j_1, \ldots, j_k}$, we have 
    $$\alpha_{\mathrm{comm}}^{(j)}=O\left(\ltrinorm H \rtrinorm_1^{j-1}\|H\|_1\right),$$ 
    where, 
    $$\ltrinorm H \rtrinorm_1=\max _\ell \Big\{ \max _{j_\ell} \sum_{j_1, \ldots, j_{\ell-1}, j_{\ell+1}, \ldots, j_k}\left\|H_{j_1, \ldots, j_k}\right\| \Big\}$$ is the induced 1-norm. In this case, we have 
    \begin{equation}
    \label{eq:comm-scaling-k-local-ham}
        \lambda_{\rm comm} = O\left((\|H\|_1/\ltrinorm H \rtrinorm_1)^{1/p} \ltrinorm H \rtrinorm_1\right).
    \end{equation}

    \item[(iii)] For Hamiltonians with a power law decay, this scaling improves further. Let $\Lambda\subseteq\mathbb{R}^d$ be an $n$-qubit $d$-dimensional square lattice. Then, a power law Hamiltonian $H$ on $\Lambda$ with exponent $r$, is written as
    $$
    H=\sum_{\vec{i},\vec{j}\in\Lambda} H_{\vec{i},\vec{j}},
    $$
    where each $H_{\vec{i},\vec{j}}$ acts non-trivially on two qubits $\vec{i},\vec{j}\in\Lambda$ and $\|\vec{i}-\vec{j}\|=1$, if $i=j$, and $\|\vec{i}-\vec{j}\|=\|\vec{i}-\vec{j}\|^{-r}_2$, if $\vec{i}\neq \vec{j}$. Then, \cite{childs2021theory} proved that
    $$
    \ltrinorm H\rtrinorm_1=\begin{cases}
        O(n^{1-r/d}), & 0\leq r<d\\
        O(\log n), & r=d\\
        O(1), & r>d
    \end{cases}
    \qquad 
    \|H\|_1=\begin{cases}
        O(n^{2-r/d}), & 0\leq r<d\\
        O(n\log n), & r=d\\
        O(n), & r>d
    \end{cases}
    $$
    Substituting the above values of $\ltrinorm H\rtrinorm_1$  and $\|H\|_1$ into Eq.~\eqref{eq:comm-scaling-k-local-ham}, we obtain different scalings for $\lambda_{\rm comm}$.
\end{itemize}

\end{itemize}

\end{rmk}

We now state a lemma which upper bounds the error between $O_{2M+1}$ and $ \tilde{O}^{(m)}_{2M+1,s_0}$.

\begin{lem}
\label{lem:qsp-richardson}
Let $O_{2M+1} $ and $ \tilde{O}^{(m)}_{2M+1,s_0}$ be defined via Eqs. \eqref{O-2M+1} and \eqref{tilde-O-2M+1:Richardson extrapolation} respectively. Suppose that $s_m a_{\max}\Upsilon M\lambda_{\rm comm}\le 1/2$. Then for any initial state $\ket{\psi_0}$, the extrapolation error is bounded as follows
\beas
 \big|\bra{\psi_0}\tilde{O}^{(m)}_{2M+1,s_0} \ket{\psi_0}-\bra{\psi_0}O_{2M+1} \ket{\psi_0}\big| 
\leq  4 \|O\|\|\b\|_1 (s_m a_{\max}\Upsilon M\lambda_{\rm comm})^{2m} \max\{a_{\max} \Upsilon M\lambda_{\rm comm},1\}^K,
\eeas
where $K=\left\lceil{2m}/{p}\right\rceil$.
\end{lem}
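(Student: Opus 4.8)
The plan is to combine the error-series decomposition of Lemma~\ref{lem:interleaved} with the Richardson-extrapolation bound of Lemma~\ref{lem_Richad for Trotter}, treating each matrix element $\bra{\psi_0}\tilde O_{2M+1,s}\ket{\psi_0}$ as a scalar function of the step parameter $s$. First I would observe that, by Lemma~\ref{lem:interleaved} with $K=\lceil 2m/p\rceil$, the function $g(s):=\bra{\psi_0}\tilde O_{2M+1,s}\ket{\psi_0}$ has an expansion $g(s)=g(0)+\sum_{j\in(\sigma\mathbb Z)_{\ge p}} s^j\,\bra{\psi_0}\tilde E_{j+1,K}(M)\ket{\psi_0}+\bra{\psi_0}\tilde F_K(M,s)\ket{\psi_0}$, with $g(0)=\bra{\psi_0}O_{2M+1}\ket{\psi_0}$ and $\sigma=2$ since the product formula is symmetric. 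Since only even powers of $s$ appear (up to the remainder), $g$ matches the hypotheses of Lemma~\ref{lem_Richad for Trotter} with its "even real-valued function" replaced by this even power series; the $2m$-point Taylor remainder there is played by the tail $\sum_{j\ge 2m}s^j(\cdots)+\tilde F_K(M,s)$. Hence the extrapolation error is at most $\|\mathbf b\|_1$ times the worst-case magnitude, over the sample points $s_k=s_0/r_k$, of everything in the expansion of order $s^{2m}$ and higher.

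The core of the argument is then to bound $\big|\sum_{j\in(2\mathbb Z)_{\ge 2m}} s_m^j\,\bra{\psi_0}\tilde E_{j+1,K}(M)\ket{\psi_0}\big| + \big|\bra{\psi_0}\tilde F_K(M,s_m)\ket{\psi_0}\big|$, using $s_m=\min_k s_k$ to dominate each term (the relevant quantities are increasing in $s$ on the regime we care about). For the $\tilde E$ part I would plug in the bound \eqref{eq:bound-E-F}, factor out $(a_{\max}\Upsilon M)^{j}$, and use the definition of $\Lambda_{j,\ell}$ and $\lambda_{\rm comm}$ in \eqref{Lambda-j-l}–\eqref{Lambda} to replace the nested commutator sums by powers of $\lambda_{\rm comm}$: concretely $\sum_{\ell} \frac{(a_{\max}\Upsilon M)^\ell}{\ell!}\sum_{\cdots}\prod_\kappa \frac{2\tilde\alpha^{(j_\kappa+1)}_{\rm comm}}{(j_\kappa+1)^2} \le \sum_{\ell\ge 1}\frac{(a_{\max}\Upsilon M\lambda_{\rm comm})^\ell}{\ell!}\,\lambda_{\rm comm}^{\,j} \le e^{a_{\max}\Upsilon M\lambda_{\rm comm}}\lambda_{\rm comm}^j$ — but more useful is to keep one factor $(a_{\max}\Upsilon M\lambda_{\rm comm})^K$ out front since $\ell\le K$, giving $\|\tilde E_{j+1,K}(M)\|\le (a_{\max}\Upsilon M)^j \lambda_{\rm comm}^j \max\{a_{\max}\Upsilon M\lambda_{\rm comm},1\}^K$. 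Summing the resulting geometric series $\sum_{j\ge 2m,\ \text{even}}(s_m a_{\max}\Upsilon M\lambda_{\rm comm})^j$ under the hypothesis $s_m a_{\max}\Upsilon M\lambda_{\rm comm}\le 1/2$ gives a factor $\le 2(s_m a_{\max}\Upsilon M\lambda_{\rm comm})^{2m}$. The $\tilde F_K$ term is handled the same way: its bound in \eqref{eq:bound-E-F} already carries the prefactor $(a_{\max}\Upsilon M)^K/K!\le \max\{a_{\max}\Upsilon M\lambda_{\rm comm},1\}^K$ (after absorbing $\lambda_{\rm comm}$ powers) and a sum $\sum_{j\ge Kp}(s_m a_{\max}\Upsilon M\lambda_{\rm comm})^j$ which, since $Kp\ge 2m$, is again $\le 2(s_m a_{\max}\Upsilon M\lambda_{\rm comm})^{2m}$. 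Each matrix element is bounded by the corresponding operator norm times $\|O\|$ (using $\||\psi_0\rangle\|=1$ and that $\tilde E,\tilde F$ act on $O$ with $\|O\|$ factored out), so collecting constants $2+2=4$ and the $\|\mathbf b\|_1$ from Lemma~\ref{lem_Richad for Trotter} yields exactly $4\|O\|\,\|\mathbf b\|_1 (s_m a_{\max}\Upsilon M\lambda_{\rm comm})^{2m}\max\{a_{\max}\Upsilon M\lambda_{\rm comm},1\}^K$.

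I expect the main obstacle to be the bookkeeping that reduces the multi-index sums $\sum_{j_1+\cdots+j_\ell=j}\prod_\kappa 2\tilde\alpha^{(j_\kappa+1)}_{\rm comm}/(j_\kappa+1)^2$ to clean powers of $\lambda_{\rm comm}$ uniformly in $j$ and $\ell$ — this is precisely what the somewhat baroque definition \eqref{Lambda-j-l} of $\Lambda_{j,\ell}$ (with the $1/(j+\ell)$ exponent and the separate $(\tilde\alpha^{(j)}_{\rm comm})^{1/j}$ clause) is engineered to make possible, and one has to check that the exponent of $\lambda_{\rm comm}$ that comes out is indeed $j$ (from the $j_\kappa$'s) plus a controlled number ($\le K$) of extra factors. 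One must also be careful that the convergence hypothesis \eqref{BCH convergence condition} needed to even invoke Lemma~\ref{lem:interleaved} follows from $s_m a_{\max}\Upsilon M\lambda_{\rm comm}\le 1/2$ together with $s_0\ge s_m$ and the relation $r_k\le r_m\cdot O(m^2)$ among the sample points, so that the expansion is valid at every $s_k$, not just at $s_m$; this is where the bound $\max_i\{r_i/r_m\}=O(m^2)$ from Lemma~\ref{lem_richardson extrapolation} (or its Trotter analogue) gets used. Everything else is routine geometric-series estimation.
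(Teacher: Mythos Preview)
Your proposal is correct and follows essentially the same route as the paper: expand $\bra{\psi_0}\tilde O_{2M+1,s}\ket{\psi_0}$ via Lemma~\ref{lem:interleaved} with $K=\lceil 2m/p\rceil$, feed the even power series into Lemma~\ref{lem_Richad for Trotter}, then bound the tail $R_{2m}(s)$ by replacing the nested-commutator sums with $\lambda_{\rm comm}$-powers via \eqref{Lambda-j-l}--\eqref{Lambda} and summing the resulting geometric series under the hypothesis $s_m a_{\max}\Upsilon M\lambda_{\rm comm}\le 1/2$.

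One slip to fix: with the choice \eqref{eq:def-r-trotter}, $r_i$ is \emph{decreasing} in $i$, so $s_i=s_0/r_i$ is increasing and $s_m=\max_k s_k$, not $\min_k s_k$. This actually simplifies your last paragraph: once the hypothesis gives the BCH convergence condition \eqref{BCH convergence condition} at $s_m$, it holds automatically at every smaller $s_k$, so there is no need to invoke the ratio $r_1/r_m=O(m^2)$ here. The paper verifies convergence exactly this way (using $\tilde\alpha_{\rm comm}^{(j)}(a_{\max}\Upsilon s_m M)^j\le (s_m a_{\max}\Upsilon M\lambda_{\rm comm})^j\le 2^{-j}$ for $j\ge 2m$), and then takes $\max_{k}|R_{2m}(s_k)|=|R_{2m}(s_m)|$ to get the stated bound.
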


\begin{proof}
    Define 
    \[\begin{aligned}
        f(s) &:=\bra{\psi_0}\tilde{O}_{2M+1,s}\ket{\psi_0} \quad \text{ for } s\in(0,s_m], \\
        f(0) &:= \bra{\psi_0}O_{2M+1} \ket{\psi_0}
    \end{aligned}\]
    be the function we want to extrapolate. 
    By the definition of $r_k$ in Eq.~\eqref{eq:def-r-trotter} and $s_k=s_0/r_k$, all $s_k$ are in the domain of $f(s)$. 
    As $s_m a_{\max}\Upsilon M\lambda_{\rm comm}\le 1/2$ and by the definition of $\lambda_{\rm comm}$, we have 
    \begin{align*}
    \tilde{\alpha}_{\mathrm{comm}}^{(j)} \cdot (a_{\max } \Upsilon|s_mM|)^j \leq (s_m a_{\max}\Upsilon M \lambda_{\rm comm})^j \le 1/2^j
    \end{align*}
    for all $j\ge 2m$. 
    So condition \eqref{BCH convergence condition} holds with $t=s_m M, C=1/2^{2m}, J=2m$. 
    By Lemma~\ref{lem:interleaved}, $f(s)$ can be  expanded as 
    \begin{equation}
        \begin{aligned}
            \label{eq:f-expansion}
            f(s)&=\bra{\psi_0}O_{2M+1} \ket{\psi_0}+\sum_{j \in 2 \mathbb{N}, j \ge p} s^j \bra{\psi_0}\tilde{E}_{j+1, K}(M)(O)\ket{\psi_0}+\bra{\psi_0}\tilde{F}_K(M, s)(O)\ket{\psi_0}\\
            &=f(0)+\sum_{\substack{j \in 2 \mathbb{N}\\ 2m>j\ge p}} s^j \bra{\psi_0}\tilde{E}_{j+1, K}(M)(O)\ket{\psi_0}+R_{2m}(s)
        \end{aligned}
    \end{equation}
    for all $s\in (0, s_m]$, where 
    \be
        \label{eq:reminder}
    R_{2m}(s):= \sum_{j\in 2\mathbb{N}, j\ge 2m} s^j \bra{\psi_0}\tilde{E}_{j+1, K}(M)(O)\ket{\psi_0}+\bra{\psi_0}\tilde{F}_K(M, s)(O)\ket{\psi_0} 
    \ee
    is the reminder term. Let 
    $F_m(s_0) := \bra{\psi_0}\tilde{O}^{(m)}_{2M+1,s_0}\ket{\psi_0}= \sum_{k=1}^m b_k f(s_k)
    $ be the Richardson extrapolation of $f(s)$ at points $s_1,s_2,\ldots,s_m$. By Lemma~\ref{lem_Richad for Trotter}, the error of the Richardson extrapolation is bounded as \be
        \label{eq:extrapolation-reminder}
        \big|\bra{\psi_0}\tilde{O}^{(m)}_{2M+1,s_0}\ket{\psi_0}-\bra{\psi_0}O_{2M+1} \ket{\psi_0}\big|=
        |F_m(s_0)-f(0)| 
        \le \|\b\|_1 \max_{k\in[m]} |R_{2m}(s_k)|.
    \ee
    As $j\ge 2m \ge K(p-1)$, by Eq.~\eqref{eq:bound-E-F}, the norm of $\tilde{E}_{j+1,K}(M)$ and $\tilde{F}_K(M, s)$ can be bounded as 
    \bea
    \|\tilde{E}_{j+1, K}(M)\| 
    &\leq& (a_{\max } \Upsilon M)^j \sum_{\ell=1}^{\min \{K-1,\lfloor j / p\rfloor\}} \frac{(a_{\max } \Upsilon M)^\ell}{\ell!} \sum_{\substack{j_1, \ldots, j_\ell \in (2 \mathbb{N})_{\ge p} \\
     j_1+\cdots+j_\ell=j}} \prod_{\kappa=1}^\ell  \frac{2 \tilde{\alpha}_{\mathrm{comm}}^{(j_\kappa+1)}}{(j_\kappa+1)^2}  \nonumber \\
    &=& (a_{\max } \Upsilon M)^j \sum_{\ell=1}^{K-1} \frac{(a_{\max } \Upsilon M)^\ell}{\ell!} \Lambda_{j,\ell}^{j+\ell} \nonumber \\
    &\le& \sum_{\ell=1}^{K-1} \frac{(a_{\max}\Upsilon M\lambda_{\rm comm})^{j+\ell}}{\ell!},
    \label{eq:bound-E-2}
    \eea
    and 
    \bea
        \|\tilde{F}_{K}(M,s)\| &\leq& \frac{(a_{\max } \Upsilon M)^K}{K!} \sum_{j \in 2\mathbb{N}, j \geq K p}(a_{\max } \Upsilon s M)^j\Big(\sum_{\substack{j_1 ,\ldots, j_K \in 2 \mathbb{N} \geq p \\ j_1+\cdots+j_K=j}}\Big(\prod_{\kappa=1}^K  \frac{2\tilde{\alpha}_{\mathrm{comm}}^{(j_\kappa+1)}}{(j_\kappa+1)^2}\Big)\Big) \nonumber \\
        &=&  \sum_{j \in (2 \mathbb{N})_{\geq K p} }s^j\frac{(a_{\max } \Upsilon M \Lambda_{j,K})^{j+K}}{K!} \nonumber \\
        &\le& \sum_{j \in (2 \mathbb{N} )_{\geq K p} } s^j\frac{(a_{\max } \Upsilon M \lambda_{\rm comm})^{j+K}}{K!} .
    \label{eq:bound-F-2}
    \eea
    By Eqs.~\eqref{eq:reminder}, \eqref{eq:bound-E-2}, and \eqref{eq:bound-F-2}, the reminder $R_{2m}(s)$ can be bounded as 
    \beas
        |R_{2m}(s)|& \le& \sum_{j\in 2\mathbb{N}, j\ge 2m} s^j \sum_{\ell=1}^{K-1} \frac{(a_{\max}\Upsilon M\lambda_{\rm comm})^{j+\ell}}{\ell!}+\sum_{j \in 2 \mathbb{N} , j\geq K p} s^j\frac{(a_{\max } \Upsilon M \lambda_{\rm comm})^{j+K}}{K!}\\
        &\le& \sum_{j\in 2\mathbb{N}, j\ge 2m} s^j \sum_{\ell=1}^{K} \frac{(a_{\max}\Upsilon M\lambda_{\rm comm})^{j+\ell}}{\ell!} \\
        &\le& \sum_{j\in 2\mathbb{N}, j\ge 2m} (sa_{\max}\Upsilon M\lambda_{\rm comm})^j \max\{a_{\max} \Upsilon M\lambda_{\rm comm},1\}^K\sum_{\ell=1}^{K} \frac{1}{\ell!} \\
        &\le& (sa_{\max} \Upsilon M \lambda_{\rm comm})^{2m} \sum_{j \ge 0} \Bigl(\frac{1}{2}\Bigr)^{2j} \max \{a_{\max} \Upsilon M \lambda_{\rm comm}, 1\}^K (e-1)\\
        &\le& 4(sa_{\max}\Upsilon M\lambda_{\rm comm})^{2m} \max\{a_{\max} \Upsilon M\lambda_{\rm comm},1\}^K,
    \eeas
    where the second inequality follows from $Kp \ge 2m$ and the 
    fourth inequality follows from the estimate $sa_{\max}\Upsilon M\lambda_{\rm comm}$ $\le s_m a_{\max}\Upsilon M\lambda_{\rm comm} \le 1/2$. By Eq.~\eqref{eq:extrapolation-reminder}, we can bound the extrapolation error as 
    \beas
     & &\big|\bra{\psi_0}\tilde{O}^{(m)}_{2M+1,s_0}\ket{\psi_0}-\bra{\psi_0}O_{2M+1} \ket{\psi_0}\big| \\
     &\le& \|O\|\|\b\|_1 \max_{k\in[m]} 4(sa_{\max}\Upsilon M\lambda_{\rm comm})^{2m} \max\{a_{\max} \Upsilon M\lambda_{\rm comm},1\}^K \\
     &=& 4 \|O\|\|\b\|_1 (s_m a_{\max}\Upsilon M\lambda_{\rm comm})^{2m} \max\{a_{\max} \Upsilon M\lambda_{\rm comm},1\}^K,
    \eeas
which is as claimed.
\end{proof}


We now summarize our main theorems. Below, we aim to provide a time complexity analysis rather than expressing complexity in terms of the number of Trotter steps as in \cite{watson2024exponentially}. This results in an additional factor of $O(\Gamma \Upsilon)$ in the complexity compared to \cite{watson2024exponentially}, as each Trotter step costs $O(\Gamma \Upsilon)$ in the worst case. But for certain Hamiltonians, more efficient implementations of each Trotter step exist \cite{childs2021theory}.

\begin{thm}[Interleaved sequences of unitaries and Hamiltonian evolution with Trotter, incoherent estimation]
\label{thm:qsp-with-trotter-and-interpolation}
Let $\{H^{(\ell)}\}_{\ell=1}^{M}$ be a set of Hermitian operators, with decomposition 
$H^{(\ell)} = \sum_{\gamma=1}^{\Gamma_{\ell}} H^{(\ell)}_{\gamma}$, such that Hamiltonian evolution of each \( H^{(\ell)}_{\gamma} \) can be implemented with circuit depth at most $\tau_H$. Let 
\[
W := V_0 \prod_{\ell=1}^M e^{\i H^{(\ell)}} V_{\ell},
\]
where $\{V_\ell\}_{\ell=0}^{M}$ are unitaries with circuit depth $\tau_\ell$.

Let $p \in \mb N$, $\varepsilon\in(0,1)$ be the precision parameter, $\tau_{\mathrm{sum}}=\sum_{\ell=0}^{M} \tau_\ell$, $\Gamma_{\mathrm{avg}}=\sum_{\ell=1}^M \Gamma_{\ell}/M$, and let $\lambda_{\rm comm}$ be defined via \eqref{Lambda} with 
$m=p\lceil\log(1/\varepsilon)\rceil, K = 2\lceil\log(1/\varepsilon)\rceil$.
Suppose there is a symmetric $p$-th order product formula with $\Upsilon$ stages and maximum coefficient $a_{\max}$ satisfying
$a_{\max}\Upsilon M \lambda_{\rm comm} \geq 1$.
Assume the initial state $\ket{\psi_0}$ can be prepared efficiently, then there is an algorithm (see Algorithm \ref{alg: HSVT with trotter}) that, for any observable $O$, estimates $\bra{\psi_0} W^{\dagger} O W \ket{\psi_0}$ with an error of at most $\varepsilon\|O\|$ and a success probability of at least $2/3$, without using any ancilla qubits. The maximum quantum circuit depth is 
\[
\widetilde{O}\Big(p\tau_H\Gamma_{\mathrm{avg}}\Upsilon^{2+1/p} \big(a_{\max}M \lambda_{\rm comm} \big)^{1+1/p}+\tau_{\mathrm{sum}}\Big),
\]
and the total time complexity is \[
\widetilde{O}\Big(\frac{p\tau_H\Gamma_{\mathrm{avg}}\Upsilon^{2+1/p} \big(a_{\max} M  \lambda_{\rm comm} \big)^{1+1/p}+\tau_{\mathrm{sum}}}{\varepsilon^2}\Big).
\]
\end{thm}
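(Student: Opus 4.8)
The plan is to build directly on the two lemmas that have already done the analytic work: Lemma~\ref{lem:interleaved} writes $\tilde{O}_{2M+1,s}-O_{2M+1}$ as an even power series in $s$ with a controlled tail, and Lemma~\ref{lem:qsp-richardson} converts this into a clean bound on the extrapolation error $\big|\bra{\psi_0}\tilde{O}^{(m)}_{2M+1,s_0}\ket{\psi_0}-\bra{\psi_0}O_{2M+1}\ket{\psi_0}\big|$. Writing $\mu:=a_{\max}\Upsilon M\lambda_{\rm comm}\ (\ge 1$ by hypothesis$)$, I would take $m=p\lceil\log(1/\varepsilon)\rceil$, $K=2\lceil\log(1/\varepsilon)\rceil$ so that $Kp=2m$ as Lemma~\ref{lem:qsp-richardson} requires, and use the extrapolation data $\{r_k\},\{b_k\}$ of Lemma~\ref{lem_Richad for Trotter}, for which $\|\b\|_1=O(\log m)$, $r_1=\max_k r_k=O(m^2)$, $r_m=\min_k r_k=\Theta(m)$. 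The key device is to constrain the base step to the form $s_0=1/(MR)$ for a positive integer $R$, so that $s_k=s_0/r_k=1/(MRr_k)$ and the number of product-formula repetitions on block $\ell$ at sample $k$ is exactly $1/(s_kM)=Rr_k\in\mathbb{N}$; this is precisely the ``flexibility in choosing Richardson sample points'' that avoids fractional queries. Choosing $R$ to be the least integer making $s_m\mu\le 1/2$ and the extrapolation error of Lemma~\ref{lem:qsp-richardson}, namely $4\|O\|\|\b\|_1\big((s_m\mu)^p\mu\big)^{2\lceil\log(1/\varepsilon)\rceil}$, at most $\tfrac12\varepsilon\|O\|$, works with $R=\widetilde{O}\big(\mu^{1+1/p}/(Mr_m)\big)$ (the $\log m$ and the mild dependence of the required value of $(s_m\mu)^p\mu$ on $\varepsilon$ only add polylogarithmic factors, absorbed by $\widetilde{O}$; if the ceiling forces $R=\widetilde O(1)$ the depth bound below only gains a lower-order term).

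For the algorithm itself: for each $k\in[m]$ I would prepare $\ket{\psi_0}$, run $\tilde{W}_k:=V_0\prod_{\ell=1}^M\mathcal{P}^{(\ell)}(s_kM)^{1/(s_kM)}V_\ell$, measure $O$, repeat $N=\widetilde{O}(\|\b\|_1^2/\varepsilon^2)=\widetilde{O}(1/\varepsilon^2)$ times, and average to get $\hat f_k$; the output is $\sum_k b_k\hat f_k$. Since each single-shot outcome lies in $[-\|O\|,\|O\|]$, Hoeffding's inequality plus a union bound over the $m=\widetilde{O}(1)$ sample points gives $\big|\hat f_k-\bra{\psi_0}\tilde{O}_{2M+1,s_k}\ket{\psi_0}\big|\le \varepsilon\|O\|/(2\|\b\|_1)$ for all $k$ simultaneously with constant probability, so by the triangle inequality $\big|\sum_k b_k\hat f_k-\bra{\psi_0}W^\dagger OW\ket{\psi_0}\big|\le \|\b\|_1\cdot\varepsilon\|O\|/(2\|\b\|_1)+\tfrac12\varepsilon\|O\|=\varepsilon\|O\|$. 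No ancilla is needed beyond whatever measuring $O$ requires.

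For the complexity: block $\ell$ of $\tilde{W}_k$ applies the $p$-th order symmetric staged formula $\mathcal{P}^{(\ell)}$ exactly $Rr_k$ times, each application costing $\Gamma_\ell\Upsilon$ elementary exponentials of depth $\le\tau_H$; summing over $\ell$ and adding the $V_\ell$'s gives depth $Rr_k\Upsilon\tau_H\sum_\ell\Gamma_\ell+\tau_{\mathrm{sum}}=Rr_k\Upsilon\tau_H M\Gamma_{\mathrm{avg}}+\tau_{\mathrm{sum}}$, which is largest at $k=1$. Then $Rr_1=(r_1/r_m)\cdot\widetilde{O}(\mu^{1+1/p}/M)=\widetilde{O}(p\,\mu^{1+1/p}/M)$, and substituting $\mu^{1+1/p}=(a_{\max}\Upsilon M\lambda_{\rm comm})^{1+1/p}$ gives maximum circuit depth $\widetilde{O}\big(p\tau_H\Gamma_{\mathrm{avg}}\Upsilon^{2+1/p}(a_{\max}M\lambda_{\rm comm})^{1+1/p}+\tau_{\mathrm{sum}}\big)$. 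The total time is $\sum_k N\cdot(\text{depth of }\tilde{W}_k)\le mN\cdot(\text{max depth})=\widetilde{O}(1/\varepsilon^2)\cdot(\text{max depth})$, giving the stated total complexity.

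The main obstacle is not analytic — the error series and its resummation are wholly supplied by Lemmas~\ref{lem:interleaved} and~\ref{lem:qsp-richardson} — but rather the simultaneous bookkeeping: $s_0$ must be small enough for $\varepsilon$-accuracy, of the special form $1/(MR)$ so every $1/(s_kM)=Rr_k$ is an integer, yet no smaller than needed so the depth stays within the advertised $\widetilde{O}(\cdot)$; and one must push the $O(m^2)$ spread of the $r_k$, the $O(\log m)$ blow-up of $\|\b\|_1$, the $1/p$ exponent, and the extra $\Gamma\Upsilon$ cost per Trotter step through the bounds while keeping the $p$-dependence linear inside $\widetilde{O}(\cdot)$. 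Getting all these constants and polylogarithmic factors to land consistently is the part that requires care.
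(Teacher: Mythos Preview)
Your proposal is correct and follows essentially the same route as the paper: invoke Lemma~\ref{lem:qsp-richardson} with $m=p\lceil\log(1/\varepsilon)\rceil$, $K=2\lceil\log(1/\varepsilon)\rceil$ to control the extrapolation error, force $1/s_0\in M\mathbb{N}$ so that every $1/(s_kM)$ is an integer, pick $s_m$ (equivalently your $R$) of order $\mu^{-1-1/p}$ to make the bias $\le\varepsilon\|O\|/2$, and then use Hoeffding with $\widetilde{O}(\|\b\|_1^2/\varepsilon^2)$ repetitions per sample point plus a union bound over the $m$ points for the statistical error. The complexity bookkeeping you outline---$r_1/r_m=O(m)=O(p\log(1/\varepsilon))$ supplying the linear $p$ factor, and $\Gamma_{\mathrm{avg}}\Upsilon\tau_H$ per Trotter step---matches the paper's computation line by line.
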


\begin{proof}
We follow the notation in Lemma~\ref{lem:qsp-richardson} and $\bra{\psi_0} W^{\dagger} O W \ket{\psi_0} = \bra{\psi_0}O_{2M+1} \ket{\psi_0}$ is the zero-step-size limit of the extrapolation. For $s_m a_{\max} \Upsilon M \lambda_{\rm comm}\le 1/2$, Lemma~\ref{lem:qsp-richardson} bounds the Richardson extrapolation error as 
\be
\label{eq:extro-error}
\big|\bra{\psi_0} \tilde{O}^{(m)}_{2M+1,s_0} \ket{\psi_0}-\bra{\psi_0}O_{2M+1} \ket{\psi_0}\big|
\le
4 \|O\|\|\b\|_1 s_m^{2m}(a_{\max}\Upsilon M\lambda_{\rm comm})^{2m(1+1/p)}. 
\ee
By Lemma~\ref{lem_Richad for Trotter}, $\|\mathbf{b}\|_1 = O(\log m)$, leading to $\|\b\|_1 ^{-1/2m}
=\Omega(1)$, and $\varepsilon^{1/2m}=\varepsilon^{1/(2p\lceil \log(1/\varepsilon) \rceil)}=\Omega(1)$. 
Choosing
\begin{align*}
s_m=O\big(\|\b\|_1 ^{-1/2m} \varepsilon^{1/2m}(a_{\max} \Upsilon M \lambda_{\rm comm})^{-1-1/p}\big)=O\big((a_{\max} \Upsilon M \lambda_{\rm comm})^{-1-1/p})
\end{align*}
ensures that the extrapolation error is at most $\varepsilon\|O\|/2$, also ensuring $s_m a_{\max} \Upsilon M \lambda_{\rm comm}\le 1/2$.
   
Next, we analyze the complexity of estimating 
\[
\bra{\psi_0} \tilde{O}^{(m)}_{2M+1,s_0} \ket{\psi_0} = \sum_{k=1}^m b_k\bra{\psi_0} \tilde{O}_{2M+1,s_k} \ket{\psi_0}.
\]
The $p$-th order staged product formula $\mathcal{P}^{(\ell)}$ for $H^{(\ell)}$ requires \( 1/(s_k M) \in \mathbb{N} \) to allow implementation using \( 1/(s_k M) \) Trotter steps.
Then we apply the unitary evolution 
$
V_0 \prod_{\ell=1}^M \mathcal{P}^{(\ell)}(s_k M)^{1/(s_k M)} V_{\ell}
$
to \( \ket{\psi_0} \), and measure \( O \). This process employs
\begin{align*}N_{\mathrm{rep}}=O\left((\|\b\|_1/\varepsilon)^2\log(m)\right)=\widetilde{O}\left(1/\varepsilon^2\right)
\end{align*}
repetitions. Let \( \tilde{f}(s_k) \) denote the average of the measurement outcomes. Hoeffding's inequality ensures that $\tilde{f}(s_k)$ estimates 
\begin{align}
\label{eq:one-sample-estimate}
\bra{\psi_0}\tilde{O}_{2M+1,s_k} \ket{\psi_0}
\end{align}
up to error $\varepsilon\|O\|/(2\|\b\|_1)$ with probability at least $1-1/(3m)$. 
The union bound guarantees that $\sum_{k=1}^m b_k\tilde{f}(s_k)$ estimates $\bra{\psi_0}\tilde{O}^{(m)}_{2M+1,s_0}\ket{\psi_0}$ with error at most $\varepsilon\|O\|/2$ and probability at least $2/3$. 

Since $s_k=s_0/r_k$ with $r_k\in \mathbb{Z}$, we choose $1/s_0\in M\mathbb{N}$ to guarantee that $1/(s_k M)\in \mathbb{N}$ for all $k\in[m]$. This allows us to obtain estimates $\tilde{f}(s_k)$ for all $k$.
By definition,  $r_1=\Theta(m^2)$ and $r_m=\Theta(m)$. So we should ensure that $s_0=s_m r_m=O(m (a_{\max} \Upsilon M \lambda_{\rm comm})^{-1-1/p})$ and $1/s_0=\Omega(M^{1+1/p})=\Omega(M)$. 
Thus, for sufficiently large $M$, we can choose $1/s_0\in M \mathbb{N}$ to satisfy both requirements. 

The number of Trotter steps of estimating $\bra{\psi_0} \tilde{O}_{2M+1,s_k} \ket{\psi_0}$ is $1/s_k$.
Hence, the total number of Trotter steps is 
\beas
N_{\mathrm{rep}}\sum_{k=1}^m \frac{1}{s_k} 
&=& N_{\mathrm{rep}}\sum_{k=1}^m  \frac{r_k}{s_0} \\
&=& O\left(\frac{N_{\mathrm{rep}}}{s_0} \sum_{k=1}^m\frac{m}{k/m} \right)=O\left(\frac{N_{\mathrm{rep}}}{s_0}m^2\log m\right) = \widetilde{O}\left(p\frac{(a_{\max} \Upsilon M \lambda_{\rm comm})^{1+1/p}}{ \varepsilon^2}\right).
\eeas
Each Trotter step involves $\Gamma_{\ell} \Upsilon $ Hamiltonian evolutions of Hermitian terms, each implemented with depth at most $\tau_H$. 
Combining these, the total time complexity is
\[
\widetilde{O}\left(\frac{p\tau_H\Gamma_{\mathrm{avg}}\Upsilon^{2+1/p}(a_{\max} M \lambda_{\rm comm})^{1+1/p}+\tau_{\mathrm{sum}}}{\varepsilon^2}\right),
\]
and the maximum quantum circuit depth is 
\[
\tau_H \Gamma_{\mathrm{avg}} \Upsilon /s_1+\tau_{\mathrm{sum}}=\widetilde{O}(p\tau_H\Gamma_{\mathrm{avg}}\Upsilon^{2+1/p}(a_{\max}M\lambda_{\rm comm})^{1+1/p}+\tau_{\mathrm{sum}}).
\] 
This completes the proof.
\end{proof}

We now present a variant of Theorem~\ref{thm:qsp-with-trotter-and-interpolation}, which makes use of iterative quantum amplitude estimation \cite{grinko2021iterative} to coherently estimate the expectation value. This increases the maximum quantum circuit depth but reduces the overall time complexity to \( \widetilde{O}(1/\varepsilon) \). The iterative quantum amplitude estimation is formally introduced below:

 \begin{lem}[Iterative Quantum Amplitude Estimation  {\cite[Theorem 1]{grinko2021iterative}}]
        \label{lem:iqae}
        Given $\varepsilon, \delta\in (0,1)$ and an $(n+1)$-qubit unitary $W$ such that \begin{align*}
            W\ket{0^n}\ket{0} = \sqrt{a}\ket{\psi_0} \ket{0} +\sqrt{1-a}\ket{\psi_1}\ket{1},
        \end{align*}
        where $\ket{\psi_0}$ are $\ket{\psi_1}$ and arbitrary $n$-qubit states, there exists an algorithm that outputs an estimate of $a$ to within additive error $\varepsilon$ with probability at least  $1-\delta$. The algorithm uses
        \[
        O\left(\dfrac{1}{\varepsilon}\log\left(\dfrac{\log(1/\varepsilon)}{\delta}\right)\right)
        \]
        applications of $U$ and $U^{\dagger}$ and requires one ancilla qubit besides the $n+1$ qubits required by $W$.
    \end{lem}

Now we are in a position to state the theorem formally:

\begin{thm}[Interleaved sequences of unitaries and Hamiltonian evolution with Trotter, coherent estimation]
    \label{thm:qsp-with-trotter-and-interpolation-coherent}
    Under the same assumption as in Theorem~\ref{thm:qsp-with-trotter-and-interpolation}, if we further assume $O/\gamma$ is block-encoded via a unitary $U_O$ for some $\gamma$ satisfying $\gamma = \widetilde{\Theta}(\|O\|)$, then there exists a quantum algorithm that estimates $\bra{\psi_0} W^{\dagger} O W \ket{\psi_0}$ with an error of at most $\varepsilon\|O\|$ and a success probability of at least $2/3$ using two ancilla qubits. 
    The maximum quantum circuit depth and total time complexity are both 
    \[\widetilde{O}\left(\frac{p\tau_H\Gamma_{\mathrm{avg}}\Upsilon^{2+1/p} \big(a_{\max} M  \lambda_{\rm comm} \big)^{1+1/p}+\tau_{\mathrm{sum}}}{\varepsilon}\right).
    \]
    The number of queries to $U_O$ is $\widetilde{O}(p/\varepsilon)$.
\end{thm}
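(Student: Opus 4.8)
The plan is to run exactly the Richardson‑extrapolation scheme of Theorem~\ref{thm:qsp-with-trotter-and-interpolation}, but to replace the incoherent sampling of $O$ at each step size by a coherent estimate built from iterative quantum amplitude estimation, which trades the Hoeffding $1/\varepsilon^2$ for a multiplicative $1/\varepsilon$ inside a single coherent circuit. I keep $m=p\lceil\log(1/\varepsilon)\rceil$, $K=2\lceil\log(1/\varepsilon)\rceil$, the step sizes $s_k=s_0/r_k$ of Lemma~\ref{lem_Richad for Trotter} with $1/s_0\in M\mathbb N$ chosen large enough that every $1/(s_kM)\in\mathbb N$ and $s_m a_{\max}\Upsilon M\lambda_{\rm comm}\le 1/2$ with $s_m=O\big((a_{\max}\Upsilon M\lambda_{\rm comm})^{-1-1/p}\big)$, and the extrapolant $\tilde O^{(m)}_{2M+1,s_0}=\sum_{k=1}^m b_k\tilde O_{2M+1,s_k}$. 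Lemma~\ref{lem:qsp-richardson} already gives $\big|\bra{\psi_0}\tilde O^{(m)}_{2M+1,s_0}\ket{\psi_0}-\bra{\psi_0}W^\dagger OW\ket{\psi_0}\big|\le\varepsilon\|O\|/2$, so the only new work is to estimate each $\bra{\psi_0}\tilde O_{2M+1,s_k}\ket{\psi_0}$ to additive error $O\big(\varepsilon\|O\|/(\|\b\|_1 m)\big)=\widetilde O(\varepsilon\|O\|)$ and recombine against $b_k$.

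For a fixed $k$, write $\tilde W_k:=V_0\prod_{\ell=1}^M\mathcal P^{(\ell)}(s_kM)^{1/(s_kM)}V_\ell$, a genuine unitary on the system register alone (each Trotter step is a product of term‑exponentials and $1/(s_kM)\in\mathbb N$). I would use one ancilla $a_O$ to host the block encoding $U_O$ of $O/\gamma$ and set $\mathcal A_k:=(I_{a_O}\otimes\tilde W_k)(I_{a_O}\otimes U_\psi)$, so that $\mathcal A_k\ket{0}_{a_O}\ket{0}_{\rm sys}=\ket{0}_{a_O}\otimes\tilde W_k\ket{\psi_0}$ and
\[
\bra{0,0}\mathcal A_k^\dagger\,U_O\,\mathcal A_k\ket{0,0}=\tfrac1\gamma\bra{\psi_0}\tilde W_k^\dagger O\tilde W_k\ket{\psi_0}=\tfrac1\gamma\bra{\psi_0}\tilde O_{2M+1,s_k}\ket{\psi_0},
\]
a complex number of modulus at most $\|O\|/\gamma=\widetilde O(1)$ by $\gamma=\widetilde\Theta(\|O\|)$. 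A second ancilla $a_H$ then carries a Hadamard test: prepare $\ket0_{a_H}\mathcal A_k\ket{0,0}$, apply $H$ to $a_H$, a controlled‑$(\mathcal A_k^\dagger U_O\mathcal A_k)$ with control $a_H$, and $H$ to $a_H$ again, so that $\Pr[a_H=0]=\tfrac12\big(1+\Re(\bra{0,0}\mathcal A_k^\dagger U_O\mathcal A_k\ket{0,0})\big)$, reading off the imaginary part by inserting a $\pi/2$ phase gate on $a_H$ before the last Hadamard. I would feed this circuit into iterative quantum amplitude estimation \cite{grinko2021iterative} at precision $\widetilde O(\varepsilon)$, take the median of $O(\log m)$ runs to push the per‑estimate failure probability below $1/(6m)$, and union‑bound over the $2m$ real/imaginary estimates; rescaling by $\gamma$ and summing against $b_k$ yields $\bra{\psi_0}\tilde O^{(m)}_{2M+1,s_0}\ket{\psi_0}$, hence $\bra{\psi_0}W^\dagger OW\ket{\psi_0}$, to additive error $\varepsilon\|O\|$ with constant probability, using exactly the two ancillas $a_O$ and $a_H$.

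For the complexity I would invoke the guarantee of iterative amplitude estimation that reaching additive precision $\delta$ on a circuit $\mathcal C$ costs $\widetilde O(1/\delta)$ applications of the Grover operator $\mathcal Q=\mathcal C\,R_0\,\mathcal C^\dagger R_{a_H}$, with both the deepest single circuit and the total query count being $\widetilde O(1/\delta)$ times the cost of $\mathcal C$. Here $\delta=\widetilde O(\varepsilon)$, and one application of $\mathcal C_k$ (or its controlled form inside $\mathcal Q_k$) costs one controlled‑$U_O$, $O(1)$ calls to $U_\psi$, and one (controlled) $\tilde W_k$; the last amounts to $1/s_k$ Trotter steps spread over the $M$ Hamiltonians, i.e.\ $\Gamma_{\rm avg}\Upsilon/s_k$ term‑exponentials of depth $\le\tau_H$ together with the $V_\ell$'s of total depth $\tau_{\rm sum}$, the control adding only a constant factor. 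Using $1/s_1=\widetilde O(m/s_m)=\widetilde O\big(m(a_{\max}\Upsilon M\lambda_{\rm comm})^{1+1/p}\big)$ as the worst case over $k$, the maximum circuit depth becomes $\widetilde O\big(\varepsilon^{-1}(p\,\tau_H\Gamma_{\rm avg}\Upsilon^{2+1/p}(a_{\max}M\lambda_{\rm comm})^{1+1/p}+\tau_{\rm sum})\big)$, and summing the $\widetilde O(1/\varepsilon)$ Grover powers over the $m$ sample points and $O(\log m)$ median runs gives the same bound for the total time up to polylogs, with $\widetilde O(m\log m/\varepsilon)=\widetilde O(p/\varepsilon)$ queries to $U_O$ and $U_\psi$.

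The hard part will not be any single estimate — the extrapolation error and the avoidance of fractional queries are inherited verbatim from Lemma~\ref{lem:qsp-richardson} and Theorem~\ref{thm:qsp-with-trotter-and-interpolation} — but the bookkeeping of the coherent layer: packaging $\bra{\psi_0}\tilde O_{2M+1,s_k}\ket{\psi_0}$ as $\gamma$ times a bona fide amplitude of a single unitary between efficiently preparable states, which is precisely what forces both the block encoding $U_O$ of the (generally non‑unitary) observable and a Hadamard test (the amplitude being complex), while keeping the ancilla count at two; and then checking that inserting the depth‑$\widetilde O(1/s_k)$ controlled circuits $\tilde W_k$ into iterative amplitude estimation multiplies the depth by exactly $\widetilde O(1/\varepsilon)$ and, crucially, leaves the \emph{total} query count at the same order, so that per‑run depth and total running time both collapse to $\widetilde O(1/\varepsilon)$ instead of the $\widetilde O(1/\varepsilon^2)$ of the incoherent version.
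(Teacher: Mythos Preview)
Your approach is essentially the paper's: keep the extrapolation setup of Theorem~\ref{thm:qsp-with-trotter-and-interpolation}, encode each $\bra{\psi_0}\tilde O_{2M+1,s_k}\ket{\psi_0}/\gamma$ as a Hadamard-test amplitude, and run iterative amplitude estimation at precision $\widetilde O(\varepsilon)$; the depth and query counts you derive match. Two small glitches worth cleaning up: (i) your Hadamard test is over-specified --- if you first prepare $\mathcal A_k\ket{0,0}$ then the controlled gate should be $U_O$ alone, not $\mathcal A_k^\dagger U_O\mathcal A_k$; as written the resulting amplitude is $\bra{0,0}(\mathcal A_k^\dagger)^2 U_O\mathcal A_k^2\ket{0,0}$, not the one you state (and it costs three calls to $\tilde W_k$ rather than one); (ii) the paper's ``two ancilla'' count \emph{excludes} any block-encoding ancillas of $U_O$ and instead spends the second qubit as the Toffoli target for the reflection $R_0$ inside IQAE --- which your Grover operator $\mathcal C R_0\mathcal C^\dagger R_{a_H}$ invokes but your budget does not cover.
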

\begin{proof}
    We use the same sample points as Theorem~\ref{thm:qsp-with-trotter-and-interpolation} for extrapolation, so the extrapolation error is bounded by $\|O\|\varepsilon/2$. For any \( k \in [M] \), the Hadamard test circuit produces an amplitude of 
    \[ \sqrt{(1 + \bra{\psi_0} \tilde{O}_{2M+1, s_k} \ket{\psi_0} / \gamma ) / 2}, \] 
    using one query to \( U_O \) and \(U_{\psi}\), and \( 1/s_k \) Trotter steps. Then we can use the Iterative Quantum Amplitude Estimation (IQAE) \cite{grinko2021iterative} to estimate the amplitude. By Lemma~\ref{lem:iqae}, \begin{align*}
        N_{\mathrm{rep}} = O\left(\dfrac{\gamma\|\mathbf{b}\|_1}{\varepsilon \|O\|}\log\left(\frac{1}{m}\log\left(\frac{\gamma\|\mathbf{b}\|_1}{\varepsilon \|O\|}\right)\right)\right) = \widetilde{O}\left(\frac{\gamma}{\varepsilon \|O\|}\right) = \widetilde{O}(1/\varepsilon)
    \end{align*}
    queries to the Hadamard circuit suffice to estimate the probability to within accuracy $\varepsilon\|O\|/(4\gamma\|\mathbf{b}\|_1)$ with success probability at least $1-1/(3m)$, which gives an $\varepsilon\|O\|/(2\|\mathbf{b}\|_1)$-approximation of $\bra{\psi_0} \tilde{O}_{2M+1, s_k} \ket{\psi_0}$. Linearly combining the estimates of \( \bra{\psi_0} \tilde{O}_{2M+1,s_k} \ket{\psi_0} \), weighted by \( b_k \), yields an \( \varepsilon \|O\|/2 \)-approximation of \( \bra{\psi_0} \tilde{O}^{(m)}_{2M+1,s_0} \ket{\psi_0} \) with success probability at least $2/3$. Therefore, the total error is bounded by $\|O\|\varepsilon$ and the maximum quantum circuit depth and total time complexity are both 
    \[\widetilde{O}\left(\frac{p\tau_H\Gamma_{\mathrm{avg}}\Upsilon^{2+1/p} \big(a_{\max} M  \lambda_{\rm comm} \big)^{1+1/p}+\tau_{\mathrm{sum}}}{\varepsilon}\right).
    \]
    The number of queries to $U_O$ is $\widetilde{O}(m/\varepsilon) = \widetilde{O}(p/\varepsilon)$. 

    Besides the ancilla needed for the block encoding of $O$ by the unitary $U_O$, the Hadamard test uses one additional ancilla qubit. To implement the reflection around \( \ket{\psi_0} \) in IQAE, a Toffoli gate is needed, which uses one additional ancilla qubit as the target qubit. The Toffoli gate can be implemented using \( O(n^2) \) elementary gates \cite{saeedi2013linear} without ancilla qubits, or \( O(n) \) elementary gates with one ancilla qubit \cite[Corollary 7.4]{barenco1995elementary}. 
\end{proof}

\begin{breakablealgorithm}
\caption{Interleaved Hamiltonian evolution with Trotter [Theorems \ref{thm:qsp-with-trotter-and-interpolation} and \ref{thm:qsp-with-trotter-and-interpolation-coherent}]}
\begin{algorithmic}[1]
\REQUIRE $M$ Hermitian operator $H^{(\ell)}=\sum_{\gamma=1}^{\Gamma_{\ell}}H_{\gamma}^{(\ell)}$. 

\qquad  An initial state $\ket{\psi_0}$  and an observable $O$.

\qquad $M+1$ unitaries $V_0,\ldots,V_M$.

\qquad $W :=  V_0\prod_{\ell=1}^M e^{\i H^{(\ell)}}V_{\ell}$.

\qquad A $p$-th order symmetric staged product formula $\mathcal{P}^{(\ell)}$ for each $H^{(\ell)}$ with $\Upsilon$ stages, maximum 

\qquad coefficient  $a_{\max}$ satisfying $a_{\max}\Upsilon M \lambda_{\rm comm} \ge 1$.

\qquad $m=p\lceil\log(1/\varepsilon)\rceil, K = 2\lceil\log(1/\varepsilon)\rceil$, $\lambda_{\rm comm}$ is given via \eqref{Lambda}.

\ENSURE $\bra{\psi_0} W^{\dagger} O W \ket{\psi_0} \pm \varepsilon\|O\|$.

\STATE Choose $s_0 =O(m (a_{\max} \Upsilon M \lambda_{\rm comm})^{-1-1/p})$ such that $1/s_0\in M \mathbb{N}$.

\STATE Compute $b_1,\ldots,b_m, s_1,\ldots,s_m$ via Lemma \ref{lem_Richad for Trotter}.

\STATE For $k=1,\ldots,m$

Compute $\tilde{f}(s_k) := \bra{\psi_0} \big(V_0\prod_{\ell=1}^M \mathcal{P}^{(\ell)}(s_k M)^{1/(s_k M)}V_{\ell}\big)^{\dagger} O \big(V_0\prod_{\ell=1}^M \mathcal{P}^{(\ell)}(s_k M)^{1/(s_k M)}V_{\ell} \big)\ket{\psi_0}$ on a quantum computer.

\STATE Output $\sum_{k=1}^m b_k \tilde{f}(s_k)$.
\end{algorithmic}
\label{alg: HSVT with trotter}
\end{breakablealgorithm}

Finally, Algorithm \ref{alg: HSVT with trotter} summarizes the method encompassed in Theorem \ref{thm:qsp-with-trotter-and-interpolation} and Theorem \ref{thm:qsp-with-trotter-and-interpolation-coherent}. The framework introduced here is very general. As mentioned previously, $V_0$ could be a quantum algorithm that precedes the circuit $W$. Similarly, $W$ could be used as input to another quantum algorithm, with circuit $V_M$. This flexibility allows for delaying the measurement of the observable $O$ (whether coherent or incoherent) until the end of the entire algorithm, of which $W$ may be a subroutine. In the next section, we develop methods to implement QSVT without block encodings -- the key observation is that it is possible to formulate QSVT as a particular instance of $W$. However, the applications of this framework extend beyond QSVT.

\section{QSVT using higher order Trotterization}
\label{sec:qsvt-trotter}

In this section, we prove in detail our quantum algorithms to implement QSVT using higher-order Trotterization. Theorem \ref{thm:qsp-with-trotter-and-interpolation} provides a general framework to estimate the expectation values of observables with respect to the output state of any quantum circuit that is an interleaved sequence of arbitrary unitaries and Hamiltonian evolution operators. Here, we adapt the framework of Generalized Quantum Signal Processing (GQSP) to our setting. Recall from Sec.~\ref{subsec:prelim-gqsp},  the basic building blocks are $c_0\text{-}U$ and $c_1\text{-}U^\dag$ (see Eq.~\eqref{control U:eq}). As $U = e^{\i H}$, we can rewrite $c_0\text{-}U = e^{\i \tilde{H}_0}, c_1\text{-}U^\dag = e^{\i \tilde{H}_1}$ with $\tilde{H}_0 = {\rm diag}(H,0), \tilde{H}_1 = {\rm diag}(0,-H)$. Thus, the circuit in Lemma \ref{thm:generalized-QSP} can be reformulated as an interleaved sequence of unitaries and Hamiltonian evolution operators --- our general framework in Sec.~\ref{sec:interleaved-sequence}. Indeed, we simply replace $V_j=R(\theta_j,\phi_j,\gamma_j)\otimes I$, and $H^{(j)}=H$, for all $j$. This completely eliminates the need for any block encoding access to $H$ and allows us to use Theorem \ref{thm:qsp-with-trotter-and-interpolation}. 

Consider any $n$-qubit Hamiltonian $H$ that can be written as
$$
H=\sum_{j=1}^{L} H_j.
$$
Here, $H$ can be a sum of local terms, or it can be a linear combination of $n$-qubit strings of Pauli operators, i.e.,
$H_j=\lambda_j P_j$. We define $\lambda:=\sum_{j=1}^{L}\|H_j\|$. Thus, given any such $H$ , initial state $\ket{\psi_0}$, and observable $O$, we can estimate the expectation value 
$$
\braket{\psi_0|f(H)^{\dag} O f(H)|\psi_0},
$$ 
to arbitrary accuracy, without using any block encodings and while using only a single ancilla qubit. Naturally, this method can implement any function $f(x)$ that can be approximated by a Laurent polynomial $P(e^{ix})$ (bounded in $\mathbb{T}$). We simply use the $p$-th order Trotter method to implement (controlled Hamiltonian evolution). Formally, we have the following theorem, where we have considered $2k$-order Suzuki Trotter formula:

\begin{thm}[Generalized QSP with Trotter, incoherent estimation]
\label{thm_gqsp-trotter-suzuki}
Given $B\in (0, \pi]$, let $H=\sum_{j=1}^L H_j$ be a Hermitian decomposition of $H$ such that $H_j$ is local and $\|H\|\le B$. Let $k\in \mb N$, $\varepsilon\in(0,1/2)$ be the precision parameter, and $f(x)$ be a function on $[-B,B]$. 
Suppose that there exists a degree-$d$ Laurent polynomial $P(z)=\sum_{j=-d}^d a_j z^j$ satisfying $|f(x)-P(e^{\i x})|\le \eps/6$ for all $x\in [-B,B]$ and $|P(z)|\le 1$ for all $|z|=1$. Assume the initial state $\ket{\psi_0}$ can be prepared efficiently, then there exists a quantum algorithm that, for any observable $O$, estimates $\bra{\psi_0} f(H)^\dag O f(H) \ket{\psi_0}$ with an error of at most $\varepsilon\|O\|$ and a success probability of at least $2/3$, using one ancilla qubit. The maximum quantum circuit depth is
\be
\widetilde{O}\left((25/3)^k k^2 L \big(d \lambda_{\rm comm}\big)^{1+1/2k} \right),
\ee
and the total time complexity is
\be
\widetilde{O} \left((25/3)^k k^2 L \big(d \lambda_{\rm comm}\big)^{1+1/2k}/\varepsilon^2 \right),
\ee
where $\lambda_{\mathrm{comm}}$ defined in Eq.~\eqref{Lambda} for all $H^{(\ell)}=H$ scales with the nested commutators of $H$.
\end{thm}
\begin{proof}
    Since $\|H\| \le B$ and $|f(x)-P(e^{\i x})|\le \eps/2$ for all $x\in [-B,B]$, we have $\|f(H)-P(e^{\i H})\|\le \eps/6$. The operator norm of $P(e^{\i H})$ is upper bounded by $1$ since $|P(z)|\le 1$ for $|z|=1$. From Theorem~\ref{thm:distance-expectation}, we have \[
        |\bra{\psi_0} P(e^{\i H})^\dag O P(e^{\i H}) \ket{\psi_0}-\bra{\psi_0} f(H)^\dag O f(H) \ket{\psi_0}| \le \varepsilon\|O\|/2.
    \]
    Therefore, it suffices to estimate $\bra{\psi_0} P(e^{\i H})^\dag O P(e^{\i H}) \ket{\psi_0}$ to within $\eps\|O\|/2$.

    By Theorem~\ref{thm:generalized-QSP}, there exist $\Theta=(\theta_j)_j, \Phi=(\phi_j)_j \in \mathbb{R}^{2d+1}, \lambda \in \mathbb{R}$ such that \begin{align*}
        \begin{bmatrix}
            P(e^{\i H}) & * \\
            * & * 
        \end{bmatrix} = \Big(\prod_{j=1}^d \Big( R(\theta_{d+j}, \phi_{d+j}, 0) \cdot c_1\text{-}e^{-\i H} \Big) \Big)\Big(\prod_{j=1}^{d}  \Big(R(\theta_j, \phi_j, 0) \cdot c_0\text{-}e^{\i H} \Big) \Big) \cdot R(\theta_0, \phi_0, \lambda).
    \end{align*}
    Since $c_0\text{-}e^{\i H}=e^{\i (\ket{0}\bra{0}\otimes H)}$ and $c_1\text{-}e^{-\i H}=e^{\i (-\ket{1}\bra{1}\otimes H)}$ are Hamiltonian evolution operators, the above GQSP circuit can be regarded as a interleaved sequence of Pauli rotations and Hamiltonian evolutions. The LHS of the above equation is an $(n+1)$-qubit unitary.
    
    We use the $2k$-th order Trotter-Suzuki formula $S_{2k}(t)$ to approximately implement the Hamiltonian evolution. The formula $S_{2k}(t)$ has $\Upsilon=2\cdot 5^{k-1}$ stages and maximum parameter $a_{\max}= 2k/3^k$ \cite{wiebe2010higher}. Note that each Pauli rotation operator can be implemented using $\widetilde{O}(1)$ gates. This implies that $\tau_H$ and $\tau_{\mathrm{sum}}$ in  Theorem~\ref{thm:qsp-with-trotter-and-interpolation} are $\widetilde{O}(1)$ and $\widetilde{O}(d)$, respectively. 
    Simplifying the complexity in Theorem \ref{thm:qsp-with-trotter-and-interpolation} for $S_{2k}(t)$ gives the claimed results.
\end{proof}
We only consider Hamiltonians with norms bounded by $\|H\| \le \pi$ in Theorem \ref{thm_gqsp-trotter-suzuki} for simplicity. This condition ensures that the spectrum of $H$ lies within a single period of $P(e^{\i x})$. However, this method is not limited to small-norm Hamiltonians. For a general Hamiltonian $H$ with $\|H\| > \pi$, we employ a rescaling strategy. Let $\lambda_{\mathrm{comm}}$ be the commutator bound for $H$. We define a rescaled Hamiltonian $H' = H/\|H\|$, a corresponding rescaled function $g(x) = f(x\|H\|)$, and then apply our algorithm to $H'$ and $g(x)$.
This rescaling has two competing effects on the complexity. First, the commutator bound for $H'$ becomes $\lambda'_{\mathrm{comm}} = \lambda_{\mathrm{comm}}/\|H\|$. Second, the degree of the Laurent polynomial required to approximate the more rapidly oscillating function $g(x)$ increases, typically scaling linearly with $\|H\|$. Let $d$ be the degree for approximating the original function $f(x)$; the new degree $d'$ for $g(x)$ is proportional to $d \cdot \|H\|$ (see Appendix \ref{sec-app:function-laurent} for examples).
The overall complexity of estimating $\bra{\psi_0}f(H)^{\dag} O f(H) \ket{\psi_0}$ scales with the product of these terms. By applying Theorem \ref{thm_gqsp-trotter-suzuki} to $H'$ and $g(x)$, the complexity scales as 
$$(d' \lambda'_{\mathrm{comm}})^{1+1/2k} \propto ((d\|H\|)(\lambda_{\mathrm{comm}}/\|H\|))^{1+1/2k} = (d\lambda_{\mathrm{comm}})^{1+1/2k}.$$ 
Thus, the $\|H\|$ dependence cancels out, and we recover the same complexity scaling for any general $H$. The comparison of the cost of our method with standard QSVT can be seen in Table \ref{table:comparison-qsvt}.

Often, for most applications, we are interested in estimating the expectation value of the observable $O$ with respect to the normalized state
$$
\ket{\psi}=\dfrac{f(H)\ket{\psi_0}}{\|f(H)\ket{\psi_0}\|}.
$$
Indeed, for any observable $O$, we can estimate $\braket{\psi|O|\psi}$ by using fixed point amplitude amplification \cite{yoder2014fixed} to obtain the normalized state $\ket{\psi}$, and then measure $O$ (incoherently). Note that with this modification, we obtain a new circuit that is an interleaved sequence of unitaries and Hamiltonian evolution operators. Thus, with only one additional ancilla qubit, we can estimate the desired expectation value. Formally, we have the following theorem:

\begin{thm}[Generalized QSP with Trotter and amplitude amplification]
    \label{thm:gqsp-trotter-suzuki-aa}
    Given $B\in (0, \pi]$, let $H=\sum_{j=1}^L H_j$ be a Hermitian decomposition of $H$ such that $H_j$ is local and $\|H\|\le B$. Let $k\in \mb N$, $\varepsilon\in(0,1/2)$ be the precision parameter, and $f(x)$ be a function on $[-B,B]$. Let $\ket{\psi} = f(H) \ket{\psi_0}/\|f(H) \ket{\psi_0}\|$ and assume $\eta\leq \|f(H)\ket{\psi_0}\|$.
    Suppose that there exists a degree-$d$ Laurent polynomial $P(z)=\sum_{j=-d}^d a_j z^j$ satisfying $|f(x)-P(e^{\i x})|\le \eps/(12\eta)$ for all $x\in [-B,B]$ and $|P(z)|\le 1$ for all $|z|=1$.  Assume the initial state $\ket{\psi_0}$ can be prepared efficiently, then there exists a quantum algorithm that, for any observable $O$, estimates $\bra{\psi} O \ket{\psi}$ with an error of at most $\varepsilon\|O\|$ and a success probability of at least $2/3$ using two ancilla qubits. The maximum quantum circuit depth is
    \be
    \widetilde{O}\left((25/3)^k k^2 L \big(d \lambda_{\rm comm}/\eta\big)^{1+1/2k} \right),
    \ee
    and the total time complexity is
    \be
    \widetilde{O} \left((25/3)^k k^2 L \big(d \lambda_{\rm comm}/\eta\big)^{1+1/2k}/\varepsilon^2 \right),
    \ee
    $\lambda_{\mathrm{comm}}$ defined in Eq.~\eqref{Lambda} for all $H^{(\ell)}=H$ scales with the nested commutators of $H$.
\end{thm}
\begin{proof}
   Suppose the initial state \( \ket{\psi_0} \) is a state of \( n \) qubits. 
   Since $\|H\| \le B$ and $|f(x)-P(e^{\i x})|\le \eps/(12\eta)$ for all $x\in [-B,B]$, we have $\|f(H)-P(e^{\i H})\|\le \eps/(12\eta)$. Let the normalized $P(e^{\i H})\ket{\psi_0}$ be $$
    \ket{\tilde{\psi}} = \dfrac{P(e^{\i H}) \ket{\psi_0}}{\| P{(e^{\i H})} \ket{\psi_0} \|}.
    $$ The distance between the two normalized states is \[
   \|\ket{\psi}-\ket{\tilde{\psi}}\|=\Bigg\|\frac{f(H)\ket{\psi_0}}{\|f(H)\ket{\psi_0}\|}-\frac{P(e^{\i H})\ket{\psi_0}}{\|P(e^{\i H}\ket{\psi_0}\|}\Bigg\|\le \frac{2\|f(H)\ket{\psi_0}-P(e^{\i H})\ket{\psi_0}\|}{\eta}\le \frac{\eps}{6}.
   \]
   From Theorem~\ref{thm:distance-expectation}, we have \[
        \left |\bra{\tilde{\psi}} O P(e^{\i H}) \ket{\tilde{\psi}}-\bra{\psi}O \ket{\psi}\right | \le \varepsilon\|O\|/2.
    \] Therefore, it suffices to estimate $\bra{\tilde{\psi}}O\ket{\tilde{\psi}}$ to within $\eps\|O\|/2$. 
    
   By Theorem~\ref{thm:generalized-QSP}, there exists a length-$(2d+1)$ interleaved circuit sequence comprising of single qubit $U(2)$ rotations and Hamiltonian evolution, \( U_P \), such that
    \begin{align*}
       (\bra{0} \otimes I_n) U_P \ket{0}\ket{\psi_0} = P(e^{\i H})\ket{\psi_0}.
    \end{align*}
    Using fixed-point amplitude amplification from \cite{yoder2014fixed}, we can prepare an \( \varepsilon/2 \)-approximation of $\ket{\tilde{\psi}}$
    using an interleaved sequence of operations 
   $$
    U_P(I_2\otimes U_{\psi_0}) e^{\i \alpha_j \ket{0^{n+1}}\bra{0^{n+1}}} (I_2\otimes U_{\psi_0}^{\dagger}) U_P^{\dagger} \quad \text{and} \qquad e^{\i \beta_j \ket{0}\bra{0} \otimes I_n},
    $$
    for \( O(\log(1/\varepsilon) / \eta) \) values of \( \alpha_j, \beta_j \in \mathbb{R} \). 
    
    The resulting quantum circuit is still an interleaved Hamiltonian sequence with length 
    $$
    O(d \log(1/\varepsilon) / \eta),
    $$ as \( U_P \) is a length \( 2d+1 \) interleaved Hamiltonian evolution sequence. 
    
    Therefore, we can use Algorithm~\ref{alg: HSVT with trotter} to estimate \( \bra{\psi} O \ket{\psi} \) to within error \( \varepsilon \|O\| \), and simplifying the complexity in Theorem~\ref{thm_gqsp-trotter-suzuki} yields the results. To implement  
    $$
    e^{\i \beta_j \ket{0}\bra{0} \otimes I_n}\qquad \text{and} \qquad e^{\i \alpha_j \ket{0^{n+1}}\bra{0^{n+1}}},
    $$
    we can reuse one additional ancilla qubit as the target qubit for all the \((n+2)\)-qubit Toffoli gates by uncomputing the target qubit after each reflection. Thus, we use only two ancilla qubits overall.
\end{proof}
Finally, it is also possible to estimate the expectation value coherently using quantum amplitude estimation. So, analogous to Theorem \ref{thm:qsp-with-trotter-and-interpolation-coherent}, we obtain

\begin{thm}[Generalized QSP with Trotter, coherent estimation]
\label{thm:gqsp-trotter-coherent measurement}
    Under the same assumption as in Theorem \ref{thm_gqsp-trotter-suzuki}, if we further assume $O/\gamma$ is block-encoded via a unitary $U_O$ for some $\gamma$ satisfying $\gamma = {\Theta}(\|O\|)$, there exists a quantum algorithm that estimates $\bra{\psi_0}f(H)^\dag  O f(H) \ket{\psi_0}$ with an error of at most $\varepsilon\|O\|$ and a success probability of at least $2/3$ using three ancilla qubits. The time complexity and maximum quantum circuit depth are both
    \[
    \widetilde{O}\left((25/3)^k k^2 L \big(d \lambda_{\rm comm}\big)^{1+1/2k}/\varepsilon \right),
    \]
    $\lambda_{\mathrm{comm}}$ defined in Eq.~\eqref{Lambda} for all $H^{(\ell)}=H$ scales with the nested commutators of $H$.
    The number of queries to \( U_O \) and its inverse is $\widetilde{O}(k/\varepsilon)$. 
    \end{thm}
    
\begin{proof}
Choosing the symmetric staged product formula $\mathcal{P}(t)$ as $S_{2k}(t)$ and simplifying the complexity in Theorem \ref{thm:qsp-with-trotter-and-interpolation-coherent} gives the claimed results.   
\end{proof}

So, overall, our method allows us to implement functions that can be approximated by a Laurent polynomial in $e^{ikx}$. This naturally raises the question: Which functions can be efficiently approximated in this form? In Appendix~\ref{sec-app:function-laurent}, we show that many commonly used functions—including the shifted sign function, rectangle function, inverse function, and exponential function—can indeed be approximated by Laurent polynomials (bounded in $\mathbb{T}$) in $e^{\i kx}$ for some constant $k$.

In contrast, standard QSVT implements functions expressed as polynomials in $x$ that are bounded on $[-1,1]$. We identify precise conditions under which such a polynomial \( Q(x) \) can be approximated by a  Laurent polynomial in $e^{\i k x}$, \( P(e^{\i k x}) \), bounded in $\mathbb{T}$, for some constant \( k \), with the degree of \( P \) scaling linearly with the degree of \( Q \). 

In particular, we prove that if a function \( f(x) \) can be approximated by a $d$-degree polynomial $Q$ on \([-1,1]\) that remains bounded by 1 on the extended interval \([-1 - \delta, 1 + \delta]\) for some \( \delta \in (0, 1/2] \), then there exists a Laurent polynomial in $e^{\i \pi x / (2(1 + \delta))}$, \( P(e^{\i \pi x / (2(1 + \delta))}) \), of degree $\widetilde{O}(d/\delta^2)$, that approximates \( f(x) \) on $[-1,1]$ and is bounded by 1 on the complex unit circle \( \mathbb{T} \). Formally, we prove the following:

\begin{prop}
    \label{lem:gqsp-poly2tri}
    Let $\varepsilon \in (0,1)$ and $\delta \in (0, 1/2]$. Suppose $f \colon \mathbb{R} \to \mathbb{R}$ is approximated by a degree-$d$ polynomial $Q$ satisfying
    \begin{align*}
        &|f(x) - Q(x)| \le \varepsilon/2 &&\text{for all } x \in [-1,1], \\
        &|Q(x)| \le 1 &&\text{for all } x \in [-1 - \delta, 1 + \delta].
    \end{align*}
    Then there exists an Laurent polynomial $P(z)$ of $\deg(P) =O(d\log(d/\varepsilon)/\delta^2)$ such that \begin{align*}
        &|P(z)|\le 1 && \text{for all } z\in \mathbb{T},\\
        &|P(e^{\i \pi x/(2(1+\delta))})-f(x)|\le \varepsilon && \text{for all } x \in [-1,1].
    \end{align*}
\end{prop}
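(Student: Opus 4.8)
The plan is to convert the statement, via a trigonometric substitution, into a one‑variable polynomial‑approximation problem on a real interval, and then to combine a Bernstein‑type analytic approximation (which keeps the degree small, and in particular \emph{linear} in $d$) with a windowing polynomial (which restores the uniform bound on $\mathbb{T}$).

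\emph{Step 1: the substitution.} I would set $c := \pi/(2(1+\delta)) \in (0,\pi/2)$. For $z \in \mathbb{T}$, writing $z = e^{\i\theta}$ and $w := \operatorname{Im}(z) = (z-z^{-1})/(2\i) = \sin\theta \in [-1,1]$, note that $w$ is a degree‑$1$ Laurent polynomial in $z$, so for any ordinary polynomial $R$ of degree $D$ the function $P(z) := R\big((z-z^{-1})/(2\i)\big)$ is a degree‑$D$ Laurent polynomial with $\|P\|_{\mathbb{T}} = \|R\|_{[-1,1]}$. For $z = e^{\i c x}$ with $x \in [-1,1]$ one has $\theta = cx \in (-\pi/2,\pi/2)$, hence $x = \tfrac1c\arcsin(\sin cx) = \tfrac{2(1+\delta)}{\pi}\arcsin w$, and as $x$ sweeps $[-1,1]$ the point $w = \sin(cx)$ sweeps $[-a,a]$ with $a := \sin c$; a short computation gives $1 - a = \Theta(\delta^2)$ while $a$ stays bounded away from $0$ for $\delta \le 1/2$. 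Putting
\[
\Psi(w) \;:=\; Q\Big(\tfrac{2(1+\delta)}{\pi}\arcsin w\Big),
\]
one gets $\Psi(\sin cx) = Q(x)$ for $x\in[-1,1]$, so it suffices to produce a polynomial $R$ of degree $O\big(d\log(d/\varepsilon)/\delta^2\big)$ with (i) $\|R - \Psi\|_{[-a,a]} \le \varepsilon/2$ and (ii) $\|R\|_{[-1,1]} \le 1$; then $P(z) = R\big((z-z^{-1})/(2\i)\big)$ satisfies the conclusion, using $\|f-Q\|_{[-1,1]}\le\varepsilon/2$ and the triangle inequality.

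\emph{Step 2: properties of $\Psi$ and the analytic approximation.} Since $\arcsin([-1,1]) = [-\pi/2,\pi/2]$ we have $\tfrac{2(1+\delta)}{\pi}\arcsin w \in [-1-\delta,1+\delta]$ on $[-1,1]$, so $\|\Psi\|_{[-1,1]} \le 1$. Moreover $\arcsin$ is analytic off $(-\infty,-1]\cup[1,\infty)$ and $Q$ is entire, so $\Psi$ is analytic in a complex neighborhood of $[-a,a]$, the nearest singularities being the branch points $w=\pm1$, at distance $\Theta(\delta^2)$. I would work with the Bernstein ellipse $\mathcal E$ with foci $\pm a$ whose right vertex sits at distance $\Theta(\delta^2)$ from $w=1$ (so its parameter is $1+\Theta(\delta)$); using $\arcsin(u+\i v)=\arcsin u + \i v/\sqrt{1-u^2}+O(v^2)$ and the $O(\delta)$ imaginary half‑width of $\mathcal E$, one checks that $\tfrac{2(1+\delta)}{\pi}\arcsin(\mathcal E)$ lies in an $O(\delta)$‑thin complex neighborhood of $[-1-\delta,1+\delta]$ whose real part never leaves $[-1-\delta,1+\delta]$; the Bernstein–Walsh growth bound for a polynomial of sup‑norm $\le 1$ on $[-1-\delta,1+\delta]$ then gives $\max_{\mathcal E}|\Psi| \le e^{O(\delta d)}$. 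The standard Bernstein approximation theorem for analytic functions now supplies a polynomial $R_1$ of degree $D_1 = \widetilde O\big(d + \tfrac1\delta\log(1/\varepsilon_1)\big)$ with $\|R_1-\Psi\|_{[-a,a]}\le\varepsilon_1$, for a parameter $\varepsilon_1 = \Theta(\varepsilon)$. Obtaining the sharp bound $\max_{\mathcal E}|\Psi|\le e^{O(\delta d)}$, rather than a cruder $e^{O(d)}$, is precisely what keeps $D_1$ — and hence the final degree — linear in $d$.

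\emph{Step 3: the window, and the expected obstacle.} At this stage $\|R_1\|_{[-a,a]}\le 1+\varepsilon_1$, but on the two $\Theta(\delta^2)$‑wide intervals $[-1,-a]$ and $[a,1]$ Chebyshev's growth inequality (with $\operatorname{arccosh}(1/a)=O(\delta)$) only gives $\|R_1\|_{[-1,1]}\le e^{O(\delta D_1)}=e^{O(\delta d)}\varepsilon_1^{-O(1)}$, so $R_1$ can blow up there. I would kill this blow‑up by multiplying by a window $W$ of degree $O\big(d\log(d/\varepsilon)/\delta^2\big)$ — for instance a shifted‑Chebyshev ``soft rectangle'' supported on $[-a,a]$, or a one‑sided exponential/Gaussian filter — chosen so that $|W-1|\le\varepsilon_2$ on $[-a,a]$, $0\le W\le 1$ on $[-1,1]$, and $W$ decays on $[a,1]$ (and symmetrically) quickly enough to dominate the pointwise growth $|R_1(w)|\lesssim\cosh\big(D_1\operatorname{arccosh}(w/a)\big)$. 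Then $R := (1+\varepsilon_1+\varepsilon_2)^{-1}WR_1$ satisfies $\|R\|_{[-1,1]}\le 1$ and $\|R-\Psi\|_{[-a,a]}=O(\varepsilon_1+\varepsilon_2)$; choosing $\varepsilon_1,\varepsilon_2=\Theta(\varepsilon)$ and adding degrees yields $\deg R=O\big(d\log(d/\varepsilon)/\delta^2\big)$, as required. I expect this windowing step to be the main obstacle: the analytic estimate forces $R_1$ to be low‑degree ($\widetilde O(d/\delta)$), yet next to $\pm1$ it may be exponentially (in $d$) large, and the only room available to suppress it is those very same $\Theta(\delta^2)$‑wide bands; making ``how fast $R_1$ grows once $w$ exits $[-a,a]$'' balance against ``how steeply a polynomial bounded by $1$ on $[-1,1]$ can drop near an endpoint'', so that a window of degree only $\widetilde O(d/\delta^2)$ suffices, is the delicate part of the argument — everything else is routine once the $\arcsin$‑substitution is in place.
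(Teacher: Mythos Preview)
Your substitution in Step~1 is exactly the one the paper uses, but after that the paper takes a much shorter route. Instead of approximating the composite $\Psi(w)=Q\big(\tfrac{1}{k}\arcsin w\big)$ directly via Bernstein's theorem on an ellipse and then windowing, the paper approximates $\arcsin$ \emph{first}: it invokes a known polynomial approximation $R$ of $\arcsin$ (Lemma~70 of \cite{gilyen2019quantum}, recorded in the paper as Lemma~\ref{lem:arcsin}) satisfying $\|R-\arcsin\|_{[-a,a]}\le\varepsilon'$ and, crucially, $\|R\|_{[-1,1]}\le\pi/2$, with $\deg R=O(\delta^{-2}\log(1/\varepsilon'))$ (here $1-a=\Theta(\delta^2)$). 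Setting $\tilde P(y):=Q(R(y)/k)$ gives $\deg\tilde P=d\cdot\deg R=O(d\,\delta^{-2}\log(d/\varepsilon))$, and the uniform bound is \emph{automatic}: since $|R(y)|\le\pi/2$ on all of $[-1,1]$ we have $R(y)/k\in[-\pi/(2k),\pi/(2k)]=[-1-\delta,1+\delta]$, hence $|\tilde P(y)|\le\|Q\|_{[-1-\delta,1+\delta]}\le 1$ everywhere on $[-1,1]$. The approximation error is handled by Markov's inequality on $Q'$ (giving $\|Q'\|\le d^2$) together with the choice $\varepsilon'=\Theta(k\varepsilon/d^2)$. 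No ellipse estimate, no windowing.

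Your route could plausibly be completed, but both steps you flag as delicate are genuine. In Step~2 the bound $M\le e^{O(\delta d)}$ is not obvious: you are right that $\operatorname{Re}(\arcsin(w)/k)$ always stays in $(-1-\delta,1+\delta)$ (since $\arcsin$ maps the cut plane to the strip $|\operatorname{Re}|<\pi/2$), but bounding the Green's function of $\mathbb{C}\setminus[-1-\delta,1+\delta]$ on $\arcsin(\mathcal E)/k$ requires knowing how close that image comes to the corners $\pm(1+\delta)$, not just that it lies in an $O(\delta)$-thin strip; a crude estimate gives $g=O(\sqrt{\delta})$ near the corners rather than $O(\delta)$, weakening $D_1$ to $\widetilde O(d/\sqrt{\delta})$. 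And the pointwise-matched window in Step~3 is exactly the nonroutine construction you anticipate. The paper's composition trick sidesteps both issues by pushing all the analytic difficulty into the $\arcsin$ approximation, where a globally bounded polynomial is already available off the shelf, and then exploiting that $Q$ is \emph{given} bounded on the slightly enlarged interval $[-1-\delta,1+\delta]$---precisely the hypothesis you did not directly use.
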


\begin{proof}
    Notice that for any polynomial $\tilde{P}(y)$ in $y=\sin (\pi x/(2(1+\delta)))$, we can substitute 
    $$\sin \left(\dfrac{\pi x}{2(1+\delta)}\right) = \dfrac{1}{2\i}\left(e^{\i \pi x/(2(1+\delta))} - e^{-\i \pi x/(2(1+\delta))}\right),$$ 
    into the polynomial to obtain a Laurent polynomial $P(z)$ in $z=e^{\i \pi x/(2(1+\delta))}$, of the same degree as $\tilde{P}$. Moreover, $|P(z)|\le 1$ for all $z\in \mathbb{T}$ is equivalent to $|\tilde{P}(y)|\le 1$ for all $y\in [-1, 1]$. Following the proof of \cite[Proposition B5]{chakraborty2025quantum}, we can construct such a polynomial $\tilde{P}$ satisfying \begin{align*}
        &|\tilde{P}(y)|\le 1 && \text{ for all } y \in [-1,1] , \\
        &|\tilde{P}(\sin(\pi x/(2(1+\delta))))-f(x)|\le \varepsilon && \text{ for all } x\in [-1,1], 
    \end{align*}
    of degree $O(d\log(d/\varepsilon)/\delta^2)$. This yields the desired Laurent polynomial \(P(z)\).
\end{proof}

By imposing stronger smoothness conditions on the polynomial \( Q \), and after an appropriate rescaling, the following result from \cite{van2017quantum} demonstrates that a quadratic improvement in the degree dependence on \( \delta \) can be achieved. 

\begin{prop}[Lemma 37 of \cite{van2017quantum}, rescaled version]
    \label{lem:bounded-1-norm-gqsp}
    Let $\varepsilon, \delta\in (0,1)$. Suppose $f(x) \colon \mathbb{R}\to\mathbb{R}$ admits a polynomial approximation  $Q(x)=\sum_{k=0}^K a_k x^k$ satisfying 
    \begin{align*}
        |f(x)-Q(x)|\le \varepsilon/2  \quad \text{ for all } x\in [-1,1].
    \end{align*}
    Let $B = \sum_{k=0}^{K} |a_{k}|(1+\delta)^{k}$. Then there exists a Laurent polynomial $P(z)$ of $\deg(P) = O(\log(B/\varepsilon)/\delta)$  such that \begin{align*}
        &|P(z)|\le B && \text{for all } z\in \mathbb{T},\\
        &|P(e^{\i \pi x/(2(1+\delta))})-f(x)|\le \varepsilon && \text{for all } x \in [-1,1].
    \end{align*}
\end{prop}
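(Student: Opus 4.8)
The plan is to strip the statement down, by an affine change of variables and one triangle-inequality step, to the univariate approximation lemma of \cite[Lemma~37]{van2017quantum} — approximating a single monomial on a sub-interval of $[-\tfrac{\pi}{2},\tfrac{\pi}{2}]$ by a trigonometric polynomial — and then reassemble. First I would discard $f$: since $\|f-Q\|_{[-1,1]}\le\varepsilon/2$, it suffices to produce a Laurent polynomial $P$ with $|P(z)|\le B$ for all $z\in\mathbb{T}$ and $|P(e^{\i\pi x/(2(1+\delta))})-Q(x)|\le\varepsilon/2$ for all $x\in[-1,1]$; the triangle inequality then gives the claimed bound for $f$.

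Next I would rescale. Put $b_k:=a_k(1+\delta)^k$, so that $\sum_{k}|b_k|=B$ and $Q(x)=\sum_{k=0}^{K}b_k\bigl(x/(1+\delta)\bigr)^{k}$. Under the substitution $\theta=\pi x/(2(1+\delta))$ we have $x/(1+\delta)=2\theta/\pi$; the target interval $x\in[-1,1]$ becomes $\theta\in[-c,c]$ with $c:=\pi/(2(1+\delta))$, while $x\in[-(1+\delta),1+\delta]$ becomes $\theta\in[-\tfrac{\pi}{2},\tfrac{\pi}{2}]$, and the two $\theta$-intervals are separated by a margin $\tfrac{\pi}{2}-c=\tfrac{\pi}{2}\cdot\tfrac{\delta}{1+\delta}=\Theta(\delta)$. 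So the task reduces to approximating $\sum_{k}b_k(2\theta/\pi)^{k}$ on $[-c,c]$ by a Laurent polynomial in $e^{\i\theta}$.

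The heart of the matter is the per-monomial estimate, which is the substance of \cite[Lemma~37]{van2017quantum}: for every $k$ and every $\varepsilon'>0$ there is a Laurent polynomial $p_k$ in $e^{\i\theta}$ with $\deg p_k=O(\log(1/\varepsilon')/\delta)$ — the degree \emph{not} depending on $k$ — such that $|p_k(z)|\le1$ for all $z\in\mathbb{T}$ and $|p_k(e^{\i\theta})-(2\theta/\pi)^{k}|\le\varepsilon'$ for all $\theta\in[-c,c]$. The uniformity in $k$ is the crucial point and is what makes the rescaling pay off: viewed as a function of $\theta$, the monomial $(2\theta/\pi)^{k}$ is bounded by $1$ on the whole safe interval $[-\tfrac{\pi}{2},\tfrac{\pi}{2}]$ (there $|2\theta/\pi|\le1$), while one only needs $\varepsilon'$-accuracy on the proper sub-interval $[-c,c]$, which lies a distance $\Theta(\delta)$ inside it — and a band-limited (smoothed) approximation of such a function costs degree linear in the reciprocal margin and only logarithmic in $1/\varepsilon'$. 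Taking $\varepsilon':=\varepsilon/(2B)$ and $P:=\sum_{k=0}^{K}b_k\,p_k$ then gives $\deg P=O(\log(B/\varepsilon)/\delta)$, $|P(z)|\le\sum_{k}|b_k|\,|p_k(z)|\le\sum_{k}|b_k|=B$ for $z\in\mathbb{T}$, and $|P(e^{\i\pi x/(2(1+\delta))})-Q(x)|\le\sum_{k}|b_k|\varepsilon'=\varepsilon/2$ on $[-1,1]$; combining with the first step closes the argument.

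The hard part is the per-monomial estimate with its \emph{logarithmic} dependence on $1/\varepsilon'$ — everything else is bookkeeping. Convolving the ($1$-bounded, $2\pi$-periodised) monomial with a nonnegative Fej\'er/de~la~Vall\'ee~Poussin kernel of degree $n$ does keep $|p_k|\le1$ on $\mathbb{T}$ and, thanks to the $\Theta(\delta)$ margin, gives an error controlled by the kernel tail at scale $\Theta(\delta)$, but that tail is only polynomially small, which would yield $n=\poly(1/\varepsilon')$. Squeezing this down to $n=O(\log(1/\varepsilon')/\delta)$ is the content of \cite{van2017quantum}, and it uses the analyticity of the monomial (it extends past $[-c,c]$ while staying bounded up to the margin), e.g.\ through a kernel with exponentially decaying tails or a direct Chebyshev/Fourier estimate; this is the step I expect to be the main obstacle to reconstruct.
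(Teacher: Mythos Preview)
Your reduction and the per-monomial decomposition are correct. The paper does not give a self-contained proof of this proposition (it is quoted from \cite{van2017quantum}), but it reproduces the argument in the appendix for the closely related Lemma~\ref{lem:bounded-1-norm-even}, and the key step there is \emph{not} kernel smoothing.

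Instead, one uses the identity $x=\arcsin(\sin(sx))/s$ for $s=\pi/(2(1+\delta))$ and $sx\in[-\tfrac{\pi}{2},\tfrac{\pi}{2}]$, so that $x^{k}=\bigl(\arcsin(y)/s\bigr)^{k}$ with $y=\sin(sx)$. The Taylor coefficients of $\bigl(\arcsin(y)/s\bigr)^{k}$ are nonnegative (by induction on $k$ from the nonnegativity of the $\arcsin$ series) and sum to $(\arcsin(1)/s)^{k}=(1+\delta)^{k}$; hence the degree-$J$ truncation error on $|y|\le\sin(s)$ is at most $(1+\delta)^{k}\sin^{J}(s)$, and since $\sin(s)=1-\Theta(\delta^{2})$ one takes $J=O(\log(B/\varepsilon)/\delta^{2})$. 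This alone would be too large; the square-root acceleration comes from replacing each monomial $y^{j}$, $j\le J$, by the Sachdeva--Vishnoi approximant $P_{j,d}$ of degree $d=O\bigl(\sqrt{J\log(B/\varepsilon)}\bigr)=O(\log(B/\varepsilon)/\delta)$, which satisfies $\|P_{j,d}-y^{j}\|_{[-1,1]}\le 2e^{-d^{2}/(2j)}$ and $\|P_{j,d}\|_{[-1,1]}\le1$. Nonnegativity of the $\arcsin$ coefficients is exactly what lets the per-monomial errors and sup-bounds sum without loss. The resulting degree-$d$ polynomial in $y=\sin(sx)$ is then the desired Laurent polynomial in $e^{\i sx}$.

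So where you reached for an exponentially-tailed convolution kernel --- and correctly flagged that Fej\'er/de~la~Vall\'ee~Poussin tails are only polynomial --- the actual route is the $\arcsin$-series expansion followed by Chebyshev-based monomial compression, with no kernel analysis needed.
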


Together, Propositions~\ref{lem:gqsp-poly2tri} and~\ref{lem:bounded-1-norm-gqsp} establish a formal connection between standard QSVT and our Laurent polynomial-based approach, highlighting how bounded polynomial transformations in the QSVT framework can be translated into our setting.

Table~\ref{table:comparison-our-qsvt-methods} presents a detailed comparison between the complexities of our method and standard QSVT for estimating the expectation value of an observable $O$ with respect to the normalized quantum state
$$
\ket{\psi}=\dfrac{f(H)\ket{\psi_0}}{\|f(H)\ket{\psi_0}\|}.
$$
Both frameworks offer multiple strategies to perform this estimation. Notably, our approach achieves comparable asymptotic complexity to standard QSVT while avoiding the need for block-encodings of $H$, and requiring only a constant number of ancilla qubits. Below, we outline the complexity of each approach (assuming an appropriate choice of a large enough $k$) and highlight its key resource trade-offs:

\begin{itemize}
    \item \textbf{Direct estimation without amplitude amplification or estimation:}~This approach runs the interleaved circuit composed of controlled Hamiltonian evolutions and single-qubit rotations, followed by measurement of the observable $O$. The circuit depth per coherent run is determined by Theorem~\ref{thm_gqsp-trotter-suzuki}, and the number of independent classical repetitions is $\widetilde{O}(\eta^{-2}\eps^{-2})$. Both the depth and total time complexity nearly match standard QSVT. However, unlike standard QSVT, which requires $\lceil \log_2 L\rceil +1$ ancilla qubits (and complicated controlled logic), this method uses only a single ancilla qubit. This makes it especially appealing for early fault-tolerant quantum devices, where running many shallow-depth circuits is more practical than maintaining deep coherence.

    \item \textbf{Using fixed-point amplitude amplification:}~This variant composes the fixed-point amplitude amplification circuit with the one described above. The per-run circuit depth is given by Theorem~\ref{thm:gqsp-trotter-suzuki-aa}, and the number of repetitions reduces to $\widetilde{O}(\eps^{-2})$. While the complexity nearly matches standard QSVT, our method again requires only two ancilla qubits, a significant reduction compared to the logarithmic ancilla overhead in standard QSVT.

    \item \textbf{Iterative quantum amplitude estimation:~}This approach, formalized in Theorem~\ref{thm:gqsp-trotter-coherent measurement}, further reduces the number of repetitions by using coherent amplitude estimation, at the cost of increased circuit depth. Despite the deeper circuits, this method offers a reduced overall time complexity. As before, the ancilla overhead of our method remains minimal (three), offering similar advantages over standard QSVT.
\end{itemize}

Thus, we obtain a simple and resource-efficient method for implementing polynomial transformations, while retaining near-optimal complexity. In the following section (Sec.~\ref{sec:applications}), we apply these results to both of our target applications.

\section{Applications}
\label{sec:applications}
\subsection{Quantum linear systems via discrete adiabatic theorem without block encodings}
\label{subsec:qls-solver}

The quantum linear systems algorithm can be stated as follows: Consider a Hermitian matrix $A\in\mathbb{C}^{N\times N}$ with $\|A\|=1$, whose eigenvalues lie in the interval $[-1, -1/\kappa]\cup [1/\kappa, 1]$. Here, $\|A^{-1}\|=\kappa$ is the condition number of $A$. 
Given access to a procedure that prepares an initial state $\ket{b}$, the quantum linear systems algorithm prepares the quantum state $\ket{x}=A^{-1}\ket{b}/\|A^{-1}\ket{b}\|$. In many applications, simply preparing the state $\ket{x}$ may not be sufficient. Instead, one is often interested in extracting useful information from this state, such as estimating the expectation value of an observable $O$, i.e., \ $\braket{x|O|x}$.

Since the first quantum algorithm for this problem by Harrow, Hassidim, and Lloyd \cite{harrow2009quantum}, the quantum linear systems algorithm has been widely studied. 
The complexity of this algorithm has been progressively improved through a series of results based on quantum linear algebra techniques \cite{childs2017qls, chakraborty_et_al:LIPIcs.ICALP.2019.33, gilyen2019quantum, low2024qep, low2024qls}. 
The disadvantage of these algorithms is that they require a block encoding access to $A$ and achieve an optimal linear dependence on the condition number $\kappa$ only with the help of complicated subroutines such as variable time amplitude amplification (VTAA) \cite{ambainis2012variabletime}. 
More recently, adiabatic-inspired approaches have also been explored \cite{subacsi2019quantum, lin2020optimal}, which have managed to optimally solve this problem \cite{costa2022optimal}, without the use of VTAA. However, these techniques continue to assume block encoding access to $A$. 

Among these, the algorithm of Costa et al. \cite{costa2022optimal} achieves optimal query complexity. More precisely, given access to a block encoding of $A$, the algorithm makes use of the discrete adiabatic theorem \cite{dranov1997discrete} to prepare a state that is $\eps$-close to $\ket{x}$, with a query complexity of $O(\kappa \log (1/\eps))$.

In this work, we simplify this algorithm by eliminating the need for block encoding and reducing the ancilla qubit requirement to a constant. Despite this, the complexity has a near-optimal (quasi-linear) dependence on $\kappa$ and a polylogarithmic dependence on the inverse precision. We achieve this by demonstrating that the simulation of the discretized adiabatic theorem can be seen as a particular instance of the interleaved sequence circuit described in Sec.~\ref{sec:interleaved-sequence}. Moreover, unlike \cite{costa2022optimal}, we do not require the construction of a qubitized quantum walk operator. We achieve near-optimal complexity by simply using higher-order Trotterization. We begin by providing a brief overview of discrete adiabatic computation, stating a key result from \cite{costa2022optimal}, which we make use of. 

\subsubsection{A brief overview of the discrete adiabatic theorem}

In adiabatic quantum computation \cite{farhi2000quantum, farhi2001adiabatic2}, one considers a time‐dependent Hamiltonian $H(s)$ with $s\in [0,1]$, where typically, the ground state of \(H(0)\) is easy to prepare while the ground state of \(H(1)\) encodes the solution to the problem. 
The system begins in the ground state of $H(0)$, and \emph{slowly (adiabatically) evolves}, over a sufficiently long time $T$, such that by the end of the evolution, the system remains $\eps$-close to the ground state of $H(1)$. 
The precise meaning of ``a sufficiently long time $T$" is given by the rigorous version of the adiabatic theorem \cite{ambainis2004elementary, jansen2007bounds, elagrt2012note}. 
Informally, if the time of evolution $T$ is larger than the inverse of the minimum spectral gap of $H(s)$ for any $s\in [0,1]$, the system will remain in the instantaneous ground state throughout the adiabatic evolution. 
The adiabatic theorem holds for any eigenstate of the underlying Hamiltonian, with the key parameter being the energy gap between the eigenstate and the next closest energy level. 

Discrete adiabatic computation (DAC), introduced by \cite{dranov1997discrete}, is a discretization of the continuous-time adiabatic evolution. This is useful for simulating any adiabatic quantum computation in the circuit model. In the DAC framework, there are $T$ unitary operators 
$$
B(0/T), \, B(1/T), \, \ldots, \, B((T-1)/T),
$$ 
where each $B(j/T)$ approximates the ideal (continuous-time) adiabatic evolution at $s=j/T$. 
Overall, define the unitary 
\begin{align*}
    U(s)=\prod_{m=0}^{s T-1} B(m / T),
\end{align*}
with $U(0)=I$. If $U_A$ is the ideal adiabatic evolution for some time $T$, the DAC framework provides a concrete bound on $\|U_A-U(s)\|$. 
In order to state this difference, we define a series of useful quantities. 
First, define the $k$-th difference of $B(s)$ as \begin{align*}
D^{(k)} B(s) & := D^{(k-1)} B\Big(s+\frac{1}{T}\Big)-D^{(k-1)} B(s), \\
D^{(1)} B(s) & := B\Big(s+\frac{1}{T}\Big)-B(s) ,
\end{align*} 
where $k\in\mb N$.
Then, assume that there exist functions $c_1(s), c_2(s)$ such that $D^{(1)}, D^{(2)}$ are bounded as (see \cite{an2022quantum})
\begin{align}
    \label{eq:smooth-W}
    \|D^{(1)} B(s)\| \leq \frac{c_1(s)}{T}, \quad \|D^{(2)} B(s)\| \leq \frac{c_2(s)}{T^2} 
\end{align}
for all $T>0$. Also define $\hat{c}_k(s)$ as
\begin{align}
    \label{eq:def-c_k}
    \hat{c}_k(s):= \max _{s^{\prime} \in\{s-1 / T, s, s+1 / T\} \cap[0,1-k / T]} c_k(s^{\prime})
\end{align} for $k=1,2$, which takes the neighboring steps into account. 

Suppose eigenvalues of $B(s)$ are partitioned into two parts: the eigenvalues of interest $\sigma_\Pi(s)$ and the remaining ones $\sigma_{\widetilde{\Pi}}(s)$. 
Define $\Pi(s)$ to be the projection onto the eigenspace of $B(s)$ associated with the eigenvalues in $\sigma_\Pi(s)$. 
Since $B(s)$ is unitary, all its eigenvalues lie on the unit circle, which allows us to obtain the gaps between the different eigenvalues. 
For a non-negative integer $k$, define $\Delta_k(s)$ as the minimum angular distance between arcs $\sigma_\Pi^{(k)}$ and $\sigma_{\widetilde{\Pi}}^{(k)}$, which satisfy
\begin{align*}
\sigma_\Pi^{(k)} \supseteq \bigcup_{l=0}^k \sigma_\Pi(s+l / T), \quad \sigma_{\widetilde{\Pi}}^{(k)} \supseteq \bigcup_{l=0}^k \sigma_{\widetilde{\Pi}}(s+l/T) .
\end{align*}
Define the gap 
\begin{align}
    \label{eq:def-gap}
    \Delta(s)= \begin{cases}\Delta_2(s), & 0 \leq s \leq 1-2 / T \\ \Delta_1(s), & s=1-1 / T \\ \Delta_0(s), & s=1\end{cases}
\end{align}
to be the minimum gap among eigenvalues in the three successive steps, except at the boundary cases. 
Finally, define 
\begin{align}
    \label{eq:def-neighbor-gap}
    \check{\Delta}(s)=\min _{s^{\prime} \in\{s-1 / T, s, s+1 / T\} \cap[0,1]} \Delta(s^{\prime}),
\end{align}
as the minimum gap of neighboring steps. 
As mentioned previously, if the adiabatic evolution operator corresponds to $U_A(s)$, the discrete adiabatic theorem can be formally stated as follows: 
\begin{thm}[Theorem 3 of \cite{costa2022optimal}]
    \label{thm:dqc}
    Suppose that the operators $W(s)$ satisfy $\|D^{(k)} W(s)\| \le c_k(s) / T^k$ for $k=1,2$, as in Eq.~\eqref{eq:smooth-W}, and $T \geq \max _{s \in[0,1]}\{4 \hat{c}_1(s) / \check{\Delta}(s)\}$. 
    Then, for any time $s$ such that $s T \in \mathbb{N}$, we have
    \begin{align*}
    \|U(s)-U_A(s)\| \leq & \;\frac{12 \hat{c}_1(0)}{T \check{\Delta}(0)^2}+\frac{12 \hat{c}_1(s)}{T \check{\Delta}(s)^2}+\frac{6 \hat{c}_1(s)}{T \check{\Delta}(s)} +305 \sum_{n=1}^{s T-1} \frac{\hat{c}_1(n / T)^2}{T^2 \check{\Delta}(n / T)^3} \\
    &\; +44 \sum_{n=0}^{s T-1} \frac{\hat{c}_1(n / T)^2}{T^2 \check{\Delta}(n / T)^2}  +32 \sum_{n=1}^{s T-1} \frac{\hat{c}_2(n / T)}{T^2 \check{\Delta}(n / T)^2},
    \end{align*}
    where $\hat{c}_k(s)$ and $\check{\Delta}(s)$ are defined in Eqs.~\eqref{eq:def-c_k} and \eqref{eq:def-neighbor-gap}.
\end{thm}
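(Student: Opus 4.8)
The plan is to follow the discrete analogue of Kato's adiabatic construction from Dranov--Kellendonk--Seiler~\cite{dranov1997discrete}, carried out with all constants made explicit as in~\cite{costa2022optimal}. First I would set up the spectral geometry: for each step $n$, the spectrum of the unitary $B(n/T)$ splits into the arcs $\sigma_\Pi(n/T)$ and $\sigma_{\widetilde\Pi}(n/T)$, separated by angular distance at least $\check\Delta(n/T)$, so the projector $\Pi(n/T)$ is a contour integral of the resolvent $(z-B(n/T))^{-1}$ over a loop on the unit circle separating the two arcs, with resolvent norm controlled by $O(1/\check\Delta(n/T))$. Combining this with $\|D^{(1)}B\|\le c_1/T$ and $\|D^{(2)}B\|\le c_2/T^2$ yields $\|D^{(1)}\Pi(n/T)\|\lesssim \hat c_1(n/T)/(T\check\Delta(n/T))$ and a second-difference bound $\|D^{(2)}\Pi(n/T)\|\lesssim \hat c_2(n/T)/(T^2\check\Delta(n/T)^2)+\hat c_1(n/T)^2/(T^2\check\Delta(n/T)^3)$. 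I would then define the ideal discrete evolution $U_A(s)=\prod_{m=0}^{sT-1}B_A(m/T)$, where $B_A(m/T)$ is $B(m/T)$ with its $\Pi/\widetilde\Pi$ off-diagonal blocks removed, composed with a small ``Kato twist'' unitary $T_m$ rotating the eigenspace of $\Pi(m/T)$ onto that of $\Pi((m+1)/T)$; by construction $\Pi(s)=U_A(s)\Pi(0)U_A(s)^\dagger$, and the hypothesis $T\ge \max_s 4\hat c_1(s)/\check\Delta(s)$ ensures the $T_m$ are genuinely unitary and that every Neumann series of the form $(I+O(1/T))^{-1}$ below converges.

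Next I would estimate $\Omega(s):=U(s)^\dagger U_A(s)$. Since $\Omega(0)=I$, write $\Omega(s)-I=\sum_{n=0}^{sT-1}(\Omega_{n+1}-\Omega_n)$, with each increment governed by the error operator $E_n:=B(n/T)^\dagger B_A(n/T)-I$. The key structural fact is that $E_n$ is off-diagonal with respect to the block decomposition $\Pi(n/T)\oplus\widetilde\Pi(n/T)$ up to corrections of order $1/T^2$, and its off-diagonal part can be rewritten, by solving the gap-dependent discrete Sylvester equation, as a first difference $Y_{n+1}-Y_n$ of an operator with $\|Y_n\|\lesssim \hat c_1(n/T)/(T\check\Delta(n/T)^2)$ --- one power of the gap from the resolvent, a second from inverting the Sylvester operator. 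Abel summation, $\sum_n A_n(Y_{n+1}-Y_n)=A_{sT}Y_{sT}-A_0Y_0-\sum_n(A_{n+1}-A_n)Y_{n+1}$, then replaces the naively $O(1)$ sum by: boundary terms $\hat c_1(0)/(T\check\Delta(0)^2)$ and $\hat c_1(s)/(T\check\Delta(s)^2)$ (plus an $\hat c_1(s)/(T\check\Delta(s))$ contribution from the endpoint twist); a sum $\sum_n \hat c_1(n/T)^2/(T^2\check\Delta(n/T)^2)$ coming from $\|A_{n+1}-A_n\|\lesssim \hat c_1(n/T)/T$ together with $\|Y_{n+1}\|$; a sum $\sum_n \hat c_1(n/T)^2/(T^2\check\Delta(n/T)^3)$ picking up an extra gap power from differentiating the resolvent when a second Sylvester solve is performed on the $1/T^2$ remainder; and a sum $\sum_n \hat c_2(n/T)/(T^2\check\Delta(n/T)^2)$ from the genuine second differences of $B$ and of $\Pi$. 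Collecting these with the explicit numerical prefactors reproduces the stated inequality.

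I expect the main obstacle to be exactly this constant bookkeeping rather than any conceptual leap: the discrete calculus is unforgiving because every resolvent estimate, every Neumann inversion of $I+O(1/T)$, and every reorganization of the telescoping and Abel sums multiplies constants, and one must check at each stage that $T\ge 4\hat c_1/\check\Delta$ keeps all relevant operators invertible. The one genuine subtlety is the discrete substitute for ``integrate an oscillatory integral by parts to gain a factor $\epsilon$'': here it is the combination of (i) the gap-dependent Sylvester solve, which turns the off-diagonal error into a discrete derivative, and (ii) Abel summation, which is what converts a sum of $sT$ terms each of size $\sim 1/T$ into a total of size $\sim 1/T$ rather than $O(1)$. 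Everything else --- unitarity of the twists, the block structure of $E_n$, and the second-order difference bounds --- follows routinely from the Step~1 resolvent estimates.
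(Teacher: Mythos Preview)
The paper does not prove this statement: it is quoted verbatim as ``Theorem~3 of~\cite{costa2022optimal}'' and used as a black box in Section~\ref{subsec:qls-solver}, with no proof or proof sketch given. There is therefore nothing in the paper to compare your proposal against.

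That said, your outline is a faithful summary of how the result is actually established in~\cite{costa2022optimal} (building on~\cite{dranov1997discrete}): resolvent bounds on the projectors $\Pi(s)$ via contour integrals, the Kato-twist construction of the ideal discrete evolution $U_A$, rewriting the off-diagonal error as a discrete derivative by solving a gap-dependent Sylvester equation, and Abel summation to convert the sum of $sT$ terms of size $O(1/T)$ into a total of size $O(1/T)$ plus boundary terms. Your identification of the constant bookkeeping as the main labor, and of the Sylvester-solve-plus-Abel-summation step as the conceptual core, is accurate. If you want to fill in the details, the explicit constants $12, 12, 6, 305, 44, 32$ arise from tracking the numerical factors through the resolvent contour (length and distance to spectrum), the Neumann series for the twist unitaries, and the two rounds of summation by parts; the original reference carries this out in full.
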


\subsubsection{The adiabatic Hamiltonian for quantum linear systems}

All quantum linear systems algorithm inspired by the adiabatic quantum computation framework first construct the adiabatic Hamiltonian $H(s)$, given $A\in \mathbb{C}^{N\times N}$ \cite{subacsi2019quantum, an2022quantum, costa2022optimal}. 
Indeed, in all these methods, the adiabatic Hamiltonian $H(s)$ is defined as
\begin{align}
\label{eq:def-H(s)}
    H(s) = \begin{bmatrix}
        0 & A(f(s)) Q_b \\
        Q_b A(f(s)) & 0 
    \end{bmatrix},
\end{align}
where \begin{align*}
    A(f) :=  \begin{bmatrix}
        (1-f) I & f A \\
        f A^{\dagger} & -(1-f ) I
    \end{bmatrix}
\end{align*}and $Q_b:= I_{2N}-\ket{0,b}\bra{0,b}$ is a projection, and \begin{align*}
    f(s):= \frac{\kappa}{\kappa-1}\left(1-\dfrac{1}{(1+s(\sqrt{\kappa}-1))^{2}}\right)
\end{align*}
is a schedule function with $f(0)=0$ and $f(1)=1$. Then $\ket{0,0,b}$ is a zero-eigenstate of $H(0)$ and $\ket{0,1,x}$ is a zero-eigenstate of $H(1)$, where
$$
\ket{x}=\dfrac{A^{-1}\ket{b}}{\|A^{-1}\ket{b}\|},
$$
is the solution to the quantum linear systems.

By choosing $\sigma_\Pi=\{1\}$ and $\sigma_{\widetilde{\Pi}}$ to be the remaining eigenvalues, the ideal adiabatic evolution with even steps takes the state the $0$-eigenstate of $H(0)$ (i.e., $\ket{0,0,b}$) to the $0$-eigenstate of $H(1)$ (i.e., $\ket{0,1,x}$), which encodes the solution. 

In \cite{costa2022optimal}, the authors assume access to a block encoding of $A$. Using this, they construct a block encoding of $H(s)$, and subsequently, a qubitized quantum walk operator, which corresponds to $B(s)$ in the previous subsection. Furthermore, they prove that these quantum walk operators $B(s)$ satisfy the smoothness condition in Eq.~\eqref{eq:smooth-W} with 
\begin{align}
    c_1(s)=2 T(f(s+1 / T)-f(s))
\end{align}
and \begin{align}
c_2(s)= \begin{cases}
2 \max _{\tau \in\{s, s+1 / T, s+2 / T\}}(2|f^{\prime}(\tau)|^2+|f^{\prime \prime}(\tau)|), & 0 \leq s \leq 1-2 / T, \\ 
2 \max _{\tau \in\{s, s+1 / T\}}(2|f^{\prime}(\tau)|^2+|f^{\prime \prime}(\tau)|), & s=1-1 / T,
\end{cases}
\end{align}
and the gap condition in Eq.~\eqref{eq:def-neighbor-gap} with 
\begin{align}
    \check{\Delta}(s)= \begin{cases}
    (1-f(s+3 / T)+f(s+3 / T) / \kappa), & 0 \leq s \leq 1-3 / T, \\ 
    1 / \kappa, & s=1, 1-2 / T, 1-1 / T.
    \end{cases}
\end{align} 
Finally, by Theorem~\ref{thm:dqc}, \cite{costa2022optimal} shows that the error of the adiabatic evolution can be bounded as 
\begin{align}
    \label{eq:adiabatic-error}
    \|U(s)-U_A(s)\| \leq 44864 \frac{\kappa}{T}+\mathcal{O}\Big(\frac{\sqrt{\kappa}}{T}\Big)
\end{align}
for $T \geq \max (\kappa, 39 \sqrt{\kappa})$.

\subsubsection{Our algorithm}
\label{subsubsec:our-approach}

In this work, we consider $B(s)=e^{\i H(s)}$ instead of the quantum walk operators, which is the first major simplification. Suppose there exists a procedure $U_b$ to prepare the state $\ket{0,0,b}$, then by choosing $T=O(\kappa)$, we will show that applying 
$$
B(0/T), \, B(1/T), \, \ldots, \, B(1-1/T)
$$ 
to the initial state $\ket{0,0,b}$ yields a constant approximation of the state $\ket{0,1,x}$. Thus far, we simply have a sequence of Hamiltonian evolutions with different (time-dependent) Hamiltonians. This is a particular instance of the circuit $W$ in Sec.~\ref{sec:interleaved-sequence}, with $H^{(j)}=H((j-1)/T)$ and $V_j=I$.

Then, we use the framework in Sec.~\ref{sec:qsvt-trotter} to approximate the eigenstate filter polynomial of \cite{lin2020optimal} by a Laurent polynomial in $e^{\i x}$. This projects the aforementioned output state onto the zero-eigenstate $\ket{0,1,x}$ of $H(1)$. Overall, the entire quantum circuit incorporates Hamiltonian evolution operators and Pauli rotations, which is again an interleaved sequence circuit $W$. So, we can use Algorithm~\ref{alg: HSVT with trotter} to estimate $\braket{x|O|x}$. Before proving our result formally, let us assume that we have an $n$-qubit operator $A$ that can be expressed as a linear combination of strings of Pauli operators. That is,
\begin{equation}
\label{eq:qls-matrix}
    A=\sum_{j=1}^{L}\lambda_j P_j,
\end{equation}
where $P_j\in \{I, X, Y, Z\}^{\otimes n}$, and $\lambda=\sum_{j}|\lambda_j|$. Alternatively, one could also consider any $A$ that is a sum of local terms that are easy to simulate. Indeed, any $2^n$ dimensional operator can be expressed in this form. The first advantage of this is that each term $P_j$ is now efficiently simulable allowing us to use near-term Hamiltonian simulation techniques for solving this problem. The other advantage is that (as discussed previously) $A$ naturally has this form in physically relevant settings (for instance, Hamiltonians for a wide range of physical systems). This is particularly relevant in scientific computation, where problems involving the inversion of operators of this form, are ubiquitous. This includes Green's function estimation \cite{tong2021fast, wang2024qubit}, solving linear and partial differential equations \cite{berry2017quantum, Childs2021highprecision, linden2022quantum, Krovi2023improvedquantum}, and processing eigenvalues of non-Hermitian systems \cite{low2024qep}.

It can be shown that if \( A \) can be expressed as a linear combination of Pauli operators, then \( H(s) \) can be decomposed into a sum of Hermitian terms, each of which generates a Hamiltonian evolution that can be implemented using \( \widetilde{O}(1) \) time and queries. To handle complex coefficients $\lambda_j$, let us define the operator $Q_j \coloneqq (\Re(\lambda_j) X - \Im(\lambda_j) Y)/|\lambda_j|$, which is both Hermitian and unitary. Indeed, for any $f\in [0,1]$, $A(f)$ can be decomposed as 
$$
    A(f) = \sum_{j=1}^L \Big(|\lambda_j|f(s) Q_j \otimes P_j + \lambda_j(1-f(s)) Z \otimes I \Big).
$$
Then for any $s\in [0,1]$, $H(s)$ admits the decomposition 
    \bea
    H(s) &=& \sum_{j=1}^L \Big(|\lambda_j|f(s) X \otimes Q_j \otimes P_j + \lambda_j(1-f(s)) X\otimes Z \otimes I \Big) \nonumber \\
    && - \sum_{j=1}^L |\lambda_j|f(s)
    \begin{bmatrix}
        0 & (Q_j \otimes P_j) \ket{0,b} \bra{0,b} \\
        \ket{0,b} \bra{0,b} (Q_j \otimes P_j) & 0 
    \end{bmatrix} \label{eq:sec-H(s)} \\
    && - \sum_{j=1}^L \lambda_j (1-f(s))
    \begin{bmatrix}
        0 & (Z \otimes I) \ket{0,b} \bra{0,b} \\
        \ket{0,b} \bra{0,b} (Z \otimes I) & 0 
    \end{bmatrix}. \label{eq:thd-H(s)}
    \eea

We shall demonstrate that $e^{\i H(s)}$ can be efficiently implemented using (higher-order) Trotterization. Formally, we have the following theorem, using which we prove both the correctness and the complexity of our algorithm:
\begin{thm}
\label{thm:qls-our-method}
    Let $Ax=b$ be a system of linear equations, where $A$ is an $N$ by $N$ matrix with $\|A\|=1$ and $\|A^{-1}\|=\kappa$, and let $\varepsilon\in(0,1/2)$ be the precision parameter. Given a unitary oracle $U_b$ that prepares the state $\ket{b}$ and a Pauli decomposition of $A=\sum_{j=1}^L \lambda_j P_j$, Algorithm \ref{alg: HSVT with trotter}, for any observable $O$, estimates 
    $$
   \braket{x|O|x},
    $$ 
    to within an additive error of $\varepsilon\|O\|$, using only $\log(N)+4$ qubits. Let $\lambda = \sum_{j=1}^L |\lambda_k|$. Then the maximum quantum circuit depth is $\widetilde{O}(L(\lambda \kappa)^{1+o(1)})$, while the total time complexity and the number of queries to $U_b$ and $U_b^{\dag}$ are $\widetilde{O}(L(\lambda \kappa)^{1+o(1)}/\varepsilon^2)$. 
    
    If we further assume $O/\gamma$ is block-encoded via a unitary $U_O$ for some $\gamma$ satisfying $\gamma = \Theta(\|O\|)$, the maximum quantum circuit depth, total time complexity, and the number of queries to $U_b$ and $U_b^{\dag}$ are $\widetilde{O}(L(\lambda \kappa)^{1+o(1)}/\varepsilon)$. The number of queries to $U_O$ is $\widetilde{O}(1/\varepsilon)$. Aside from the ancillas used in the block encoding of $O$, we need one additional ancilla qubit.
\end{thm}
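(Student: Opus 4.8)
The plan is to realize the optimal adiabatic‑inspired quantum linear systems algorithm of Costa et al.~\cite{costa2022optimal} as a single instance of the interleaved‑sequence circuit $W$ of Eq.~\eqref{eq:interleaved-sequence-circuit}, and then to run Algorithm~\ref{alg: HSVT with trotter}. The algorithm has three stages — a discretized adiabatic evolution, an eigenstate‑filtering step, and fixed‑point amplitude amplification — and I would show that each stage is an interleaved product of elementary unitaries and Hamiltonian evolutions of the $n+2$‑qubit Hamiltonian $H(s)$ of Eq.~\eqref{eq:def-H(s)}, so that no block encoding and no qubitized walk operator are ever needed.

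First I would replace the qubitized walk operator of \cite{costa2022optimal} by $B(s):=e^{\i H(s)}$. Since $A(f)^{\dagger}A(f)=(1-f)^2I+f^2A^{\dagger}A$ with $f(s)\in[0,1]$, one has $\|A(f)\|\le 1$ and hence $\|H(s)\|\le\|A(f)Q_b\|\le 1<\pi$, so the exponential map is injective on the spectrum of $H(s)$: the $0$‑eigenspace of $H(s)$ (spanned, after even steps, by $\ket{0,1,x}$ when $s=1$) maps to the eigenvalue‑$1$ eigenspace of $B(s)$, and the angular gap of $B(s)$ about $1$ coincides — up to a bounded factor — with the spectral gap of $H(s)$ about $0$, which is $\Omega(1/\kappa)$ by the gap analysis of \cite{costa2022optimal, an2022quantum}. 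Moreover $\|D^{(k)}B(s)\|\le\|D^{(k)}H(s)\|+O(\|D^{(1)}H(s)\|^2)\le c_k(s)/T^k$ with essentially the same $c_1(s)=2T(f(s+1/T)-f(s))$ and $c_2(s)\asymp |f'|^2+|f''|$ as in \cite{costa2022optimal}, because $H(s)$ depends on $s$ only through the scalar schedule $f(s)$. Theorem~\ref{thm:dqc} then yields $\|U(1)\ket{0,0,b}-\ket{0,1,x}\|=O(\kappa/T)$, so choosing $T=O(\kappa)$ (a suitably large constant multiple) gives $\ket{\phi}:=U(1)\ket{0,0,b}$ with $|\braket{0,1,x|\phi}|\ge 3/4$. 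The product $U(1)=\prod_{m=0}^{T-1}e^{\i H(m/T)}$ is already an interleaved sequence with $M=T=O(\kappa)$, all $V_j=I$, and $H^{(j)}=H((j-1)/T)$; by Eqs.~\eqref{eq:sec-H(s)}--\eqref{eq:thd-H(s)} each $H(s)$ is a sum of $O(L)$ terms whose evolutions are Pauli rotations (depth $\widetilde{O}(1)$) together with rank‑one terms built from $\ket{0,b}\bra{0,b}=(I\otimes U_b)\ket{0,0}\bra{0,0}(I\otimes U_b^{\dagger})$, each costing $O(1)$ queries to $U_b,U_b^{\dagger}$, and — crucially, since $0\le f(s)\le 1$ — all term norms are bounded by $|\lambda_j|$, so $\lambda_{\rm comm}=O(\lambda)$ uniformly in $s$.

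Next I would project $\ket{\phi}$ onto $\ket{0,1,x}$. Rather than applying an eigenstate‑filter polynomial of $H(1)$ directly, I would exploit that $B(1)=e^{\i H(1)}$ has its eigenvalue $1$ isolated by an angular gap $\Omega(1/\kappa)$, and approximate the corresponding spectral indicator by a Laurent polynomial in $e^{\i H(1)}$: by Sec.~\ref{sec:qsvt-trotter} and the Appendix (the shifted‑sign/rectangle function is well approximated by a Laurent polynomial in $e^{\i x}$ bounded by $1$ on $\mathbb{T}$), there is such a $P$ with $\deg P=\widetilde{O}(\kappa)$, $P(1)=1-O(\varepsilon)$, and $|P(e^{\i\mu})|\le\varepsilon$ whenever $\mu$ lies at angular distance $\Omega(1/\kappa)$ from $0$. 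By Theorem~\ref{thm_gqsp-trotter-suzuki}, $P(e^{\i H(1)})$ is implemented by a GQSP circuit — an interleaved sequence of single‑qubit rotations and controlled $e^{\pm\i H(1)}$, i.e.\ Hamiltonian evolutions of $\diag(H(1),0)$ and $\diag(0,-H(1))$ — of length $2\deg P+1=\widetilde{O}(\kappa)$; applied to $\ket{\phi}$ it produces an (unnormalized) state of norm $\ge 1/2$ that is $\varepsilon$‑close in direction to $\ket{0,1,x}$. Composing the adiabatic stage with the filter stage and wrapping the result in fixed‑point amplitude amplification~\cite{yoder2014fixed} with $\eta=1/2$ — $O(\log(1/\varepsilon))$ rounds, each reflecting about $\ket{0,0,b}$ (via $U_b$) and about $\ket{0}\bra{0}\otimes I$ (reflections implementable ancilla‑free with $O(n^2)$ gates, or with one reused ancilla) — gives a circuit $W$ that is still an interleaved sequence in the sense of Eq.~\eqref{eq:interleaved-sequence-circuit}, of total length $M=\widetilde{O}(\kappa)$, with $W\ket{0,0,b}$ an $\varepsilon$‑approximation of $\ket{0,1,x}$.

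Finally I would apply Algorithm~\ref{alg: HSVT with trotter} (Theorems~\ref{thm:qsp-with-trotter-and-interpolation} and \ref{thm:qsp-with-trotter-and-interpolation-coherent}) to estimate $\braket{0,1,x|(\ket{0,1}\bra{0,1}\otimes O)|0,1,x}=\braket{x|O|x}$. Plugging in $\Gamma_{\rm avg}=O(L)$, $M=\widetilde{O}(\kappa)$, $\lambda_{\rm comm}=O(\lambda)$, $\tau_H=\widetilde{O}(1)$ and $\tau_{\rm sum}=\widetilde{O}(M)$ gives maximum circuit depth $\widetilde{O}(L(\lambda\kappa)^{1+1/p})$ and total time $\widetilde{O}(L(\lambda\kappa)^{1+1/p}/\varepsilon^2)$, which for $p=\Theta(\log\log(\lambda\kappa))$ become $\widetilde{O}(L(\lambda\kappa)^{1+o(1)})$ and $\widetilde{O}(L(\lambda\kappa)^{1+o(1)}/\varepsilon^2)$; the number of $U_b,U_b^{\dagger}$ queries scales identically since each $U_b$‑bearing Trotter term costs $O(1)$ queries. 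The coherent statement follows from Theorem~\ref{thm:qsp-with-trotter-and-interpolation-coherent}, replacing $\varepsilon^2$ by $\varepsilon$ and using $\widetilde{O}(1/\varepsilon)$ queries to $U_O$. The qubit count is $n=\log N$ system qubits, one qubit for the $A(f)$ block, one for the $H(s)$ block, one GQSP control qubit, and one reused ancilla for the amplitude‑amplification reflections — $\log N+4$ in all, with only one further ancilla (beyond those of $U_O$) in the coherent case. The main obstacle is the first step: showing that Theorem~\ref{thm:dqc} applies verbatim to $B(s)=e^{\i H(s)}$ with the same $\Omega(1/\kappa)$ gap and the same smoothness bounds as the qubitized walk of \cite{costa2022optimal}, so that $T=O(\kappa)$ steps still suffice; once this translation is secured, the remainder is assembling already‑established components and routine complexity bookkeeping.
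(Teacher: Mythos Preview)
Your proposal is correct and follows essentially the same route as the paper: replace the qubitized walk by $B(s)=e^{\i H(s)}$, verify that Theorem~\ref{thm:dqc} applies with the same asymptotic gap and smoothness parameters (the paper does this via variation of parameters rather than your inequality $\|D^{(k)}B\|\le\|D^{(k)}H\|+O(\|D^{(1)}H\|^2)$, but to the same end), filter onto the zero eigenspace of $H(1)$ with a Laurent polynomial in $e^{\i H(1)}$, and run Algorithm~\ref{alg: HSVT with trotter}. The one substantive difference is that you wrap the adiabatic-plus-filter circuit in fixed-point amplitude amplification, whereas the paper simply measures the observable conditioned on the filter succeeding; since the post-filter success amplitude is already $\Omega(1)$, both choices yield the same asymptotic complexity and qubit count. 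One small accounting remark: the paper's fourth ancilla is the Toffoli target needed to implement the rank-one evolutions $e^{\i t\ket{0^{n+1}}\bra{0^{n+1}}}$ arising from the $\ket{0,b}\bra{0,b}$ terms inside every Trotter step, not only for reflections---your count still comes out to $\log N+4$ since this ancilla can be reused for your FPAA reflections, but you should include it in the Trotter implementation as well.
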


\begin{proof} 
We first prove the correctness and then estimate the complexity of our algorithm.
~\\~\\
\textbf{Correctness:~}As mentioned previously, we simply run Algorithm \ref{alg: HSVT with trotter}, as the entire algorithm is a particular instance of the interleaved sequence $W$. We can divide the analysis into two parts. For the first part, we consider $B(s)=e^{\i H(s)}$, and implement the unitary
$$
U=\prod_{j=0}^{T-1} B(j/T).
$$
The $0$-eigenstates of $H(s)$ are now translated to the $1$-eigenstates of $B(s)$. So $\ket{0,0,b}$ is an $1$-eigenstate of $B(0)$ and $\ket{0,1,x}$ is an $1$-eigenstate of $B(1)$. Although there is another zero eigenstate $\ket{1,0,b}$ of $H(0)$ and $H(1)$, it is orthogonal to the eigenstates of interest $\ket{0,0,b}$ and $\ket{0,1,x}$, and the ideal adiabatic evolution operator $U_A(s)$ still maps $\ket{0,0,b}$ to $\ket{0,1,x}$. 

Now, we bound the adiabatic error by showing that the smooth parameters $c_1(s),c_2(s)$ and the gap parameter $\check{\Delta}(s)$ of our Hamiltonian evolution operators $W(s)$ have the same asymptotic scaling as the quantum walk operators considered in \cite{costa2022optimal}. The quantum walk operators have eigenvalues scaling as $e^{\pm\i \arcsin(H(s))}$ while in our case, the Hamiltonian evolution operators are $e^{\i H(s)}$. The eigenvalue gap of $\arcsin(H)$ and $H$ around zero are asymptotically the same, so the gap parameter $\check{\Delta}(s)$ is asymptotically the same for the $B(s)$. By applying the variation-of-parameters formula, the derivative of the Hamiltonian evolution operator $B(s)=e^{\i H(s)}$ is
    $$
    \int_{0}^1 e^{\i u H(s)} \frac{\mathrm{d} H(s)}{\mathrm{d} s} e^{
        \i(1-u) H(s)} \, \mathrm{d}u
    $$ whose norm is bounded by \begin{align*}
        \left\|\frac{\mathrm{d}H(s)}{\mathrm{d}s}\right\|=\left\| \frac{\mathrm{d}H(f)}{\mathrm{d}f}\right\|\left|f'(s)\right|\le \left\| \frac{\mathrm{d}A(f)}{\mathrm{d}f}\right\|\left|f'(s)\right|\le \sqrt{2}\left|f'(s)\right|,
    \end{align*}
    where the second inequality follows from 
    $$\Big\|\Big(\frac{\mathrm{d}A(f)}{\mathrm{d}f}\Big)^{\dagger}\frac{\mathrm{d}A(f)}{\mathrm{d}f}\Big\|=1+\|A\|^2=2.$$ The second derivative of $H(s)$ can be bounded by $\sqrt{2}|f''(s)|$ in the same way. 
    
We can bound the difference in $B(s)$ by 
\begin{align*}
        \|B(s+1/T)-B(s)\|&\le \int_{s}^{s+1/T} \left\|\frac{\mathrm{d}e^{\i H(\tau)}}{\mathrm{d}\tau}\right\| \, \mathrm{d}\tau\\
        &\le \sqrt{2}\int_{s}^{s+1/T} \left|f'(\tau)\right| \, \mathrm{d}\tau\\
        &=\sqrt{2}\left(f(s+1/T)-f(s)\right),
    \end{align*}
    where the equality follows from the fact that $f(s)$ is a monotonically increasing function. Therefore, we can take $$c_1(s)=\sqrt{2}T(f(s+1/T)-f(s))$$
    to satisfy the smoothness condition. The norm of the second derivative of $W(s)$ is bounded by 
    \begin{align*}
         &\int_{0}^1 \left\|\frac{\mathrm{d}e^{\i u H(s)}}{\mathrm{d}s} \frac{\mathrm{d}H(s)}{\mathrm{d}s}e^{\i(1-u) H(s)}\right\|+ \left\|e^{\i u H(s)} \frac{\mathrm{d^2}H(s)}{\mathrm{d}s^2} e^{\i(1-u) H(s)}\right\|+ \left\|e^{\i u H(s)} \frac{\mathrm{d}H(s)}{\mathrm{d}s}\frac{\mathrm{d}e^{\i (1-u)H(s)}}{\mathrm{d}s}\right\| \, \mathrm{d}u\\
        \le\, &\int_{0}^1( 2u |f'(s)|^2+\sqrt{2}|f''(s)|+2(1-u) |f'(s)|^2) \, \mathrm{d}u\\
        =\, &2|f'(s)|^2+\sqrt{2}|f''(s)|.
    \end{align*}
To bound the second difference of $B(s)$, we use Taylor's theorem to write 
    \begin{align*}
        &\left\|B(s+2/T)-2B(s+1/T)+B(s)\right\|\\
        =\, & \Big\| \int_{s+1 / T}^{s+2 / T}(s+2 / T-\tau) B^{\prime \prime}(\tau) d \tau +\int_s^{s+1 / T}(\tau-s) B^{\prime \prime}(\tau) d \tau \Big\| \\
        \le\, &\frac{1}{T^2} \max_{\tau \in [s, s+2 / T]} \left\|B^{\prime \prime}(\tau)\right\| \\
        \le\, & \frac{1}{T^2} \max_{\tau \in \{ s, s+1 / T, s+2 / T\}}(2|f'(\tau)|^2+\sqrt{2}|f''(\tau)|),
    \end{align*}
    where the last inequality follows from the fact that $|f'(\tau)|$ and $|f''(\tau)|$ are monotonically decreasing functions. So, we can set $$
    c_2(s)=\max_{\tau \in \{s, s+1 / T, s+2 / T\}}\left(2\left|f'(\tau)\right|^2+\sqrt{2}\left|f''(\tau)\right|\right),
    $$ to satisfy the smoothness condition. In conclusion, we choose asymptotically the same smooth parameters $c_1(s),c_2(s)$ and the gap parameter $\check{\Delta}(s)$ as the quantum walk operators in \cite{costa2022optimal}. Therefore, by Theorem~\ref{thm:dqc}, the error in the adiabatic evolution is bounded as
    \begin{align*}
        \|U(s)-U_A(s)\| = O(\kappa/T)
    \end{align*}
    in the same way as Eq.~\eqref{eq:adiabatic-error}. 
    We can choose $T=O(\kappa)$ to ensure that $U(1)\ket{0,0,b}$  has constant overlap with the target state $U_A(1)\ket{0,0,b}=\ket{0,1,x}$. This completes the first part.

    For the second part, we project the output state onto the zero eigenstate $\ket{0,1,x}$ of $H(1)$. For this,  we approximately implement the filter function of $H(1)$, by approximating it with a Laurent polynomial in $e^{\i x}$, allowing us to use Theorem \ref{thm_gqsp-trotter-suzuki}. Let $P(z)$ be the Laurent polynomials in Lemma~\ref{lem:tri-approx-filter} with parameters $(\Delta, \varepsilon)=(1/\kappa, \varepsilon)$. Then we have $P(1)\in [1-\varepsilon,1]$ and $P(e^{\i x})\in [-\varepsilon/2, \varepsilon/2]$ for all $x\in [-1,-1/\kappa]\cup[1/\kappa, 1]$, and $\deg(P) = O(\kappa\log(1/\varepsilon))$. Since all non-zero eingenvalues of $H(1)$ are in $[-1,-1/\kappa]\cup[1/\kappa, 1]$, $P(e^{\i H(1)})$ is an $\varepsilon$-approximation of the projection operator onto the zero eigenstates of $H(1)$. Therefore, we can apply Theorem \ref{thm_gqsp-trotter-suzuki} with the Laurent polynomial $P(z)$ to project the output state of the adiabatic evolution onto the zero eigenstate $\ket{0,1,x}$ of $H(1)$ with constant success probability. 
     Both the parts overall consist of two circuits, which is a particular case of the interleaved sequence circuit $W$ in Sec.~\ref{sec:interleaved-sequence}. When combined, they form a new circuit, which is still in the form of $W$. After running this circuit, we can measure the state conditioned on the success of the projection to estimate the expectation value of the observable $O$ with respect to $\ket{x}$. Thus, overall, the quantum circuit only involves Hamiltonian evolution operators and Pauli rotations, and, we can use Algorithm~\ref{alg: HSVT with trotter} to estimate $\braket{x|O|x}$ to within error $\varepsilon\|O\|$. 

Now, we move on to analyzing the complexity of our algorithm.
~\\~\\
\textbf{Complexity:~}The entire interleaved Hamiltonian evolution sequence consists of two parts. The first part is an adiabatic evolution composed of \( T = O(\kappa) \) Hamiltonian evolutions:  
$$
B(0)=e^{\i H(0)}, \,
B(1/T)=e^{\i H(1/T)}, \,
\ldots, \,
B(1-1/T)=e^{\i H(1-1/T)}.
$$ 
The second part is an eigenstate filtering stage, which involves \( 2\deg(P) = O(\kappa \log(1/\varepsilon)) \) interleaved single-qubit Pauli rotations and Hamiltonian evolutions of the form  
$$
e^{\i \ket{0}\bra{0} \otimes H(1)} \qquad\text{and}\qquad e^{-\i \ket{1}\bra{1} \otimes H(1)}.
$$ 
We can indeed implement the Hamiltonian evolution using Trotter methods, as $H(s)$ can be written as a sum of simple terms as shown previously.  

Note that for any two states $\ket{y},~\ket{z}$ we have 
$$
\begin{bmatrix}
    0 & \ket{z} \bra{y} \\ \ket{y} \bra{z} & 0
    \end{bmatrix}
    =\begin{bmatrix}
        \ket{z} \\ \ket{y}
    \end{bmatrix} \begin{bmatrix}
        \bra{z} & \bra{y}
    \end{bmatrix}
    - \begin{bmatrix}
        \ket{z} \bra{z} & 0 \\ 0 & \ket{y} \bra{y} 
    \end{bmatrix}.
$$ 
So, the Hermitian terms in Eq.~\eqref{eq:sec-H(s)} and Eq.~\eqref{eq:thd-H(s)} can be decomposed in this manner with $\ket{z} = (Q_j\otimes P_j)\ket{0,b}, (Z\otimes I)\ket{0,b}$ and $\ket{y}=\ket{0,b}$.

The Hamiltonian evolution 
\begin{align*}
    &\exp\left(\i t
    \begin{bmatrix}
    (Q_j\otimes P_j)\ket{0,b} \\
    \ket{0,b}
    \end{bmatrix}
    \begin{bmatrix}
    \bra{0,b} (Q_j\otimes P_j) & \bra{0,b}
    \end{bmatrix}
    \right) \\
    =\ &\begin{bmatrix}
        (Q_j\otimes (P_jU_b)) &  0\\
        0 & I\otimes U_b
    \end{bmatrix}\exp\left(2\i t \ket{+} 
    \ket{0^{n+1}}
    \bra{+}\bra{0^{n+1}}
    \right)\begin{bmatrix}
        (Q_j\otimes (U_b^{\dag} P_j)) &  0\\
        0 & I\otimes U_b^{\dag}
    \end{bmatrix}
    \end{align*}
    can be implemented using one query to $U_b$ and $U_b^{\dag}$, along with one $(n+3)$-qubit Toffoli gate and a constant number of elementary gates. The Hamiltonian evolution \begin{align*}
        &\exp\left(\i t \begin{bmatrix}
        (Q_j\otimes P_j)\ket{0,b}\bra{0,b}(Q_j\otimes P_j)  & 0 \\ 0 & \ket{0,b}\bra{0,b}
    \end{bmatrix}\right) \\
    =\ & \begin{bmatrix}
        (Q_j\otimes (P_jU_b)) &  0\\
        0 & I\otimes U_b
    \end{bmatrix}\exp\left(\i t (I\otimes \ket{0^{n+1}}\bra{0^{n+1}})
    \right)\begin{bmatrix}
        (Q_j\otimes (U_b^{\dag} P_j)) &  0\\
        0 & I\otimes U_b^{\dag}
    \end{bmatrix}
    \end{align*} can be implemented using one query to $U_b$ and $U_b^{\dag}$, along with one $(n+2)$-qubit Toffoli gate and a constant number of elementary gates. Using the $2k$-th order Trotter-Suzuki formula $S_{2k}$ to approximately implement the Hamiltonian evolution, by Theorem~\ref{thm:qsp-with-trotter-and-interpolation}, the maximum quantum circuit depth is \begin{align*}
        \widetilde{O}\left((25/3)^k k^2 L \big(\kappa \lambda_{\rm comm}\big)^{1+1/2k} \right),
    \end{align*}
    while the total time complexity and number of queries to $U_b$ and $U_b^{\dagger}$ are 
    \begin{align*}
        \widetilde{O}\left((25/3)^k k^2 L \big(\kappa \lambda_{\rm comm}\big)^{1+1/2k}/\varepsilon^2 \right).
    \end{align*} 
    By Eq.~\eqref{eq:para comm lambda}, we have $\lambda_{\rm comm}=O(\lambda)$. 
    Choosing a large enough value of $k$ (e.g., $k = \log(\log(\kappa/\varepsilon))$) gives the claimed results. We need $n+2$ qubits to encode $H(s)$, one qubit for implementing the filter function using GQSP, and one qubit to implement the Toffoli gates, so the total number of qubits is $n+4$. 
    
    On the other hand, if $O$ is measured coherently, simplifying the complexity in Theorem~\ref{thm:qsp-with-trotter-and-interpolation-coherent} gives the claimed results. Since the ancilla qubit for Toffoli gates can be reused, we only need one more ancilla qubit. 
\end{proof}

We compare the complexity of our algorithm with the state-of-the-art procedure of \cite{costa2022optimal} in Table \ref{table1:comparison-qls}. In \cite{costa2022optimal}, the block encoding of $A$ alone uses $\lceil\log_2 L\rceil$ ancilla qubits. In addition to this, their algorithm needs six more ancilla qubits: (i) four ancilla qubits for constructing the quantum walk operator out of the block encoding of $A$, and (ii) two ancilla qubits for an LCU-based method to implement the eigenstate filter. In order to estimate $\braket{x|O|x}$ to within an $\eps$-additive accuracy, the maximum quantum circuit depth is $\widetilde{O}(L\lambda\kappa)$, and $O(\|O\|^2/\eps^2)$ classical repetitions, if $O$ is measured incoherently, while using $\lceil\log_2 L\rceil+6$ ancilla qubits.

Our method has a quasi-linear dependence on $\kappa$ and a polylogarithmic dependence on $1/\eps$, without assuming any block encoding access, or using quantum walks. Moreover, we only need four ancilla qubits in all. To the best of our knowledge, this is the first quantum algorithm for quantum linear systems that satisfies the following three properties: (i) does not assume block encoding access of $A$, (ii) achieves a quasi-linear dependence on $\kappa$,  and (iii) uses only a constant number of ancilla qubits. Note that randomized quantum linear systems algorithms have recently been developed, tailored to near-term quantum devices. These methods require a single ancilla qubit but has a worse total time complexity of $\widetilde{O}(\lambda^2\kappa^6/\eps^2)$.

In many applications, such as Green's function estimation, linear regression, and solving differential equations, quantum linear systems are used as a subroutine. That is, the state $\ket{x}$ is the input to a different part of a larger quantum circuit. Our algorithm can also be used as a subroutine, as it basically implements an interleaved sequence circuit of unitaries and Hamiltonian evolution operators. Note that the final unitary in Eq.~\eqref{eq:interleaved operator} can be a new quantum algorithm altogether. Thus, the measurement of the observable (and consequently Richardson extrapolation) can be delayed until the end of the full circuit. 

\subsection{Ground state property estimation}
\label{subsec:app-ground-state}

Consider a Hamiltonian $H$ with spectral decomposition $H=\sum_{i} \xi_i \ket{v_i}\bra{v_i}$, where the eigenvalues $\xi_i$ are arranged in increasing order. The {\em ground state} of $H$ is the eigenstate $\ket{v_0}$ corresponding to the smallest eigenvalue $\xi_0$, known as the {\em ground state energy}. Preparing the ground state and estimating the ground state energy of a Hamiltonian are fundamental tasks in quantum computing and quantum many-body physics. In general, these problems are QMA-hard \cite{kempe2006complexity}, but they become tractable when additional information is available. Specifically, one assumes: (i) knowledge of the spectral gap $\Delta$ of the Hamiltonian, and (ii) access to a procedure that prepares an initial guess state $\ket{\phi_0}$ that has at least $\gamma$ overlap with the ground state $\ket{v_0}$, i.e.\ $|\braket{\phi_0|v_0}|\geq \gamma$. 

Furthermore, we assume that any Hamiltonian $H$ of interest can be expressed as a sum of $L$ local terms, namely, \ $H=\sum_{k=1}^L H_k$, where each $H_k$ is a local Hamiltonian that can be exponentiated in constant time. We now formally state the ground state property estimation problem:

\begin{prob}[Ground state property estimation]
\label{prob:ground state}
    Let $H=\sum_{k=1}^L H_k$ be a Hamiltonian whose spectrum is contained in $[-1, 1]$, and $\lambda=\sum_{k}\|H_k\|$. Suppose $\varepsilon\in(0,1/2)$ and that we can efficiently prepare an initial guess state $\left|\phi_0\right\rangle$ such that it has an overlap of at least $\gamma$ with the ground state $\left|v_0\right\rangle$ of $H$, i.e.\ 
    $\left|\braket{\phi_0|v_0} \right| \geq \gamma$, for $\gamma\in (0,1)$. Furthermore, assume that there is a spectral gap $\Delta$ separating the ground state energy $\xi_0$ from the rest of the spectrum. Then, the goal is to compute $\bracket{v_0}{O}{v_0}$ up to additive error $\varepsilon\|O\|$ for any given observable $O$. 
\end{prob}
Before discussing our algorithms, we briefly discuss some prior results for this problem.

\subsubsection{Prior work}
The state-of-the-art algorithms for this problem utilize LCU or QSVT, but assume block encoding access to the underlying Hamiltonian \cite{ge2019faster, lin2020nearoptimalground}. However, these algorithms are only optimal in terms of the number of queries made to the underlying block-encoded operator. 
For instance, the algorithm in \cite{lin2020nearoptimalground} requires a circuit depth of $\widetilde{O}(L\lambda\Delta^{-1}\gamma^{-1}\eps^{-1})$
to estimate $\braket{v_0|O|v_0}$, while using $O(\log L)$ ancilla qubits and several sophisticated controlled operations. The ancilla requirement of \cite{ge2019faster} is even higher. 

Constructing a block encoding of $H$ can be prohibitively costly in many settings. As a result, several recent results on ground state property estimation have explored alternatives that avoid this overhead. These approaches, however, assume access to a different oracle: the time-evolution operator $U=e^{\ iH}$, and measure cost in terms of the number of queries to $U$ \cite{lin2022heisenberg, dong2022ground, wan2022randomized,Wang2023quantumalgorithm, zhang2022computingground, lin2022heisenberg}. This framework is commonly referred to as the Hamiltonian-evolution access model, and most of these algorithms require only $O(1)$ ancilla qubits. To convert such oracle-based procedures into fully end-to-end quantum algorithms without incurring additional ancilla overhead, one can approximate $U$ by higher order Trotterization. using higher-order Trotterization. However, this substitution generally worsens the precision dependence, from the optimal Heisenberg scaling of  $1/\eps$ to a suboptimal $1/\eps^{1+o(1)}$.

Our approach builds on the algorithm proposed in Ref.\cite{dong2022ground}, which introduces quantum eigenvalue transformation using unitaries (QET-U)—a framework for implementing real polynomial transformations of unitary matrices, assuming oracle access to $e^{iH}$. Notably, the QET-U circuit structure can be viewed as an interleaved sequence of single-qubit $Z$ rotations and Hamiltonian evolutions, which corresponds to a specific instance of the more general circuit architecture $W$ (introduced in Sec.~\ref{sec:interleaved-sequence}). This connection suggests that end-to-end quantum algorithms for ground state property estimation and related problems can be naturally realized within our framework, where classical extrapolation techniques allow us to recover the optimal Heisenberg scaling.

\subsubsection{Our algorithm}
\label{subsec:gspe-algorithm}
The ground state property estimation algorithm of \cite{dong2022ground} implements a polynomial approximation to the shifted sign function using the QET-U framework. In particular, they assume knowledge of some $\mu>0$ such that 
 \[
    \xi_0 \leq \mu-\Delta / 2<\mu+\Delta / 2 \leq \xi_1.
\]
Then, the shifted sign function is defined as
 \be
    \label{eq_shifted sign}
    \theta(x)= \begin{cases}1, & x \leq \mu, \\ 0, & x>\mu. \end{cases}
    \ee
Applying a polynomial approximation to the shifted sign function effectively filters out excited-state components, projecting the initial state $\ket{\phi_0}$ close to the ground state $\ket{v_0}$. 

For our algorithm, we show that the shifted sign function can be approximated by a Laurent polynomial in $e^{\i x}$ that remains bounded in $\mathbb{T}$ (see Lemma \ref{lem:approx-shift-sign}). This key observation enables the application of Theorem~\ref{thm:gqsp-trotter-suzuki-aa}, allowing us to estimate the expectation value $\braket{v_0|O|v_0}$. This leads to the following result:

\begin{thm}[Ground state property estimation using GQSP with Trotter]
\label{thm_ground state property estimation using HSVT with Trotter}
    Assume that $\|H\|\le 1$ and there is a spectral gap $\Delta$ separating the ground state energy $\xi_0$ from the first excited state energy $\xi_1$ such that
    \[
    \xi_0 \leq \mu-\Delta / 2<\mu+\Delta / 2 \leq \xi_1,
    \]
    for some given $\mu$. Let $k$ be a positive integer. Then there is a quantum algorithm for the ground state property estimation problem \ref{prob:ground state} using only two ancilla qubits. The maximum circuit depth is
    \[
    \widetilde{O} \Big({L} \Big(\frac{\lambda_{\rm comm}}{ \Delta\gamma}\Big)^{1+o(1)} \Big),
    \]
    and the time complexity is \[
    \widetilde{O} \left(\frac{L}{\varepsilon^2} \left(\frac{\lambda_{\rm comm}}{ \Delta\gamma}\right)^{1+o(1)} \right),
    \]
    where $\lambda_{\mathrm{comm}}$ defined in Eq.~\eqref{Lambda} for all $H^{(\ell)}=H$ scales with the nested commutators of $H$. If we further assume $O/\eta$ is block-encoded via a unitary $U_O$ for some $\eta$ satisfying $\eta = {\Theta}(\|O\|)$, we can reduce the time complexity to \[
        \widetilde{O} \left(\frac{L}{\varepsilon} \left(\frac{\lambda_{\rm comm}}{ \Delta\gamma}\right)^{1+o(1)} \right),
    \] using one additional ancilla qubit. 
\end{thm}

\begin{proof}
    According to Lemma \ref{lem:approx-shift-sign}, there is a Laurent polynomial $P(z)$ of degree $O(\Delta^{-1} \log((\gamma\varepsilon)^{-1}))$ such that $\|P(z)\|_{\mb T} \leq 1$, and\footnote{Recall that $\mathbb{T}=\{ x\in \mathbb{C} : |x|=1 \}$.}
    \be
    \label{eq_ground state property}
    \begin{cases}
        |P(e^{\i x})-1| \leq \gamma\varepsilon/12, & \quad x\in\left[-1,\mu-\Delta / 2 \right], \\
        |P(e^{\i x})| \leq \gamma \varepsilon/12, & \quad x\in\left[\mu+\Delta / 2,1\right].
    \end{cases}
    \ee
    Therefore, we obtain a bounded polynomial $P$ such that $|P(e^{\i x})-\theta(x)| \leq \gamma\varepsilon$ for all $[-1,\mu-\Delta / 2] \cup [\mu+\Delta / 2, 1]$. Then we have \begin{align*}
        \|P(e^{\i H})\ket{\phi_0}-\braket{\psi_0|\phi_0}\ket{\psi_0}\| &= \| P(e^{\i H})\ket{\phi_0}-\theta(H) \ket{\phi_0} \| \le \gamma \eps,
    \end{align*}
    where the first equation follows from $\theta(H) \ket{\phi_0} =\braket{\psi_0|\phi_0}\ket{\psi_0}$ and the first inequality follows from that $H$ has no eigenvalues in $(\mu-\Delta/2, \mu+\Delta/2)$. Let $\ket{\psi} = P(e^{\i H})\ket{\phi_0}/\|P(e^{\i H})\ket{\phi_0}\|$. The distance between the two states after normalization is bounded as \[
    \bigg\|\ket{\psi}-\frac{\braket{\psi_0|\phi_0}\ket{\psi_0}}{|\braket{\psi_0|\phi_0}|}\bigg\| \le \frac{2\gamma\varepsilon}{12|\braket{\psi_0|\phi_0}|} \le \frac{\varepsilon}{6},
    \]
    and hence \[
    \|\bra{\psi}O\ket{\psi}-\bra{\psi_0}O\ket{\psi_0}\|\le \|O\|\eps/2
    \]
    By Theorem~\ref{thm:gqsp-trotter-suzuki-aa}, we can estimate $\bra{\psi}O\ket{\psi}$ to accuracy $\varepsilon\|O\|/2$ with maximum circuit depth
    \bes
    \widetilde{O}\big((25/3)^k k^2 L \big( \lambda_{\rm comm}\Delta^{-1}\gamma^{-1}\big)^{1+1/2k} \big),
    \ees
    and total time complexity 
    \bes
    \widetilde{O}\big((25/3)^k k^2 L \big(\lambda_{\rm comm}\Delta^{-1}\gamma^{-1}\big)^{1+1/2k}/\eps^2 \big),
    \ees
    which gives an estimate of $\bra{\psi_0}O\ket{\psi_0}$ to within an additive accuracy of $\|O\|\eps$. 

    Following Theorem~\ref{thm:qsp-with-trotter-and-interpolation-coherent}, it is also possible to measure $O$ using IQAE to achieve optimal complexities. This would increase the quantum circuit depth by a factor of $1/\eps$, but reduce the total time complexity to $\widetilde{O}(1/\eps)$, and we need one more ancilla qubit for the Hadamard test circuit to encode the expectation value into an amplitude. For both coherent and incoherent estimation, choosing a large enough $k$ results in the stated complexities.    
\end{proof}

We have summarized the complexity of our algorithm alongside prior work in Table~\ref{table:ground state problem}. For a sufficiently large choice of Trotter order $k$, our algorithm achieves nearly the same time complexity as the state-of-the-art method of \cite{lin2020nearoptimalground}, despite their stronger assumption of quantum access to a block encoding of $H$, which can be expensive to construct. For example, consider Hamiltonians expressed as a linear combination of unitaries: $H=\sum_k \lambda_k H_k$, where $\|H_k\|=1$, and each unitary $H_k$ is easy to implement. Constructing a block encoding of such an $H$ using the LCU method requires $\lceil \log_2L\rceil$ ancillas, along with sophisticated multi-qubit controlled operations. Additionally, estimating expectation values via iterative quantum amplitude estimation demands at least four more ancilla qubits, further increasing the hardware overhead.

We now compare the complexity of Dong et al.\cite[Theorem 11]{dong2022ground} with that of Theorem \ref{thm_ground state property estimation using HSVT with Trotter}. To ensure a fair comparison, we assume that the oracle $U=e^{iH}$ in \cite{dong2022ground} is implemented via the $2k$-th order Trotter-Suzuki product formula $S_{2k}(t)$. Furthermore, as with our algorithm the expectation value $\braket{v_0|O|v_0}$ can be estimated either incoherently or coherently. This leads to different quantum circuit depths and time complexities, as shown below:
\begin{itemize}
    \item For incoherent estimation, the ground state property estimation algorithm of \cite{dong2022ground}, each coherent run requires a quantum circuit depth of 
    $$
    \widetilde{O}\left(L \big(\lambda_{\rm comm}' \Delta^{-1}\gamma^{-1}\big)^{1+o(1)}/\eps^{o(1)}\right),
    $$
    while $\widetilde{O}(\eps^{-2})$ classical repetitions are needed, overall. Here, $\lambda'_{\rm comm}=O(\sum_{k}\|H_k\|)$ scales with the nested commutators of $H_k$, but is different from the expression of $\lambda_{\rm comm}$ in Remark \ref{remark: A bound of Lambda}. Note that the circuit depth per coherent run is exponentially slower than our algorithm.

    \item For coherent estimation, iterative quantum amplitude estimation results in a total time complexity of 
    $$
    \widetilde{O}\left(L \big(\lambda_{\rm comm}' \Delta^{-1}\gamma^{-1}\big)^{1+o(1)}/\eps^{1+o(1)}\right),
    $$
    which is polynomially worse than our algorithm, which manages to attain the Heisenberg scaling of $1/\eps$.
\end{itemize}

\section{Discussion and open problems}
\label{sec:discussion}
In this work, we have introduced a new approach to Quantum Singular Value Transformation (QSVT) that eliminates block encodings while maintaining near-optimal complexity. By utilizing direct Hamiltonian evolution with higher-order Trotterization and Richardson extrapolation, we have significantly reduced the ancilla overhead, using only a single ancilla qubit instead of the standard $O(\log L)$. Our results provide a versatile and general framework for implementing polynomial transformations of Hermitian and non-Hermitian matrices, making QSVT more hardware-efficient and accessible for near-term quantum devices.

Moreover, our framework extends beyond QSVT, offering a general technique for mitigating errors in any quantum circuit that is an interleaved sequence of arbitrary unitary operations and Hamiltonian evolution. We applied this to develop near-optimal quantum algorithms for quantum linear systems and ground-state property estimation without assuming any block encoding access. Our work opens avenues for future research:

\begin{itemize}
\item {\bf Optimal cost of the interleaved sequence circuit:} For the interleaved sequence of arbitrary unitaries and Hamiltonian evolution operators considered in Sec.~\ref{sec:interleaved-sequence}, an interesting direction of future research would be to develop a method that estimates the desired expectation value with a cost that scales strictly linearly in $M$ (instead of quasi-linear, as in our case) while satisfying the three desirable features of our method: (i) commutator scaling, (ii) no ancilla overhead, (iii) not needing any oracular access to $H$. Such a method would attain optimal circuit depth for QSVT (as $M$ translates to the polynomial degree in the case of QSVT, which is a proven lower bound \cite{gilyen2019quantum}) without using block encodings or more ancilla qubits. 

\item {\bf Generalization to CPTP maps:} Another possible direction of research would be to generalize the interleaved sequence circuit in Sec.~\ref{sec:interleaved-sequence} to incorporate completely positive trace preserving (CPTP) maps. For instance, an interleaved sequence of CPTP maps and Hamiltonian evolutions capture open quantum systems dynamics. Indeed, quantum collision models or repeated interaction maps provide a rich framework to simulate complex open quantum system dynamics \cite{cattaneo2021collision, ciccarello2022collisionreview}. Here, the environment is a sum of discrete sub-environments. Each sub-environment interacts (or \emph{collides}) with the system, one by one, for a certain time, before being traced out. The sequence of interactions can be simulated on a quantum computer, with the underlying circuit being an interleaved sequence of tracing out of the corresponding sub-environment (a CPTP map) and the evolution of the total (system+environment+interaction) Hamiltonian for some time. Thus, a generalization of our interleaved sequence circuit would allow for simulating a wide variety of open systems dynamics, such as Lindblad evolution \cite{ding2024simulating, pocrnic2023quantum} and beyond \cite{ciccarello2013collision}, with near-optimal complexity, using near-term Hamiltonian simulation procedures and very low hardware overhead. Concretely, is it possible to establish a bound on the circuit depth of interleaved sequences with CPTP maps and Hamiltonian evolution operators that match known bounds for Lindblad evolution \cite{cleve2017efficient,ding2024simulating, pocrnic2023quantum}?



\item{\bf Interplay between physical and algorithmic errors:~} Our work shows that errors due to Trotterization can be efficiently mitigated, even when it is integrated into a larger, more general quantum circuit. Thus, is it possible to develop a theory of mitigating algorithmic errors? Moreover, quantum algorithms are also subjected to physical errors leading to noisy quantum circuits. It would be interesting to understand the interplay between these two sources of errors, and if they can be jointly mitigated while still retaining quantum advantage \cite{mohammadipour2025direct}.

\end{itemize}

Our results provide a new perspective on QSVT and quantum polynomial transformations, demonstrating that block encoding is not a necessary requirement. The framework of the interleaved-sequence circuit raises new fundamental questions in quantum complexity and algorithm design, suggesting possible connections to many different quantum algorithmic paradigms. We anticipate that this approach will lead to deeper insights into quantum algorithm design, both in near-term and fully fault-tolerant settings.

\vspace{.3cm}

\noindent {\bf Data Availability:} No new data were created during this study.

\noindent {\bf Conflict of Interest:} The authors declare that they have no financial interests.

\section*{Acknowledgments}
CS and YZ are supported by the National Key Research Project of China under Grant No. 2020YFA0712300.
SC and SH acknowledge funding from the Ministry of Electronics and Information Technology (MeitY), Government of India, under Grant No. 4(3)/2024-ITEA. SC also acknowledges support from Fujitsu Ltd, Japan, and IIIT Hyderabad via the Faculty Seed Grant. TL and XW were supported by the National Natural Science Foundation of China (Grant Numbers 62372006 and 92365117), and the Fundamental Research Funds for the Central Universities, Peking University. XW thanks the University of California, Berkeley, for its hospitality during his visit, where a part of this work was conducted. We thank Andrew M. Childs and Ronald de Wolf for valuable feedback on this work.


\begin{appendices}

\section[Lower bound on the number of ancilla qubits required for block encoding]{Lower bound on the ancilla requirement for block encoding an LCU}
\label{sec-app:ancilla-lb-lcu}

Consider an operator $A$ that can be written as a linear combination of $L$ unitaries, i.e., $A=\sum_{j=1}^L \lambda_j P_j$. 
Suppose that each $P_j$ can be implemented efficiently on a quantum computer, and are of the same dimension. 
In this section, we prove that in order to block encode $A$ exactly, $\Omega(\log L)$ ancilla qubits are required. 
To derive such a general result, we consider a family of quantum circuits $U$, which can possibly be block encodings of $A$. 
More specifically, we consider the following quantum circuit  (see Fig. \ref{circuit of U}):
\be
\label{eq-app:-general-unitary-be-lcu}
U = (V_1 \otimes I) c_0\text{-}P_1 (V_2 \otimes I) c_0\text{-}P_2 \cdots 
(V_L \otimes I) c_0\text{-}P_L (V_{L+1} \otimes I),
\ee
where $V_1,\ldots,V_{L+1}$ act on some $m$ ancilla qubits. 
Additionally, the controlled unitaries $c_0\text{-}P_j$ (which is controlled by 0) are defined as:
\be
\label{eq-app:controlled-unitary}
c_0\text{-}P_j := 
\ket{0^m} \bra{0^m} \otimes P_j + \sum_{k \neq 0^m} \ket{k} \bra{k} \otimes I
= \ket{0^m} \bra{0^m} \otimes (P_j-I) + I_m \otimes I.
\ee

The above quantum circuit is indeed quite general, as it includes LCU as a special case, which will be explained shortly. 
Additionally, since all $P_j$ act on the same space as $A$, it is reasonable to assume they act on the same qubit space as $A$ in \eqref{eq-app:-general-unitary-be-lcu}. 
To make the circuit more general, we can introduce some unitaries $V_1,\ldots,V_{L+1}$ acting on the ancilla qubits.

\begin{figure}[h]
\[
\Qcircuit @C=1em @R=1em {
\lstick{\ket{0^m}} & {/} \qw 
      & \gate{V_1} & \ctrlo{1} & \gate{V_2} & \ctrlo{1} & \qw 
      & \push{\kern-0.85em\mbox{$\cdots$}} & \ctrlo{1} & \gate{V_{L+1}} & \qw \\
\lstick{\ket{\psi}} & {/} \qw 
      & \qw       & \gate{P_1} & \qw       & \gate{P_2} & \qw 
      & \push{\kern-0.85em\mbox{$\cdots$}} & \gate{P_L} & \qw         & \qw
}
\]
\caption{The quantum circuit of $U$ defined via \eqref{eq-app:-general-unitary-be-lcu}.}
\label{circuit of U}
\end{figure}

Recall from Sec. \ref{subsec-prelim:blk-encoding-qsvt} that the quantum circuit for LCU is
\be \label{app:lcu}
(\texttt{PREP}^\dag \otimes I) \Big( \sum_{j=1}^L \ket{j} \bra{j} \otimes P_j \Big) (\texttt{PREP} \otimes I),
\ee
where $\texttt{PREP}$ is a unitary acting on $\ell = \lceil \log L\rceil$ ancilla qubits and maps $\ket{0^\ell}$ to $\sum_j \sqrt{\lambda_j/\lambda} \, \ket{j}$, where $\lambda=\sum_j \lambda_j$. 
Here, we assumed $\lambda_j>0$ without loss of generality. 
In the LCU circuit \eqref{app:lcu}, each $P_j$ is controlled by $\ket{j}$. 
However, we can easily find a permutation $S_j$ that maps $\ket{j}$ to $\ket{0^\ell}$, such that controlled operation cont$_j$-$P_j$ equals
$(S_j^\dag\otimes  I) c_0\text{-}P_j (S_j\otimes I)$. 
Using this observation, it is easy to see that LCU is indeed a special case of the circuit in \eqref{eq-app:-general-unitary-be-lcu}, see Fig. \ref{circuit of LCU}.

\begin{figure}[h]
\[
\Qcircuit @C=1em @R=1em {
\lstick{\ket{0^\ell}} & {/} \qw  & \gate{\texttt{PREP}} 
   & \gate{\texttt{CIRC}} & \ctrlo{1} & \gate{\texttt{CIRC}} & \ctrlo{1} & \qw & \push{\kern-0.85em\mbox{$\cdots$}}  
   & \gate{\texttt{CIRC}} & \ctrlo{1} 
   & \gate{\texttt{CIRC}} & \gate{\texttt{PREP}^\dagger} & \qw \\
\lstick{\ket{\psi}}  & {/} \qw & \qw 
   & \qw               & \gate{P_1} & \qw               & \gate{P_2} & \qw & \push{\kern-0.85em\mbox{$\cdots$}} 
   & \qw               & \gate{P_L} & \qw              & \qw                & \qw
}
\]
\caption{The quantum circuit of LCU, where $\ell = \lceil \log L\rceil$, $\texttt{PREP}~\ket{0^\ell}=\sum_{j=1}^{L}\sqrt{\lambda_{j}/\lambda} \, \ket{j}$ with $\lambda=\sum_{j}\lambda_j$, and $\texttt{CIRC}\coloneqq\sum_{j=1}^{L} \ket{j-1}\bra{j} +\ket{L}\bra{0}$ is a permutation.}
\label{circuit of LCU}
\end{figure}

In this section, we aim to prove the following theorem:

\begin{thm}
\label{thm:app-ancilla-lower-bound}
If $U$ is an exact block encoding of $A$, that is,
$$
\left(\bra{0^{\otimes m}}\otimes I\right) U \left(\ket{0^{\otimes m}}\otimes I\right)=A,
$$
then the number of ancilla qubits, $m$, satisfies $m=\Omega(\log L)$.
\end{thm}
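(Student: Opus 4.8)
The plan is to expand the block-encoding condition by evaluating the circuit $U$ on computational-basis states of the ancilla and tracking which Pauli terms $P_j$ can appear in the top-left block. First I would write each $c_0\text{-}P_j = I_m\otimes I + \ket{0^m}\bra{0^m}\otimes(P_j-I)$ and substitute into the product defining $U$. Expanding the product over $j=1,\dots,L$ gives a sum of $2^L$ terms indexed by subsets $S\subseteq[L]$, where the term for $S$ contributes a product of the ancilla operators $V_1, \dots, V_{L+1}$ interleaved with $\ket{0^m}\bra{0^m}$ (at the positions in $S$) tensored with a product of the $(P_j-I)$ for $j\in S$ in the natural order. Sandwiching with $\bra{0^m}\cdots\ket{0^m}$ on the ancilla, each such term produces a scalar (a matrix element of a product of the $V_i$'s and the projector $\ket{0^m}\bra{0^m}$) times a product of the operators $(P_{j}-I)$ over $j\in S$. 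The key structural observation is that the operator content of the top-left block is therefore a linear combination of products $\prod_{j\in S}(P_j - I)$, but crucially these products are \emph{ordered} and can involve at most $L$ factors; however what matters for the counting argument is that the ancilla matrix elements are entries of $m$-qubit unitaries, hence there are at most a bounded amount of "independent" scalar data.

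The core of the argument is a dimension/rank count. I would choose the family $\{P_j\}$ to be, say, $L$ distinct single-qubit Pauli-type unitaries acting on disjoint qubits (mirroring the construction $A=\sum_j \lambda_j X_j$ used in the lower-bound section on sampling, Theorem~\ref{thm_sample lower bound}), so that $A = \sum_{j=1}^L \lambda_j P_j$ and the $P_j$ are linearly independent and the products $\prod_{j\in S}(P_j-I)$ span a space of dimension growing with the number of distinct subsets realizable. The exact block-encoding identity then forces a system of polynomial/linear equations: the coefficient of $P_j$ (the "degree-one" part, isolating $S=\{j\}$ after subtracting lower-order contributions) must equal $\lambda_j$ for every $j$, while all "higher-order" operator components $\prod_{j\in S}(P_j-I)$ with $|S|\ge 2$ must cancel. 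The scalars controlling all of these are matrix elements $\bra{0^m} V_{a_1} \ket{0^m}\bra{0^m} V_{a_2}\ket{0^m}\cdots$ of the $m$-qubit unitaries $V_1,\dots,V_{L+1}$ — equivalently, entries of the first column/row vectors $V_i\ket{0^m}$ and $\bra{0^m}V_i$. These live in $\mathbb{C}^{2^m}$. I would argue that producing $L$ independent nonzero "degree-one" coefficients $\lambda_1,\dots,\lambda_L$ while killing all the cross-terms requires the vectors $\{V_i\ket{0^m}\}$ to have enough linearly independent components, which forces $2^m \gtrsim L$, i.e.\ $m = \Omega(\log L)$.

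More concretely, the cleanest route is probably via an information-theoretic or rank argument: define $w_i := \bra{0^m} V_i \ket{0^m}\in\mathbb{C}$ and the "off-diagonal" contributions, and show by induction on $L$ that the top-left block of $U$, as a function of the $P_j$'s, can be written in a "triangular" form whose $P_j$-linear part is a product/telescoping expression in the $w_i$'s and in the vectors $V_i\ket{0^m}$. The constraint that this linear part hits $L$ prescribed independent values $\lambda_j$, together with the unitarity constraints $\|V_i\ket{0^m}\| = 1$, then yields a lower bound on the ambient dimension $2^m$. An alternative, and possibly simpler, is a direct counting argument: if $m < \log_2 L$, then $2^m < L$, and I would exhibit an explicit collision — two different coefficient assignments in the $P_j$-expansion that the circuit cannot distinguish — contradicting the requirement that all $L$ coefficients $\lambda_j$ be freely (and independently) matched; formally, the map from the $(L+1)$ ancilla unitaries to the tuple $(\lambda_1,\dots,\lambda_L)$ of realizable top-left $P_j$-coefficients would be shown to factor through a space of dimension $O(2^m) = o(L)$, so it cannot be surjective onto a generic $L$-tuple.

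\textbf{Main obstacle.} The hard part will be handling the ordered, non-commutative products $\prod_{j\in S}(P_j - I)$ and the interleaved ancilla operators cleanly enough that the "degree-one in $P_j$" part genuinely decouples and can be read off as a simple function of the ancilla data. One must be careful that cancellations among the higher-order terms $|S|\ge 2$ do not conspire to feed back into the linear part (this is why choosing the $P_j$ on disjoint qubits, so that $\{I\}\cup\{$products of the $P_j\}$ are linearly independent as operators, is valuable — it makes the operator-component decomposition unambiguous). A secondary subtlety is making precise what "exact block encoding" forces: one needs that the linear functional extracting the $P_j$-coefficient of the top-left block is well-defined, and that the $L$ resulting scalar equations, together with the unitarity of the $V_i$, are genuinely incompatible with $2^m < cL$ for a suitable constant $c$ — this is where a careful rank or volume/dimension count (rather than a slick one-liner) will be needed.
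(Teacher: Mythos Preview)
Your setup matches the paper's: the expansion $c_0\text{-}P_j = I_m\otimes I + \ket{0^m}\bra{0^m}\otimes(P_j-I)$, the sum over subsets $S\subseteq[L]$, and (equivalent to the paper's ``the equality holds for all unitaries $P_j$'') the choice of $P_j$ on disjoint qubits so that the products $\prod_{j\in S}(P_j-I)$ are linearly independent as operators. This correctly forces the scalar attached to each $S$ to match the target: the $|S|=1$ coefficients equal $\lambda_j$ and the $|S|\ge 2$ coefficients vanish.

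The gap is in how you propose to extract the dimension bound. Your parameter-counting route --- that the map from $(V_1,\ldots,V_{L+1})$ to $(\lambda_1,\ldots,\lambda_L)$ ``factors through a space of dimension $O(2^m)$'' --- does not hold as stated: the $V_i$ are $2^m\times 2^m$ unitaries carrying $\Theta(L\cdot 4^m)$ real parameters in total, so there is no shortage of degrees of freedom, and the degree-one identities $\lambda_j = \bra{0^m}V_1\cdots V_j\ket{0^m}\cdot\bra{0^m}V_{j+1}\cdots V_{L+1}\ket{0^m}$ by themselves impose nothing on $2^m$. The leverage comes entirely from combining the $|S|=2$ \emph{vanishing} with the $|S|=1$ \emph{non-vanishing}. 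Since each $\lambda_j\neq 0$, both outer factors above are nonzero; the $|S|=2$ coefficient for $\{j_1<j_2\}$ then forces the inner factor to vanish, i.e.\ $\bra{0^m}V_{j_1+1}\cdots V_{j_2}\ket{0^m}=0$ for all $1\le j_1<j_2\le L$. But these scalars are precisely the pairwise inner products of the unit vectors $\eta_a:=V_2\cdots V_a\ket{0^m}\in\mathbb{C}^{2^m}$ (with $\eta_1:=\ket{0^m}$), since $\langle\eta_a,\eta_b\rangle=\bra{0^m}V_{a+1}\cdots V_b\ket{0^m}$ for $a<b$. Hence $\eta_1,\ldots,\eta_L$ are orthonormal in $\mathbb{C}^{2^m}$, which immediately gives $2^m\ge L$. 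The paper phrases this as a strictly increasing chain of subspaces of $(\mathbb{C}^{2^m})^*$ orthogonal to $\ket{0^m}$; either way, the step your plan is missing is the explicit extraction of these orthogonality relations in the ancilla Hilbert space from the degree-two constraints, rather than a parameter or surjectivity count on the ancilla unitaries.
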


\begin{proof}
From Eq.~\eqref{eq-app:controlled-unitary}, we have
\[
(V_j \otimes I) c_0\text{-}P_j = 
V_j \ket{0^m} \bra{0^m} \otimes (P_j-I) + V_j\otimes I.
\]
Expanding $A = \left(\bra{0^{\otimes m}}\otimes I\right) U \left(\ket{0^{\otimes m}}\otimes I\right)$, we obtain:
\bes
\begin{aligned}
& \quad \sum_{j=1}^L \lambda_j P_j \\
& =\; \bra{0^m} V_1 V_2 \cdots V_{L+1} \ket{0^m} \cdot I \\
&+ \sum_{j=1}^L \bra{0^m} V_1 \cdots V_j \ket{0^m} \bra{0^m}  V_{j+1} \cdots V_{L+1} \ket{0^m} \cdot (P_j-I)\\
&+ \sum_{1\leq j_1<j_2\leq L} \bra{0^m} V_1 \cdots V_{j_1} \ket{0^m} \bra{0^m}  V_{j_1+1} \cdots V_{j_2} \ket{0^m} \bra{0^m} V_{j_2+1} \cdots V_{L+1} \ket{0^m} \cdot (P_{j_1}-I) (P_{j_2}-I)  \\
&+ \cdots \cdots \\
&+ \sum_{1\leq j_1<\cdots<j_\ell \leq L}
\bra{0^m} V_1 \cdots V_{j_1} \ket{0^m} \Big(\prod_{r=2}^{\ell} \bra{0^m} V_{j_{r-1}+1} \cdots V_{j_r} \ket{0^m}\Big)
\bra{0^m}  V_{j_{\ell}+1} \cdots  V_{L+1} \ket{0^m} \cdot \prod_{r=1}^{\ell} (P_{j_r}-I) \\
&+ \cdots \cdots \\
&+ \prod_{r=1}^{L+1} \bra{0^m} V_r \ket{0^m} \cdot \prod_{r=1}^L (P_r-I). \\
\end{aligned}
\ees
The above equality should hold for all unitaries $P_1,\ldots,P_L$, so we obtain a series of equations:
\beas
&& \prod_{r=1}^{L+1} \bra{0^m} V_r \ket{0^m} =0 \\
&& \cdots \cdots \\
&& \bra{0^m} V_1 \cdots V_{j_1} \ket{0^m} \Big(\prod_{r=2}^{\ell} \bra{0^m} V_{j_{r-1}+1} \cdots V_{j_r} \ket{0^m}\Big)
\bra{0^m}  V_{j_{\ell}+1} \cdots  V_{L+1} \ket{0^m}=0 \\
&& \cdots \cdots \\
&& \bra{0^m} V_1 \cdots V_{j_1} \ket{0^m} \bra{0^m}  V_{j_1+1} \cdots V_{j_2} \ket{0^m} \bra{0^m} V_{j_2+1} \cdots V_{L+1}\ket{0^m} = 0 \\
&& \bra{0^m} V_1 \cdots V_j \ket{0^m} \bra{0^m}  V_{j+1} \cdots V_{L+1} \ket{0^m} = \lambda_j \neq 0 \\
&& \bra{0^m} V_1V_2 \cdots V_{L+1} \ket{0^m}  =  \lambda_1+\cdots+\lambda_L.
\eeas
Apart from the last $L+1$ equations, all others lead to some orthogonalities. 
Indeed, from induction, it is easy to see that we have $L(L-1)/2$ such equations: 
\beas
&& \bra{0^m} V_2 \ket{0^m} = \bra{0^m} V_3 \ket{0^m} = \cdots = \bra{0^m} V_{L} \ket{0^m} = 0, \\
&& \bra{0^m} V_2V_3 \ket{0^m} = \bra{0^m} V_3V_4 \ket{0^m} = \cdots = \bra{0^m} V_{L-1}V_{L} \ket{0^m} = 0, \\
&& \bra{0^m} V_2V_3V_4 \ket{0^m} = \bra{0^m} V_3V_4V_5 \ket{0^m} = \cdots = \bra{0^m} V_{L-2}V_{L-1}V_{L} \ket{0^m} = 0, \\
&& \cdots \cdots \\
&& \bra{0^m} V_2\cdots V_{L-1}V_{L} \ket{0^m} = 0.
\eeas
If we consider
\beas
\mathcal{H}_2 &=& \text{Span}\{\bra{0^m} \mathcal{S}_2 \} , \quad \mathcal{S}_2 = \{V_2\} \\
\mathcal{H}_3 &=& \text{Span} \{\bra{0^m} \mathcal{S}_3 \}, \quad \mathcal{S}_3=\{V_2, V_3, V_2V_3\} \\
\mathcal{H}_4 &=& \text{Span} \{\bra{0^m} \mathcal{S}_4 \}, \quad \mathcal{S}_4=\{V_2, V_3, V_4, V_2V_3, V_3 V_4, V_2V_3V_4\}  \\
&& \cdots \cdots \\
\mathcal{H}_{j} &=& \text{Span} \{\bra{0^m} \mathcal{S}_{j} \}, \quad 
\mathcal{S}_{j}= \{V_{j},  \mathcal{S}_{j-1}, \mathcal{S}_{j-1} V_{j} \}.
\eeas
Then, we can show that $\dim(\mathcal{H}_{j}) \geq \dim(\mathcal{H}_{j-1}) + 1 $. 
This can be proved by contradiction. 
Note that $\mathcal{H}_{j-1} \subseteq \mathcal{H}_{j}$. If $\dim(\mathcal{H}_{j}) = \dim(\mathcal{H}_{j-1}) $, then $\mathcal{H}_{j} = \mathcal{H}_{j-1}$. 
This means $\bra{0^m} V_j \in \text{Span} \{\bra{0^m} \mathcal{S}_{j-1}\} = \text{Span} \{\bra{0^m} \mathcal{S}_{j-1} V_j\}$. This means $\bra{0^m} \in \mathcal{H}_{j-1}$, which is a contradiction because  $\bra{0^m}$ is orthogonal to $\mathcal{H}_{j-1}$. From this we conclude that $\dim(\mathcal{H}_{L}) \geq L-1$, which implies $m=\Omega(\log L)$.
\end{proof}

\section{Distance between two quantum states}
\label{appA:Distance between two quantum states}

First, consider that there exist two operators $P$ and $Q$ such that $\norm{P-Q}\leq \gamma$. We demonstrate that the expectation value of $O$ with respect to $P\rho P^{\dag}$ is not far off from the expectation value of $O$ with respect to $Q\rho Q^{\dag}$, for any density matrix $\rho$. More precisely, we prove
$$
\left|\Tr[O~P\rho P^{\dag}]-\Tr[O~Q\rho Q^{\dag}]\right|\leq 3\norm{O}\gamma
$$
for $\|P\|\le 1$.
This was proven in \cite{chakraborty2024implementing}, but here we provide a simpler proof. Let us recall the tracial version H\"{o}lder's inequality, which is stated below for completeness:
\begin{lem}[Tracial version of H\"{o}lder's inequality \cite{ruskai1972inequalities}]
\label{thm:holder}
Define two operators $A$ and $B$ and parameters $p,q\in [1,\infty]$ such that $1/p+1/q =1 $. Then the following holds:
$$
\Tr[A^{\dag}B]\leq \norm{A}_p \norm{B}_q.
$$
\end{lem}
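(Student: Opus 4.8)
The plan is to derive the tracial (Schatten) H\"older inequality from the elementary scalar H\"older inequality via singular value decompositions. All operators here lie in $M_n(\mathbb{C})$, so I would work in finite dimension throughout, and since the right-hand side is nonnegative it suffices to bound $|\Tr[A^\dagger B]|$. The first step is to recall \emph{von Neumann's trace inequality}: for any $X,Y\in M_n(\mathbb{C})$, writing their singular values in nonincreasing order $\sigma_1(X)\ge\sigma_2(X)\ge\cdots$ and $\sigma_1(Y)\ge\sigma_2(Y)\ge\cdots$, one has $|\Tr[XY]|\le\sum_{i=1}^n\sigma_i(X)\,\sigma_i(Y)$. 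Applying this with $X=A^\dagger$ and $Y=B$, and using $\sigma_i(A^\dagger)=\sigma_i(A)$, gives $|\Tr[A^\dagger B]|\le\sum_{i=1}^n\sigma_i(A)\,\sigma_i(B)$.

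The second step is purely scalar: apply the classical H\"older inequality to the two finite nonnegative sequences $(\sigma_i(A))_{i=1}^n$ and $(\sigma_i(B))_{i=1}^n$ with conjugate exponents $1/p+1/q=1$, obtaining $\sum_{i=1}^n\sigma_i(A)\sigma_i(B)\le\bigl(\sum_i\sigma_i(A)^p\bigr)^{1/p}\bigl(\sum_i\sigma_i(B)^q\bigr)^{1/q}$. Recognizing $\|A\|_p=(\Tr[|A|^p])^{1/p}=(\sum_i\sigma_i(A)^p)^{1/p}$ as exactly the Schatten $p$-norm (with the usual conventions that $\|\cdot\|_1$ is the trace norm and $\|\cdot\|_\infty=\|\cdot\|$ is the spectral norm), chaining the two estimates yields $|\Tr[A^\dagger B]|\le\|A\|_p\|B\|_q$, which is the claim. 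The endpoint pair $\{p,q\}=\{1,\infty\}$ is then either recovered as the limit $p\to\infty$ of the above or checked directly: $|\Tr[A^\dagger B]|\le\sum_i\sigma_i(A)\sigma_i(B)\le\sigma_1(A)\sum_i\sigma_i(B)=\|A\|_\infty\|B\|_1$.

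The main obstacle is that this reduction rests on von Neumann's trace inequality, which is classical but not entirely trivial, so I would make it self-contained in one of two ways. Route (i): prove it directly by writing $X=U\Sigma_X V^\dagger$ and $Y=V'\Sigma_Y U'^\dagger$ in SVD form, so that $\Tr[XY]=\Tr[\Sigma_X W\Sigma_Y W']$ for unitaries $W=V^\dagger V'$, $W'=U'^\dagger U$; expanding, $|\Tr[XY]|$ is bounded by $\sum_{i,j}\sigma_i(X)\sigma_j(Y)\,|D_{ij}|$ with $D$ the Hadamard product of two unitaries, whose entrywise-modulus matrix is doubly substochastic, and the supremum over such matrices is attained at a permutation (Birkhoff), which forces the decreasing-order pairing of singular values. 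Route (ii): avoid it entirely via complex interpolation (Hadamard three-lines): normalizing $\|A\|_p=\|B\|_q=1$ and using polar decompositions $A=U|A|$, $B=V|B|$ so that $\Tr[A^\dagger B]=\Tr[|A|\,T\,|B|]$ with $T=U^\dagger V$ unitary, define the entire, strip-bounded function $F(z)=\Tr\bigl[|A|^{pz}\,T\,|B|^{q(1-z)}\bigr]$ on $0\le\Re z\le 1$; bound $|F(it)|\le\||B|^q\|_1=1$ and $|F(1+it)|\le\||A|^p\|_1=1$ using $|\Tr[CD]|\le\|C\|\,\|D\|_1$ and the fact that imaginary powers of positive operators are partial isometries, and conclude $|\Tr[A^\dagger B]|=|F(1/p)|\le 1$. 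I expect route (i) to be the cleaner to write out fully, with route (ii) being more self-contained but requiring care in setting up the analytic operator-valued family; either way, once that single ingredient is secured, the remaining steps are routine.
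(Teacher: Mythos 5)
The paper does not prove this lemma at all: it is imported as a black box with a citation to Ruskai's paper, so there is no in-paper argument to compare yours against. Your proposal is a correct and entirely standard derivation of the finite-dimensional Schatten--H\"older inequality: reducing $|\Tr[A^\dagger B]|$ to $\sum_i\sigma_i(A)\sigma_i(B)$ via von Neumann's trace inequality and then applying scalar H\"older to the singular-value sequences is exactly how this is usually done, and your separate treatment of the endpoint pair $\{p,q\}=\{1,\infty\}$ (where the $\Tr[|A|^p]^{1/p}$ formula does not literally apply) is the right touch. Both of your routes to von Neumann's inequality are sound. In route (i), to make it airtight you should spell out that the matrix $|D_{ij}|=|W_{ij}|\,|W'_{ji}|$ is doubly substochastic by Cauchy--Schwarz applied to the rows and columns of the unitaries $W,W'$, that one then extends it to a doubly stochastic matrix before invoking Birkhoff, and that the rearrangement inequality is what forces the decreasing-order pairing of singular values at an extremal permutation. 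Route (ii) is also fine in finite dimension provided you handle possible zero singular values (e.g.\ by working on the supports or perturbing $|A|,|B|$ by $\epsilon I$ and taking limits), and note that boundedness of $F$ on the strip is automatic there. With either completion, the argument is complete and matches the cited result.
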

Here $\norm{X}_p$ corresponds to the Schatten $p$-norm of the operator $X$. For the special case of $p=\infty$ and $q=1$, the statement of Lemma \ref{thm:holder} can be rewritten as
\begin{equation}
\label{eq:holder-special-case}
\Tr[A^{\dag}B] \leq \norm{A}_{\infty} \norm{B}_1=\norm{A} \norm{B}_1.
\end{equation}

\begin{thm}
\label{thm:distance-expectation}
Suppose $P$ and $Q$ are operators such that $\norm{P-Q}\leq \gamma$ for some $\gamma\in [0,1]$. Furthermore, let $\rho$ be any density matrix and $O$ be some Hermitian operator with spectral norm $\norm{O}$. Then, if $\norm{P}\leq 1$, the following holds:
$$
\left|\Tr[O~P\rho P^{\dag}]-\Tr[O~Q\rho Q^{\dag}]\right| \leq 3\norm{O}\gamma.
$$
\end{thm}
\begin{proof}
Using Eq.~\eqref{eq:holder-special-case}, we have
\begin{align}\label{eq:robustness-exp-inequality}
|\Tr[O~P\rho P^{\dag}]-&\Tr[O~Q\rho Q^{\dag}]|\leq \norm{O}\cdot \|P\rho P^{\dag}-Q\rho Q^{\dag}\|_1
\end{align}
For the second term in the RHS of Eq.~\eqref{eq:robustness-exp-inequality}, we can successively apply Eq.~\eqref{eq:holder-special-case} to obtain:
\begin{align*}
\bigg\|P\rho P^{\dag}-Q\rho Q^{\dag}\bigg\|_1&=\norm{P\rho P^{\dag}-P\rho Q^{\dag}+P\rho Q^{\dag}-Q\rho Q^{\dag}}_1\\
                &\leq \norm{P\rho}_1\norm{P-Q}+\norm{P-Q}\norm{\rho Q}_1\\
                &\leq \norm{P}\norm{P-Q}+\norm{Q}\norm{P-Q}~~~~~~~~~~~~~~~~(\text{As~}\norm{\rho}_1=1)\\
                &\leq \left(\|P\|+\norm{P-Q}+\norm{P}\right)\cdot \norm{P-Q}\\
                &= 2\norm{P}\norm{P-Q}+\norm{P-Q}^2.
\end{align*}
Now, substituting this upper bound back in the RHS of Eq.~\eqref{eq:robustness-exp-inequality}, we obtain
\begin{align*}
\left|\Tr[O~P\rho P^{\dag}] - \Tr[O~Q\rho Q^{\dag}]\right| &\leq \big\|O\big\| \big\|P-Q\big\|^2 + 2\big\|O\big\|\big\|P\big\|\big\|P-Q\big\|\\
&\leq \gamma^2\big\|O\big\|+2\big\|O\big\|\big\|P\big\|\gamma \nonumber
\leq 3\gamma\big\|O\big\|.
\end{align*}
This completes the proof.
\end{proof}

\section{Functions approximated by Laurent polynomials}
\label{sec-app:function-laurent}

In this section, we show that several important functions can be approximated by Laurent polynomials in $e^{\i x}$, bounded in $\mathbb{T}=\{x\in \mathbb{C}: |x|=1\}$. All these functions are implementable by our methods in Sec.~\ref{sec:qsvt-trotter}. We begin with the shifted sign function.
~\\~\\
\textbf{Shifted sign function:~}For any $\Delta\in (0, 1)$, a trigonometric approximation of $\sgn(\sin(x))$ over $[-\pi+\Delta, -\Delta]\cup [\Delta, \pi-\Delta]$ is given by \cite{wang2023quantum}. Since $x\in [-\pi,\pi]$, this is also a trigonometric approximation of $\sgn(x)$.
\begin{lem}[Lemma S3 of \cite{wang2023quantum}]
\label{lem:approx-sign}
    For any $\Delta', \varepsilon'\in (0,1)$, there exists a Laurent polynomial $P(z)$ of $\deg(P) = O(\log(1/\varepsilon')/\Delta')$ such that 
    \begin{align*}
        &|P(z)|\le 1 && \text{ for all } z\in \mathbb{T}, \\
        &|P(e^{\i x}) -\sgn(x)|\le \varepsilon' && \text{ for all } x\in [-\pi+\Delta', -\Delta']\cup [\Delta', \pi-\Delta'].
    \end{align*}
\end{lem}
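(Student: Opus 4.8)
This lemma is a direct citation from \cite{wang2023quantum}, so rather than reproducing the full argument I will sketch the natural route to it, which is a standard Fourier-analytic construction. The plan is to approximate the sign function on $[-\pi,\pi]$ by truncating and smoothing the Fourier series of the square wave $\sgn(\sin x)$. Concretely, the square wave has the well-known sine series
\[
\sgn(\sin x) = \frac{4}{\pi}\sum_{k=0}^{\infty}\frac{\sin((2k+1)x)}{2k+1},
\]
and a naive truncation at $k=K$ already converges uniformly on any interval bounded away from the jump points $\{0,\pm\pi\}$, with error $O(1/(K\Delta'))$ at distance $\Delta'$ from a jump (by Dirichlet-kernel/partial-summation estimates). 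To get the exponentially better degree $O(\log(1/\varepsilon')/\Delta')$ one instead convolves the sign function with a smooth, positive, rapidly-decaying kernel (a Jackson-type kernel, or the periodized Gaussian, or a Fejér–de la Vallée-Poussin kernel) of bandwidth $O(\log(1/\varepsilon')/\Delta')$; the convolution is automatically a trigonometric (Laurent in $z=e^{\i x}$) polynomial of that degree, it stays in $[-1,1]$ because it is an average of the values $\pm 1$ against a probability density, and on the region at distance $\ge \Delta'$ from each jump it differs from $\sgn(x)$ by at most the tail mass of the kernel outside a $\Delta'$-window, which is $\varepsilon'$ by the choice of bandwidth.

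First I would fix the kernel. A clean choice is $K_N(x) = c_N\bigl(\tfrac{\sin(Nx/2)}{\sin(x/2)}\bigr)^4$ (the squared Fejér kernel), normalized so $\int_{-\pi}^\pi K_N = 1$; this is a non-negative trigonometric polynomial of degree $2N-2$ with the tail bound $\int_{\Delta'\le|x|\le\pi} K_N(x)\,dx = O(1/(N\Delta')^3)$. Setting $N = \Theta(\log(1/\varepsilon')/\Delta')$ makes this tail at most $\varepsilon'/2$. Second, define $P(e^{\i x}) := (K_N * \sgn)(x) = \int_{-\pi}^\pi K_N(x-y)\sgn(\sin y)\,dy$; since $\sgn(\sin y)$ is itself a (formal) Laurent series and $K_N$ is a Laurent polynomial, the convolution is a genuine Laurent polynomial of degree $O(N) = O(\log(1/\varepsilon')/\Delta')$. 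Third, the bound $|P(e^{\i x})|\le 1$ on $\mathbb T$ is immediate from $K_N\ge 0$, $\int K_N = 1$, and $|\sgn(\sin y)|\le 1$; extending the bound from the real circle to all $z\in\mathbb T$ is the same statement since $z\in\mathbb T$ means $z=e^{\i x}$ for real $x$. Fourth, for $x$ in the good region $[-\pi+\Delta',-\Delta']\cup[\Delta',\pi-\Delta']$, write $P(e^{\i x}) - \sgn(x) = \int_{-\pi}^\pi K_N(x-y)\bigl(\sgn(\sin y) - \sgn(x)\bigr)\,dy$; on the sub-window $|x-y|<\Delta'$ the integrand vanishes because $\sgn(\sin y) = \sgn(x)$ there (no jump point of $\sgn(\sin\cdot)$ lies within $\Delta'$ of such an $x$, using that the jumps are at $0,\pm\pi$ and $x$ is $\Delta'$-far from all of them, interpreting the circle correctly at the $\pm\pi$ boundary), and on $|x-y|\ge\Delta'$ the integrand is bounded by $2K_N(x-y)$, so the whole integral is at most $2\int_{|u|\ge\Delta'}K_N(u)\,du \le \varepsilon'$.

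The main obstacle — really the only point requiring care — is the geometry near the endpoints $\pm(\pi-\Delta')$ of the allowed region: there the relevant jump of the periodic function $\sgn(\sin\cdot)$ is the one at $x=\pm\pi\equiv\mp\pi$, and one must check that the $\Delta'$-window around $x$ genuinely avoids it on the circle $\mathbb R/2\pi\mathbb Z$, which it does precisely because the statement excludes $[\pi-\Delta',\pi]$ and $[-\pi,-\pi+\Delta']$. A secondary bookkeeping point is pinning down the constants in the kernel tail estimate so that the degree comes out as the clean $O(\log(1/\varepsilon')/\Delta')$ rather than something worse; the cubic decay of the squared-Fejér tail is exactly what converts the $\varepsilon'/2$ target into a logarithmic dependence on $1/\varepsilon'$. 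Since everything else is routine and the result is quoted verbatim from \cite{wang2023quantum}, in the paper I would simply cite their Lemma S3 and, if desired, append the above kernel construction as a one-paragraph remark for self-containedness.
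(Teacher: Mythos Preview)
The paper does not prove this lemma at all --- it simply cites \cite{wang2023quantum}, Lemma S3 --- so there is no ``paper's own proof'' to compare against beyond the bare citation. Your instinct to cite and optionally sketch is therefore exactly right in spirit.

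However, your sketch contains a quantitative error that would prevent the argument from closing. You choose the squared Fej\'er kernel $K_N(x)=c_N\bigl(\sin(Nx/2)/\sin(x/2)\bigr)^4$ and assert that its tail $\int_{|x|\ge\Delta'}K_N$ is $O\bigl((N\Delta')^{-3}\bigr)$, and then write that ``the cubic decay of the squared-Fej\'er tail is exactly what converts the $\varepsilon'/2$ target into a logarithmic dependence on $1/\varepsilon'$.'' This is backwards: a polynomial tail bound $O((N\Delta')^{-3})\le\varepsilon'$ forces $N\Delta'\gtrsim(1/\varepsilon')^{1/3}$, i.e.\ $N=\Theta\bigl((1/\varepsilon')^{1/3}/\Delta'\bigr)$, which is polynomial in $1/\varepsilon'$, not logarithmic. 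No fixed power of the Fej\'er kernel can yield $\deg(P)=O(\log(1/\varepsilon')/\Delta')$; polynomial tails always produce polynomial dependence on $1/\varepsilon'$.

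To get the stated degree you need a kernel (or an equivalent construction) with \emph{exponentially} decaying tails. The standard routes are: (i) approximate $\sgn$ by a scaled error function $\mathrm{erf}(x/\Delta')$ --- i.e.\ convolve with a Gaussian of width $\Theta(\Delta')$ --- and then truncate its rapidly converging Fourier/Chebyshev expansion at degree $O(\log(1/\varepsilon')/\Delta')$; or (ii) convolve with a periodized Gaussian and truncate its Fourier series, using that Gaussian Fourier coefficients decay like $e^{-ck^2}$. You mentioned the periodized Gaussian as an option before settling on Fej\'er; that earlier choice is the correct one.
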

Then, we obtain the following:
\begin{lem}[Trigonometric approximation of the shifted sign function]
\label{lem:approx-shift-sign}
    For any $\Delta, \varepsilon\in (0,1)$ and $\mu\in[-1,1]$ such that $\Delta \le 2\min\{|\mu-1|, |\mu+1|\}$, there exists a Laurent polynomial $P(z)$ of $\deg(P) = O(\log(1/\varepsilon)/\Delta)$ such that 
    \begin{align*}
        &|P(z)|\le 1 && \text{ for all } z\in \mathbb{T}, \\&P(e^{\i x}) \in [1-\varepsilon, 1] && \text{ for all } x\in [-1,\mu-\Delta/2 ], \\
        &P(e^{\i x}) \in [0, \varepsilon] && \text{ for all } x\in [\mu+\Delta/2,1 ].
    \end{align*}
\end{lem}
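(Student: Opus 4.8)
The plan is to reduce the shifted sign function to the ordinary sign function via an affine reparametrization of the argument, and then invoke Lemma~\ref{lem:approx-sign}. First I would define the shift map. Observe that the shifted sign function $\theta(x)$ (equal to $1$ for $x\le\mu$ and $0$ for $x>\mu$) satisfies $2\theta(x)-1 = -\sgn(x-\mu)$ on the relevant set. Since we want to work with Laurent polynomials in $e^{\i x}$ rather than $e^{\i(x-\mu)}$, and the two differ only by multiplication by the fixed complex constant $e^{-\i\mu}$ (which preserves boundedness on $\mathbb{T}$), I would apply Lemma~\ref{lem:approx-sign} to the variable $y = x-\mu$, obtaining a Laurent polynomial $\tilde P(z)$ with $|\tilde P(z)|\le 1$ on $\mathbb{T}$ and $|\tilde P(e^{\i y})-\sgn(y)|\le\varepsilon'$ for $y$ in $[-\pi+\Delta',-\Delta']\cup[\Delta',\pi-\Delta']$, for suitable $\Delta',\varepsilon'$ to be fixed.

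Next I would check that the interval conditions match. We need $x\in[-1,\mu-\Delta/2]$ to force $y=x-\mu\in[-1-\mu,-\Delta/2]$, which must sit inside $[-\pi+\Delta',-\Delta']$; similarly $x\in[\mu+\Delta/2,1]$ gives $y\in[\Delta/2,1-\mu]\subseteq[\Delta',\pi-\Delta']$. This is where the hypothesis $\Delta\le 2\min\{|\mu-1|,|\mu+1|\}$ enters: it guarantees $\mu-\Delta/2\ge -1$ and $\mu+\Delta/2\le 1$ so these sub-intervals are non-empty and, together with $\mu\in[-1,1]$, that the shifted endpoints $\pm(1+\mu)$ or so stay within distance $\pi$ of the origin (since $|y|\le 1+|\mu|\le 2<\pi$), hence safely inside $[-\pi,\pi]$ away from the endpoints. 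So I would set $\Delta'=\Delta/2$ and $\varepsilon'=\varepsilon$, and the degree bound $O(\log(1/\varepsilon')/\Delta') = O(\log(1/\varepsilon)/\Delta)$ is exactly what is claimed.

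Then I would build $P$ from $\tilde P$. Let $Q(z) := \tfrac12\bigl(1 - e^{\i\mu}\cdot(\text{the Laurent polynomial in }z\text{ representing }\tilde P(e^{-\i\mu}z))\bigr)$ — more concretely, if $\tilde P(w)=\sum_j a_j w^j$, set $P(z) := \tfrac12\bigl(1 - \sum_j a_j e^{-\i j\mu} z^j\bigr)$, so that $P(e^{\i x}) = \tfrac12(1-\tilde P(e^{\i(x-\mu)}))$. On the set $x\le\mu-\Delta/2$ we have $\tilde P(e^{\i(x-\mu)})\approx\sgn(x-\mu)=-1$, so $P(e^{\i x})\approx 1$; on $x\ge\mu+\Delta/2$ we get $\tilde P\approx +1$, so $P\approx 0$; the $\varepsilon$-approximation of $\sgn$ carries over to give $P(e^{\i x})\in[1-\varepsilon/2,1+\varepsilon/2]$ and $[-\varepsilon/2,\varepsilon/2]$ respectively on the two intervals — I would tighten $\varepsilon'=\varepsilon/2$ and clip if necessary to land in $[1-\varepsilon,1]$ and $[0,\varepsilon]$ exactly (e.g. by composing with a trivial truncation, or just by bookkeeping the constants, since $P$ is real-valued on $\mathbb{T}$ when $\tilde P$ is). For the global bound: $|P(z)|\le\tfrac12(1+|\tilde P(e^{-\i\mu}z)|)\le\tfrac12(1+1)=1$ for all $z\in\mathbb{T}$, which is the required bound.

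The main obstacle — and it is a mild one — is the endpoint bookkeeping: ensuring that the two target sub-intervals, after the shift by $\mu$, are both non-degenerate and strictly interior to the window $[-\pi+\Delta',-\Delta']\cup[\Delta',\pi-\Delta']$ on which Lemma~\ref{lem:approx-sign} is valid, and that the output of the affine transformation $\tfrac12(1-\tilde P)$ is genuinely sandwiched in $[1-\varepsilon,1]$ and $[0,\varepsilon]$ rather than merely $\varepsilon$-close to $1$ and $0$ (which would allow slight negativity or values slightly above $1$). Both are handled by choosing $\Delta'=\Delta/2$, $\varepsilon'$ a small constant multiple of $\varepsilon$, and using that $P$ is real on $\mathbb{T}$; no genuinely new estimate is needed beyond what Lemma~\ref{lem:approx-sign} already provides.
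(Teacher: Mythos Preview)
Your approach is correct and essentially identical to the paper's: both invoke Lemma~\ref{lem:approx-sign}, shift the argument by $\mu$ via $z\mapsto e^{-\i\mu}z$, and then apply the affine map $w\mapsto(1\mp w)/2$ to convert the $\pm1$-valued sign approximation into a $\{0,1\}$-valued one. The paper writes this in one line as $P(e^{\i x})=(P_1(e^{\i(x-\mu)})+1)/2$ with parameters $(\Delta',\varepsilon')=(\min\{\pi-2,\Delta/2\},2\varepsilon)$; your choice $\Delta'=\Delta/2$ is equivalent since $\Delta<1$ forces $\Delta/2<\pi-2$, and taking $\varepsilon'=2\varepsilon$ (rather than your tentative $\varepsilon$ or $\varepsilon/2$) together with the global bound $|\tilde P|\le 1$ gives the one-sided inclusions $[1-\varepsilon,1]$ and $[0,\varepsilon]$ directly without any clipping.
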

\begin{proof}
    Let $P_1(z)$ be the Laurent polynomial in Lemma~\ref{lem:approx-sign} with parameters $(\Delta', \varepsilon')=(\min\{\pi-2, \Delta/2\}, 2\varepsilon)$. Then $P_1(e^{\i x})$ is an $2\varepsilon$-approximation of $\sgn(x)$ for all $x\in [-2, -\Delta/2]\cup[\Delta/2, 2]$. The Laurent polynomial $P(e^{\i x})=(P_1(e^{\i (x-\mu)})+1)/2$ satisfies the desired properties.
\end{proof}

Next, we construct the filter function used for solving linear systems in Theorem \ref{thm:qls-our-method}.
\begin{lem}[Trigonometric approximation of the filter function]
\label{lem:tri-approx-filter}
    For any $\Delta, \varepsilon\in (0,1)$, there exists a Laurent polynomial $P(z)$ of degree $O(\log(1/\varepsilon)/\Delta)$ such that $P(1) \in [1-\varepsilon, 1]$ and \begin{align*}
        &|P(z)|\le 1 && \text{ for all } z\in \mathbb{T},  \\
        &P(e^{\i x}) \in [-\varepsilon/2, \varepsilon/2] && \text{ for all } x\in [-1,-\Delta]\cup [\Delta, 1].
    \end{align*}
\end{lem}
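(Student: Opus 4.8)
The plan is to obtain $P$ by composing a rectangle‑function polynomial with the substitution $\sin x = \tfrac{1}{2\mathrm{i}}(e^{\mathrm{i}x}-e^{-\mathrm{i}x})$, which turns an even algebraic polynomial into a Laurent polynomial in $z=e^{\mathrm{i}x}$ of the same degree. The one analytic input I would record first is the elementary estimate that $\sin$ is well behaved on the relevant window: since $x\mapsto \sin x/x$ is decreasing on $(0,\pi/2]$ and $[0,1]\subset[0,\pi/2]$, we have $\tfrac{2}{\pi}|x|\le|\sin x|\le|x|$ for all $|x|\le 1$. In particular $\sin 0 = 0$, while for $x\in[\Delta,1]$ (and, by oddness of $\sin$, for $x\in[-1,-\Delta]$) we get $|\sin x|\ge \tfrac{2}{\pi}\Delta$. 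This is the only place the interval $[-1,1]$ enters.

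Next I would invoke Lemma~\ref{lem:rec-approx} with the choices $t=\Delta/\pi$, $\delta'=\Delta/(2\pi)$, and $\varepsilon'=\varepsilon/2$, all of which lie in $(0,1/2)$ and satisfy $t\in[-1,1]$. This yields an even polynomial $\tilde P\in\mathbb{R}[x]$ with $\deg(\tilde P)=O\big(\log(1/\varepsilon')/\delta'\big)=O\big(\log(1/\varepsilon)/\Delta\big)$ such that $|\tilde P(y)|\le 1$ for all $y\in[-1,1]$, $\tilde P(y)\in[1-\varepsilon/2,1]$ for $|y|\le t-\delta'=\Delta/(2\pi)$, and $\tilde P(y)\in[0,\varepsilon/2]$ for $|y|\ge t+\delta'=3\Delta/(2\pi)$. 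I would then define
\[
P(z):=\tilde P\!\left(\frac{z-z^{-1}}{2\mathrm{i}}\right),
\]
which is a Laurent polynomial in $z$ of degree $\deg(\tilde P)=O\big(\log(1/\varepsilon)/\Delta\big)$, and check the three claimed properties via the identity $P(e^{\mathrm{i}x})=\tilde P(\sin x)$ (valid for real $x$). Boundedness on $\mathbb{T}$ is immediate: as $z=e^{\mathrm{i}x}$ ranges over $\mathbb{T}$, $\tfrac{z-z^{-1}}{2\mathrm{i}}=\sin x$ ranges over $[-1,1]$, where $|\tilde P|\le 1$. At $z=1$ we have $P(1)=\tilde P(0)\in[1-\varepsilon/2,1]\subseteq[1-\varepsilon,1]$, since $0\le t-\delta'$. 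For the tails, if $x\in[\Delta,1]$ then the estimate above gives $\sin x\ge\tfrac{2}{\pi}\Delta>\tfrac{3\Delta}{2\pi}=t+\delta'$ and $\sin x\le 1$, hence $\tilde P(\sin x)\in[0,\varepsilon/2]\subseteq[-\varepsilon/2,\varepsilon/2]$; the case $x\in[-1,-\Delta]$ follows because $\tilde P$ is even and $\sin$ is odd.

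The argument is essentially routine, so there is no genuine obstacle; the only point requiring a bit of care is the bookkeeping on $t$ and $\delta'$ — one must pick $t+\delta'$ strictly below the lower bound $\tfrac{2}{\pi}\Delta$ on $|\sin x|$ over the tail region while keeping $t-\delta'\ge 0$ so that $x=0$ falls inside the pass band of the rectangle function. The choice $t=\Delta/\pi$, $\delta'=\Delta/(2\pi)$ already does this, and any rescaling by a universal constant is absorbed into the $O(\cdot)$ bound on the degree.
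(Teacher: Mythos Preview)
Your proof is correct. You build the filter by taking the algebraic rectangle polynomial from Lemma~\ref{lem:rec-approx} and composing it with $\sin x$ via the substitution $y=(z-z^{-1})/(2\mathrm{i})$, whereas the paper instead takes the trigonometric sign approximation from Lemma~\ref{lem:approx-sign}, shifts it by $\pm\Delta/2$, and averages the two copies so that the contributions cancel on $[\Delta,1]\cup[-1,-\Delta]$ and reinforce at $x=0$. Both routes yield the same degree bound $O(\log(1/\varepsilon)/\Delta)$. Your construction is arguably more direct and even gives the slightly sharper inclusion $P(e^{\mathrm{i}x})\in[0,\varepsilon/2]$ on the tails; the paper's construction has the mild advantage of staying entirely within the Laurent-polynomial toolkit already set up (reusing Lemma~\ref{lem:approx-sign} rather than reaching back to an algebraic approximation), which keeps the chain of dependencies shorter.
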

\begin{proof}
    Let $P_1(z)$ be the Laurent polynomial in Lemma~\ref{lem:approx-sign} with parameters $(\Delta', \varepsilon')=(\min\{\pi-2, \Delta/2\}, \varepsilon)$. Then the Laurent polynomial $P(e^{\i x}) = (P_1(e^{\i(x+\Delta/2))})+P_1(e^{\i(-x+\Delta/2)})/2$ satisfies the desired properties. 
\end{proof}
~\\
\textbf{Rectangle function:~}Combining two approximate shifted sign functions yields an approximate rectangle function. 
\begin{lem}[Trigonometric approximation of the the rectangle function]
\label{lem-approx-rec}
For any $\delta, \varepsilon\in (0,1/2)$ and $t\in [-1,1]$, there exists a Laurent polynomial $P(z)$ of degree $ O(\log(1/\varepsilon)/\delta)$ such that $|P(z)|\le 1$ for all $z\in \mathbb{T}$ and 
    \begin{align*}
        &P(e^{\i x}) \in [1-\varepsilon, 1] && \text{ for all } x\in [-t+\delta,t-\delta], \\
        &P(e^{\i x}) \in [0, \varepsilon] && \text{ for all } x\in [-1,-t-\delta]\cup[t+\delta,1].
    \end{align*}
\end{lem}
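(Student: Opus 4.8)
The plan is to build the rectangle approximation out of two copies of the shifted sign function already constructed in Lemma~\ref{lem:approx-shift-sign}. Recall that for a given center $\mu$ and width parameter $\Delta$, that lemma provides a Laurent polynomial $P(z)$, bounded by $1$ on $\mathbb{T}$, with $P(e^{\i x})$ close to $1$ on $[-1,\mu-\Delta/2]$ and close to $0$ on $[\mu+\Delta/2,1]$ — i.e.\ a smoothed step descending near $x=\mu$. A rectangle supported (approximately) on $[-t+\delta,t-\delta]$ is, up to small errors, the product of a step that turns \emph{on} as $x$ increases past $-t$ and a step that turns \emph{off} as $x$ increases past $t$; equivalently, it is $1$ minus a ``descending-then-ascending'' pair. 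So the first step is to instantiate Lemma~\ref{lem:approx-shift-sign} twice, once with center $\mu = t$ (giving a step that is $\approx 1$ for $x \lesssim t$ and $\approx 0$ for $x \gtrsim t$) and once with center $\mu = -t$ but with the spatial variable reflected, so that the second factor is $\approx 1$ for $x \gtrsim -t$ and $\approx 0$ for $x \lesssim -t$. Reflection $x \mapsto -x$ corresponds to $z \mapsto z^{-1}$, which sends a Laurent polynomial to a Laurent polynomial of the same degree and preserves boundedness on $\mathbb{T}$, so this is legitimate within the class we are working with.

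Concretely, I would set $P_1(z)$ to be the polynomial from Lemma~\ref{lem:approx-shift-sign} with center $\mu=t$ and width parameter $\Delta = 2\delta$ and accuracy $\varepsilon' = \varepsilon/2$, so that $P_1(e^{\i x}) \in [1-\varepsilon/2,1]$ for $x \in [-1,t-\delta]$ and $P_1(e^{\i x}) \in [0,\varepsilon/2]$ for $x \in [t+\delta,1]$, with $\deg(P_1) = O(\log(1/\varepsilon)/\delta)$. Similarly set $P_2(z) := P_1'(z^{-1})$ where $P_1'$ is the analogous polynomial centered at $\mu=t$ with the same parameters; then $P_2(e^{\i x}) = P_1'(e^{-\i x})$ is $\approx 1$ for $-x \le t-\delta$, i.e.\ $x \ge -t+\delta$, and $\approx 0$ for $x \le -t-\delta$. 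One must check the side condition $\Delta \le 2\min\{|\mu-1|,|\mu+1|\}$ of Lemma~\ref{lem:approx-shift-sign}: since $t\in[-1,1]$ and $\delta\in(0,1/2)$, if $t$ is too close to $\pm 1$ the nominal ``step location'' $t\pm\delta$ may fall outside $[-1,1]$, in which case the relevant constraint region degenerates and the claimed bound holds trivially (the function is constantly $\approx 1$ or $\approx 0$ on all of $[-1,1]$ there), so this is a routine case split rather than a real obstacle. Finally define $P(z) := P_1(z)\,P_2(z)$. This is a Laurent polynomial of degree $O(\log(1/\varepsilon)/\delta)$, and $|P(z)| \le |P_1(z)|\,|P_2(z)| \le 1$ on $\mathbb{T}$ since each factor is bounded by $1$ there.

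It remains to verify the two pointwise estimates. For $x \in [-t+\delta,\, t-\delta]$, both factors satisfy $P_j(e^{\i x}) \in [1-\varepsilon/2,1]$, hence $P(e^{\i x}) \in [(1-\varepsilon/2)^2, 1] \subseteq [1-\varepsilon,1]$. For $x$ in the complement, $x \in [-1,-t-\delta]$ forces $P_2(e^{\i x}) \in [0,\varepsilon/2]$ while $P_1(e^{\i x}) \in [0,1]$, so $P(e^{\i x}) \in [0,\varepsilon/2] \subseteq [0,\varepsilon]$; symmetrically, $x \in [t+\delta,1]$ forces $P_1(e^{\i x}) \in [0,\varepsilon/2]$ and $P_2(e^{\i x})\in[0,1]$, giving the same bound. (One should note that both $P_1$ and $P_2$ take values in $[0,1]$ on the \emph{entire} interval $[-1,1]$ up to the $\pm\varepsilon/2$ slack — this is immediate from the construction of the shifted sign function as $(\text{sign approx}+1)/2$ composed with the bounded rectangle/sign polynomials — which is what lets us control the product on the transition regions where only one factor is guaranteed small.) I do not anticipate a genuine technical obstacle here; the only points requiring care are the boundary case split when $t$ is near $\pm 1$, and bookkeeping the factor of $2$ in passing from width $\delta$ (as stated in this lemma) to the $\Delta/2$ offset convention of Lemma~\ref{lem:approx-shift-sign}, both of which are handled by choosing parameters $\Delta = 2\delta$ and $\varepsilon' = \varepsilon/2$ as above.
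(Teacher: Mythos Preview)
Your product-of-steps construction is correct and achieves the stated degree bound, but it differs from the paper's argument. The paper works directly with the sign approximation of Lemma~\ref{lem:approx-sign} (not the already-shifted step of Lemma~\ref{lem:approx-shift-sign}) and forms an \emph{additive} combination: with $P_1$ the sign approximant at parameters $(\Delta',\varepsilon')=(\min\{\pi-2,\delta\},\varepsilon)$, the paper sets $P(e^{\i x})=\tfrac{1}{2}\bigl(P_1(e^{\i(x-t)})+P_1(e^{\i(-x-t)})\bigr)$. The two shifted signs take (approximately) opposite values outside $[-t,t]$ and equal values inside, so their average is $\approx 0$ outside and constant inside; boundedness by $1$ on $\mathbb{T}$ follows immediately by the triangle inequality. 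Compared to your multiplicative route, the averaging approach is slightly slicker because it never needs to reason about what one factor does while the other is in its transition band, and it does not invoke the side condition $\Delta\le 2\min\{|\mu-1|,|\mu+1|\}$ of Lemma~\ref{lem:approx-shift-sign} at all (so no boundary case split for $t$ near $\pm 1$). On the other hand, your product construction is equally valid, and your observation that in each outer region the ``other'' factor is already in its flat $[1-\varepsilon/2,1]$ range (so you never actually sit in a transition band) is exactly the right way to close the argument without appealing to real-valuedness on the transition zone.
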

\begin{proof}
    Let $P_1(z)$ be the Laurent polynomial in Lemma~\ref{lem:approx-sign} with $(\Delta', \varepsilon') = (\min\{\pi-2, \delta\}, \varepsilon)$ and $P(e^{\i x})=(P_1(e^{\i (x-t)})+P_1(e^{\i (-x-t)}))/2$ satisfies the desired properties.
\end{proof}
    
~\\
\textbf{Inverse function:~}For the inverse function, we can transform the polynomial approximation of $1/x$ constructed in \cite{childs2017qls} to a trigonometric approximation using Proposition~\ref{lem:gqsp-poly2tri}. 

\begin{lem}[Trigonometric approximations of the inverse function]
    For any $\kappa > 1$ and $\varepsilon\in (0,1/2)$, there exists a Laurent polynomial $P(z)$ of degree $O(\kappa \log^2(\kappa/\varepsilon))$ and $B=O(\kappa\log(\kappa/\varepsilon))$ such that \begin{align*}
        &|P(z)|\le 1 && \text{ for all } z\in \mathbb{T},\\
        &|P(e^{\i \pi x/3})-1/(Bx)| \le \varepsilon && \text{ for all }x \in [-1,-1/\kappa]\cup [1/\kappa, 1].
    \end{align*}
\end{lem}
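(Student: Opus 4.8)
The plan is to reduce the trigonometric approximation of the inverse function to the polynomial approximation of $1/x$ from \cite{childs2017qls} via the rescaling machinery already established in Proposition~\ref{lem:gqsp-poly2tri} (and its sharper variant, Proposition~\ref{lem:bounded-1-norm-gqsp}). First I would recall the standard fact that there is an odd polynomial $Q(x)$ of degree $O(\kappa\log(\kappa/\varepsilon))$ that $\varepsilon'$-approximates $1/x$ on the domain $[-1,-1/\kappa]\cup[1/\kappa,1]$ while remaining bounded in magnitude by $O(\kappa)$ on all of $[-1,1]$; this is the classical construction of \cite{childs2017qls}. Since we ultimately want a Laurent polynomial in $e^{\i\pi x/3}$, the relevant ``extended interval'' parameter is $\delta=1/2$ (so that $2(1+\delta)=3$), which is a constant — this is exactly why the degree overhead will only be logarithmic rather than polynomial in $1/\delta$.

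Next I would carry out the rescaling. The function $1/x$ on $[-1,1]\setminus(-1/\kappa,1/\kappa)$ is not defined at $0$, so strictly one should work with the rescaled target $g(x):=1/(Bx)$ where $B$ is chosen so that $g$ is bounded by $1$ on the relevant domain; a natural choice is $B=\Theta(\kappa)$, possibly with an extra $\log(\kappa/\varepsilon)$ factor coming from the $B=\sum_k|a_k|(1+\delta)^k$ quantity in Proposition~\ref{lem:bounded-1-norm-gqsp}. Applying Proposition~\ref{lem:bounded-1-norm-gqsp} (with $\delta=1/2$) to the polynomial $Q(x)/B$ yields a Laurent polynomial $P(z)$ with $|P(z)|\le B$ on $\mathbb{T}$ and $|P(e^{\i\pi x/(2(1+\delta))})-Q(x)/B|\le\varepsilon$ for $x\in[-1,1]$; since $\delta=1/2$ this is $P(e^{\i\pi x/3})$. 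One then rescales $P$ by $1/B$ (or absorbs the normalization appropriately) to get a Laurent polynomial bounded by $1$ on $\mathbb{T}$ satisfying $|P(e^{\i\pi x/3})-1/(Bx)|\le\varepsilon$ on the punctured domain. The degree bound $O(\log(B/\varepsilon)/\delta)=O(\log(\kappa/\varepsilon))$ from that proposition, multiplied by the degree $O(\kappa\log(\kappa/\varepsilon))$ of $Q$ when one uses the cruder Proposition~\ref{lem:gqsp-poly2tri} instead, gives the stated $O(\kappa\log^2(\kappa/\varepsilon))$; tracking constants carefully will determine which of the two propositions yields the cleanest bound.

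The main obstacle I anticipate is the bookkeeping around the normalization constant $B$ and the behaviour near $x=0$: the polynomial $Q$ from \cite{childs2017qls} is genuinely bounded only by $O(\kappa)$ in the ``forbidden region'' near zero, and one must ensure that after composition with $\sin(\pi x/(2(1+\delta)))$ — which maps $[-1,1]$ into $[-\sin(\pi/3),\sin(\pi/3)]\subsetneq[-1,1]$ — the resulting Laurent polynomial stays bounded by $1$ on all of $\mathbb{T}$, not merely on the image of $[-1,1]$. This is handled by the explicit construction inside Propositions~\ref{prop:even case}/\ref{prop:odd case} and \ref{lem:bounded-1-norm-even}/\ref{lem:bounded-1-norm-odd}, which build in an $\arcsin$-correction and, where needed, multiply by a rectangle/polynomial-approximation-of-monomial factor that clips the tails; the only real work is checking that the parity of $Q$ (odd) matches the odd-case statement, that the smoothness hypothesis ($B$ finite on $[-1-\delta,1+\delta]$) is genuinely satisfied by $Q/B$, and that the error budget $\varepsilon/2+\varepsilon/2$ closes. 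A secondary, purely cosmetic, point is confirming that the degree of $P$ is independent of $\deg(Q)$ when the stronger Proposition~\ref{lem:bounded-1-norm-gqsp} applies, versus picking up a multiplicative $\deg(Q)$ factor under Proposition~\ref{lem:gqsp-poly2tri}; either way the claimed $O(\kappa\log^2(\kappa/\varepsilon))$ follows, with the $\kappa$ coming from the intrinsic degree of the inverse-function approximation and the $\log^2$ from combining the two logarithmic overheads.
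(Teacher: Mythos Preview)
Your approach is essentially the paper's: start from the \cite{childs2017qls} polynomial approximation of $1/x$, normalize, and feed it through Proposition~\ref{lem:gqsp-poly2tri} with $\delta=1/2$. However, there is a concrete gap in how you satisfy the hypothesis of that proposition.

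Proposition~\ref{lem:gqsp-poly2tri} requires $|Q(x)|\le 1$ on the \emph{extended} interval $[-1-\delta,1+\delta]$, not just on $[-1,1]$. The \cite{childs2017qls} polynomial $g$ is only known to be bounded by $B=O(\kappa\log(\kappa/\varepsilon))$ on $[-1,1]$; on $[-3/2,3/2]$ it may blow up (degree-$d$ polynomials bounded on $[-1,1]$ can grow like $(3/2)^d$ outside). You correctly identify this as the main obstacle, but your proposed resolution --- that the internal $\arcsin$/rectangle machinery of Propositions~\ref{prop:even case}--\ref{lem:bounded-1-norm-odd} handles it --- is not right: boundedness on $[-1-\delta,1+\delta]$ is an input to those propositions, not an output. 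The paper fixes this by an argument rescaling: set $\tilde g(x):=g(x/2)/B$, which is then bounded by $1$ on $[-2,2]\supset[-3/2,3/2]$, and apply Proposition~\ref{lem:gqsp-poly2tri} to $\tilde g$. This is the one step your proposal is missing.

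Separately, your invocation of Proposition~\ref{lem:bounded-1-norm-gqsp} is unjustified: the $B$ there is the weighted coefficient $\ell_1$-norm $\sum_k|a_k|(1+\delta)^k$, which for the \cite{childs2017qls} polynomial has no reason to be $O(\kappa\log(\kappa/\varepsilon))$ --- it could be exponential in the degree. The paper simply uses Proposition~\ref{lem:gqsp-poly2tri}, which gives degree $O(\deg(\tilde g)\log(\deg(\tilde g)/\varepsilon)/\delta^2)=O(\kappa\log^2(\kappa/\varepsilon))$ directly.
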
 
\begin{proof}
    \cite[Lemmas 17-19]{childs2017qls} constructed an $O(\kappa\log(\kappa/\varepsilon))$-degree polynomial $g(x)$ that is bounded by $B = O(\kappa \log(\kappa/\varepsilon))$ in $[-1,1]$ and $(B\varepsilon/2)$-close to $1/x$ in $[-1,-1/(2\kappa)]\cup[1/(2\kappa), 1]$. Then $\tilde{g}(x)\coloneqq g(x/2)/B$ is $\varepsilon/2$-close to $1/(Bx)$ in $[-2,-1/\kappa]\cup[1/\kappa,2]$ and bounded by $1$ in $[-2,2]$. Therefore, we can apply Proposition~\ref{lem:gqsp-poly2tri} to $\tilde{g}(x)$ with parameters $(\varepsilon, \delta)=(\varepsilon/2, 1/2)$, which gives us a Laurent polynomial $P(z)$ of degree $O(\kappa \log^2(\kappa/\varepsilon))$ such that 
    \begin{align*}
         &|P(z)|\le 1 && \text{for all } z\in \mathbb{T},\\
        &|P(e^{\i \pi x/3})-\tilde{g}(x)|\le \varepsilon/2 && \text{for all } x \in [-1,1].
    \end{align*}
    The Laurent polynomial $P(e^{\i \pi x/3})$ is $\varepsilon$-close to $1/(Bx)$ in $[-1,1/\kappa]\cup[1/\kappa,1]$ as $\tilde{g}(x)$ is $\varepsilon/2$-close $1/(Bx)$ in this interval. 
\end{proof}
~\\
\textbf{Exponential function:~}We can also construct the trigonometric polynomial approximation of the exponential function used in the quantum Gibbs sampler \cite{van2017quantum}.
\begin{lem}[Trigonometric approximations of the exponential function]
For any $\beta>1$ and $\varepsilon\in  (0,1/2)$, there exists a Laurent polynomial $P(z)$ of degree $O(\beta\log(1/\varepsilon))$ such that \begin{align*}
         &|P(z)|\le 1 && \text{for all } z\in \mathbb{T},\\
        &|P(e^{\i \pi x/(2(1+1/\beta))})-e^{\beta(x-1)-1}|\le \varepsilon && \text{for all } x \in [-1,1].
    \end{align*}
\end{lem}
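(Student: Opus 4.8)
The plan is to reuse the two-step recipe already applied to the inverse function earlier in this section: first exhibit an ordinary polynomial approximation of $f(x):=e^{\beta(x-1)-1}$ on $[-1,1]$ whose rescaled coefficient weight is $O(1)$, and then feed it into Proposition~\ref{lem:bounded-1-norm-gqsp} (the rescaled Lemma~37 of \cite{van2017quantum}) with $\delta:=1/\beta$ to obtain a bounded Laurent polynomial in $e^{\i\pi x/(2(1+1/\beta))}$ of degree $O(\log(1/\varepsilon)/\delta)=O(\beta\log(1/\varepsilon))$. Note first that $f$ maps $[-1,1]$ into $[e^{-2\beta-1},e^{-1}]$, so it is a legitimate bounded real target function.

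For the first step I would take the Taylor truncation of the exponential: writing $f(x)=e^{-\beta-1}e^{\beta x}$, set $Q(x):=e^{-\beta-1}\sum_{k=0}^{K}(\beta x)^k/k!$ with $K$ a sufficiently large multiple of $\beta+\log(1/\varepsilon)$. A routine exponential-tail estimate (once $K\gtrsim e\beta$ the remaining terms decay geometrically) gives $\sum_{k>K}\beta^k/k!\le \tfrac12\varepsilon\,e^{\beta+1}$, hence $|f(x)-Q(x)|\le\varepsilon/2$ on $[-1,1]$. The degree of $Q$ is irrelevant for the final bound, since, as observed after Lemma~\ref{lem:bounded-1-norm-even}, the degree of the output Laurent polynomial in Proposition~\ref{lem:bounded-1-norm-gqsp} depends only on $B$ and $\delta$.

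The key quantitative point is the weight bound. With $a_k=e^{-\beta-1}\beta^k/k!$ and $\delta=1/\beta\in(0,1)$ (using $\beta>1$),
\[
B=\sum_{k=0}^{K}|a_k|(1+\delta)^k = e^{-\beta-1}\sum_{k=0}^{K}\frac{(\beta+1)^k}{k!}\le e^{-\beta-1}e^{\beta+1}=1,
\]
and since $K=\Omega(\beta)$ captures a constant fraction of the series one also has $B=\Theta(1)$, so $\log(B/\varepsilon)=\Theta(\log(1/\varepsilon))$. Applying Proposition~\ref{lem:bounded-1-norm-gqsp} to $(f,Q)$ with this $\delta$ then produces a Laurent polynomial $P(z)$ with $\deg(P)=O(\log(B/\varepsilon)/\delta)=O(\beta\log(1/\varepsilon))$, with $|P(z)|\le B\le 1$ for all $z\in\mathbb{T}$, and with $|P(e^{\i\pi x/(2(1+\delta))})-f(x)|\le\varepsilon$ for all $x\in[-1,1]$; since $\delta=1/\beta$ this is exactly the asserted statement. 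The one thing requiring care is the uniform bound $B\le 1$: it is what forces the $\log$ in the degree to depend on $1/\varepsilon$ rather than on $\beta/\varepsilon$, and it is precisely the reason the normalization $e^{-\beta-1}$ appears in the target function — the "$-1$" shift in $\beta(x-1)-1$ combines with the "$1$" in the rescaling point $1+\delta$ to cancel $\sum_k(\beta+1)^k/k!=e^{\beta+1}$. Beyond this, everything reduces to a standard Taylor-remainder calculation, so I do not anticipate any genuine obstacle.
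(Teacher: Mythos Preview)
Your proposal is correct and follows essentially the same approach as the paper: expand $e^{\beta(x-1)-1}=e^{-\beta-1}\sum_{j\ge 0}\beta^j x^j/j!$, set $\delta=1/\beta$, observe that $B=e^{-\beta-1}\sum_j(\beta+1)^j/j!\le 1$, and invoke Proposition~\ref{lem:bounded-1-norm-gqsp}. The paper's proof is in fact slightly terser, taking $Q=f$ directly (i.e.\ using the full convergent series) and computing $B=1$ exactly; your explicit Taylor truncation is a harmless extra layer of rigor given that the proposition is stated for finite-degree $Q$.
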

\begin{proof}
    The function $e^{\beta(x-1)-1}$ has Taylor series \begin{align*}
        e^{\beta(x-1)-1}=e^{-\beta-1}\sum_{j=0}^{\infty} \frac{\beta^j}{j!} x^j,
    \end{align*}
    which we denote by  $\sum_{j=0}^{\infty} a_j x^j$. Let $\delta = 1/\beta$ and $B = \sum_{j=0}^{\infty} a_j(1+\delta)^j=e^{-\beta-1+\beta(1+1/\beta)}=1$.
    Then applying Proposition~\ref{lem:bounded-1-norm-gqsp} with $f(x)=\tilde{f}(x)=e^{\beta(x-1)-1}$ gives the claimed results. 
\end{proof}

\end{appendices}

\bibliographystyle{unsrt}
\bibliography{qsvt.bib}

\end{document}